\documentclass[12pt]{article}

\usepackage{natbib}
\usepackage[nottoc,notlof,notlot]{tocbibind}

\bibpunct{(}{)}{;}{a}{}{;}
\usepackage{breakcites}
\usepackage[colorlinks,linkcolor=black, urlcolor  = black, anchorcolor=black, citecolor=black]{hyperref}
\usepackage{graphicx}
\graphicspath{ {images/} }

\usepackage{algorithm}
\usepackage{algpseudocode}
\usepackage{amsmath}
\usepackage{thmtools}
\usepackage{mathtools,amsthm,amssymb}
\usepackage{bbm}
\usepackage{amsfonts}%
\usepackage{mathrsfs}

\usepackage{geometry} 
\usepackage{framed}
\usepackage{subfigure}
\usepackage{xcolor}
\usepackage{cleveref}

\usepackage{booktabs} 
\usepackage{array} 
\usepackage{paralist} 
\usepackage{verbatim} 
\usepackage{float}

\usepackage[justification=centering]{caption}

\let\oldbibliography\thebibliography
\renewcommand{\thebibliography}[1]{\oldbibliography{#1}\setstretch{1}\setlength{\itemsep}{0pt plus 0.3ex}}

\usepackage{setspace}

\numberwithin{equation}{section}

\usepackage{lipsum}
\usepackage{tabularx}

\usepackage{multicol}
\usepackage{multirow}
\usepackage[title]{appendix}

\usepackage{microtype}

\newtheoremstyle{myremark}
  {\topsep}   
  {\topsep}   
  {\normalfont}  
  {}          
  {\bfseries} 
  {.}         
  { }         
  {}          

\newtheorem{lemma}{Lemma}
\newtheorem{corollary}{Corollary}
\newtheorem{proposition}{Proposition}
\newtheorem{assumption}{Assumption}

\theoremstyle{definition}

\newtheorem{remark}{Remark}

\title{Post-selection inference for network structure}
\author{Eric Auerbach\footnote{Department of Economics, Northwestern University. E-mail: eric.auerbach@northwestern.edu.} \and Jonathan Auerbach\footnote{Department of Statistics, George Mason University. Email: jauerba@gmu.edu } \and Sidonia McKenzie\footnote{Department of Economics, University of Colorado Boulder. Email: Sidonia.McKenzie@colorado.edu \newline We thank Richard Blundell, Stephane Bonhomme, Fede Bugni, Ivan Canay, Ben Golub, Joel Horowitz, Sid Kankanala, Chuck Manski, Adam McCloskey, Stephen Nei, Edoardo Rainone, Andres Santos, Azeem Shaikh, Alex Torgovitsky, Zhen Xie, and seminar/conference participants at Emory, the University of Chicago, the University of Iowa, Northwestern University, the University of Toronto, and Washington University for helpful discussions.}}

\begin{document}
\maketitle

\begin{abstract}\setstretch{1}\noindent
Researchers often use the density of connections between groups of agents, such as communities, blocs, or markets, to characterize the structure of a social or economic network. In many cases, these groups are selected using the network data, making conventional fixed-group inference procedures potentially invalid. To address this issue, we develop two new confidence intervals that are universally valid post-selection in the sense that they guarantee simultaneous coverage asymptotically over all pairs of groups whose relative sizes do not vanish. Our first interval builds on a strategy of \cite{berk2013valid}. Our second interval is based on a Talagrand-type concentration inequality for empirical processes. Both intervals are simple to compute and scalable to large networks, but a key technical contribution of our paper is to show that the second interval is rate-optimal over a broader class of intervals. Three empirical illustrations show that accounting for selection can matter in practice. Some evidence for homophily in a social network and a hub-and-spoke structure in a trade network survives our correction, while evidence for a segmented market structure in a worker transition network does not. \looseness=-1 
\end{abstract}


\section{Introduction}\label{sec:introduction}
Network structure, as measured by the expected fraction or \emph{density} of connections between two groups of agents, is used to explain a wide variety of economic phenomena. For example, \cite{elliott2014financial} use transactions between financial institutions to characterize a core-periphery structure that influences a market's susceptibility to contagion. \cite{chetty2022socialI} use Facebook friendships between households in different socioeconomic groups to measure cross-class social connectedness which predicts economic mobility. \cite{jarosch2024granular} use worker transitions between Austrian firms to define market segments that determine firm market power. \looseness=-1 

In practice, the groups used to define network structure are rarely fixed in advance. They are instead typically selected by agents, institutions, or the researcher in a data-dependent way. For example, \cite{elliott2014financial} define the core to be the most densely connected institutions in the financial network.\footnote{\cite{elliott2014financial} refer to \cite{soramaki2007topology}, who use such a definition in Figure 2 of Section 4.} \cite{chetty2022socialI} define groups based on neighborhoods, which households can choose using their social ties. \cite{jarosch2024granular} apply a clustering algorithm to the network of worker transitions to define market segments.\looseness=-1 

When the groups are selected, conventional inference procedures that ignore selection can lead researchers to overstate the evidence for a particular network structure and, in extreme cases, even hallucinate structures that do not exist at all. To see what can go wrong, consider Figure~\ref{fig:apparent-core-periphery}. This figure depicts an undirected and unweighted network with $97$ edges connecting $84$ nodes. A visual inspection of the network suggests a weak core-periphery structure with a densely-connected core of twelve blue square nodes and a sparsely-connected periphery of seventy-two orange circle nodes. For this network, the difference in the observed fraction of connections for the two groups is large: approximately $0.21$ for the core and $0.03$ for the periphery. Using a stochastic blockmodel for the distribution of network connections,\footnote{This model is used by \cite{elliott2014financial} in their Section 4. See also Chapter 13.2.3 of \cite{jackson2008social}.} a conventional 95\% confidence interval for the density of connections between the blue squares is $[0.114,0.311]$. For the orange circles it is $[0.021,0.033]$. One might infer from these intervals that the density of connections between the blue squares is likely between $3$ to $15$ times that of the orange circles, and conclude that the network has a statistically significant core-periphery structure. \looseness=-1 

\begin{figure}[h!]
 \centering
  \includegraphics[width = 8cm,trim={3cm 3cm 3cm 3cm},clip]{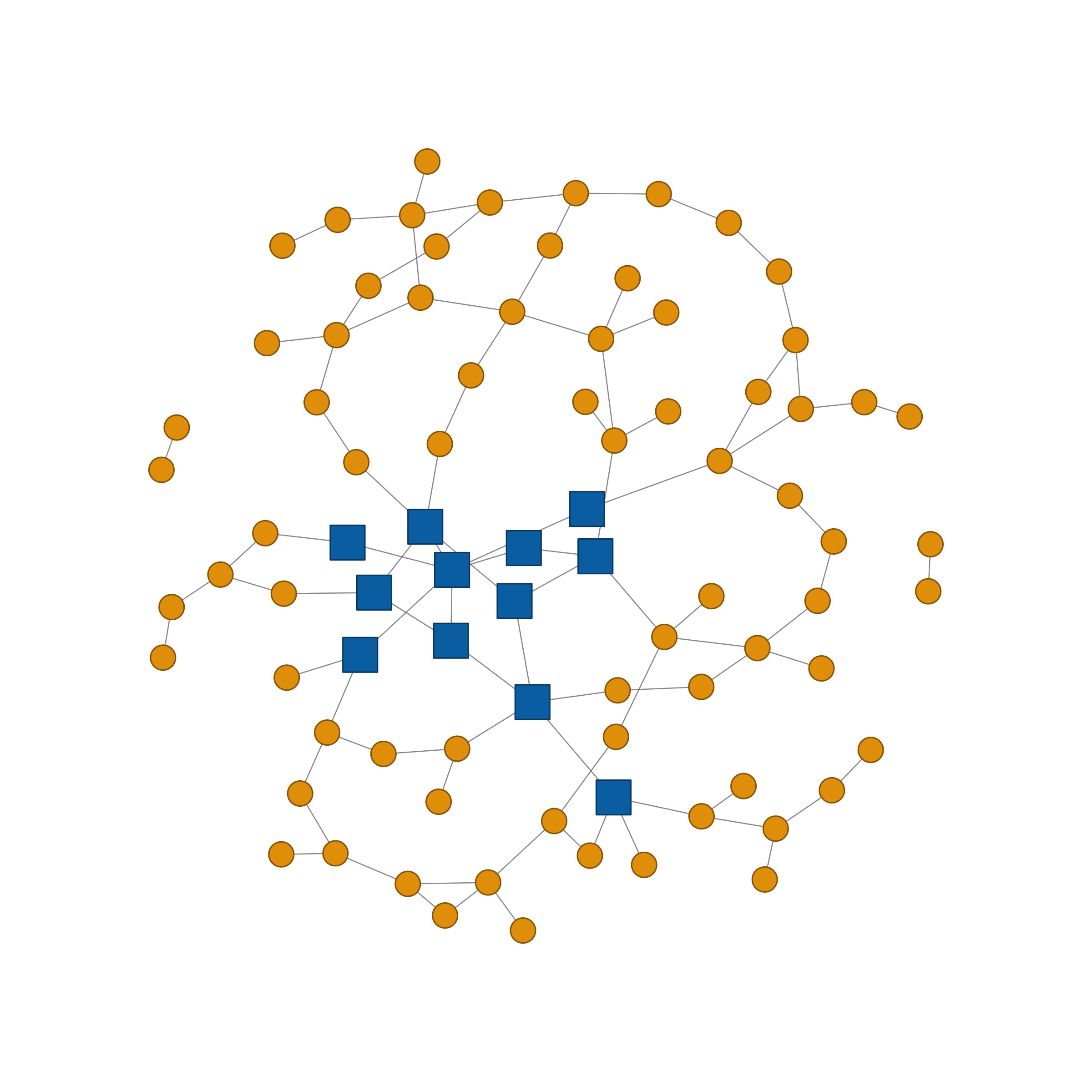}
 \caption{A network with an apparent core-periphery structure. The selected core nodes are square and colored blue. The selected periphery nodes are circles and colored orange. The node layout was determined using the Fruchterman-Reingold algorithm.}
 \label{fig:apparent-core-periphery}
 \end{figure}

The problem with this conclusion is that the statistical model that generated Figure~\ref{fig:apparent-core-periphery} does not have a core-periphery structure. In fact, the network is a draw from an Erdős--Rényi model where every pair of agents is connected with the same probability. The densities for the blue squares and orange circles are the same. The colors were selected by a spectral clustering algorithm \cite[see Section 1.1 of][]{rohe2011spectral} and isolated nodes were not depicted. The apparent gap in densities between the two groups is simply a coincidence of idiosyncratic error accentuated by a graph drawing and network clustering algorithms.\footnote{We emphasize that this is not a model-misspecification issue. The
Erdős--Rényi model is a special case of the stochastic blockmodel where all connection probabilities are equal. The problem comes from treating the data-selected groups as fixed. That selection can make conventional inferential approximations invalid is known in the post-selection inference literature, see for instance \cite{potscher1991effects,leeb2003finite,leeb2005model}.\looseness=-1} \looseness=-1 

\subsection{Organization of our paper}\label{sec:organization}
In this paper, we develop two confidence intervals for the density of connections between groups that are valid post-selection. We introduce the model and inference problem in Section~\ref{sec:model-inference}. The model nests a large class of dyadic regression models popular in the economics literature. The inference problem is to use network data drawn from the model to infer the density of connections between two groups. The network data may be used to determine which agents belong to which group. As motivated above, and formalized by our Corollary~\ref{cor:ci0-invalid} below, conventional inference procedures that do not account for group selection may be invalid. \looseness=-1 

We describe our approach to post-selection inference in Section~\ref{sec:post-selection}. Inference in our setting is complicated by the fact that, in practice, the economics literature often takes a relatively exploratory approach to network structure, where groups may be based on demographics, choice data, a community detection or clustering algorithm, or a visual inspection of a graph drawing. In some extreme cases, the researchers are themselves unable to formally articulate exactly how the groups were chosen, which \cite{jackson2008social} characterizes as an ``I will know a community [structure] when I see it'' approach in his Section 13.2.2. To accommodate such a wide variety of procedures for selecting groups that may be unstructured, ambiguously-defined, or aggressively data-mined, we recommend confidence intervals that simultaneously cover all nonvanishing group assignments. See Section~\ref{sec:simultaneous-inference} for a formal definition. Simultaneous coverage is a stringent requirement when compared to other approaches to post-selection inference (see our literature review in Section~\ref{sec:related-work}). However, as we argue in Section~\ref{sec:universal-validity}, it is the only way to ensure post-selection validity for arbitrary group selection rules.\looseness=-1 

We derive two post-selection confidence intervals in Section~\ref{sec:main-results}. Our first interval follows a strategy of \cite{berk2013valid}, which is to start with a conventional interval that does not account for selection and inflate its width until simultaneous coverage is achieved. The logic behind our second interval is to our knowledge new to the post-selection inference literature. It considers the empirical process defined by the estimation error indexed by the set of possible group assignments, and bounds the deviations of this process using a Talagrand-type concentration inequality due to \cite{klein2005concentration}. \looseness=-1 

While both intervals are easy to compute and scalable to large networks, their widths can contract at very different rates. In particular, a key technical contribution of our paper is to derive an asymptotic lower bound for the width of any collection of intervals that is both simultaneously valid and contains the estimated density. We then find that, among our two intervals, only the width of the second generally attains this bound asymptotically up to a constant factor. This is our Proposition~\ref{prop:propfour} in Section~\ref{sec:optimality-results}. By contrast, the width of the first interval can be made arbitrarily large relative to this lower bound.\footnote{The second interval does not dominate the first, however, so we recommend that researchers use the intersection of the two intervals in practice. See Section 4.1.3 below.  \looseness=-1 } We show that the two intervals have the same asymptotic order of width when linking behavior is homogeneous, as in an Erdős–Rényi model, but the first interval can be substantially wider for the kinds of sparse and degree-heterogeneous networks common in economic research. Corroborating simulation evidence can be found in Appendix Section~\ref{sec:simulations}. \looseness=-1 

We demonstrate our two intervals with three empirical illustrations in Section~\ref{sec:empirical}. The first illustration studies the homophily structure of 50 collegiate Facebook networks and finds evidence for homophily on graduation year and student/faculty status, but not on gender, choice of major, or residence. The second illustration finds evidence of a hub-and-spoke structure of a trade network. The third illustration finds relatively little evidence for a segmented market structure in a pseudo-employer worker-transition network originally constructed by \cite{schmutte2014free}. From this last illustration, we recommend that researchers exhibit some caution when using clustering algorithms to infer market segments from network data in practice.\looseness=-1 

Section~\ref{sec:conclusion} concludes. Proof of claims and other details can be found in the appendix.\looseness=-1 

\subsection{Related work}\label{sec:related-work}
While, to our knowledge, our paper is the first to provide simultaneous coverage guarantees for the problem of inferring network structure, there is a large and active econometrics and statistics literature on simultaneous inference in other settings. Examples include post-model selection inference and uniform confidence bands for structural, density, or impulse response functions. See Chapter 9 of \cite{lehmann2006testing}, reviews by \cite{chernozhukov2015valid,KuchibhotlaKolassaKuffner2022}, and, for specific examples, \cite{working1929applications,scheffe1953method,tukey1953problem,gine2010confidence,hardle2010confidence,liu2010simultaneous,horowitz2011applied,lounici2011global,horowitz2012uniform,horowitz2017nonparametric,hall2013simple,berk2013valid,chernozhukov2014anti,belloni2015some,lee2017doubly,zhang2017simultaneous,chen2018optimal,freyberger2018uniform,kato2018uniform,bachoc2019valid,kato2019uniform,montiel2019simultaneous,bachoc2020uniformly,davezies2021empirical,frandsen2021partial,chiang2023inference,cattaneo2024uniform,mccloskey2024hybrid,chen2025adaptive,frandsen2025simultaneous}. \looseness=-1 

A technical complication that distinguishes our paper from this literature has to do with the size of the index set. In our setting, the number of possible group pairs grows exponentially with the number of agents. One consequence of this regime is that it is not computationally feasible to compute objects like the largest t-statistic or smallest p-value over the index set. Another consequence is that we do not know of any natural way to justify a confidence interval based on a Gaussian approximation or a bootstrap procedure.\footnote{See Footnote~\ref{fn:gaussian-approximation} of Section 4.3.1.}  This means that many popular algorithms for constructing simultaneous confidence intervals in the literature \cite[for instance, Algorithm 1 of][]{KuchibhotlaKolassaKuffner2022} are not justified in our setting.\looseness=-1 

To deal with this complication, our intervals build on the finite sample concentration inequality literature. In particular, our second interval uses a Talagrand-type result due to \cite{klein2005concentration}, combined with a novel bound building on \cite{alon2006approximating,gittens2009error}, which does not restrict the cardinality of models being compared. In the literature, our use of Talagrand's inequality has some conceptual precedence in the work of  \cite{lounici2011global}, who consider a different problem of deconvolution density estimation. See Section 1.1 of \cite{chernozhukov2014anti} for a discussion.  \looseness=-1 

Alternatives to simultaneous coverage trade universality for
guarantees tailored to a specified selection procedure. Conditional or
selective methods provide coverage conditional on a selection event or for
the target produced by a particular selection rule. Examples include
inference on ranks
\citep[e.g.][]{andrews2024inference,mogstad2024inference,petrou2024inference}
and inference after model selection
\citep[e.g.][]{fan2001variable,PoetscherLeeb2009,belloni2012sparse,
BelloniChernozhukovHansen2014,Efron2014,Farrell2015,lee2016exact,
markovic2017unifying,chen2023selective,gao2024selective,
kelekidou2025high}. Related approaches use sample splitting, data fission,
or data thinning
\citep[e.g.][]{Moran1973,Cox1975,FanLv2008,
MeinshausenMeierBuehlmann2009,RinaldoWassermanGSell2019,
DiCiccioDiCiccioRomano2020,RitzwollerRomano2023,
neufeld2024thinning,fava2025training,leiner2025fission}.
Hybrid procedures are considered by \cite{mccloskey2024hybrid} and local simultaneous inference by \cite{ZrnicFithian2024}.  Average-coverage methods guarantee coverage after averaging over a specified
distribution of selection events
\citep[e.g.][]{ArmstrongKolesarPlagborgMoller2022,
LiuMoonSchorfheide2023}. \looseness=-1

These alternatives can be less conservative than global simultaneous
inference, but they require structure on how selection occurs. Many
tractable conditional and hybrid procedures rely on a joint Gaussian
approximation for the candidate estimators and an explicit characterization
of the selection event. We do not establish such a joint approximation over our exponentially large collection
of group pairs, and the procedures motivating our applications, including
community detection, centrality rankings, tuning-parameter searches, and
visual inspection, may generate complicated or incompletely specified
selection events. \looseness=-1

Our paper shares some similarities with, but is fundamentally different from, the literature that infers the parameters of a model with a latent group structure, sometimes called discrete heterogeneity. See, for instance, \cite{BonhommeManresa2015,SuShiPhillips2016,LuSu2017,BonhommeLamadonManresa2022,chetverikov2022spectral,MeleHaoCapePriebe2023,Jochmans2024,KitamuraLaage2024}. What distinguishes our problem of inferring network structure from this literature is that, in our setting, we do not assume that the groups selected by the researcher are (or are approximations of) some latent heterogeneity that determines (and is identified from) the distribution of network connections. Another strand tests for the presence or number of communities in a network. See, for instance, \cite{bickel2016hypothesis,lei2016goodness,Auerbach2022testing}. These papers do not consider our problem of post-selection inference for the density of connections between groups, however.  \looseness=-1

\subsection{Motivating examples}\label{sec:motivating-examples}
We describe three settings in the literature where researchers conduct inference about network structure. These examples motivate our model and assumptions in Section~\ref{sec:model-inference}, our focus on simultaneous coverage in Section~\ref{sec:post-selection}, and our empirical illustrations in Section~\ref{sec:empirical} below. \looseness=-1

\subsubsection{Social network}\label{sec:motivation-social-capital}
Social networks are often described as having a \emph{homophily structure} where groups of agents with similar socioeconomic characteristics are more densely connected than groups of agents with different characteristics. This network structure has been used to explain the diffusion of information, adoption of social norms, social segregation, and economic mobility. See, for instance, \cite{marmaros2006friendships,currarini2009economic,goeree20101,golub2012homophily,zeltzer2020gender,chetty2022socialI,chetty2022socialII,michelman2022old}. Because researchers often consider many different groups based on demographic, socioeconomic, geographic, or other characteristics, these density measures are natural candidates for post-selection inference.\looseness=-1

\subsubsection{Trade network}\label{sec:motivation-shock-propagation}
Trade, production, and financial networks are often described as having a \emph{core-periphery} or \emph{hub-and-spoke structure} where a small group of influential agents are more densely connected in the network. These network structures have been used to explain aggregate fluctuations in economic output in production networks as well as a financial market's susceptibility to a contagion. See, for instance, \cite{acemoglu2012network,acemoglu2015systemic,carvalho2014micro,elliott2014financial,elliott2022supply,jackson2024credit,PietrosantiRainone2023,buccheri2025realized}. Because the core or hub groups are often identified using network statistics such as degree or Katz-Bonacich centrality, the resulting density measures for these groups are natural candidates for post-selection inference.\looseness=-1

\subsubsection{Worker-flow network}
\label{sec:motivation-market-competition}
Worker-flow networks are often described as having a \emph{segmented market structure}, in which worker transitions are concentrated within particular groups of firms or jobs. This structure has been used to study worker mobility and wage differences. See, for instance, \cite{schmutte2014free,Nimczik2017,sorkin2018ranking,AbowdMcKinneySchmutte2019,berger2022labor,lamadon2022imperfect,jarosch2024granular,kline2024firm}. When the groups are selected using the worker-flow network, the resulting within and across-segment density measures are natural candidates for post-selection inference.\looseness=-1

\section{Model and inference problem}\label{sec:model-inference}
Section~\ref{sec:model} describes the model and Section~\ref{sec:inference-problem} the post-selection inference problem. Section~\ref{sec:model-motivating-examples} revisits the three motivating examples of Section~\ref{sec:motivating-examples} above. \looseness=-1

\subsection{Model}\label{sec:model}
\subsubsection{Terminology and notation} \label{sec:model-terminology}
A network is defined on two finite sets of agents. There is no restriction on how the two sets are related: the two sets could be identical, have no agents in common, or have some but not all agents in common. For $t \in \{1,2\}$ the $t$th set has $N_t$ agents indexed by $[N_t] :=  \{1,2,\ldots,N_t\}$. Ordered pairs of agents, containing one agent from each set, are indexed by $ij \in [N_1]\times [N_2]$. Every pair of agents $ij$ is assigned a real-valued random variable $Y_{ij}$ that describes the strength of some social or economic relationship between them. For example, $Y_{ij}$ may describe whether two students are friends or the amount of trade between two regions. The $N_1 \times N_2$ dimensional \emph{adjacency matrix} $Y$ contains $Y_{ij}$ as its $ij$th entry. When the relationship between two agents is not well-defined, we put a $0$ in the relevant entry of $Y$. For example, in the context of a social network, both sets may contain agent $i$, but it often does not make sense for $i$ to be friends with themselves. In this case, we follow the convention that sets $Y_{ii} = 0$. This convention should not be interpreted as treating structurally impossible relationships as observed non-links. When the network includes relationships that are not well-defined, the density of connections should be normalized in a way that does not count them, see Remark~\ref{rem:alternative-normalization} below. \looseness=-1

When the two sets of agents are identical, we say the network is \emph{unipartite}. When they contain no agents in common we say it is \emph{bipartite}. The network may also be \emph{directed} or \emph{undirected}. In a directed network, a pair of agents may be associated with two connections, one describing the connection in the direction from $i$ to $j$ and another describing the direction from $j$ to $i$. A convention in the literature is to represent unipartite directed networks with an asymmetric adjacency matrix where the $ij$th entry corresponds to one direction and the $ji$th entry corresponds to the other. In an undirected network, every pair of agents is associated with at most one connection, and a convention is to describe unipartite undirected networks with a symmetric adjacency matrix where both the $ij$th and $ji$th entries describe the connection between agents $i$ and $j$. In this paper, we follow the convention for directed but not undirected unipartite networks. For undirected unipartite networks with no loops we instead represent the network with an asymmetric upper diagonal adjacency matrix where the $ij$th entry describes the relationship between $i$ and $j$ if and only if $i < j$. The adjacency matrix contains $0$ in every entry where $i \geq j$. We adopt this convention because it helps simplify our notation. We do not believe it to be restrictive in practice. \looseness=-1

Finally, the network may be \emph{weighted}, where the entries of $Y$ take values in $\mathbbm{R}$, or \emph{unweighted}, where they take values in $\{0,1\}$. In the unweighted case, we follow a convention where  a value of $1$ indicates that a relationship exists and a value of $0$ indicates that it does not exist or it is not well-defined. \looseness=-1

\subsubsection{Definition of network structure}\label{sec:network-structure}
While the term \emph{network structure} is ubiquitous in the network economics literature, we do not know of any previous work that gives a formal definition. To provide one, we assume that the entries of $Y$ are real-valued random variables, the variation of which is determined by latent variables such as agent characteristics, link covariates, taste shocks, etc. Some of this variation is \emph{systematic}. It reflects social, economic, geographic, or institutional determinants of link formation that make some pairs of agents more likely to connect than others.  The remaining variation is \emph{idiosyncratic}. It reflects residual sources of variation such as idiosyncratic taste shocks, measurement error, or reporting error, not of interest to the researcher. Broadly speaking, when conducting inference about network structure, the goal of the researcher is to filter out the idiosyncratic and focus on the systematic variation.  \looseness=-1

To formalize this intuition, we represent the systematic variation using the sigma-field $\mathcal{H}$.  For each pair $ij$, we denote the distribution of $Y_{ij}$ conditional on $\mathcal{H}$ with $\mathbf F_{ij}(y) = \mathbbm P(Y_{ij}\leq y\mid\mathcal H)$  and use $\mathbf F=\{\mathbf F_{ij}\}_{i \in [N_1], j \in [N_2]}$ to describe the $N_1 \times N_2$ random array of conditional distribution functions. We use
$F=\{F_{ij}\}_{i\in[N_1],\,j\in[N_2]}$ for a generic realization of $\mathbf F$ and $\mathcal F$ for a generic deterministic class of possible realized arrays. We call a generic element of $\mathcal F$ a \emph{random graph model}. Assumption~\ref{ass:model} below imposes pointwise restrictions on the elements of $\mathcal F$, while the later assumptions impose uniform asymptotic restrictions on the class and on specified estimators. Each result applies to any deterministic class $\mathcal F$ on which the assumptions in its statement hold. The class may therefore differ across results, although we reuse the symbol $\mathcal F$. For a fixed $F$ we define the mean $\mu_{ij}(F) := \int y dF_{ij}(y)$, variance $\sigma^2_{ij}(F) := \int (y - \mu_{ij}(F))^2dF_{ij}(y)$, and error $\epsilon_{ij}(F) = Y_{ij} - \mu_{ij}(F)$. We also use $\mu(F) = \{\mu_{ij}(F)\}_{i \in [N_1], j \in [N_2]}$, $\sigma^2(F) =  \{\sigma_{ij}^2(F)\}_{i \in [N_1], j \in [N_2]}$, and $\epsilon (F) =  \{\epsilon_{ij}(F)\}_{i \in [N_1], j \in [N_2]}$. We suppress the dependence of $\mu$, $\epsilon$, and $\sigma$ on $F$ in our notation going forward.\looseness=-1

We use the term \emph{network structure} to refer to functions of the entries of the conditional mean matrix $\mu$. The problem of \emph{inferring network structure} is that of conducting statistical inference on functions of the entries of $\mu$ using $Y$ as data. The idea behind this definition is that $\mu$ collects the systematic component of link formation after conditioning on $\mathcal H$, i.e. the social, economic, geographic, institutional, or latent forces that make some pairs more likely to connect than others. The entries of $\epsilon$ describe the residual variation of $Y$, such as taste shocks and other residual consequences of human indeterminacy.\looseness=-1

\begin{remark}
\label{rem:H-observability}
The researcher does not need to observe the variables generating $\mathcal H$ or consistently estimate the  conditional law of $Y$ given $\mathcal H$. Our confidence intervals instead only require estimators of certain variance parameters that can, under regularity conditions, be consistently estimated using $Y$. The sigma-field $\mathcal H$ is part of the maintained model specified before the realized network is used for inference. Changing $\mathcal{H}$ generally changes the conditional mean matrix $\mu$ and thus our definition of what is network structure. 
\end{remark}

\begin{remark}\label{rem:structural-terminology}
The word ``structural'' in econometrics is often understood in the context of simultaneous equation modeling to refer to the relationship between two or more endogenous variables. It is as opposed to a ``reduced form'' model that describes the joint distribution of the endogenous variables. See \cite{fisher1966identification} for a textbook definition. This terminology is sometimes used in the literature on strategic network formation in which researchers specify a model where agents choose connections to maximize utility and the utility an agent receives from forming a connection depends on the connections made by the other agents. In this literature, the parameters of the agent utility functions are ``structural'' parameters whereas the $F$ serves as a ``reduced form'' description of equilibrium linking behavior \cite[see, for instance,][]{leung2015two,menzel2015strategic,ridder2015estimation}. Since $\mu$ is a function of $F$, our notion of network structure is, in the context of this literature, a ``reduced form'' parameter. \looseness=-1 
\end{remark}

\begin{remark}\label{rem:neyman-structural}
Another use of the word ``structural'' in econometrics is in the sense of \cite{neyman1948consistent}, who use it to describe a parameter that  ``appears in an infinity of probability laws of the observable random variables.'' Our definition of ``network structure'' is not necessarily structural in the sense of \cite{neyman1948consistent} since, for example, the variable $\mu_{ij}$ is a function of $\mu$ but only appears in the probability law of $Y_{ij}$. However, in Section~\ref{sec:inference-problem} below we focus specifically on network density measures defined on groups of nonvanishing size, which is more in the spirit of this definition. \looseness=-1
\end{remark}

\begin{remark}\label{rem:dyadic-model}
A concrete example of a random graph model is the nonparametric dyadic regression model $Y_{ij}=f(u_i,v_j,X_{ij},\eta_{ij})$, where $u_i$ and $v_j$ are agent-specific heterogeneity such as socioeconomic characteristics, $X_{ij}$ is agent-pair-specific heterogeneity such as physical distance, $\eta_{ij}$ is an idiosyncratic noise term, and $f$ is a measurable function. Popular parametric versions of this model include the gravity model, stochastic blockmodel, latent space model, nonlinear two-way or interactive fixed effects model, random geometric graph model, and random dot product graph model. See Section 3 of \cite{de2020econometric} and Section 4 of \cite{graham2020network} for reviews of this literature. To map this model into our notation, let $\mathcal H$ be the sigma-field generated by $\{u_k,v_l,X_{kl}\}_{k\in[N_1],l\in[N_2]}$, $\mu_{ij} = \mathbbm E\left[f(u_i,v_j,X_{ij},\eta_{ij})\mid \mathcal H \right]$, and $\epsilon_{ij}=Y_{ij}-\mu_{ij}$. In this example, network structure refers to the mean of $Y$ conditional on the agent-specific and pair-specific heterogeneity. It does not include the variation in $Y$ due to the idiosyncratic noise $\eta$.\looseness=-1
\end{remark}

\subsubsection{Two key restrictions}\label{sec:key-assumptions}
We make two key restrictions on the class of random graph models. Our first restriction is that, for every $F \in \mathcal{F}$, the support of $F_{ij}$ is contained in $[-B,B]$ for some finite $B$. Our second restriction is that  the entries of $Y$ are independent conditional on $\mathcal H$. The following Assumption~\ref{ass:model} summarizes our model with these two restrictions. \looseness=-1
\begin{assumption}\label{ass:model}
There exists a sigma-field $\mathcal H$ and an $\mathcal{H}$-measurable random array of conditional distribution functions $\mathbf F$ such that $\mathbf F \in \mathcal{F}$ almost surely and, conditional on $\mathcal{H}$, the entries of $Y$ are independent with marginal distribution functions $\mathbf F$. For every $F\in\mathcal F$ and $(i,j) \in [N_1] \times [N_2]$, the marginal distribution $F_{ij}$ is supported on
$[-B,B]$, where $B\in(0,\infty)$ is fixed over $\mathcal F$.
\end{assumption}

For a fixed $F\in\mathcal F$, we use  $\mathbbm P_F$ to denote the product probability measure on $\mathbbm R^{N_1\times N_2}$ whose $ij$th marginal distribution function is $F_{ij}$ and use $\mathbbm E_F$ and $\operatorname{Var}_F$ for expectation and variance under this law. Under Assumption~\ref{ass:model}, $\mathbbm P_{\mathbf F}$ is a
version of the conditional joint law of $Y$ given $\mathcal H$ and so $\mu_{ij}(\mathbf F) = \mathbbm E[Y_{ij}\mid\mathcal H]$ and $\sigma_{ij}^2(\mathbf F) = \mathbbm E[\epsilon_{ij}^2\mid\mathcal H]$ almost surely. For fixed $F$, when working under $\mathbbm P_F$ we
use the same notation
$\epsilon_{ij}=Y_{ij}-\mu_{ij}(F)$. The entries of $\epsilon$ are
then independent and satisfy
$\mathbbm E_F[\epsilon_{ij}]=0$ and
$\mathbbm E_F[\epsilon_{ij}^2]=\sigma_{ij}^2(F)$.

We consider the bounded support condition to be relatively innocuous, since in many settings the network is unweighted and so the entries of $F$ have support $\{0,1\}$. A consequence of this assumption is that the entries of $\mu$ and $\sigma$ exist and are finite. We conjecture that it is possible to weaken this assumption to a tail bound, but leave such an extension to future work.\looseness=-1

Conditional independence is restrictive, but common in the literature that conducts inference on network structure, see Remark~\ref{rem:conditional-independence-literature} below. The assumption is sometimes controversial because, in some settings, independent network connections are thought to rule out triadic closure, strategic complementarities, latent-space geometry, or other interdependencies that may drive network formation in practice. In our framework, however, the sigma-field $\mathcal{H}$ that defines $\mu$ is left relatively unrestricted. As a result, the assumption does not rule out these sorts of interdependencies so long as the variables that generated them are included in the sigma-field $\mathcal{H}$. We do rule out these interdependencies in $\epsilon$ once $\mathcal{H}$ has been fixed, however. \looseness=-1

Returning to the dyadic model in Remark~\ref{rem:dyadic-model}, if $\{\eta_{ij}\}_{i\in[N_1],j\in[N_2]}$ has entries that are independent conditional on the sigma-field generated by $\{u_k,v_l,X_{kl}\}_{k\in[N_1],l\in[N_2]}$  (a common assumption in the literature), then the residuals $\{\epsilon_{ij}\}_{i\in[N_1],j\in[N_2]}$ are conditionally independent. In this case, the entries of $Y$ may be dependent unconditionally through $\mathcal H$, but our assumption is that the remaining variation around $\mu$ is conditionally independent.\looseness=-1

A similar conditioning argument appears in the exchangeable-network literature, which starts from the assumption that the matrix $Y$ is a finite subarray of an infinitely exchangeable population and then appeals to the Aldous--Hoover--Kallenberg representation theorem to justify the model $Y_{ij}=f(t,u_i,v_j,w_{ij})$,  where $t$, $\{u_i\}_{i\in[N_1]}$, $\{v_j\}_{j\in[N_2]}$, and $\{w_{ij}\}_{i\in[N_1],j\in[N_2]}$ are mutually independent random variables with standard uniform marginal distributions.\footnote{For instance, Theorem 7.22 and Corollary 7.23 of \cite{kallenberg2006probabilistic}. For undirected unipartite networks, the analogous representation replaces $(u_i,v_j)$ by $(u_i,u_j)$ and imposes symmetry. The assumption that the latent variables have uniform marginal distributions is without loss.} Examples include \cite{bickel2009nonparametric,davezies2021empirical,menzel2021bootstrap,chiang2023inference,cattaneo2024uniform,chiang2026gaussian}. Setting $\mathcal{H} = \sigma\left(t, \{u_i\}_{i\in[N_1]}, \{v_j\}_{j\in[N_2]}\right)$, $\mu_{ij} = \mathbbm E[Y_{ij}\mid \mathcal{H}]$, and $\epsilon_{ij}=Y_{ij}-\mu_{ij}$ gives a representation in which the entries of $\epsilon$ are conditionally independent. This literature often frames the resulting conditional independence as relatively unrestrictive because exchangeability is viewed as a mild restriction.\looseness=-1

In many exchangeable-network settings, it is natural to condition only on the global variable $t$ and conduct inference on $\mu_{ij} = \mathbbm E[Y_{ij}\mid t]$. This estimand may represent, for example, the average density of an exchangeable population. We do not consider such an estimand in our paper, however, because it integrates out the node-level heterogeneity that generates heterogeneous network structure. That is, under this definition of $\mathcal{H}$, every entry of $\mu$ is the same, so the network is homogeneous by construction and so cannot exhibit a homophily, core-periphery, etc. structure. \looseness=-1

\begin{remark}\label{rem:conditional-independence-literature}
Our conditional independence assumption is common in the network-structure inference literature. For example, in his review of the problem, \cite{jackson2008social} summarizes two inferential approaches. The first approach, described in Chapter 13.2.3, employs a stochastic blockmodel. The second approach, described in Chapter 13.2.5, employs a latent space model. In both approaches, the connections are independent conditional on latent types or positions. Additional concrete examples from the literature are provided in Section~\ref{sec:model-motivating-examples} below.\looseness=-1
\end{remark}

\begin{remark}\label{rem:weak-dependence}
Our conditional independence assumption is used to apply the concentration inequalities in Appendix Section~\ref{app:concentration-lemmas}. Extending our results to weakly dependent networks would require replacing those inequalities with concentration results appropriate for the relevant dependence structure, for example, dependency-graph or mixing-based concentration inequalities.\footnote{For example, if the dependence between the entries of $\epsilon$ is described by a dependency graph \citep[as in, for instance,][]{fafchamps2007risk,tabord2019inference}, one could replace our independence-based concentration inequalities with dependency-graph analogues such as \citet{janson2004large} or \citet{ralaivola2015entropy}. If the dependence between the entries of $\epsilon$ is described by a $\psi$-mixing condition \citep[as in, for instance,][]{kojevnikov2021limit,leung2022causal}, one could use mixing-based concentration inequalities such as \citet{doukhan1999new} or \citet{amorino2025concentration}.}
We conjecture that our proof strategy could be adapted in this way, but the constants, rates, and variance quantities would likely change. We leave a formal weak-dependence extension to future work.\looseness=-1
\end{remark}

\subsubsection{Asymptotics}\label{sec:asymptotics}
Our post-selection coverage guarantees are asymptotic in that they refer to sequences of random graph models whose dimensions diverge. Formally, for any deterministic model class $\mathcal F$, we define $\mathcal F(n) := \left\{F\in\mathcal F:\min\{N_1(F),N_2(F)\}\geq n\right\}$. For any real-valued functional $a(F)$, we write $\liminf_{F\in\mathcal F:\,N_1,N_2\to\infty}a(F) := \liminf_{n\to\infty} \inf_{F\in\mathcal F(n)}a(F)$ and define the corresponding $\limsup$ analogously. We write $\lim_{F\in\mathcal F:\,N_1,N_2\to\infty}a(F)=a$ when the uniform liminf and limsup are both equal to $a$. We also often use $N_1,N_2 \to \infty$ instead of $n \to \infty$.\looseness=-1

Nearly every quantity we define in our paper is allowed to vary with $n$ along the sequence with three exceptions. The first exception is the uniform absolute bound on the support of the entries of $F$, given by $B$. The second exception is the confidence level, given by $\alpha$. The third exception is a uniform lower bound on the relative group sizes, given by $c$ in Section~\ref{sec:index-set} below. In principle, one could amend our proofs to allow these quantities to also vary with $n$, however this would complicate the arguments and since, to our knowledge, doing so has no clear benefit, we do not pursue this in our paper. \looseness=-1

\subsection{Inference problem}\label{sec:inference-problem}
We focus on the expected fraction or \emph{density} of connections between two groups of agents
\begin{align}\label{poi}
\theta(G_1,G_2) := \frac{1}{M_1M_2}\sum_{i \in [N_1], j \in [N_2]}\mu_{ij}G_{i,1}G_{j,2}
\end{align}
where,  for each  $t \in \{1,2\}$ and $i \in [N_t]$, the variable $G_{i,t} \in \{0,1\}$ indicates whether agent $i$ from set $t$ belongs to group $t$ and $M_t := \sum_{i \in [N_t]}G_{i,t}$ is the number of agents in group $t$. The $N_t$ dimensional vector $G_t$ contains $G_{i,t}$ as its $i$th entry.\footnote{Formally, $G_1$ and $G_2$ are random vectors on the same probability space as $Y$. We permit them to depend on $Y$ and on observed auxiliary information measurable with respect to $\mathcal H$. Independent algorithmic randomization may be absorbed into $\mathcal H$ without changing the conditional mean matrix. Remark~\ref{rem:external-auxiliary-data} extends the framework to arbitrary auxiliary data. We assume $M_1M_2>0$ almost surely.} Many network structures are described using densities of the form of (\ref{poi}). We illustrate this in the context of our three motivating examples in Section~\ref{sec:model-motivating-examples} below.

For undirected unipartite networks with no loops, researchers typically consider densities that are computed by summing over all unordered dyads. Appendix Section~\ref{app:undirected-normalization} extends the results of our paper to this setting. 

\begin{remark}\label{rem:alternative-normalization}
In some settings it may be natural to normalize the sum of connections with a denominator that is different than $M_1M_2$. For instance, if it is not possible for some pairs of agents to form a link, it may be preferable to normalize by the sum of possible connections, rather than the total number of pairs between the two groups. When the parameter of interest differs from \eqref{poi} only by replacing the denominator $M_1M_2$ with a positive measurable denominator $D$, one may first construct a confidence interval for \eqref{poi} and then rescale both endpoints by $M_1M_2/D$. The rescaled interval will inherit the coverage properties of the original. 
\end{remark}

The inference problem is to use the network data $Y$ to construct an asymptotic sequence of confidence intervals (indexed by $n \in \mathbbm{N}$ as in Section~\ref{sec:asymptotics} above) for $\theta(G_1,G_2)$ that is uniformly (over $\mathcal{F}(n)$) asymptotically level $1-\alpha$. That is, for a fixed $\alpha \in (0,1)$, to specify a confidence interval $CI(G_1,G_2;\alpha) = [L(G_1,G_2;\alpha), U(G_1,G_2;\alpha)]$ such that 
\begin{align}\label{postselectioninference}
\liminf_{F \in \mathcal{F}: N_1,N_2 \to \infty} \mathbbm{P}_{F}\left(\theta(G_1,G_2) \in CI(G_1,G_2;\alpha)\right) \geq 1-\alpha.
\end{align}
Condition (\ref{postselectioninference}) follows Equation 11.8 in Definition 11.1.4 of \cite{lehmann2006testing}.

Our only restriction on the groups $G_1$ and $G_2$ (introduced in Section~\ref{sec:index-set} below) is that the relative group sizes $M_1/N_1$ and $M_2/N_2$ are assumed not to vanish as $N_1,N_2 \to \infty$. Aside from this restriction, the groups can be \emph{any} collection of the two sets of agents. For example, the groups could be determined by sociodemographic characteristics of the agents such as age, race, or gender. They could describe agent choices such as place of residence, occupation, or political affiliation. They could also be the result of agent interactions over the network such as the diffusion of information or a strategic game played between neighbors. They could also be a binary treatment that is self-selected or assigned by a central planner. They could also be constructed by the researcher using a graph drawing, clustering, community detection, model selection or similar algorithm. Finally, the groups could be formed by combining two or more of the methods above. Our confidence intervals are designed to be used in all of these settings, as we discuss in Section~\ref{sec:post-selection} below. \looseness=-1

In these examples, the group assignments can depend on the network connections $Y$. A consequence of this is that a conventional confidence interval that ignores this dependence may fail to cover the parameter of interest at the desired level. We show this analytically as our Corollary~\ref{cor:ci0-invalid} in Section~\ref{sec:optimality-results} below. Confidence intervals that maintain coverage at the desired level are said to be valid post-selection.\looseness=-1

\begin{remark}\label{rem:random-estimand}
Conditional on $\mathcal H$, the matrix $\mu$ is fixed. When the groups depend on $Y$, however, the realized group assignment and hence the selected density $\theta(G_1,G_2)$ are random under $\mathbbm P_F$.\looseness=-1
\end{remark}

\begin{remark}\label{rem:selected-groups}
The groups may, but are not required to, be a determinant of link formation in the random graph model $F$. For instance, if the distribution of network connections is determined by a stochastic blockmodel, one could use a clustering algorithm to approximate the latent block assignments and use this output to define the groups of interest. Alternatively, the groups might be determined by agents interacting over the network. In this second scenario, the resulting groups may have little to do with the underlying incentives for the agents to form connections. However, we do not consider the groups in our setting to be noisy estimates of some ``true'' or oracle assignment. That is, if the groups are determined by a clustering algorithm, then our estimand is the density associated with the group assignment actually reported by the researcher, and not at some unobserved oracle assignment that the reported groups may theoretically be approximating. \looseness=-1 \end{remark}

\subsection{Motivating examples, continued}\label{sec:model-motivating-examples}
We revisit our model and assumptions in the context of our three motivating examples. To make our discussion concrete, we focus each example on a single paper from the literature.   \looseness=-1

\subsubsection{Social network}\label{sec:model-social-capital}
\cite{golub2012homophily} study how homophily affects the speed of learning in a social network. They specify a network formation model where each agent is assigned a type and the probability of a connection between two agents depends on their types. Types may combine demographic, socioeconomic, geographic, and behavioral characteristics. For instance, the authors write ``a type might consist of the 18-year-old female African Americans who have completed high school, live in a particular neighborhood, and do not smoke.'' Conditional on the type assignments and type-specific connection probabilities, connections are independent across unordered pairs, consistent with our Assumption~\ref{ass:model}. Taking $\mathcal{H}$ to include the type assignments and connection probabilities, $\mu_{ij}$ is the conditional probability of a connection for $ij$ and $\theta(g_1,g_2)$ is the average conditional probability of a connection between two groups $g_1$ and $g_2$. The groups may, but need not, be indicators for the agent types that determine the connection probabilities. In our empirical illustration in Section~\ref{sec:facebook100}, we use the difference $\Delta = \theta(g_1,g_1) - \theta(g_1,g_2)$ to describe the level of homophily in a social network. This difference is an unnormalized analog of a measure used by  \cite{golub2012homophily} in the equal-sized islands model.  \looseness=-1

\subsubsection{Trade network}\label{sec:model-shock-propagation}
\citet{elliott2022supply} study production in a supply network where each firm sources several essential inputs through potential relationships with other firms and invests to increase the strength of those relationships. Conditional on the potential supply network, the chosen relationship strengths, and any aggregate shock, potential relationships are independently operational or disrupted. In a finite-population version of their model, taking $\mathcal H$ to contain these variables and letting $Y_{ij}$ indicate whether firm $i$'s potential sourcing relationship with supplier $j$ is operational, with $Y_{ij}=0$ when $j$ is not a potential supplier of $i$, gives a network that satisfies Assumption~\ref{ass:model}. The conditional mean $\mu_{ij}$ equals the chosen strength of the relationship when $j$ is a potential supplier of $i$, and zero otherwise, and $\theta(g_1,g_2)$ is the expected fraction of ordered pairs, from firms in $g_1$ to suppliers in $g_2$, that are joined by an operational sourcing relationship, which is an example of our parameter of interest~(\ref{poi}). \looseness=-1

\subsubsection{Worker-flow network}\label{sec:model-market-competition}
\citet{jarosch2024granular} study how a segmented market structure drives firm market power. They specify a degree-corrected stochastic blockmodel where the numbers of transitions between ordered firm pairs are independent Poisson variables conditional on the firms' market assignments, market-pair flow intensities, and firm-specific outflow and inflow propensities. Letting $Y_{ij}$ indicate whether at least one worker transition from firm $i$ to firm $j$ is observed therefore produces conditionally independent Bernoulli connections. Taking
$\mathcal H$ to contain the market assignments, market-pair intensities, and firm-specific propensities, the resulting network satisfies Assumption~\ref{ass:model}. The conditional mean $\mu_{ij}$ is the probability
of observing at least one transition from firm $i$ to firm $j$, and $\theta(g_1,g_2)$ is the average conditional probability of observing at least one transition from a firm in $g_1$ to a firm in $g_2$, which is an example of our parameter of interest~(\ref{poi}).
\looseness=-1

\section{Post-selection inference}\label{sec:post-selection}
In this section, we propose a strategy for constructing confidence intervals for $\theta(G_1,G_2)$ that are valid post-selection in the sense of (\ref{postselectioninference}). Our strategy is to specify a collection of intervals that, under certain regularity conditions, is simultaneously asymptotically level $1-\alpha$ over all group pairs whose relative sizes are bounded below by a fixed positive constant. We then report the interval from this collection corresponding to the selected group $(G_1,G_2)$. We show that the resulting interval will satisfy (\ref{postselectioninference}) so long as the size of the selected groups do not vanish with probability approaching one. A key advantage of this simultaneous coverage approach is that the researcher does not need to specify a model for $(G_1,G_2)$ or even articulate how the groups were chosen. \looseness=-1

Section~\ref{sec:postselection-terminology} defines our simultaneous inference condition. Section~\ref{sec:postselection-motivating-examples} motivates simultaneous inference in the context of our three motivating examples. \looseness=-1

\subsection{Terminology and notation}\label{sec:postselection-terminology}
\subsubsection{Index set}\label{sec:index-set}
We define the index set of permissible group assignments to be 
\begin{align}
\mathcal{G}_c := \left\{(g_1,g_2) \in \{0,1\}^{N_1} \times \{0,1\}^{N_2}: \min_{t \in \{1,2\}}\frac{1}{N_t}\sum_{i \in [N_t]}g_{i,t} \geq c\right\}
\end{align}  
where $g_{i,t}$ denotes the $i$th entry of the vector $g_{t}$ and $c \in (0,1/2]$ is an arbitrary constant. In words, $\mathcal{G}_c$ is the set of all group-pairs $(g_1,g_2)$ such that the count of agents in $g_1$ is not smaller than $cN_1$ and the count of agents in $g_2$ is not smaller than $cN_2$. We emphasize that the variable $c$ is fixed and is not allowed to vary along the asymptotic sequence of random graph models (see Section~\ref{sec:asymptotics}). It is, however, not necessary for the researcher to choose a particular value of $c$ to implement our confidence intervals as this constant does not explicitly appear in any of our constructions.\looseness=-1

We assume that there exists a $c  \in (0,1/2]$ such that $\lim_{F \in \mathcal{F}: N_1, N_2 \to \infty} \mathbbm P_{ F}\left((G_1,G_2) \not\in \mathcal{G}_c \right) = 0$, i.e. the selected groups $(G_1,G_2)$ are contained in $\mathcal{G}_c$ with probability approaching one. The restriction $c>0$ requires both selected groups to contain a nonvanishing fraction of the corresponding node sets. This is a restriction of our uniform theory rather than a claim that smaller groups are economically unimportant. A group may contain a growing number of nodes while its relative size converges to zero, but our results do not provide uniform protection over all such groups when
the researcher may search across an exponentially large collection of assignments. The restriction $c\leq1/2$ is used only in the optimality arguments, where the proof constructs candidate groups containing at least one half of the nodes. It is not required for the coverage arguments in Propositions~\ref{prop:propone} and \ref{prop:proptwo}. We do not believe this second condition to be restrictive in practice.\looseness=-1

\subsubsection{Confidence interval and simultaneous inference condition}\label{sec:simultaneous-inference}
For any fixed $(g_1,g_2) \in \mathcal{G}_c$ and $\alpha \in (0,1)$, we define a \emph{confidence interval} to be an interval $CI(g_1,g_2; \alpha) = \left[L(g_1,g_2;\alpha),U(g_1,g_2;\alpha)\right]$ that is determined by the entries of $Y$ where $L(g_1,g_2;\alpha) \leq  U(g_1,g_2;\alpha)$ for any $(g_1,g_2)$ and $\alpha$. Each $L$ and $U$ are assumed to be measurable functions of $Y$, and we treat the collection of confidence intervals indexed by $\mathcal{G}_c$, $\{CI(g_1,g_2;\alpha)\}_{(g_1,g_2) \in \mathcal{G}_c}$, as a single measurable map from $Y$ to $(\mathbbm{R}^{2})^{\mathcal{G}_c}$. The variable $\alpha$ is fixed and is not allowed to vary along the asymptotic sequence of random graph models (see Section~\ref{sec:asymptotics}). We say that $\{CI(g_1,g_2;\alpha)\}_{(g_1,g_2) \in \mathcal{G}_c}$ is simultaneously (over $\mathcal{G}_c$) uniformly (over $\mathcal{F}$) asymptotically level $1-\alpha$ if 
\begin{align}\label{simul}
\liminf_{F \in \mathcal{F}: N_1,N_2\to\infty} \mathbbm P_{F}\left( \cap_{(g_1,g_2) \in \mathcal{G}_c}\left\{\theta(g_1,g_2) \in CI(g_1,g_2;\alpha)\right\}\right) \geq 1-\alpha.
\end{align}
In words, the collection of confidence intervals $\{CI(g_1,g_2;\alpha)\}_{(g_1,g_2) \in \mathcal{G}_c}$ controls the family-wise error rate over the index set $\mathcal{G}_c$ uniformly over any asymptotic sequence of random graph models in $\mathcal{F}$ with dimensions diverging to infinity. \looseness=-1

We measure the \emph{length} of the individual interval $CI(g_1,g_2;\alpha)$ using the difference 
\begin{align*}
|CI(g_1,g_2;\alpha)| := U(g_1,g_2;\alpha)-L(g_1,g_2;\alpha)
\end{align*}
 and the \emph{width} of the collection $\{CI(g_1,g_2;\alpha)\}_{(g_1,g_2) \in \mathcal{G}_c}$ using the supremum norm, i.e. 
\begin{align*}
\left|\left|\{CI(g_1,g_2;\alpha)\}_{(g_1,g_2) \in \mathcal{G}_c}\right|\right|_{\infty} := \max_{(g_1,g_2) \in \mathcal{G}_c}|CI(g_1,g_2;\alpha)|.
\end{align*}
In words, the width of a collection of confidence intervals indexed by $\mathcal{G}_c$ is the maximum length of the intervals in the collection. The supremum norm is commonly used in the simultaneous inference literature cited in Section~\ref{sec:related-work} above, since a single uncovered group is enough for the simultaneous event to fail. \cite{chen2025adaptive} write that ``the sup-norm provides a stronger, more informative sense in which the estimator is converging as it measures the maximal, rather than average, error over the support.''\looseness=-1

\subsubsection{Universal post-selection validity}\label{sec:universal-validity}
Any collection of intervals satisfying (\ref{simul}) also satisfies (\ref{postselectioninference}) if $\lim_{F \in \mathcal{F}: N_1, N_2 \to \infty} \mathbbm P_{F}\left((G_1,G_2) \not\in \mathcal{G}_c \right) = 0$. This is because
\begin{align*}
 \{\theta(G_1,G_2) \in CI(G_1,G_2;\alpha)\} \supseteq \cap_{(g_1,g_2) \in \mathcal{G}_c}\{\theta(g_1,g_2) \in CI(g_1,g_2;\alpha)\} \cap \{(G_1,G_2) \in \mathcal{G}_c\} 
\end{align*}
implies that, for any $F \in \mathcal{F}$,  
\begin{align*}
 \mathbbm P_{F}\left(\theta(G_1,G_2) \in CI(G_1,G_2;\alpha)\right) \geq  \mathbbm P_{ F}\left(\cap_{(g_1,g_2) \in \mathcal{G}_c}\{\theta(g_1,g_2) \in CI(g_1,g_2;\alpha)\} \cap \{(G_1,G_2) \in \mathcal{G}_c\} \right) \\
\geq  \mathbbm P_{ F}\left(\cap_{(g_1,g_2) \in \mathcal{G}_c}\{\theta(g_1,g_2) \in CI(g_1,g_2;\alpha)\}\right) -  \mathbbm P_{ F}\left(\{(G_1,G_2) \not\in \mathcal{G}_c\} \right).
\end{align*}
Since the argument does not depend on the selection rule, \cite{berk2013valid} call confidence intervals that satisfy (\ref{simul}) ``universally valid post-selection.''\looseness=-1

\begin{remark}\label{rem:universal-necessity}
Following \cite{kuchibhotla2020valid}, we remark that the simultaneous inference condition (\ref{simul}) is also necessary for post-selection validity uniformly over all selection rules with support in $\mathcal G_c$ (see specifically, the discussion after their Remark 3.4). To see this, fix a deterministic ordering of the finite set $\mathcal G_c$. For each realization of the data, define the random set $\mathcal G_c^\dagger(Y) := \{(g_1,g_2)\in\mathcal G_c: \theta(g_1,g_2)\notin CI(g_1,g_2;\alpha)\}$. If $\mathcal G_c^\dagger(Y)$ is nonempty, let $(G_1^\dagger,G_2^\dagger)$ be the first element of $\mathcal G_c^\dagger(Y)$ in the fixed ordering. If $\mathcal G_c^\dagger(Y)$ is empty, let $(G_1^\dagger,G_2^\dagger)$ be the first element of $\mathcal G_c$. Then $\{\theta(G_1^\dagger,G_2^\dagger)\in CI(G_1^\dagger,G_2^\dagger;\alpha)\} = \bigcap_{(g_1,g_2)\in\mathcal G_c} \{\theta(g_1,g_2)\in CI(g_1,g_2;\alpha)\}$. In words, this adversarial rule selects a missed interval whenever one exists.\looseness=-1\end{remark}

\begin{remark}\label{rem:external-auxiliary-data}
For simplicity, we assume above that auxiliary data used to select the groups are measurable with respect to $\mathcal H$. The set-inclusion argument establishing universal post-selection validity extends immediately to arbitrary auxiliary data $Z$. Specifically, under any joint law of $(Y,Z)$ whose marginal law for $Y$ is $\mathbbm P_F$, the simultaneous coverage event has the same probability and covers any groups selected as functions of $(Y,Z)$, provided they belong to $\mathcal G_c$ with probability approaching one. \looseness=-1
\end{remark}

\subsubsection{Simultaneous difference intervals}\label{sec:difference-intervals}
Any collection of intervals satisfying (\ref{simul}) can be used to construct a collection of intervals that simultaneously cover all pairwise differences in densities in the following way. Let $(g_1,g_2)$ and $(g_1',g_2')$ be a fixed pair of groups in $\mathcal{G}_c$ with $CI(g_1,g_2;\alpha)=[L(g_1,g_2;\alpha),U(g_1,g_2;\alpha)]$. Define the difference in densities $\Delta(g_1,g_2) := \theta(g_1,g_2) - \theta(g_1',g_2')$ and the difference interval
\begin{align}\label{diffint}
CI_{\Delta}(g_1,g_2,g_1',g_2';\alpha) := \left[L(g_1,g_2;\alpha)-U(g_1',g_2';\alpha),U(g_1,g_2;\alpha)-L(g_1',g_2';\alpha)\right]. 
\end{align}
Then since 
\[
\bigcap_{(g_1,g_2)\in\mathcal G_c}
\{\theta(g_1,g_2)\in CI(g_1,g_2;\alpha)\} 
\subseteq
\bigcap_{(g_1,g_2),(g_1',g_2')\in\mathcal G_c}
\{
\theta(g_1,g_2)-\theta(g_1',g_2')
\in CI_{\Delta}(g_1,g_2,g_1',g_2';\alpha)
\}
\]
it follows that if $CI$ is simultaneously uniformly asymptotically level $1-\alpha$ then so too is $CI_{\Delta}$. The result also remains valid after data-dependent selection of two group pairs, provided both selected pairs belong to $\mathcal G_c$ with probability approaching one. If $CI(g_1,g_2;\alpha)$ and $CI(g_1',g_2';\alpha)$ are strictly disjoint, then $CI_{\Delta}(g_1,g_2,g_1',g_2';\alpha)$ excludes zero, providing evidence of a nonzero difference in densities. Overlap does not imply that the two densities are equal, however.\looseness=-1

\subsection{Motivating examples, continued}\label{sec:postselection-motivating-examples}
We discuss the simultaneous inference condition in the context of our three motivating examples. \looseness=-1

\subsubsection{Social network}\label{sec:postselection-social-capital}
We highlight two selection concerns in this example. First, researchers may examine many demographic,
socioeconomic, geographic, and behavioral characteristics, together with their intersections, and emphasize those results displaying the clearest patterns of homophily. Second, the group memberships may depend on agent choices, such as place of residence or choice of college major, which may be informed by social ties. For example, to characterize the homophily structure of a social network between Caltech students, \cite{jackson2023dynamics} consider a large class of group definitions based on gender, ethnicity, housing choice, major choice, and pairwise intersections of these characteristics in their Table 2 as well as malleable characteristics such as elicited risk preference, level of altruism, body mass index, academic performance, time spent sleeping, time spent working, or time spent playing video games in their Table 4. In principle, one could conduct post-selection inference in this setting by fixing ex-ante the reported characteristics and a joint model of how the malleable characteristics relate to link formation. Alternatively, one can use our proposed intervals, which are valid post-selection without specifying either selection mechanism. \looseness=-1

\subsubsection{Trade network}\label{sec:postselection-shock-propagation}
We highlight two selection concerns in this example. First, membership in the proposed core or hub group is often determined with the same network connections used to estimate the group densities. Second, researchers may consider several centrality measures, thresholds, and core/hub group sizes, and then emphasize those producing the clearest structure. For example, \citet{soramaki2007topology} retain the Fedwire payment links that account for 75 percent of daily payment value and identify a tightly connected clique in the resulting thresholded network. \citet{carvalho2014micro} characterize influential suppliers using binary outdegree in Figure~2, weighted outdegree in Figure~3, and Katz--Bonacich centrality in Figure~4. In principle, one could conduct post-selection inference in this setting by specifying a network formation model and deriving the conditional distribution of the empirical densities for the given choice of centrality measure, threshold, and core/hub group sizes. Alternatively, one can use our proposed intervals, which are valid post-selection without requiring the researcher to commit to these choices or characterize the resulting selection event.  \looseness=-1

\subsubsection{Worker-flow network}
\label{sec:postselection-market-competition}
We are concerned about selection in this example when the same worker-flow network is used
both to define market segments and to estimate densities within or
across the selected segments. For example, \citet{schmutte2014free} selects segments by modularity maximization, while \citet{Nimczik2017} and \citet{jarosch2024granular} estimate market assignments under a degree-corrected stochastic blockmodel, using a marginal-likelihood criterion and a minimum-description-length criterion, respectively. Our simultaneous intervals instead remain valid for any selected groups in $\mathcal G_c$, allowing the researcher to compare multiple market definitions, numbers of markets, or
clustering algorithms without having to characterize the resulting selection event.
\looseness=-1

\section{Main results}\label{sec:main-results}
Section~\ref{sec:confidence-intervals} describes our two confidence intervals. Section~\ref{sec:assumptions} states two additional assumptions. Our main results are in Section~\ref{sec:results}. \looseness=-1

\subsection{Two confidence intervals}\label{sec:confidence-intervals}
We develop two confidence intervals. Our first interval builds on a proposal of \cite{berk2013valid}, which is to start with a conventional interval (that covers the parameter of interest (\ref{poi}) for a fixed group assignment) and then inflate the length of the interval until condition (\ref{simul}) holds. Our second interval is derived by combining a Talagrand-like concentration inequality for the maximum of an empirical process with a novel bound building on \cite{alon2006approximating,gittens2009error}. \looseness=-1

\subsubsection{First interval}\label{sec:first-interval}
The first interval we consider is
\begin{align}
CI_1(g_1,g_2;\alpha) = \hat{\theta}(g_1,g_2) \pm \left[K_1(\alpha) \times \hat{\sigma}(g_1,g_2)\right]/\sqrt{m_1m_2}
\end{align}
where $\hat{\theta}(g_1,g_2) = \frac{1}{m_1m_2}\sum_{i \in [N_1], j \in [N_2]}Y_{ij}g_{i,1}g_{j,2}$, $m_t=\sum_{i\in[N_t]}g_{i,t}$ for $t\in\{1,2\}$, $\sigma(g_1,g_2) := \sqrt{\frac{1}{m_1m_2}\sum_{i \in [N_1], j \in [N_2]}\sigma_{ij}^2g_{i,1}g_{j,2}}$, $\sigma(g_1,g_2)/\sqrt{m_1m_2}$ is the standard deviation of $\hat{\theta}(g_1,g_2)$ under $\mathbbm P_F$, $\hat{\sigma}(g_1,g_2)$ is a consistent estimator of $\sigma(g_1,g_2) $,\footnote{For now we assume that a consistent estimator exists. See Assumption~\ref{ass:scale-consistency}(i) in Section~\ref{sec:assumptions} below. Appendix~\ref{app:variance-estimation} provides three strategies for constructing $\hat{\sigma}(g_1,g_2)$: deterministic conservative bounds in Section~\ref{sec:deterministic-scale-bounds}, a raw second-moment plug-in in Section~\ref{sec:raw-plugin}, and a residual plug-in based on singular-value thresholding in Section~\ref{sec:usvt}.  Sections~\ref{sec:raw-plugin} and~\ref{sec:usvt} provide sufficient conditions for consistency.  \looseness=-1} $\alpha \in (0,1)$ is a fixed constant, and \looseness=-1
\begin{align*}
K_1(\alpha) &:= \sqrt{1.39\left(N_1 + N_2\right) - 2\ln(\alpha/2)}.
\end{align*} 

The logic behind the interval $CI_1$ follows \cite{berk2013valid} and the formal argument is detailed in the proof of Proposition~\ref{prop:propone} in Appendix Section~\ref{app:propositions} below. It starts with the conventional interval $CI_0(g_1,g_2;\alpha) = \hat{\theta}(g_1,g_2) \pm \left[K_0(\alpha) \times \hat{\sigma}(g_1,g_2)\right]/\sqrt{m_1m_2}$ where $K_0(\alpha)$ is the $1-\alpha/2$ quantile of a standard normal distribution. We show in our Corollary~\ref{cor:ci0-invalid} of Section~\ref{sec:optimality-results} below that the interval $CI_0$ does not satisfy the simultaneous inference condition (\ref{simul}) and so is not necessarily valid post-selection (see Remark \ref{rem:universal-necessity} in Section~\ref{sec:universal-validity} above). To address this issue, the idea is to replace the critical value $K_0(\alpha)$ with one that is large enough to satisfy (\ref{simul}). \looseness=-1

\begin{remark}\label{rem:posi-constant}
\cite{berk2013valid} call the smallest value of $K_1(\alpha)$ that guarantees (\ref{simul}) the ``PoSI constant'' (see the discussion after their Lemma 4.1). The constant may vary with $N_1$, $N_2$, and $\alpha$, but not with any other feature of the random graph model that generated the data. We do not analytically solve for this constant because doing so is, to our knowledge, computationally intractable in our setting. Instead, we use $\sqrt{1.39\left(N_1 + N_2\right) - 2\ln(\alpha/2)}$. While this choice of $K_1(\alpha)$ is conservative, we show in Proposition~\ref{prop:propthree} of Section~\ref{sec:optimality-results} below that it is optimal up to a constant. Specifically, we show that, under certain conditions on class of random graph models $\mathcal{F}$, the PoSI constant cannot be less than $K_1(\alpha)/6$. \looseness=-1
\end{remark}

\begin{remark}\label{rem:ci1-width}
The width of the ``oracle''  $CI_1$ that uses the unknown $\sigma(g_1,g_2)$ instead of the estimator $\hat{\sigma}(g_1,g_2)$ satisfies 
\begin{align*}
\left|\left|\left\{CI_1(g_1,g_2;\alpha)\right\}_{(g_1,g_2) \in \mathcal{G}_c}\right|\right|_{\infty} \asymp \frac{\sqrt{(N_1+N_2)}||\sigma||_F}{N_1N_2}
\end{align*}
where we use the notation $a \asymp b$ to mean that there exists constants $\overline{c} \geq \underline{c} > 0$ such that $\underline{c} a \leq b \leq \overline{c} a$. This is because 
 \begin{align*}
\left|\left|\left\{CI_1(g_1,g_2;\alpha)\right\}_{(g_1,g_2) \in \mathcal{G}_c}\right|\right|_{\infty}  := \max_{(g_1,g_2) \in \mathcal{G}_c}2K_1(\alpha)\sigma(g_1,g_2)/\sqrt{m_1m_2} \leq 2K_1(\alpha)||\sigma||_F/(c^2N_1N_2) 
 \end{align*} 
since $m_1m_2 \geq c^2N_1N_2$ by definition of $\mathcal{G}_c$ and 
 \begin{align*}
\left|\left|\left\{CI_1(g_1,g_2;\alpha)\right\}_{(g_1,g_2) \in \mathcal{G}_c}\right|\right|_{\infty}  := \max_{(g_1,g_2) \in \mathcal{G}_c}2K_1(\alpha)\sigma(g_1,g_2)/\sqrt{m_1m_2} \geq 2K_1(\alpha)||\sigma||_F/(N_1N_2)
 \end{align*} 
 by choosing $(g_1,g_2) = (\iota_{N_1},\iota_{N_2})$ where $\iota_{N_t}$ is an $N_t\times 1$ dimensional vector of $1$s for $t \in \{1,2\}$. It follows that the width of the oracle $CI_1$ converges to $0$ as $N_1, N_2 \to \infty$ since the entries of $\sigma$ are uniformly bounded by Assumption~\ref{ass:model}.  \looseness=-1
\end{remark}

\subsubsection{Second interval}\label{sec:second-interval}
The second interval we consider is
\begin{align}\label{secondinterval}
CI_2(g_1,g_2;\alpha) = \hat{\theta}(g_1,g_2) \pm \left[\hat{\bar{\tau}} + K_2(\alpha)\times \hat{V}\right]/(m_1m_2)
\end{align}
where $\hat{\theta}(g_1,g_2) = \frac{1}{m_1m_2}\sum_{i \in [N_1], j \in [N_2]}Y_{ij}g_{i,1}g_{j,2}$, $m_t = \sum_{i \in [N_t]}g_{i,t}$ for $t \in \{1,2\}$, $\alpha \in (0,1)$ is a fixed constant, $K_2(\alpha) := \sqrt{-2\ln(\alpha)}$,
\begin{align*}
\bar{\tau} &:= 1.01\mathbbm{E}_F\left[\sum_{i \in [N_1]}\sqrt{\sum_{j \in [N_2]}\epsilon_{ij}^2} + \sum_{j \in [N_2]}\sqrt{\sum_{i \in [N_1]}\epsilon_{ij}^2}\right] + 0.25\sqrt{\sum_{i \in [N_1],j\in[N_2]}\sigma_{ij}^2}, \\
V &:= \sqrt{4B\mathbbm{E}_F\left[\sum_{i \in [N_1]}\sqrt{\sum_{j \in [N_2]}\epsilon_{ij}^2} + \sum_{j \in [N_2]}\sqrt{\sum_{i \in [N_1]}\epsilon_{ij}^2}\right] + \sum_{i \in [N_1],j \in [N_2]}\sigma_{ij}^2 + B\sqrt{ \sum_{i \in [N_1],j \in [N_2]}\sigma_{ij}^2}},
\end{align*}
$\hat{\bar{\tau}}$ is a consistent estimator of $\bar{\tau}$ and $\hat{V}$ is a consistent estimator of $V$.\footnote{For now we assume that consistent estimators exist. See Assumption~\ref{ass:scale-consistency}(ii)--(iii) in Section~\ref{sec:assumptions} below. Appendix~\ref{app:variance-estimation} provides three strategies for constructing $\hat{\bar{\tau}}$ and $\hat V$: deterministic conservative bounds in Section~\ref{sec:deterministic-scale-bounds}, raw second-moment plug-ins in Section~\ref{sec:raw-plugin}, and residual plug-ins based on singular-value thresholding in Section~\ref{sec:usvt}. Sections~\ref{sec:raw-plugin} and~\ref{sec:usvt} provide sufficient conditions for consistency. \looseness=-1}  \looseness=-1

The logic behind the interval $CI_2$, detailed in the proof of Proposition~\ref{prop:proptwo} in Appendix Section~\ref{app:propositions} below, follows a combination of a version of Talagrand's inequality due to \cite{klein2005concentration} (Lemma~\ref{lem:klein-rio} in Appendix Section~\ref{app:concentration-lemmas} below), Theorem 3 of \cite{gittens2009error} (Lemma~\ref{lem:dagger-infty} in Appendix Section~\ref{app:matrix-norm-lemmas}), and a refinement of Lemma 3.1 of \cite{alon2006approximating} (Lemmas~\ref{lem:cut-infty} and \ref{lem:restricted-cut} in Appendix Section~\ref{app:matrix-norm-lemmas}). While our use of Talagrand's inequality shares some conceptual similarities with that in the proof of Proposition 1 of \cite{lounici2011global}, the main proof strategy is, to our knowledge, original to our paper.  \looseness=-1

\begin{remark}\label{rem:ci2-width}
The width of the ``oracle'' $CI_2$ that uses the unknown $\bar{\tau}$ and $V$ instead of the estimators $\hat{\bar{\tau}}$ and $\hat{V}$ satisfies 
\begin{align*}
\left| \left| \left\{CI_2(g_1,g_2;\alpha)\right\}_{(g_1,g_2) \in \mathcal{G}_c}\right| \right|_{\infty}  \asymp \left[\bar{\tau} + V\right]/(N_1N_2)
\end{align*}
where we use the notation $a \asymp b$ to mean that there exists constants $\overline{c} \geq \underline{c} > 0$ such that $\underline{c}a \leq b \leq \overline{c}a$. This is because 
\begin{align*}
\left| \left| \left\{CI_2(g_1,g_2;\alpha)\right\}_{(g_1,g_2) \in \mathcal{G}_c}\right| \right|_{\infty}  := \max_{(g_1,g_2) \in \mathcal{G}_c}2\left[\bar{\tau} + K_2(\alpha)\times V\right]/(m_1m_2) \leq 2\left[\bar{\tau} + K_2(\alpha)\times V\right]/(c^2 N_1N_2)
\end{align*}
since $m_1m_2 \geq c^2N_1N_2$ by definition of $\mathcal{G}_c$ and
\begin{align*}
\left| \left| \left\{CI_2(g_1,g_2;\alpha)\right\}_{(g_1,g_2) \in \mathcal{G}_c}\right| \right|_{\infty}  := \max_{(g_1,g_2) \in \mathcal{G}_c}2\left[\bar{\tau} + K_2(\alpha)\times V\right]/(m_1m_2)  \geq 2\left[\bar{\tau} + K_2(\alpha)\times V\right]/(N_1N_2)
\end{align*}
since $m_1m_2 \leq N_1N_2$. It follows that the width of the oracle $CI_2$ converges to $0$ as $N_1,N_2 \to \infty$, since the entries of $\epsilon$ and $\sigma$ are uniformly bounded by Assumption~\ref{ass:model}. Under Assumptions~\ref{ass:model} and \ref{ass:nondegeneracy}(w), $\bar{\tau}$ is not small relative to $V$, and the width of $CI_2$ is $O\left(\frac{\bar{\tau}}{N_1N_2}\right)$. See Lemma~\ref{lem:residual-norm-relations} in Appendix Section~\ref{app:additional-lemmas} below. We show that, under these conditions, this width is optimal up to a constant factor in Proposition~\ref{prop:propfour} of Section~\ref{sec:optimality-results} below. \looseness=-1
\end{remark}

\subsubsection{Combined interval}\label{sec:combined}
We call the confidence interval formed by the intersection of $CI_1$ and $CI_2$ the combined interval. That is,\looseness=-1
\begin{align}\label{combinedinterval}
CI_{\cap}(g_1,g_2;\alpha) = CI_1(g_1,g_2;\alpha/2)\cap CI_2(g_1,g_2;\alpha/2). 
\end{align}
Since both intervals are centered at $\hat{\theta}(g_1,g_2)$ and their widths are relatively insensitive to $\alpha$ (see Remarks \ref{rem:ci1-width} and \ref{rem:ci2-width} above), we find that the combined interval is typically close to the shorter of the two intervals in practice. Our main recommendation for applied researchers is to use $CI_{\cap}$ in practice when the conditions in both Propositions~\ref{prop:propone} and ~\ref{prop:proptwo} are plausible. \looseness=-1

\looseness=-1

\subsection{Assumptions}\label{sec:assumptions}
In addition to Assumption~\ref{ass:model} in Section~\ref{sec:key-assumptions}, we state two additional conditions as our Assumptions~\ref{ass:nondegeneracy} and \ref{ass:scale-consistency} below. Unlike Assumption~\ref{ass:model}, the conditions in this subsection are not pointwise restrictions on individual random graph models. Assumption~\ref{ass:nondegeneracy} restricts the deterministic
class $\mathcal F$ as a whole, while Assumption~\ref{ass:scale-consistency} restricts the estimation error of $\hat{\sigma}(g_1,g_2)$, $\hat{V}$, and $\hat{\bar{\tau}}$ uniformly over that class. We say that such assumptions \emph{hold on $\mathcal F$} when the corresponding uniform limit is satisfied for any asymptotic sequence of models in $\mathcal{F}$.\looseness=-1

Our second assumption is a lower bound on the magnitude of the variation in the network connections. There is a weak version of the assumption, Assumption~\ref{ass:nondegeneracy}(w), and a strong version, Assumption~\ref{ass:nondegeneracy}(s).\looseness=-1
\begin{assumption}\label{ass:nondegeneracy}
\begin{itemize}
\item[w.] $\lim_{F \in \mathcal{F}: N_1,N_2 \to \infty}||\sigma||_F = \infty$ where $||\sigma||_F := \sqrt{\sum_{i \in [N_1], j \in [N_2]}\sigma_{ij}^2}$.
\item[s.] $\lim_{F \in \mathcal{F}: N_1, N_2 \to \infty}\sqrt{\min(N_1,N_2)}\min_{(g_1,g_2) \in \mathcal{G}_c}\sigma(g_1,g_2) = \infty$.
\end{itemize}
\end{assumption}

We consider Assumption~\ref{ass:nondegeneracy}(w), to be a mild lower bound on the total residual variation in the network. It is used to ensure that the coverage of our intervals is uniform in the sense of condition (\ref{simul}) of Section~\ref{sec:simultaneous-inference} below.\footnote{See, for instance, Example 11.2.7 of \cite{lehmann2006testing} for an explanation of how uniform coverage can fail without such a condition.}  Recall from Section~\ref{sec:asymptotics} that our asymptotic arguments refer to a sequence of models, where $N_1$ and $N_2$ are increasing along the sequence, and that the entries of $\sigma$ may change arbitrarily along this sequence. In particular, the entries of $\sigma$ may converge to $0$, which is used in the literature to model sparse networks. See, for example, \cite{bickel2009nonparametric}. What Assumption~\ref{ass:nondegeneracy}(w) says is that, even if the entries of $\sigma$ are converging to $0$, the sum of the squared entries is diverging. Intuitively, the condition says that ``on average'' the entries of $\sigma$ cannot vanish at the rate of $1/\sqrt{N_1N_2}$ or faster. For unweighted networks, this rules out ultra-sparse sequences of networks with an almost surely bounded number of edges (i.e. networks where almost every agent has degree $0$). Such ultra-sparse regimes exist, but we do not believe them to be common in economic research, and so we do not think that this condition is restrictive in practice. \looseness=-1

\begin{remark}\label{rem:tau-diverges}
A consequence of Assumptions~\ref{ass:model} and \ref{ass:nondegeneracy}(w) is that $\lim_{F \in \mathcal{F}: N_1,N_2 \to \infty}\bar{\tau} = \infty$. This follows the third displayed equation in Lemma~\ref{lem:residual-norm-relations} of Appendix Section~\ref{app:additional-lemmas}.  \looseness=-1
\end{remark}

Assumption~\ref{ass:nondegeneracy}(s) places more restrictions on the amount of variation in the matrix $\sigma$. Intuitively, it says that for any groups $(g_1,g_2) \in \mathcal{G}_c$ the average of the entries of $\sigma^2$ over the groups must be large relative to $\min(N_1,N_2)^{-1}$. For unweighted networks, this rules out sequences of networks where the agents have a bounded number of connections (i.e. the agent degrees do not grow with the dimensions of the matrix). Such ``sparse'' regimes exist and are not uncommon in the economics literature. As a result, we recommend that when researchers are working with sparse (but not ultra-sparse) networks, they use our $CI_2$ (whose justification relies only on Assumption~\ref{ass:nondegeneracy}(w)) rather than our $CI_1$ (whose justification relies on Assumption~\ref{ass:nondegeneracy}(s)). See Section~\ref{sec:results} below for a discussion.  \looseness=-1

Our third assumption is a high-level condition that says that $\hat{\sigma}(g_1,g_2)$, $\hat{V}$, and $\hat{\bar{\tau}}$ converge to their analogs $\sigma(g_1,g_2)$, $V$, and $\bar{\tau}$ as $N_1,N_2 \to \infty$. \begin{assumption}\label{ass:scale-consistency}
There exists a deterministic sequence of nonnegative real numbers $r$ with $r \to 0$ as $N_1,N_2 \to \infty$ such that
\begin{itemize}
\item[i.] $\lim_{F \in \mathcal{F}: N_1,N_2 \to \infty}\mathbbm{P}_{F}\left(\max_{(g_1,g_2) \in \mathcal{G}_c}\left|\frac{\sigma(g_1,g_2) - \hat{\sigma}(g_1,g_2)}{\sigma(g_1,g_2)}\right| > r\right) = 0$,
\item[ii.] $\lim_{F \in \mathcal{F}: N_1,N_2 \to \infty}\mathbbm{P}_{F}\left(\left|\frac{\bar{\tau} - \hat{\bar{\tau}}}{\bar{\tau}}\right| > r\right) = 0$.
\item[iii.] $\lim_{F \in \mathcal{F}: N_1,N_2 \to \infty}\mathbbm{P}_{F}\left(\left|\frac{V - \hat{V}}{V}\right| > r\right) = 0$.
\end{itemize}
\end{assumption}

Assumption~\ref{ass:scale-consistency}(ii) says that the estimation error for $\hat{\bar{\tau}}$ vanishes relative to the magnitude of $\bar{\tau}$. Assumption~\ref{ass:scale-consistency}(iii) says the estimation error for $\hat{V}$ vanishes relative to the magnitude of $V$. Assumption~\ref{ass:scale-consistency}(i) says that the estimation error for $\hat{\sigma}(g_1,g_2)$ vanishes relative to the magnitude of $\sigma(g_1,g_2)$, uniformly over $(g_1,g_2) \in \mathcal{G}_c$.  \looseness=-1

\begin{remark}\label{rem:deterministic-bounds}
Assumption~\ref{ass:scale-consistency} is stronger than necessary for our intervals to satisfy the simultaneous inference condition (\ref{simul}), since if the estimators are larger than their respective estimands, the resulting intervals will be conservative. For instance, under Assumptions~\ref{ass:model} and \ref{ass:nondegeneracy}, the choice of $\hat{\sigma}(g_1,g_2) = 2B$, $\hat{V}  = \sqrt{8(N_1\sqrt{N_2}+N_2\sqrt{N_1}) + 4N_1N_2 + 2\sqrt{N_1N_2}}B$, and $\hat{\bar{\tau}} = 2.02(N_1\sqrt{N_2} + N_2\sqrt{N_1})B + 0.5\sqrt{N_1N_2}B$ result in intervals $CI_1$ and $CI_2$ that satisfy (\ref{simul}). See Appendix Section~\ref{sec:deterministic-scale-bounds} for a proof.   \looseness=-1
\end{remark}

\begin{remark}\label{rem:variance-estimation-strategies}
In Appendix~\ref{app:variance-estimation} below we provide three strategies for constructing $\hat{\sigma}(g_1,g_2)$, $\hat{V}$, and $\hat{\bar{\tau}}$. Section~\ref{sec:deterministic-scale-bounds} gives deterministic conservative bounds (see also Remark \ref{rem:deterministic-bounds} above). Section~\ref{sec:raw-plugin} considers estimates based on plugging the observed $Y$ in for the unknown $\epsilon$, and provides one set of conditions under which the resulting estimates are asymptotically conservative and another under which they are consistent. Section~\ref{sec:usvt} considers estimates based on singular value thresholding along the lines of \cite{chatterjee2015matrix}, and provides a set of conditions under which the resulting estimates are consistent. \looseness=-1
\end{remark}

\subsection{Results}\label{sec:results}
We state four propositions in this section. The first two propositions are in Section~\ref{sec:simultaneous-results}. Proposition~\ref{prop:propone} establishes that Assumptions~\ref{ass:model}, \ref{ass:nondegeneracy}(s), and \ref{ass:scale-consistency}(i) are sufficient for $CI_1$ to satisfy (\ref{simul}). Proposition~\ref{prop:proptwo} establishes that Assumptions~\ref{ass:model}, \ref{ass:nondegeneracy}(w), \ref{ass:scale-consistency}(ii) and \ref{ass:scale-consistency}(iii) are sufficient for $CI_2$ to satisfy (\ref{simul}). The second two propositions are in Section~\ref{sec:optimality-results}. They state that, under certain conditions, the widths of the intervals $CI_1$ and $CI_2$ satisfy certain optimality properties. Proofs are in Appendix Section~\ref{app:propositions}.  \looseness=-1

\subsubsection{Simultaneous inference results}\label{sec:simultaneous-results}
Under Assumptions~\ref{ass:model}, \ref{ass:nondegeneracy}(s), and \ref{ass:scale-consistency}(i), the confidence interval $CI_1$ satisfies the simultaneous inference condition (\ref{simul}). That is,  \looseness=-1
\begin{restatable}{proposition}{propone}\label{prop:propone}
Let $\mathcal F$ be a deterministic model class on which Assumptions~\ref{ass:model}, \ref{ass:nondegeneracy}(s), and \ref{ass:scale-consistency}(i) hold. Then for any $\alpha \in (0,1)$, 
\[\liminf_{F \in \mathcal{F}: N_1,N_2 \to \infty}\mathbbm{P}_F\left(\cap_{(g_1,g_2) \in \mathcal{G}_c}\{\theta(g_1,g_2) \in CI_1(g_1,g_2;\alpha)\}\right) \geq 1-\alpha.\]
\end{restatable}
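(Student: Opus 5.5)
The plan is a union bound over $\mathcal{G}_c$ combined with a Bernstein inequality for each fixed group pair. Assumption~\ref{ass:nondegeneracy}(s) makes the Bernstein bound effectively Gaussian, and Assumption~\ref{ass:scale-consistency}(i) lets us replace $\hat\sigma$ by $\sigma$ at negligible cost. The constant $1.39$ in $K_1(\alpha)$ is chosen to be slightly larger than $2\ln 2\approx 1.3863$. This absorbs the cardinality $|\mathcal{G}_c|\le 2^{N_1+N_2}$ with a little room to spare for the asymptotically vanishing slack.

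\emph{Step 1 (reduction to oracle intervals).} Let $E_r$ be the event that $\max_{(g_1,g_2)\in\mathcal G_c}|\hat\sigma(g_1,g_2)-\sigma(g_1,g_2)|/\sigma(g_1,g_2)\le r$. By Assumption~\ref{ass:scale-consistency}(i), $\mathbbm P_F(E_r^c)\to 0$ uniformly over $\mathcal F(n)$. On $E_r$ we have $\hat\sigma(g_1,g_2)\ge(1-r)\sigma(g_1,g_2)$ for every pair. Hence the failure event is contained in
\[
\textstyle\bigcup_{(g_1,g_2)}\{|S(g_1,g_2)|>(1-r)K_1(\alpha)\sigma(g_1,g_2)\sqrt{m_1m_2}\}\cup E_r^c,
\]
where $S(g_1,g_2):=\sum_{ij}\epsilon_{ij}g_{i,1}g_{j,2}=m_1m_2(\hat\theta-\theta)$.

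\emph{Step 2 (Bernstein for a fixed pair).} Fix $(g_1,g_2)$. Under $\mathbbm P_F$, $S$ is a sum of independent, mean-zero terms bounded by $2B$, with variance $v=\sigma(g_1,g_2)^2m_1m_2$. Bernstein's inequality with $t=(1-r)K_1\sigma\sqrt{m_1m_2}$ gives
\[
\mathbbm P_F(|S|>t)\le 2\exp\!\Big(-\tfrac{t^2}{2(v+2Bt/3)}\Big)=2\exp\!\Big(-\tfrac{(1-r)^2K_1^2}{2(1+\delta)}\Big),
\]
with $\delta=2B(1-r)K_1/(3\sigma(g_1,g_2)\sqrt{m_1m_2})$. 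Since $m_1m_2\ge c^2N_1N_2$ and $K_1\lesssim\sqrt{N_1+N_2}$, we get
\[
\delta\lesssim \frac{B}{c\,\sigma(g_1,g_2)}\sqrt{\frac{N_1+N_2}{N_1N_2}}\lesssim\frac{B}{c\sqrt{\min(N_1,N_2)}\min_{\mathcal G_c}\sigma}.
\]
This tends to zero uniformly in $(g_1,g_2)$ and over $\mathcal F(n)$ by Assumption~\ref{ass:nondegeneracy}(s). Write $1-\eta:=(1-r)^2/(1+\delta_{\max})$, where $\delta_{\max}$ is the worst case of the bound above; then $\eta\to0$ uniformly.

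\emph{Step 3 (union bound and arithmetic — the delicate step).} Summing over at most $2^{N_1+N_2}$ pairs, the total failure probability is at most
\[
2\cdot 2^{N_1+N_2}\exp\!\big(-(1-\eta)[0.695(N_1+N_2)-\ln(\alpha/2)]\big)+\mathbbm P_F(E_r^c).
\]
This equals
\[
\alpha\,(\alpha/2)^{-\eta}\exp\!\big((N_1+N_2)[\ln 2-0.695(1-\eta)]\big)+o(1).
\]
The main obstacle is that the margin between $0.695$ and $\ln 2\approx0.6931$ is tiny. So we must check that the vanishing slack $\eta$ eventually satisfies $0.695(1-\eta)>\ln 2$, i.e.\ $\eta<0.0026$. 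Once that holds, the exponential factor is at most $1$ (indeed it tends to $0$). Moreover, $(\alpha/2)^{-\eta}\to1$ because $\alpha$ is fixed. Taking $\limsup$ over $\mathcal F(n)$ then bounds the failure probability by $\alpha$, which is the claim.

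The uniformity over $\mathcal F$ requires only that $r$ and $\delta_{\max}$ depend on $F$ solely through quantities that Assumptions~\ref{ass:nondegeneracy}(s) and \ref{ass:scale-consistency}(i) control uniformly. Both bounds above are of this form.
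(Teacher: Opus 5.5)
Your proposal is correct and follows essentially the same route as the paper: it separates the scale-underestimation event using Assumption~3(i), applies Bernstein for each fixed pair with a union bound over at most $2^{N_1+N_2}$ pairs, uses Assumption~2(s) to make the linear Bernstein term uniformly negligible, and relies on the strict margin $1.39>2\ln 2$. Your slack parameter $\eta$ is a repackaging of the paper's $\tilde K_1$, $\Delta_N$, $q$ bookkeeping, and, like the paper's argument, it in fact shows that the non-coverage probability tends to zero.
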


Under Assumptions~\ref{ass:model}, \ref{ass:nondegeneracy}(w), \ref{ass:scale-consistency}(ii), and \ref{ass:scale-consistency}(iii), the confidence interval $CI_2$ satisfies the simultaneous inference condition (\ref{simul}). That is,  \looseness=-1
\begin{restatable}{proposition}{proptwo}\label{prop:proptwo}
Let $\mathcal F$ be a deterministic model class on which Assumptions~\ref{ass:model}, \ref{ass:nondegeneracy}(w), and \ref{ass:scale-consistency}(ii)--(iii) hold. Then for any $\alpha \in (0,1)$, 
\[\liminf_{F \in \mathcal{F}: N_1,N_2 \to \infty}\mathbbm{P}_F\left(\cap_{(g_1,g_2) \in \mathcal{G}_c}\{\theta(g_1,g_2) \in CI_2(g_1,g_2;\alpha)\}\right) \geq 1-\alpha.\]
\end{restatable}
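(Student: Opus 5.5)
Simultaneous coverage reduces to a single supremum bound. For every $(g_1,g_2)$ we have $\hat\theta(g_1,g_2)-\theta(g_1,g_2)=\frac{1}{m_1m_2}\sum_{ij}\epsilon_{ij}g_{i,1}g_{j,2}$, so the intersection event in the statement contains
\[
\Bigl\{\max_{(g_1,g_2)\in\mathcal G_c}\Bigl|\sum_{ij}\epsilon_{ij}g_{i,1}g_{j,2}\Bigr|\le \hat{\bar\tau}+K_2(\alpha)\hat V\Bigr\}.
\]
Let $Z:=\sup_{(g_1,g_2)\in\{0,1\}^{N_1}\times\{0,1\}^{N_2}}\bigl|\sum_{ij}\epsilon_{ij}g_{i,1}g_{j,2}\bigr|$. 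Dropping the size restriction only enlarges the supremum, so it suffices to show $\limsup \mathbbm P_F(Z>\hat{\bar\tau}+K_2(\alpha)\hat V)\le\alpha$ uniformly over $\mathcal F$. $Z$ is the supremum of an empirical process indexed by the finite class of functions $\epsilon\mapsto\pm\sum_{ij}\epsilon_{ij}g_{i,1}g_{j,2}$. Each summand $\epsilon_{ij}g_{i,1}g_{j,2}$ is centered, independent across $ij$, and bounded by $2B$; this is exactly the setting of the Klein--Rio inequality (Lemma~\ref{lem:klein-rio}).

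\textbf{Concentration step.} Applying Lemma~\ref{lem:klein-rio}, I expect
\[
\mathbbm P_F\bigl(Z\ge \mathbbm E_F Z+\sqrt{2v x}+\tfrac{2B}{3}x\bigr)\le e^{-x},
\]
with wimpy variance $\sup_g\sum_{ij}\sigma_{ij}^2g_{i,1}g_{j,2}\le\|\sigma\|_F^2$ and $v=\|\sigma\|_F^2+2\cdot(\text{bound})\cdot\mathbbm E_F Z$. Setting $x=-\ln\alpha$ makes $\sqrt{2vx}=K_2(\alpha)\sqrt v$. The definition of $V$ (the terms $4B\,\mathbbm E[\cdots]+\|\sigma\|_F^2+B\|\sigma\|_F$) suggests that $v\le V^2$ once $\mathbbm E_F Z$ is replaced by the bound from the next step. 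The linear term $\tfrac{2B}{3}x$ is $O(1)$. By Remark~\ref{rem:tau-diverges}, $\bar\tau\to\infty$, so this term is $o(\bar\tau)$. The factor $1.01$ in $\bar\tau$ and the additive $0.25\|\sigma\|_F$ are presumably the slack that absorbs it.

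\textbf{Expectation step (main obstacle).} I need
\[
\mathbbm E_F Z\le \mathbbm E_F\Bigl[\sum_i\|\epsilon_{i\cdot}\|_2+\sum_j\|\epsilon_{\cdot j}\|_2\Bigr]+\text{lower order},
\]
and for this I would use the matrix-norm lemmas. $Z$ is a cut-norm type quantity, $\max_{g_1,g_2}|g_1^\top\epsilon g_2|$. By Lemmas~\ref{lem:cut-infty} and \ref{lem:restricted-cut} (the refinement of Alon--Naor), it is controlled by the $\infty\to1$ norm $\|\epsilon\|_{\infty\to1}=\max_{x,y\in\{\pm1\}}x^\top\epsilon y$; writing $g=(\iota+x)/2$ loses only constant factors and terms like $|\iota^\top\epsilon\iota|$, whose expectation is at most $\|\sigma\|_F$. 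Lemma~\ref{lem:dagger-infty} (Gittens--Tropp Theorem 3) then bounds $\mathbbm E\|\epsilon\|_{\infty\to1}$ by the sum of row and column $\ell_2$ norms, plus a Frobenius-type term. Getting the constants to match $1.01$ and $0.25$ exactly will require carefully matching the lemmas' outputs; I would take those lemmas as stated and simply verify $\mathbbm E_F Z+\tfrac{2B}{3}x\le\bar\tau$ asymptotically.

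\textbf{Estimation step.} By Assumption~\ref{ass:scale-consistency}(ii)--(iii), with probability tending to one $\hat{\bar\tau}\ge(1-r)\bar\tau$ and $\hat V\ge(1-r)V$. The deficit $r(\bar\tau+K_2V)$ must be absorbed. Lemma~\ref{lem:residual-norm-relations} gives $V\lesssim\bar\tau$, so I would use the strict slack in $\bar\tau$ (e.g. $0.01\,\mathbbm E[\cdots]$, which is of order $\bar\tau$). If that is insufficient, I would instead apply the tail bound at a slightly larger $x$ and use continuity in $\alpha$. The remaining probabilities vanish uniformly, which gives the $\liminf\ge1-\alpha$ in the statement.
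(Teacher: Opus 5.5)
Your proposal is correct and follows essentially the paper's proof. The ingredients match: Klein--Rio (Lemma~\ref{lem:klein-rio}) applied to the cut norm of $\epsilon$; the bound $\mathbbm E_F\|\epsilon\|_\square\le\mathbbm E_F\|\epsilon\|_\dagger+\|\sigma\|_F/4$, which makes the variance proxy $\|\sigma\|_F^2+4B\,\mathbbm E_F Z$ at most $V^2$; and the $0.01\,\mathbbm E_F\|\epsilon\|_\dagger$ slack in $\bar\tau$, which absorbs both the $O(1)$ linear term and the $r$-loss from Assumption~\ref{ass:scale-consistency}(ii)--(iii). Inverting the tail at $x=-\ln\alpha$ instead of manipulating the exponent, and using the unrestricted cut norm, are only cosmetic differences.

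Some corrections to your expectation step:
\begin{itemize}
\item The cross terms $\tilde\phi^\top\epsilon\iota$ and $\iota^\top\epsilon\tilde\psi$ arising from $g=(\iota+\tilde\phi)/2$ are not lower order. To land exactly on $\mathbbm E_F\|\epsilon\|_\dagger+\|\sigma\|_F/4$ you need the middle inequality of Lemma~\ref{lem:cut-infty} together with both parts of Lemma~\ref{lem:dagger-infty}, namely $\mathbbm E_F\|\epsilon\|_{\infty\to1}\le2\mathbbm E_F\|\epsilon\|_\dagger$ and $\mathbbm E_F\|\epsilon\|_\spadesuit\le2\mathbbm E_F\|\epsilon\|_\dagger$.
\item Do not use Lemma~\ref{lem:restricted-cut} for the upper bound. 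It only gives $2\mathbbm E_F\|\epsilon\|_\dagger$, which exceeds $\bar\tau$.
\item The Klein--Rio inequality as stated yields a linear term $6Bx$ rather than $\tfrac{2B}{3}x$. This is harmless, since the term is still $O(1)$.
\end{itemize}
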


A key difference between the two results is that Proposition~\ref{prop:propone} requires the strong version of Assumption~\ref{ass:nondegeneracy} while Proposition~\ref{prop:proptwo} only requires the weak version. \looseness=-1

The proof of Proposition~\ref{prop:propone} is in Appendix Section~\ref{app:propositions}. The main idea is to start with an interval that is marginally valid in the sense that it covers the parameter of interest at the desired level for a fixed group assignment. We use a version of Bernstein's inequality, see Lemma~\ref{lem:bernstein} in Appendix Section~\ref{app:concentration-lemmas}, rather than a normal approximation to form the interval because we are not aware of any formal justification of the latter in our setting.\footnote{ \label{fn:gaussian-approximation}For example, \citet{chernozhukov2022improved} study Gaussian and bootstrap approximations for the maximum coordinate of a centered sample mean of $n$ independent $p$-dimensional random vectors, obtaining bounds that depend polynomially on $\log p$. To apply their results to our setting, it is natural for a coordinate to be indexed by candidate groups pairs in $\mathcal G_c$. But in this setting $p$ is exponential in the number of nodes, and the bounds do not yield a vanishing approximation error. The authors' Remark 2.2 indicates that this logarithmic dimension dependence is sharp in general.} We then inflate the width of the interval until (\ref{simul}) is satisfied. While the resulting interval is conservative, we show in Proposition~\ref{prop:propthree} that, under an additional restriction on the class of random graph models $\mathcal{F}$, the width of the interval is optimal up to a constant within the class of intervals that satisfy (\ref{simul}) and have the conventional ``point estimate plus or minus constant times standard error'' shape that characterizes the \cite{berk2013valid} approach. \looseness=-1

The proof of Proposition~\ref{prop:proptwo} is also in Appendix Section~\ref{app:propositions}. The main idea is to bound the maximum estimation error $\max_{(g_1,g_2) \in \mathcal{G}_c}\left|\hat{\theta}(g_1,g_2) - \theta(g_1,g_2)\right|$ using a version of Talagrand's inequality due to \cite{klein2005concentration}, see Lemma~\ref{lem:klein-rio} in Appendix Section~\ref{app:concentration-lemmas}. This inequality bounds deviations of the maximum estimation error around its expectation. To bound the expectation, we adapt Lemma 3.1 of \cite{alon2006approximating} and combine it with a bound derived from Theorem 3 of \cite{gittens2009error}. These results are Lemmas~\ref{lem:cut-infty}, \ref{lem:restricted-cut}, and \ref{lem:dagger-infty} in Appendix Section~\ref{app:matrix-norm-lemmas}. While the resulting interval is conservative, we show in Proposition~\ref{prop:propfour} that, under the same assumptions as in Proposition~\ref{prop:proptwo}, the width of the interval is optimal up to a constant within the class of intervals that satisfy (\ref{simul}) and contain the point estimate $\hat{\theta}(g_1,g_2)$.  \looseness=-1

Let $\mathcal F$ be a deterministic model class on which Assumptions~\ref{ass:model}, \ref{ass:nondegeneracy}(s), and \ref{ass:scale-consistency}(i)--(iii) hold. A consequence of Propositions~\ref{prop:propone} and \ref{prop:proptwo} is that the combined interval $CI_{\cap}$ also satisfies the simultaneous inference condition (\ref{simul}). That is, 
\[\liminf_{F \in \mathcal{F}: N_1,N_2 \to \infty}\mathbbm{P}_F\left(\cap_{(g_1,g_2) \in \mathcal{G}_c}\{\theta(g_1,g_2) \in CI_{\cap}(g_1,g_2;\alpha)\}\right) \geq 1-\alpha.\]
Since the widths of $CI_1$ and $CI_2$ are relatively insensitive to $\alpha$, $CI_{\cap}$ is typically indistinguishable from the shorter of the two intervals in practice, and so we recommend using $CI_{\cap}$  when Assumption~\ref{ass:nondegeneracy}(s) is plausible, and $CI_2$ alone when Assumption~\ref{ass:nondegeneracy}(w) but not 2(s) is plausible. \looseness=-1

\subsubsection{Optimality results}\label{sec:optimality-results}
For model classes satisfying Assumptions~\ref{ass:model},
\ref{ass:nondegeneracy}(s), and
\ref{ass:scale-consistency}(i) and containing the i.i.d.\ sequence
specified below, our choice
$K_1(\alpha):=\sqrt{1.39(N_1+N_2)-2\ln(\alpha/2)}$
is optimal up to a constant factor. Specifically, \looseness=-1

\begin{restatable}{proposition}{propthree}\label{prop:propthree}
Let $\mathcal F$ be a deterministic model class on which Assumptions~\ref{ass:model},
\ref{ass:nondegeneracy}(s), and \ref{ass:scale-consistency}(i) hold. Suppose further that there exist a fixed nondegenerate distribution $F_0$ supported on $[-B,B]$ and an asymptotic sequence
$\{F_\ell\}_{\ell\geq1}\subseteq\mathcal F$ such that every entry of
every $F_\ell$ has marginal distribution $F_0$. Fix $\alpha \in (0,1)$ and let $\{I(g_1,g_2;\alpha)\}_{(g_1,g_2)\in\mathcal{G}_c}$ be an arbitrary collection of
confidence intervals of the form $I(g_1,g_2;\alpha) = \hat{\theta}(g_1,g_2) \pm \left[K \times \hat{\sigma}(g_1,g_2)\right]/\sqrt{m_1m_2}$, where $K = K(N_1,N_2,\alpha) \geq 0$ is an arbitrary deterministic sequence with $\limsup_{N_1,N_2\to\infty} \frac{K}{\sqrt{N_1+N_2}} < \frac{1}{\sqrt{8\pi}}$. Then\looseness=-1
\begin{align*}
\liminf_{F\in\mathcal{F}:\,N_1,N_2\to\infty}
\mathbbm{P}_F\left(\cap_{(g_1,g_2)\in\mathcal{G}_c}
\left\{\theta(g_1,g_2)\in I(g_1,g_2;\alpha)\right\}\right) \;< \; 1-\alpha.
\end{align*}
\end{restatable}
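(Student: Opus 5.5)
The plan is to show that along the i.i.d.\ sequence $\{F_\ell\}$ the simultaneous coverage probability of $\{I(g_1,g_2;\alpha)\}$ tends to $0$. Since the liminf over $\mathcal F$ is at most the liminf along this subsequence, this yields the strict inequality $<1-\alpha$.

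\textbf{Reduction to a deterministic threshold.} Let $s^2>0$ be the variance of $F_0$. Along $F_\ell$ we have $\sigma_{ij}=s$ for all $ij$, so $\sigma(g_1,g_2)=s$ for every $(g_1,g_2)$. By Assumption~\ref{ass:scale-consistency}(i), with probability approaching one $\hat\sigma(g_1,g_2)\le (1+r)s$ uniformly over $\mathcal G_c$. On that event, simultaneous coverage implies, for every $(g_1,g_2)\in\mathcal G_c$,
\[
\Big|\sum_{i,j}\epsilon_{ij}g_{i,1}g_{j,2}\Big| \le K(1+r)s\sqrt{m_1m_2}.
\]
It therefore suffices to exhibit a data-dependent pair in $\mathcal G_c$ that violates this inequality with probability tending to one. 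Note that $c\le 1/2$ ensures that groups containing half the nodes are admissible.

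\textbf{Adversarial groups.} Suppose without loss of generality that $N_1\ge N_2$, so that $N_1\ge (N_1+N_2)/2$; otherwise transpose the roles of rows and columns. Take $g_2=\iota_{N_2}$ and let $R_i=\sum_j\epsilon_{ij}$ be the row sums. These are i.i.d.\ across $i$, centered, with variance $N_2s^2$. Let $g_1$ select the $\lceil N_1/2\rceil$ largest $R_i$, so that $m_1m_2\approx N_1N_2/2$. Let $T$ be the sum over the top half and $D$ the sum over the bottom half. Since $T-D\ge \sum_i|R_i|-2\max_i|R_i|$ (ordering argument; the correction is only needed for odd $N_1$) and $T+D=\sum_iR_i$, we get
\[
T\ \ge\ \tfrac12\Big(\sum_i|R_i|+\sum_iR_i\Big)-\max_i|R_i|.
\]
We have $\sum_iR_i=O_p(\sqrt{N_1N_2})$ and $\max_i|R_i|\le BN_2\wedge O_p(\sqrt{N_2\log N_1})$; both are $o_p(N_1\sqrt{N_2})$, using Hoeffding and a union bound for the maximum.

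\textbf{Key limit and conclusion.} The main step is showing $N_1^{-1}\sum_i|R_i|/\sqrt{N_2}\to s\sqrt{2/\pi}$ in probability, uniformly along the sequence. The CLT for bounded i.i.d.\ sums (Berry--Esseen, with a uniform rate since $F_0$ is fixed) plus uniform integrability from $\mathbbm E[R_i^2]/N_2=s^2$ give $\mathbbm E|R_i|/\sqrt{N_2}\to s\sqrt{2/\pi}$. Chebyshev across the independent rows, with variance at most $s^2/N_1$, gives concentration. Hence
\[
\frac{T}{\sqrt{m_1m_2}}\ \ge\ (1-o_p(1))\,\frac{N_1\sqrt{N_2}\,s/\sqrt{2\pi}}{\sqrt{N_1N_2/2}}=(1-o_p(1))\frac{s\sqrt{N_1}}{\sqrt\pi}\ \ge\ (1-o_p(1))\,s\sqrt{\frac{N_1+N_2}{2\pi}}.
\]
By hypothesis, eventually $K(1+r)s\le (1-\delta)s\sqrt{(N_1+N_2)/(8\pi)}$ for some $\delta>0$, which is smaller by at least a factor of $2$. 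So the inequality fails for this pair with probability tending to one, and the simultaneous coverage probability tends to $0<1-\alpha$.

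The anticipated main obstacle is the uniform-in-$\ell$ version of the $L^1$ CLT limit for the row sums. The construction deliberately loses a factor of $2$ relative to the stated constant $1/\sqrt{8\pi}$, and this slack absorbs the $o_p(1)$ errors, including the sign of the transposition and the rounding in $m_1$.
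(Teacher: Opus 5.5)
Your overall plan (one adversarial pair built from the row sums of the longer side, the exact normalization $m_1m_2\approx N_1N_2/2$, and the $L^1$ CLT for $N_1^{-1}\sum_i|R_i|/\sqrt{N_2}$) is sound. The gap is the ordering step: $T-D\ge\sum_i|R_i|-2\max_i|R_i|$ is false, and not only for odd $N_1$. With $k=\lceil N_1/2\rceil$ and $p=\#\{i:R_i>0\}$, one has exactly $T-D=\sum_i|R_i|-2\sum_{i\in\mathcal M}|R_i|$. Here $\mathcal M$ is the set of $|p-k|$ rows whose sign disagrees with the half they fall in. For instance, $N_1=20$ with twelve row sums equal to $1$ and eight equal to $-3/2$ gives $T-D=20<21=\sum_i|R_i|-2\max_i|R_i|$. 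In your random setting $|p-k|$ is typically of order $\sqrt{N_1}$ or larger, so $\mathcal M$ is not a single row. The proposal never shows $\sum_{i\in\mathcal M}|R_i|=o_p(N_1\sqrt{N_2})$. Your factor-two slack does not cover this automatically: it absorbs $1-o_p(1)$ factors, not an error term whose order you have not controlled.

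There are two ways to close the gap. First, rows in $\mathcal M$ are the smallest in magnitude within their sign class, so $\sum_{i\in\mathcal M}|R_i|\le\frac{|p-k|}{\min(p,N_1-p)}\sum_i|R_i|$. The CLT gives $\mathbbm P(R_i>0)\to1/2$, since the normal limit has no atom at $0$. Chebyshev across independent rows then gives $p/N_1\to_p1/2$, so this error is $o_p(1)\sum_i|R_i|$. Your bound $T/\sqrt{m_1m_2}\ge(1-o_p(1))s\sqrt{N_1/\pi}$ then goes through with the factor two to spare; in fact it proves the statement with $1/\sqrt{2\pi}$ in place of $1/\sqrt{8\pi}$. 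Second, comparing the top $k$ rows with the average of the positive (resp.\ nonpositive) rows gives the deterministic bound $T\ge\frac{\lfloor N_1/2\rfloor}{N_1}\sum_i(R_i)_+-|\sum_iR_i|$. This loses exactly your factor two and yields $T/\sqrt{m_1m_2}\ge(1-o_p(1))s\sqrt{(N_1+N_2)/(8\pi)}$, which suffices only because the hypothesis on $K$ is strict.

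The paper avoids the ordering issue entirely. It takes $g_1'$ to be the rows with nonnegative sums (or, if fewer than $N_1/2$ of these, nonpositive sums), so the selected sum is exactly $\sum_i(R_i)_+$ or $\sum_i(R_i)_-$, and does the same for columns. It handles the unknown group sizes by evaluating at the argmax pair and using $m_1^*m_2^*\le N_1N_2$. This yields $(\sqrt{N_1}+\sqrt{N_2})/\sqrt{8\pi}\ge\sqrt{N_1+N_2}/\sqrt{8\pi}$, matching the stated constant. Finally, the step you flag as the main obstacle is routine. Because $F_0$ is fixed, the law of $R_i/\sqrt{N_2}$ depends only on $N_2$, and the CLT plus uniform integrability from $\mathbbm E[R_i^2]/N_2=s^2$ is exactly the paper's Step 4.
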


The proof of Proposition~\ref{prop:propthree} is in Appendix Section~\ref{app:propositions}. The main idea is to bound the PoSI constant of Remark~\ref{rem:posi-constant} from below using a subclass of homogeneous models in which every entry of every model is drawn from a fixed nondegenerate distribution $F_0$. For this subclass, choosing the groups according to the signs of the row and column sums of $\epsilon
$, and applying a central limit theorem to the positive parts of the standardized sums, shows that the maximum studentized estimation error eventually exceeds $(1-s)\sqrt{N_1+N_2}/\sqrt{8\pi}$ with probability approaching one, for any fixed $s\in(0,1)$. It follows that, for any $\mathcal{F}$ containing such a sequence, an interval of the form in Proposition~\ref{prop:propthree} with $\limsup K/\sqrt{N_1+N_2} < 1/\sqrt{8\pi}$ has a vanishing simultaneous coverage probability. \looseness=-1


A corollary of Proposition~\ref{prop:propthree} is that any interval that has the \cite{berk2013valid} ``point estimate plus or minus critical value times standard error'' shape will not satisfy the simultaneous inference condition (\ref{simul}) for the class of random graph models described in that proposition, if the critical value diverges slower than $\sqrt{N_1+N_2}$. In particular, this applies to the conventional confidence interval $CI_0(g_1,g_2;\alpha) := \hat{\theta}(g_1,g_2) \pm \left[K_0(\alpha)\times \hat{\sigma}(g_1,g_2)\right]/\sqrt{m_1m_2}$ where $K_0(\alpha)$ is the $1-\alpha/2$ quantile (or any other fixed quantile) of a standard normal distribution, i.e.\looseness=-1
\begin{corollary}\label{cor:ci0-invalid}
Let $\mathcal{F}$ be the same as in Proposition~\ref{prop:propthree}. Then for every $\alpha  \in (0,1)$
\[\liminf_{F \in \mathcal{F}: N_1,N_2 \to \infty}\mathbbm{P}_F\left(\cap_{(g_1,g_2) \in \mathcal{G}_c}\{\theta(g_1,g_2) \in CI_0(g_1,g_2;\alpha)\}\right) < 1-\alpha.\] 
\end{corollary}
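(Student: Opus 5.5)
This corollary follows directly from Proposition~\ref{prop:propthree}. The interval $CI_0(g_1,g_2;\alpha)$ has exactly the form $\hat{\theta}(g_1,g_2) \pm [K\times\hat{\sigma}(g_1,g_2)]/\sqrt{m_1m_2}$ considered there, with the deterministic choice $K = K(N_1,N_2,\alpha) := K_0(\alpha)$. Here $K_0(\alpha)$ is the $1-\alpha/2$ quantile of the standard normal distribution. It is nonnegative because $\alpha\in(0,1)$ implies $1-\alpha/2 > 1/2$. It is also fixed, since it does not depend on $N_1$ or $N_2$. So it remains to check the growth condition on $K$ and then invoke the proposition.

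The growth condition holds because $K_0(\alpha)$ is a finite constant, while $\sqrt{N_1+N_2}\to\infty$ along any asymptotic sequence (recall from Section~\ref{sec:asymptotics} that $\min\{N_1,N_2\}\geq n\to\infty$). Therefore
\[
\limsup_{N_1,N_2\to\infty}\frac{K_0(\alpha)}{\sqrt{N_1+N_2}} = 0 < \frac{1}{\sqrt{8\pi}}.
\]
Since $\mathcal F$ is the class from Proposition~\ref{prop:propthree}, it satisfies Assumptions~\ref{ass:model}, \ref{ass:nondegeneracy}(s), and \ref{ass:scale-consistency}(i), and it contains the homogeneous sequence built from $F_0$. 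All hypotheses are therefore met. The proposition gives that the uniform $\liminf$ of the simultaneous coverage probability of $\{CI_0(g_1,g_2;\alpha)\}_{(g_1,g_2)\in\mathcal G_c}$ is strictly below $1-\alpha$, which is the claim.

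The argument never uses that $K_0(\alpha)$ is specifically the $1-\alpha/2$ quantile, so the same conclusion holds for any fixed quantile, as the text preceding the corollary notes. There is no substantive obstacle here. The only points to confirm are that $K$ is deterministic and nonnegative, and that the $\hat{\sigma}$ used in $CI_0$ is the same estimator satisfying Assumption~\ref{ass:scale-consistency}(i) that Proposition~\ref{prop:propthree} allows.
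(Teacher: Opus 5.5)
Your proof is correct and matches the paper's route: the paper gives no separate proof and treats Corollary~\ref{cor:ci0-invalid} as the special case of Proposition~\ref{prop:propthree} with the constant critical value $K = K_0(\alpha)$. For this choice $K_0(\alpha)/\sqrt{N_1+N_2} \to 0 < 1/\sqrt{8\pi}$, so the proposition applies directly.
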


Proposition~\ref{prop:propthree} establishes that the width of the interval $CI_1$ is optimal up to a constant  within a restricted class of intervals that have the conventional shape. By contrast, under Assumptions~\ref{ass:model}, \ref{ass:nondegeneracy}(w), and \ref{ass:scale-consistency}(ii), the width of the interval $CI_2$ is optimal up to a constant factor (the constant may depend on $\alpha$) only under the weaker restriction that the interval contains the point estimate. That is,  \looseness=-1
\begin{restatable}{proposition}{propfour}\label{prop:propfour}
Let $\mathcal F$ be a deterministic model class on which Assumptions~\ref{ass:model},
\ref{ass:nondegeneracy}(w), and \ref{ass:scale-consistency}(ii) hold. Fix $\alpha \in
(0,1)$ and define
\begin{align*}
c^*(\alpha) :=
\frac{\left(2\ln(1+\sqrt{2})\right)^2}
{9\pi\left(1.01\pi\sqrt{648}\sqrt{-\ln(1-\alpha)} + 4.54\ln(1+\sqrt{2})\right)}.
\end{align*}
Let $\{I(g_1,g_2;\alpha)\}_{(g_1,g_2)\in\mathcal{G}_c}$ be an arbitrary collection of
confidence intervals indexed by $\mathcal{G}_c$ such that
$\hat{\theta}(g_1,g_2) \in I(g_1,g_2;\alpha)$ for every $(g_1,g_2) \in
\mathcal{G}_c$. Suppose there exist $\delta'\in(0,1)$ and $n_0\in\mathbbm N$
such that, for every $F\in\mathcal F$ with
$\min\{N_1(F),N_2(F)\}\geq n_0$,
\[
\max_{(g_1,g_2)\in\mathcal G_c}
m_1m_2|I(g_1,g_2;\alpha)|
\leq
(1-\delta')c^*(\alpha)\widehat{\bar\tau}
\]
holds $\mathbbm P_F$-almost surely. Then
\begin{align*}
\limsup_{F\in\mathcal{F}:\,N_1,N_2\to\infty}
\mathbbm{P}_F\left(\cap_{(g_1,g_2)\in\mathcal{G}_c}
\left\{\theta(g_1,g_2)\in I(g_1,g_2;\alpha)\right\}\right)
\;<\; 1-\alpha.
\end{align*}
\end{restatable}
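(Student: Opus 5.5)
The plan is to reduce simultaneous coverage to an event about the maximal estimation error, and then show that this error is too large, with probability bounded away from zero, to fit inside intervals of the stated width. Suppose $\hat\theta(g_1,g_2)\in I(g_1,g_2;\alpha)$ and $\theta(g_1,g_2)\in I(g_1,g_2;\alpha)$ for every $(g_1,g_2)\in\mathcal G_c$. Since $m_1m_2(\hat\theta-\theta)=g_1^\top\epsilon g_2$, we get
\[
Z:=\max_{(g_1,g_2)\in\mathcal G_c}|g_1^\top \epsilon g_2| \leq \max_{(g_1,g_2)\in\mathcal G_c} m_1m_2|I(g_1,g_2;\alpha)| \leq (1-\delta')c^*(\alpha)\hat{\bar\tau}.
\]
By Assumption~\ref{ass:scale-consistency}(ii), $\hat{\bar\tau}\le(1+r)\bar\tau$ with probability approaching one. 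Since $r\to0$, for large $n$ the coverage probability is at most $\mathbbm P_F\left(Z\le (1-\delta'/2)c^*(\alpha)\bar\tau\right)+o(1)$, uniformly over $\mathcal F(n)$. It therefore suffices to show that $\mathbbm P_F\left(Z\le(1-\delta'/2)c^*(\alpha)\bar\tau\right)\le 1-\alpha-\eta$ for some fixed $\eta>0$ and all large $n$.

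\emph{Step 1: deterministic lower bound on $Z$.} First I would relate the restricted cut quantity $Z$ to the $\infty\to1$ norm $\|\epsilon\|_{\infty\to1}=\max_{x,y\in\{\pm1\}}x^\top\epsilon y$. Write $x=2\mathbbm 1_S-\iota$, and note that either $S$ or its complement contains at least half the nodes, which is why $c\le 1/2$ is needed. Expanding $x^\top\epsilon y$ into four set-indicator bilinear forms, each over sets of size at least $N_t/2$ (after possibly complementing), gives $\|\epsilon\|_{\infty\to1}\le C_0 Z$ for an absolute constant $C_0$. This is the role of Lemmas~\ref{lem:cut-infty} and \ref{lem:restricted-cut}.

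Next I would bound $\|\epsilon\|_{\infty\to1}$ below by row and column $\ell_2$ norms. Fix $y$ to be Rademacher, take $x_i=\mathrm{sign}(\sum_j\epsilon_{ij}y_j)$, and average over $y$. The sharp Khintchine (Szarek) inequality gives $\mathbbm E_y|\sum_j\epsilon_{ij}y_j|\ge \|\epsilon_{i\cdot}\|_2/\sqrt2$, hence $\|\epsilon\|_{\infty\to1}\ge \sum_i\|\epsilon_{i\cdot}\|_2/\sqrt2$. The same bound holds for columns. I expect the exact constant in $c^*(\alpha)$, with its factors $2\ln(1+\sqrt2)$ and $\pi$, to arise instead from routing this step through a Grothendieck/Krivine-type comparison as in \cite{alon2006approximating}. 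Either way, the outcome is $Z\ge c_1 S$ with $S:=\sum_i\|\epsilon_{i\cdot}\|_2+\sum_j\|\epsilon_{\cdot j}\|_2$ and an explicit constant $c_1$.

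\emph{Step 2: relate $\mathbbm E S$ to $\bar\tau$, and control fluctuations.} By Lemma~\ref{lem:residual-norm-relations} and Assumption~\ref{ass:nondegeneracy}(w), $\|\sigma\|_F$ is dominated by $\mathbbm E S$ up to a constant. Hence $\bar\tau=1.01\,\mathbbm E S+0.25\|\sigma\|_F\le C_2\,\mathbbm E S$, and $\bar\tau\to\infty$ (Remark~\ref{rem:tau-diverges}). For the lower tail I would apply the Klein--Rio inequality (Lemma~\ref{lem:klein-rio}), either to $Z$ or, more simply, to $S$. $S$ is a sum over rows and columns of Lipschitz convex functions of independent bounded entries, and its variance proxy is of order $V^2\lesssim B\,\mathbbm E S+\|\sigma\|_F^2+B\|\sigma\|_F$, which is again $O(\bar\tau^2)$ but with a factor I must track. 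This gives
\[
\mathbbm P\left(S\le \mathbbm E S-\sqrt{-2\ln(1-\alpha)}\,v\right)\le 1-\alpha
\]
for the relevant variance scale $v$. If $v$ is only of order $\bar\tau$ rather than $o(\bar\tau)$, the threshold must be $\mathbbm E S-C_3\sqrt{-\ln(1-\alpha)}\bar\tau$. This forces $c^*$ to be small, and it is presumably where the $\sqrt{-\ln(1-\alpha)}$ in the denominator of $c^*(\alpha)$ comes from.

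\emph{Step 3: combine.} Put the bounds together so that, on the complement of the lower-tail event, $Z\ge c_1S>(1-\delta'/2)c^*(\alpha)\bar\tau$. The slack $\delta'$ makes the final inequality strict, giving the bound $1-\alpha-\eta$.

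The main obstacle is the bookkeeping of constants in Steps 1--2. The lower-tail deviation term is of the same order as $\bar\tau$ and does not vanish, so the argument closes only because $c^*(\alpha)$ is chosen small enough that $c_1(\mathbbm E S-\text{deviation})$ still exceeds $c^*\bar\tau$. I would first try applying the lower-tail inequality directly to $S$, where it is cleanest, and fall back on Lemma~\ref{lem:klein-rio} applied to $Z$ if the variance proxy for $S$ is not explicit enough to match $c^*(\alpha)$.
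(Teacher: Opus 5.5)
\textbf{There is a genuine gap in Steps~2--3.}

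Your reduction to the event $\|\epsilon\|_{\square;c}\le(1-\delta')c^*(\alpha)\hat{\bar\tau}$ is sound, and so is Step~1. The Szarek--Khintchine argument gives the pathwise bound $\|\epsilon\|_{\square;c}\ge\|\epsilon\|_{\infty\to1}/9\ge\|\epsilon\|_\dagger/(18\sqrt2)$, slightly better than the paper's Krivine constant.

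The problem comes after that. You allow the lower-tail deviation $t$ of $S=\|\epsilon\|_\dagger$ (or of $Z$) to be of the same order as $\bar\tau$, and you propose to absorb it by making $c^*(\alpha)$ small. That cannot work. On the good event you only learn $Z\ge c_1(\mathbbm E S-t)$, and if $t\ge\mathbbm E S$ this is vacuous for every $c^*>0$.

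The bounds you cite do not rule this out. Your variance proxy contains $\|\sigma\|_F^2$, while Lemma~\ref{lem:residual-norm-relations} only gives $\|\sigma\|_F\lesssim\mathbbm E\|\epsilon\|_\dagger/2$. So the deviation needed to push the tail probability below $1-\alpha$, roughly $\|\sigma\|_F\sqrt{2\ln(1/\beta)}$ with $\beta<1-\alpha$, need not be a small fraction of the mean. Routing through Lemma~\ref{lem:cut-concentration} is worse, because the mean lower bound there is only $\mathbbm E\|\epsilon\|_\dagger/(18K_G(\infty))$. In the worst case the cited bounds allow, $\mathbbm E\|\epsilon\|_\dagger\approx2\|\sigma\|_F$, and then a one-regime argument closes only for very small $\alpha$.

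Your guess about where $\sqrt{-\ln(1-\alpha)}$ comes from is also off. A subtracted deviation would give a constant of the form $a-b\sqrt{-\ln(1-\alpha)}$, which turns negative for large $\alpha$. In $c^*(\alpha)$ the logarithm sits in the denominator.

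\textbf{What the paper does.} The missing piece is the regime where $\|\sigma\|_F$ is a nonnegligible fraction of $\mathbbm E\|\epsilon\|_\dagger$. The paper splits $\mathcal F$ according to whether $\mathbbm E\|\epsilon\|_\dagger\ge C_{\beta,\delta}\|\sigma\|_F$, with $C_{\beta,\delta}$ growing like $\sqrt{-\ln\beta}$.
\begin{itemize}
\item Above the split, $V\lesssim\mathbbm E\|\epsilon\|_\dagger/C_{\beta,\delta}$, so the Klein--Rio lower tail gives probability at most $\beta$.
\item Below the split, $\bar\tau$ is at most a constant multiple of $\|\sigma\|_F$, with the constant growing like $\sqrt{-\ln\beta}$. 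The paper then uses the pathwise bound $\|\epsilon\|_{\square;c}\ge\|\epsilon\|_F/(9K_G(\infty))$ together with Bernstein concentration of $\|\epsilon\|_F$ around $\|\sigma\|_F$.
\end{itemize}
Equating the two thresholds is what produces $c^*(\alpha)$.

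\textbf{How your own route could be closed.} Your one-regime argument on $S$ can be rescued with a variance bound you did not use. Let $a_i=\|\epsilon_{i\cdot}\|_2$ and $v_i=\sum_j\sigma_{ij}^2$. The $a_i$ are independent, with $\operatorname{Var}(a_i)\le\min\{v_i,4B^2\}$ and $\mathbbm E a_i\ge v_i/\sqrt{v_i+4B^2}$, and the same holds for columns. Hence $\operatorname{Var}(S)\le2\sqrt8\,B\,\mathbbm E S=o((\mathbbm E S)^2)$, and Chebyshev gives $S\ge(1-o(1))\mathbbm E S$ with probability approaching one.

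Combine this with $\bar\tau\le(1.135+o(1))\mathbbm E S$, which follows from \eqref{swapexp3} with $\delta\uparrow1$. Also note $c^*(\alpha)<1/(18\bar K_G\cdot1.135)$, where $\bar K_G=\pi/(2\ln(1+\sqrt2))$. Together these would show that the coverage probability actually tends to zero, which is stronger than the stated bound.
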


Two key differences distinguish the content of Proposition~\ref{prop:propfour} from that of Proposition~\ref{prop:propthree}. The first difference is that Proposition~\ref{prop:propfour} applies to any interval that contains $\hat{\theta}(g_1,g_2)$, not just intervals of the form ``point estimate plus or minus a critical value times a standard error.'' The second difference is that the coverage probability of the interval is eventually below $1-\alpha$, not just for a subcollection of models, but for any asymptotic sequence in $\mathcal{F}$ (i.e. the left hand side has a ``$\limsup$'' instead of a ``$\liminf$''). This  conclusion is stronger than in Proposition~\ref{prop:propthree} because it bounds the asymptotic coverage probability away from $1-\alpha$ uniformly over the model class, rather than merely constructing one unfavorable subclass. We say that $CI_2$ is uniformly ``sup-norm rate optimal'' since the width of the interval is measured using the sup-norm.\looseness=-1

Since network economists typically characterize network structure by comparing estimated group densities in practice, we believe that the requirement in Proposition~\ref{prop:propfour} that each interval contain the corresponding estimated density is reasonable. It ensures that interval-based evidence for an ordering, separation, or deviation from a benchmark cannot contradict the corresponding empirical density comparison. In this sense, the confidence intervals refine, rather than replace, the descriptive comparison of estimated
densities.The requirement also cannot be dropped from Proposition~\ref{prop:propfour}.\footnote{For example, fix $\theta_0\in(0,1)$ and consider a sequence of bipartite networks with independent Bernoulli$(\theta_0)$ entries. The collection $I(g_1,g_2;\alpha)=\{\theta_0\}$ has zero width and coverage one
along this sequence, although it fails to cover under models with different densities and therefore does not satisfy \eqref{simul}.} We conjecture that the requirement can be dropped if the $\limsup$ in the conclusion is replaced with a $\liminf$ as in Proposition~\ref{prop:propthree}, but leave this to future work.\looseness=-1

The proof of Proposition~\ref{prop:propfour} is in Appendix Section~\ref{app:propositions} below. The main idea of the proof builds on an argument described in the third paragraph of Section 4.2 of \cite{rudelson2007sampling}. In their paper, the authors focus on the special case where the entries of $Y$ take value $1$ or $-1$ with equal probability and, for that model, derive a lower bound on a quantity that is similar in spirit to our maximum estimation error $\max_{(g_1,g_2) \in \mathcal{G}_c}|\hat{\theta}(g_1,g_2) - \theta(g_1,g_2)|$ using  Khintchine's inequality. Our proof strategy is similar, but we do not require the entries of $Y$ have this specific distribution. To accomplish this, we use a version of Grothendieck's inequality due to \cite{krivine1979constantes} as an alternative to Khintchine's inequality. The results is the lower bounds of our Lemma~\ref{lem:dagger-infty} of Appendix Section~\ref{app:matrix-norm-lemmas}. \looseness=-1

The constant $c^*(\alpha)$ is conservative. It is produced by the lower-tail cut-norm bound, Krivine's upper bound on the Grothendieck constant, and the two-regime argument in the proof. At $\alpha=0.05$, $c^*(0.05)\approx0.00493$. The rate conclusion is the main content of Proposition~\ref{prop:propfour}. We do not claim that this width condition is sharp.\looseness=-1

Since Proposition~\ref{prop:propfour} establishes that, for any sequence of random graph models satisfying Assumptions~\ref{ass:model}, \ref{ass:nondegeneracy}(w), \ref{ass:scale-consistency}(ii), and \ref{ass:scale-consistency}(iii), the width of $CI_2$ is rate-optimal, it follows that, under these conditions, $CI_1$ is rate sub-optimal for any sequence of models satisfying these assumptions, Assumption~\ref{ass:scale-consistency}(i), and $\frac{\sqrt{N_1+N_2}||\sigma||_F}{\mathbbm{E}_F\left[\sum_{i \in [N_1]}\sqrt{\sum_{j \in [N_2]}\epsilon_{ij}^2} + \sum_{j \in [N_2]}\sqrt{\sum_{i \in [N_1]}\epsilon_{ij}^2}\right]}  \to \infty$ as $N_1,N_2 \to \infty$.\footnote{See Remark~\ref{rem:ci1-width} of Section~\ref{sec:first-interval} and Remark~\ref{rem:ci2-width} of Section~\ref{sec:second-interval} above for the widths of these intervals.}\looseness=-1

To understand this last condition, write\ $a_i:=\left(\sum_{j\in[N_2]}\epsilon_{ij}^2\right)^{1/2}$ and  $b_j:=\left(\sum_{i\in[N_1]}\epsilon_{ij}^2\right)^{1/2}$ so that $\|\sigma\|_F^2 = \sum_{i \in [N_1]} \mathbb E_F[a_i^2] = \sum_{j \in [N_2]} \mathbb E_F[b_j^2]$, 
 \begin{align*}
 \mathbbm{E}_F\left[\sum_{i \in [N_1]}\sqrt{\sum_{j \in [N_2]}\epsilon_{ij}^2} + \sum_{j \in [N_2]}\sqrt{\sum_{i \in [N_1]}\epsilon_{ij}^2}\right] = \sum_{i \in [N_1]}\mathbb E_F[a_i]+\sum_{j \in [N_2]}\mathbb E_F[b_j],
 \end{align*}  
and the condition $\frac{\sqrt{N_1+N_2}||\sigma||_F}{\mathbbm{E}_F\left[\sum_{i \in [N_1]}\sqrt{\sum_{j \in [N_2]}\epsilon_{ij}^2} + \sum_{j \in [N_2]}\sqrt{\sum_{i \in [N_1]}\epsilon_{ij}^2}\right]} \to \infty$ is equivalent to 
\[
\frac{\sum_{i \in [N_1]}\mathbb E_F[a_i]+\sum_{j \in [N_2]}\mathbb E_F[b_j]}{\sqrt{\sum_{i \in [N_1]} \mathbb E_F[a_i^2]} + \sqrt{\sum_{j \in [N_2]} \mathbb E_F[b_j^2]}}
=
o\!\left(\sqrt{N_1+N_2}\right).
\]
Suppose the nondegenerate entries of $a_i$ and $b_j$ are all supported in $[\underline{s}, \overline{s}]$. Let $R_1 := \sum_{i \in [N_1]}\mathbbm{1}\{\mathbbm{E}_F\left[a_i^2\right] > 0\}$ and $R_2 := \sum_{j \in [N_2]}\mathbbm{1}\{\mathbbm{E}_F\left[b_j^2\right] > 0\}$. Then $\frac{\sum_{i \in [N_1]}\mathbb E_F[a_i]+\sum_{j \in [N_2]}\mathbb E_F[b_j]}{\sqrt{\sum_{i \in [N_1]} \mathbb E_F[a_i^2]} + \sqrt{\sum_{j \in [N_2]} \mathbb E_F[b_j^2]}} \leq \left(\sqrt{R_1}+\sqrt{R_2}\right)\frac{\overline{s}}{\underline{s}}$ which is $o(\sqrt{N_1+N_2})$ if $\overline{s}/\underline{s}$ is uniformly bounded and  $\left(\sqrt{R_1}+\sqrt{R_2}\right) = o\left(\sqrt{N_1} + \sqrt{N_2}\right)$.\looseness=-1 

In words, $CI_2$ improves on $CI_1$ when the variation in network connections is concentrated in a relatively small number of rows and columns of the adjacency matrix. Intuitively, the first interval $CI_1$ satisfies (\ref{simul}) by scaling the width of the conventional interval by a factor of $\sqrt{N_1+N_2}$. We demonstrated in Proposition~\ref{prop:propthree} that this rate is generally necessary for simultaneous coverage, however, it can be very conservative when many of the rows and columns of the adjacency matrix exhibit relatively little variation in the network connections. The second interval $CI_2$ is based on a direct bound on the maximum post-selection estimation error and therefore adapts to this heterogeneity. This is why the two intervals have  widths that are of a similar order of magnitude in homogeneous networks, but $CI_1$ can be substantially wider in sparse or degree-heterogeneous networks.\looseness=-1

\section{Empirical illustrations}\label{sec:empirical}
We demonstrate our methodology in the context of our three motivating examples. \looseness=-1

\subsection{Social network}\label{sec:facebook100}
We apply our intervals to the collegiate Facebook friendship networks introduced by \citet{traud2012social} and distributed through \cite{nr}. We retain every Facebook100 campus network containing between 500 and 10,000 Facebook users, which yields 50 campuses. For each campus, we observe a collection of attributes (e.g. major), each with several categories (e.g. economics), although the categories are not labeled. For each campus, attribute, and eligible category, let $g_1$ denote the group of users in that campus and category, and $g_2$ contain the users with a nonmissing value of the attribute who are not in that category. We use the density difference  $\Delta(g_1,g_2) =\theta(g_1,g_1)-\theta(g_1,g_2)$. A positive difference means that members of the category are more densely connected to one another, which we interpret as evidence of homophily. We restrict attention to ``eligible'' differences where $g_1$ and $g_2$ contain at least 50 users and are at least 5 percent of the full campus. We estimate the parameters $\sigma(g_1,g_2)$, $\bar{\tau}$ and $V$ using the singular value thresholding algorithm described in Appendix Section ~\ref{sec:usvt}. \looseness=-1

Table~\ref{tab:facebook100-type-rest-summary} counts every eligible difference for five attributes: gender, class year, student/faculty status, major, and residence or dormitory. The confidence intervals in the table are the difference intervals described in Section~\ref{sec:universal-validity} above. Appendix Sections~\ref{app:undirected-normalization} and \ref{app:facebook100-details} describe the density normalization, sample construction, and confidence intervals. We construct each campus-level simultaneous collection using $\alpha_{\mathrm{FB}}=0.05/50$. This Bonferroni correction is meant to control familywise error across the 50 campuses. 
\looseness=-1

\begin{table}[!htbp]
\centering
\scriptsize
\setlength{\tabcolsep}{3pt}
\renewcommand{\arraystretch}{1.10}
\caption{Facebook100 evidence of a homophily structure}
\label{tab:facebook100-type-rest-summary}
\resizebox{\textwidth}{!}{%
\begin{tabular}{lrrrrrrrr}
\toprule
Attribute
& Campuses
& Differences
& Median 
& Est. $+/-$
& $CI_{0,\Delta}$ $+/-$
& $CI_{1,\Delta}$ $+/-$
& $CI_{2,\Delta}$ $+/-$
& $CI_{\cap,\Delta}$ $+/-$ \\
\midrule
Gender
& 47 & 94 & 0.0016
& 75/19 & 68/17 & 0/0 & 0/0 & 0/0 \\

Class year
& 50 & 277 & 0.0481
& 277/0 & 277/0 & 74/0 & 0/0 & 73/0 \\

Student/faculty status
& 50 & 109 & 0.0174
& 109/0 & 109/0 & 28/0 & 0/0 & 28/0 \\

Major
& 50 & 208 & 0.0140
& 207/1 & 207/0 & 0/0 & 0/0 & 0/0 \\

Residence/dormitory
& 41 & 151 & 0.0471
& 151/0 & 151/0 & 0/0 & 0/0 & 0/0 \\
\bottomrule
\end{tabular}%
}
\begin{minipage}{0.98\textwidth}
\footnotesize
\emph{Notes:} For a given category, a difference is the friendship density within the category minus the density between members of the category and all other users with a nonmissing value of the same attribute. ``Campuses'' reports the number of campuses with at least one eligible (see text) difference. ``Differences'' reports the total number of eligible differences. ``Median'' is the median value of the differences. ``Est. $+/-$'' reports the numbers of positive and negative differences across the 50 campuses. ``$CI_{0,\Delta}$ $+/-$,'' ``$CI_{1,\Delta}$ $+/-$,'' ``$CI_{2,\Delta}$ $+/-$,'' and ``$CI_{\cap,\Delta}$ $+/-$'' report the numbers of difference intervals (see Section \ref{sec:difference-intervals} above) that lie completely above and completely below $0$.
\end{minipage}
\end{table}

The point estimates indicate within-attribute homophily by class year, student/faculty status, major, and residence. Every eligible class-year, student/faculty, and residence difference is positive, as are 207 of the 208 major differences. The median difference in densities is 0.0481 for class year, 0.0471 for residence, 0.0174 for student/faculty status, and 0.0140 for major. Gender gaps are much smaller and less uniform. The median is 0.0016, with 75 positive and 19 negative estimates. The benchmark $CI_{0,\Delta}$ corroborates most of the point estimate comparisons, as it  excludes 0 for every class-year, student/faculty, and residence difference, as well as 207 of the 208 major differences, and for 68 of the 94 gender differences.
\looseness=-1

Our simultaneous inference correction materially changes these conclusions. We say that a difference \emph{survives our selection correction} when both the benchmark $CI_{0,\Delta}$ and our proposed $CI_{\cap,\Delta}$ exclude zero in the same direction. Under this definition, 73 class-year differences survive in 35 campuses, and 28 student/faculty-status differences survive in 26 campuses. No gender, major, or residence difference survives. No difference is established by $CI_{2,\Delta}$ alone. \looseness=-1

\subsection{Trade network}
\label{sec:baci}
Motivated by the analysis of \citet{carvalho2014micro}, we apply
our intervals to the 2023 BACI HS22 trade data \citep{gaulier2010baci}. We
construct a bipartite network with supplier countries as rows and
destination--product markets as columns, where a destination--product
market is a pair consisting of an importing country and an HS6 product. Supplier country $i$ is connected to market $jk$ when it exports more than \$50 million of product $k$ to destination
$j$. The resulting network contains 226 supplier countries and 1,266,956
destination--product markets. \looseness=-1

We rank suppliers using binary degree, weighted outdegree, and bipartite Katz
centrality (see Appendix Section~\ref{app:baci-details} for the rankings and
density normalization). For each ranking, we define a high-centrality supplier group $g_h$ containing the top 5, 10, or 20 percent of countries and let $g_h^c$ contain the
remaining suppliers. For a market group $g_m$, we estimate the difference $\Delta(g_h,g_h^c,g_m) = \theta(g_h,g_m)-\theta(g_h^c,g_m)$. A positive difference means that the selected high-centrality suppliers are
more densely connected to the markets in $g_m$ than the remaining suppliers, which we interpret as evidence of a hub-and-spoke structure. We estimate the parameters $\sigma(g_1,g_2)$, $\bar{\tau}$, and $V$ using the raw second-moment plug-in estimator described in Appendix Section~\ref{sec:raw-plugin}.  \looseness=-1

Table~\ref{tab:baci-contrast-family-summary} counts every difference belonging to three market-group families. The confidence intervals are the induced difference intervals described in Section~\ref{sec:universal-validity} above. Because each interval collection simultaneously covers every admissible
supplier and market group, no additional correction is needed for the number
or choice of centrality rankings, supplier tiers, or market groups reported
within a network. \looseness=-1

\begin{table}[!htbp]
\centering
\scriptsize
\setlength{\tabcolsep}{4pt}
\renewcommand{\arraystretch}{1.10}
\caption{BACI evidence of a hub-and-spoke structure}
\label{tab:baci-contrast-family-summary}
\label{tab:baci-core-periphery}
\resizebox{\textwidth}{!}{%
\begin{tabular}{lrrrrrrr}
\toprule
Market group family
& Differences
& Median 
& Est. $+/-$
& $CI_{0,\Delta}$ $+/-$
& $CI_{1,\Delta}$ $+/-$
& $CI_{2,\Delta}$ $+/-$
& $CI_{\cap,\Delta}$ $+/-$ \\
\midrule
All destination--product markets
& 9
& 0.00134
& 9/0
& 9/0
& 0/0
& 3/0
& 3/0 \\

Broad HS product baskets
& 63
& 0.00108
& 63/0
& 63/0
& 0/0
& 0/0
& 0/0 \\

Large or central market groups
& 54
& 0.00390
& 54/0
& 54/0
& 0/0
& 9/0
& 9/0 \\
\bottomrule
\end{tabular}%
}
\begin{minipage}{0.98\textwidth}
\footnotesize
\emph{Notes:} For a given supplier ranking, tier, and market group, a difference is the link density of the high-centrality supplier group to the market minus the corresponding density of the remaining suppliers. ``Differences'' reports the number of centrality measure--threshold--market triplets for a given market group family. ``Median'' is the median value of the differences. ``Est. $+/-$'' reports the numbers of positive and negative differences. ``$CI_{0,\Delta}$ $+/-$,'' ``$CI_{1,\Delta}$ $+/-$,'' ``$CI_{2,\Delta}$ $+/-$,'' and ``$CI_{\cap,\Delta}$ $+/-$'' report the numbers of difference intervals (see Section \ref{sec:difference-intervals} above) that lie completely above and completely below $0$. \looseness=-1
\end{minipage}
\end{table}

The point estimates and benchmark $CI_{0,\Delta}$ indicate a hub-and-spoke structure for all 126 differences across the 3 market group families, with the median difference in densities ranging from $0.00108$ to $0.00390$. Our simultaneous inference correction supports these findings for three differences in the all destination--product markets family associated with a threshold of 20 percent, nine differences in the large or central market groups family, and none for the Broad HS product baskets family. No difference is established by $CI_{1,\Delta}$ alone. 
\looseness=-1

\subsection{Worker-flow network}
\label{sec:psid}
Following \citet{schmutte2014free}, we use PSID worker histories from 1987--1993 to construct a directed pseudo-employer mobility network. A pseudo-employer is the interaction of a three-digit industry code
and a three-digit occupation code. For each worker, we order the person-year observations and record a directed transition from pseudo-employer $i$ to pseudo-employer $j$ when the worker is observed being employed in $i$ at one time period and in $j$ at the next time period for which industry or occupation information is observed. We use $M_{ij}$ to denote the number of observed transitions from $i$ to $j$ in the time frame. These transition counts are used to select the Louvain communities as described in Remark~\ref{rem:external-auxiliary-data}.  \looseness=-1

To select mobility segments, we form the undirected weighted network with edge weight $W_{ij} = M_{ij} + M_{ji}$ and apply a weighted Louvain community detection algorithm to its largest connected component, retaining the four largest communities. Our density measures concern the directed binary mobility network $Y_{ij} = \mathbbm{1}\{M_{ij} > 0\}$. The resulting analysis universe contains 6,039 pseudo-employers and 16,123 distinct directed links.\footnote{The empirical analysis treats this largest connected component as its analysis universe for defining $\mathcal{G}_c$. Our simultaneous theory protects against subsequent selection of groups within this universe, but does not separately account for the data-dependent component-selection step. Correcting for the selection of the largest connected component would lead to even wider intervals and so not change any of the content of Table \ref{tab:psid-mobility-segments}. \looseness=-1} For a segment $g$ and its complement $g^c$, we estimate the density difference $\Delta(g,g^c) := \theta(g,g) - \frac{\theta(g,g^c)+\theta(g^c,g)}{2}$. A positive difference means that the pseudo-employers in $g$ are more densely connected to one another than to pseudo-employers outside the segment, averaging the outward and inward directions. We compare segments selected by the weighted Louvain community detection algorithm to segments defined by major industry, major occupation, and their interaction. Appendix Section~\ref{app:psid-details} gives the complete construction and density normalization. We estimate the parameters $\sigma(g_1,g_2)$, $\bar{\tau}$, and $V$ using the raw second-moment plug-in estimator described in Appendix Section~\ref{sec:raw-plugin}. 
\looseness=-1

\begin{table}[!htbp]
\centering
\scriptsize
\setlength{\tabcolsep}{3pt}
\renewcommand{\arraystretch}{1.10}
\caption{PSID evidence of a segmented market structure}
\label{tab:psid-market-structure-summary}
\label{tab:psid-mobility-segments}
\resizebox{\textwidth}{!}{%
\begin{tabular}{lrrrrrrr}
\toprule
Segment definition
& Differences
& Median
& Est. $+/-$
& $CI_{0,\Delta}$ $+/-$
& $CI_{1,\Delta}$ $+/-$
& $CI_{2,\Delta}$ $+/-$
& $CI_{\cap,\Delta}$ $+/-$ \\
\midrule
Weighted Louvain segments
& 4 & 0.00499 & 4/0 & 4/0 & 0/0 & 0/0 & 0/0 \\

Major industry segments
& 6 & 0.00209 & 6/0 & 6/0 & 0/0 & 0/0 & 0/0 \\

Major occupation segments
& 7 & 0.00138 & 7/0 & 7/0 & 0/0 & 0/0 & 0/0 \\

Major industry $\times$ occupation segments
& 4 & 0.00195 & 4/0 & 4/0 & 0/0 & 0/0 & 0/0 \\
\bottomrule
\end{tabular}%
}
\begin{minipage}{0.98\textwidth}
\footnotesize
\emph{Notes:} For a given market segment, a difference is the empirical density of connections within the segment minus half the empirical density of connections between the segment and its complement. ``Differences'' reports the number of differences for a given segmentation of the labor market. ``Median'' is the median value of the differences. ``Est. $+/-$ reports the number of positive and negative differences. ``$CI_{0,\Delta}$ $+/-$,'' ``$CI_{1,\Delta}$ $+/-$,'' ``$CI_{2,\Delta}$ $+/-$,'' and ``$CI_{\cap,\Delta}$ $+/-$'' report the numbers of difference intervals (see Section \ref{sec:difference-intervals} above) that lie completely above and completely below $0$.\looseness=-1
\end{minipage}
\end{table}

The point estimates and benchmark $CI_{0,\Delta}$ indicate a segmented market structure for all 21 differences across 4 segment definitions, with a median gap of $0.00499$ for the segments selected by the Louvain algorithm and a median gap of $0.00138$--$0.00209$ for segments based on major industry and occupation codes. The simultaneous correction materially changes the conclusion. Specifically, none of the $CI_{1,\Delta}$, $CI_{2,\Delta}$, or $CI_{\cap,\Delta}$ intervals exclude zero. Our simultaneous intervals are wide in this application because the selected communities contain only about 5-10 percent of pseudo-employers. Over our model class, even in the absence of a genuine segmented structure, some data-selected groups of this size can exhibit fluctuations in empirical densities on the order of $0.001-0.005$ purely through idiosyncratic variation. As a result, our simultaneous difference intervals all cross zero.  
\looseness=-1

\section{Conclusion}\label{sec:conclusion}
This paper considers inference for the density of network connections between groups of agents, such as communities, blocs, or markets. Such density measures are widely used to characterize stochastic network structure, but in practice the relevant groups are often selected using the network itself. We develop two confidence intervals that are universally valid post-selection, in the sense that they guarantee simultaneous coverage over all group pairs whose relative sizes are bounded away from zero. The first interval inflates the critical value of a conventional fixed-group interval. The second uses a Talagrand-type concentration inequality to control the maximum post-selection estimation error directly. Both intervals are simple to compute and scalable to large networks. The main theoretical distinction is that the Talagrand interval attains the optimal sup-norm rate, up to constants, while the inflated conventional interval can be substantially wider in sparse or degree-heterogeneous networks. Three empirical illustrations show that post-selection correction can materially alter conclusions in practice.\looseness=-1

\bibliographystyle{aer}
\bibliography{literature}

\clearpage
\appendix

\begin{spacing}{1.5}
\section{Proof of claims}\label{app:proofs}
\subsection{Terminology and notation}\label{app:proof-notation}
For any vector $X$ with real-valued entries of length $N$ and positive integer $p$, we use $X_i$ to refer to the $i$th entry of $X$, $||X||_{p} := \left(\sum_{i \in [N]}|X_i|^p\right)^{1/p}$ to refer to the $p$-norm of $X$ and $||X||_{\infty} := \max_{i \in [N]}|X_i|$ to refer to the maximum norm. 

For any $N_1 \times N_2$ matrix $X$ with real-valued entries, we use $X_{ij}$ to refer to the $ij$th entry of $X$ and $X^{T}$ to refer to the transpose of $X$. We also use the following notation
\begin{align*}
||X||_{F} &:= \sqrt{\sum_{i\in[N_1], j \in[N_2]}X_{ij}^2} \\
||X||_\square &:= \max_{\phi \in \{0,1\}^{N_1}, \psi \in \{0,1\}^{N_2}}\left|\sum_{i \in [N_1],j \in [N_2]}X_{ij}\phi_i\psi_j\right|\\
||X||_{\square;c} &:= \max_{\phi \in \{0,1\}^{N_1}, \psi \in \{0,1\}^{N_2}: (\phi,\psi) \in \mathcal{G}_c}\left|\sum_{i \in [N_1],j \in [N_2]}X_{ij}\phi_i\psi_j\right|\\
 ||X||_{\infty \to 1} &:= \max_{\tilde{\phi} \in \{-1,1\}^{N_1}, \tilde{\psi} \in \{-1,1\}^{N_2}}\left|\sum_{i \in [N_1],j \in [N_2]}X_{ij}\tilde{\phi}_i\tilde{\psi}_j\right| \\
 ||X||_{1,\star} &:= \sum_{i \in [N_1]}\left|\sum_{j \in [N_2]}X_{ij}\right|  \\
 ||X||_{1,2} &:= \sum_{i \in [N_1]}\sqrt{\sum_{j \in [N_2]}X_{ij}^2}  \\
 ||X||_{\spadesuit} &:=  ||X||_{1,\star} + ||X^{T}||_{1,\star}  \\
 ||X||_{\dagger} &:= ||X||_{1,2} + ||X^{T}||_{1,2} 
\end{align*}
where for any $c \in [0,1/2]$
\begin{align*}
\mathcal{G}_c := \{(\phi,\psi) \in \{0,1\}^{N_1} \times \{0,1\}^{N_2}: \sum_{i\in[N_1]}\phi_{i} \geq cN_1 \text{ and } \sum_{j\in[N_2]}\psi_{j} \geq cN_2\}. 
\end{align*}
It follows that $||X||_{\square} = ||X||_{\square;0}$ and for any $c', c'' \in [0,1/2]$ with $c' < c''$ we have $\mathcal{G}_{c''} \subset \mathcal{G}_{c'}$ and so $||X||_{\square;c''} \leq ||X||_{\square;c'} \leq ||X||_{\square}$. When $c > 0$, we refer to $||X||_{\square;c}$ as the restricted cut norm. The case $c=0$ is the conventional (unnormalized) cut norm.

\subsection{Lemmas}\label{app:lemmas}
The following lemmas are used to demonstrate our main results. We sort the lemmas into three sections. Appendix Section~\ref{app:matrix-norm-lemmas} contains results that relate the matrix norms listed in Appendix Section~\ref{app:proof-notation}. Appendix Section~\ref{app:concentration-lemmas} contains finite sample concentration inequalities. Appendix Section~\ref{app:additional-lemmas} contains two additional results. 

\subsubsection{Lemmas relating matrix norms}\label{app:matrix-norm-lemmas}
The following  Lemma~\ref{lem:cut-infty} relates the $||\cdot||_{\infty\to1}$ and $||\cdot||_{\square}$ matrix norms. It refines  Lemma 3.1 of \cite{alon2006approximating}.
\begin{lemma}\label{lem:cut-infty} For any $N_1 \times N_2$ matrix $X$ with real-valued entries, the norms $||X||_\square$ and $||X||_{\infty \to 1}$ are related by 
\begin{align*}
||X||_{\infty\to1} \leq 4||X||_{\square} \leq ||X||_{\infty\to1} + ||X||_{\spadesuit} + \left|\sum_{i \in [N_1], j \in [N_2]}X_{ij}\right| \leq 4||X||_{\infty\to1}. 
\end{align*}
Furthermore, if $\sum_{i \in [N_1]}X_{ij} =  0$ for every $j \in [N_2]$ and $\sum_{j \in [N_2]}X_{ij} = 0$ for every $i \in [N_1]$ then $||X||_{\infty\to1} = 4||X||_{\square}$. 
\end{lemma}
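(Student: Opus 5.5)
The plan is to pass between $\{0,1\}$ vectors and $\{-1,1\}$ vectors through the affine bijection $\tilde\phi = 2\phi - \iota_{N_1}$, $\tilde\psi = 2\psi-\iota_{N_2}$, where $\iota_{N}$ is the vector of ones. Every inequality in the chain then reduces to an elementary bound on the terms of one algebraic identity.

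\emph{Step 1 (first inequality).} Fix $\tilde\phi\in\{-1,1\}^{N_1}$ and $\tilde\psi\in\{-1,1\}^{N_2}$. Let $\phi^{+},\phi^{-}\in\{0,1\}^{N_1}$ be the indicators of the sets $\{\tilde\phi_i=1\}$ and $\{\tilde\phi_i=-1\}$, and define $\psi^{\pm}$ in the same way. Since $\tilde\phi=\phi^+-\phi^-$ and $\tilde\psi=\psi^+-\psi^-$, we have
\[
\tilde\phi^{T}X\tilde\psi = \phi^{+T}X\psi^{+} - \phi^{+T}X\psi^{-} - \phi^{-T}X\psi^{+} + \phi^{-T}X\psi^{-}.
\]
Each of the four terms is bounded in absolute value by $||X||_\square$, so $|\tilde\phi^{T}X\tilde\psi|\leq 4||X||_\square$. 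Maximizing over $(\tilde\phi,\tilde\psi)$ gives $||X||_{\infty\to1}\leq 4||X||_\square$.

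\emph{Step 2 (middle inequality).} For $\phi\in\{0,1\}^{N_1}$ and $\psi\in\{0,1\}^{N_2}$, substitute $\phi=(\iota_{N_1}+\tilde\phi)/2$ and $\psi=(\iota_{N_2}+\tilde\psi)/2$ and expand:
\[
4\phi^{T}X\psi = \tilde\phi^{T}X\tilde\psi + \tilde\phi^{T}X\iota_{N_2} + \iota_{N_1}^{T}X\tilde\psi + \iota_{N_1}^{T}X\iota_{N_2}.
\]
We bound the four terms separately.
\begin{itemize}
\item The first term is at most $||X||_{\infty\to1}$ in absolute value.
\item The second satisfies $|\tilde\phi^{T}X\iota_{N_2}| = |\sum_i\tilde\phi_i\sum_jX_{ij}|\leq \sum_i|\sum_j X_{ij}| = ||X||_{1,\star}$.
\item Symmetrically, the third is at most $||X^{T}||_{1,\star}$.
\item The fourth is $|\sum_{ij}X_{ij}|$.
\end{itemize}
The second and third bounds together give $||X||_\spadesuit$. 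Taking the maximum over $(\phi,\psi)$ yields $4||X||_\square\leq ||X||_{\infty\to1}+||X||_\spadesuit+|\sum_{ij}X_{ij}|$.

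\emph{Step 3 (last inequality).} It suffices to show that $||X||_{1,\star}$, $||X^{T}||_{1,\star}$ and $|\sum_{ij}X_{ij}|$ are each at most $||X||_{\infty\to1}$. For the first, choose $\tilde\psi=\iota_{N_2}$ and $\tilde\phi_i=\operatorname{sign}(\sum_jX_{ij})$; this sign vector attains $||X||_{1,\star}$. The transpose case is identical. For the total sum, take both sign vectors equal to all ones. Adding the three bounds to $||X||_{\infty\to1}$ gives $4||X||_{\infty\to1}$.

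\emph{Step 4 (zero-margin case).} If all row and column sums vanish, then $X\iota_{N_2}=0$ and $\iota_{N_1}^{T}X=0$. The last three terms of the identity in Step 2 therefore vanish, leaving $4\phi^{T}X\psi=\tilde\phi^{T}X\tilde\psi$ for all $(\phi,\psi)$. Because $\phi\mapsto\tilde\phi$ is a bijection between $\{0,1\}^{N_1}$ and $\{-1,1\}^{N_1}$ (and likewise for $\psi$), taking maxima of absolute values gives $4||X||_\square=||X||_{\infty\to1}$.

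I do not expect a genuine obstacle; the only care needed is the bookkeeping in the expansion of Step 2, so that the cross terms are matched with the correct row-sum and column-sum norms.
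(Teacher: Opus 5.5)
Your proposal is correct and follows essentially the same route as the paper: the $\phi^{\pm},\psi^{\pm}$ decomposition for the first inequality, the affine identity $4\phi^{T}X\psi=\tilde\phi^{T}X\tilde\psi+\tilde\phi^{T}X\iota_{N_2}+\iota_{N_1}^{T}X\tilde\psi+\iota_{N_1}^{T}X\iota_{N_2}$ (the paper's equation (\ref{quart})) for the middle inequality and the zero-margin case, and sign-vector choices for the last inequality. The differences are cosmetic: in Step 3 you write the sign vectors out explicitly, where you should set $\operatorname{sign}(0):=1$ so that they lie in $\{-1,1\}^{N_1}$; and in Step 4 you get the equality directly from the bijection $\phi\mapsto 2\phi-\iota_{N_1}$, whereas the paper derives the reverse bound and combines it with Step 1.
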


\begin{proof}
We start with the second inequality that $4||X||_{\square} \leq ||X||_{\infty\to1} + ||X||_{\spadesuit} + \left|\sum_{i \in [N_1], j \in [N_2]}X_{ij}\right|$. Let $\phi \in \{0,1\}^{N_1}$ and $\psi \in \{0,1\}^{N_2}$ be arbitrary and set $\tilde{\phi} = 2\phi - 1$ and $\tilde{\psi} = 2\psi -1$ so that  
\begin{align*}
\sum_{i \in [N_1], j \in [N_2]}X_{ij}\tilde{\phi}_i\tilde{\psi}_j &= 4\sum_{i \in [N_1], j \in [N_2]}X_{ij}\phi_i\psi_j - 2\sum_{i \in [N_1], j \in [N_2]}X_{ij}\phi_i -  2\sum_{i \in [N_1], j \in [N_2]}X_{ij}\psi_j +  \sum_{i \in [N_1], j \in [N_2]}X_{ij} \\
 &= 4\sum_{i \in [N_1], j \in [N_2]}X_{ij}\phi_i\psi_j  - \sum_{i \in [N_1], j \in [N_2]}X_{ij}\tilde{\phi}_i -  \sum_{i \in [N_1], j \in [N_2]}X_{ij}\tilde{\psi}_j -   \sum_{i \in [N_1], j \in [N_2]}X_{ij}.
\end{align*}
Rearranging terms to put $4\sum_{i \in [N_1], j \in [N_2]}X_{ij}\phi_i\psi_j$ on the left-hand side gives
\begin{align}\label{quart}
4\sum_{i \in [N_1], j \in [N_2]}X_{ij}\phi_i\psi_j= \sum_{i \in [N_1], j \in [N_2]}X_{ij}\tilde{\phi}_i\tilde{\psi}_j + \sum_{i \in [N_1], j \in [N_2]}X_{ij}\tilde{\phi}_i +  \sum_{i \in [N_1], j \in [N_2]}X_{ij}\tilde{\psi}_j +   \sum_{i \in [N_1], j \in [N_2]}X_{ij}
\end{align}
and taking absolute values gives 
\begin{align*}
4\left|\sum_{i \in [N_1], j \in [N_2]}X_{ij}\phi_i\psi_j\right| \leq \left|\sum_{i \in [N_1], j \in [N_2]}X_{ij}\tilde{\phi}_i\tilde{\psi}_j\right| + \left|\sum_{i \in [N_1], j \in [N_2]}X_{ij}\tilde{\phi}_i\right| +  \left|\sum_{i \in [N_1], j \in [N_2]}X_{ij}\tilde{\psi}_j\right| +   \left|\sum_{i \in [N_1], j \in [N_2]}X_{ij}\right|.
\end{align*}

The first summand on the right-hand side is bounded by $||X||_{\infty\to1}$. The second summand is bounded by $||X||_{1,\star}$. The third summand is bounded by $||X^T||_{1,\star}$. It follows that $\left|4\sum_{i \in [N_1], j \in [N_2]}X_{ij}\phi_i\psi_j\right| \leq ||X||_{\infty\to1} + ||X||_{\spadesuit} +  \left|\sum_{i \in [N_1], j \in [N_2]}X_{ij}\right|$. Since the choice of $\phi$ and $\psi$ was arbitrary, it follows that $4||X||_{\square} \leq ||X||_{\infty\to1} + ||X||_{\spadesuit} +   \left|\sum_{i \in [N_1], j \in [N_2]}X_{ij}\right|$. \newline

Next we show the third inequality that $||X||_{\infty\to1} + ||X||_{\spadesuit} + \left|\sum_{i \in [N_1], j \in [N_2]}X_{ij}\right| \leq 4||X||_{\infty\to1}$. This follows directly from the fact that 
\begin{align*}
||X||_{\spadesuit} \leq 2\max_{\tilde{\phi} \in \{-1,1\}^{N_1}, \tilde{\psi} \in \{-1,1\}^{N_2}}\left|\sum_{i \in [N_1],j \in [N_2]}X_{ij}\tilde{\phi}_i\tilde{\psi}_j\right|  = 2||X||_{\infty\to1} \text{ and } \\
\left|\sum_{i \in [N_1], j \in [N_2]}X_{ij}\right| \leq  \max_{\tilde{\phi} \in \{-1,1\}^{N_1}, \tilde{\psi} \in \{-1,1\}^{N_2}}\left|\sum_{i \in [N_1],j \in [N_2]}X_{ij}\tilde{\phi}_i\tilde{\psi}_j\right|  = ||X||_{\infty\to1}.
\end{align*} \newline

Next we show that the first inequality that $||X||_{\infty\to1} \leq 4||X||_{\square}$.  Let $\tilde{\phi} \in \{-1,1\}^{N_1}$ and $\tilde{\psi} \in \{-1,1\}^{N_2}$ be arbitrary and set $\phi^+ =  \mathbbm{1}\{\tilde{\phi} > 0\}$,  $\phi^- =  \mathbbm{1}\{\tilde{\phi} < 0\}$, $\psi^+ =  \mathbbm{1}\{\tilde{\psi} > 0\}$ and  $\psi^- =  \mathbbm{1}\{\tilde{\psi} < 0\}$ so that $\tilde{\phi} = \left(\phi^+ - \phi^-\right)$, $\tilde{\psi} = \left(\psi^+ - \psi^-\right)$ and
\begin{align*}
\sum_{i \in [N_1], j \in [N_2]}X_{ij}\tilde{\phi}_i\tilde{\psi}_j  = \sum_{i \in [N_1], j \in [N_2]}X_{ij}\left(\phi^+_i - \phi^-_i\right)\left(\psi^+_j - \psi^-_j \right) \\
= \sum_{i \in [N_1], j \in [N_2]}X_{ij}\phi^+_i\psi^+_j - \sum_{i \in [N_1], j \in [N_2]}X_{ij}\phi^-_i\psi^+_j - \sum_{i \in [N_1], j \in [N_2]}X_{ij}\phi^+_i\psi^-_j + \sum_{i \in [N_1], j \in [N_2]}X_{ij}\phi^-_i\psi^-_j
\end{align*}
so that
\begin{align*}
\left|\sum_{i \in [N_1], j \in [N_2]}X_{ij}\tilde{\phi}_i\tilde{\psi}_j \right| \leq \left|\sum_{i \in [N_1], j \in [N_2]}X_{ij}\phi^+_i\psi^+_j\right| + \left|\sum_{i \in [N_1], j \in [N_2]}X_{ij}\phi^-_i\psi^+_j\right| \\+ \left|\sum_{i \in [N_1], j \in [N_2]}X_{ij}\phi^+_i\psi^-_j \right| + \left|\sum_{i \in [N_1], j \in [N_2]}X_{ij}\phi^-_i\psi^-_j\right|.
\end{align*}
Since each of the four summands on the right-hand side are bounded by $||X||_{\square}$, it follows that $\left|\sum_{i \in [N_1], j \in [N_2]}X_{ij}\tilde{\phi}_i\tilde{\psi}_j\right|  \leq 4||X||_{\square}$. Since the choice of $\tilde{\phi}$ and $\tilde{\psi}$ was arbitrary, it follows that $||X||_{\infty\to1} \leq 4||X||_{\square}$.  \newline

Finally, we show that $||X||_{\infty\to1} = 4||X||_{\square}$ when $\sum_{i \in [N_1]}X_{ij} =  0$ for every $j \in [N_2]$ and $\sum_{j \in [N_2]}X_{ij} = 0$ for every $i \in [N_1]$. This result follows from (\ref{quart}), since the condition $\sum_{i \in [N_1]}X_{ij} =  0$ for every $j \in [N_2]$ and $\sum_{j \in [N_2]}X_{ij} = 0$ for every $i \in [N_1]$ implies that $\sum_{i \in [N_1], j \in [N_2]}X_{ij}\tilde{\phi}_i +  \sum_{i \in [N_1], j \in [N_2]}X_{ij}\tilde{\psi}_j +   \sum_{i \in [N_1], j \in [N_2]}X_{ij} = 0$. As a result, (\ref{quart}) reduces to
\begin{align*}
\sum_{i \in [N_1], j \in [N_2]}X_{ij}\phi_i\psi_j= \frac{1}{4}\sum_{i \in [N_1], j \in [N_2]}X_{ij}\tilde{\phi}_i\tilde{\psi}_j
\end{align*}
and so $\left|\sum_{i \in [N_1], j \in [N_2]}X_{ij}\phi_i\psi_j\right| = \frac{1}{4}\left|\sum_{i \in [N_1], j \in [N_2]}X_{ij}\tilde{\phi}_i\tilde{\psi}_j\right|$. Since the right-hand side is bounded by $\frac{1}{4}||X||_{\infty\to1}$ we get $\left|\sum_{i \in [N_1], j \in [N_2]}X_{ij}\phi_i\psi_j\right| \leq \frac{1}{4}||X||_{\infty\to1}$, and since this inequality holds for any choice of $\phi \in \{0,1\}^{N_1}$ and $\psi \in \{0,1\}^{N_2}$, it follows that $||X||_{\square} \leq \frac{1}{4}||X||_{\infty\to1}$. But we demonstrated above that $||X||_{\infty\to1} \leq 4||X||_{\square}$, so it follows that $||X||_{\infty\to1} = 4||X||_{\square}$
\end{proof}

The following Lemma~\ref{lem:restricted-cut} relates $||\cdot||_{\infty\to1}$ and $||\cdot||_{\square;c}$ for $c \in [0,1/2]$. The result is, to our knowledge, original to our paper. 
\begin{lemma}\label{lem:restricted-cut} For any $N_1 \times N_2$ matrix $X$ with real-valued entries and $c \in [0,1/2]$, we have that
\begin{align*}
||X||_{\square;c} \leq ||X||_{\infty\to1} \leq 9||X||_{\square;c}.
\end{align*}
\end{lemma}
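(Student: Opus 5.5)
The plan is to prove the two inequalities separately. The lower bound follows directly from Lemma~\ref{lem:cut-infty}. The upper bound comes from writing every sign vector as a combination of indicator vectors of \emph{large} sets, so that every term is controlled by the restricted cut norm.

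For the first inequality, I will use that $\mathcal{G}_c \subseteq \mathcal{G}_0$, so $||X||_{\square;c} \leq ||X||_{\square}$, as noted in Appendix Section~\ref{app:proof-notation}. Lemma~\ref{lem:cut-infty} gives $4||X||_{\square} \leq 4||X||_{\infty\to1}$ through its second and third inequalities. Hence $||X||_{\square;c} \leq ||X||_{\square} \leq ||X||_{\infty\to1}$.

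For the second inequality, fix arbitrary $\tilde{\phi} \in \{-1,1\}^{N_1}$ and $\tilde{\psi} \in \{-1,1\}^{N_2}$. Let $\phi^+ = \mathbbm{1}\{\tilde\phi>0\}$ and $\phi^- = \mathbbm{1}\{\tilde\phi<0\}$. These two indicators partition $[N_1]$, so one of them, call it $\phi^\ast$, has at least $N_1/2 \geq cN_1$ ones; this is the only place $c \leq 1/2$ is needed. Then $\tilde{\phi} = s_1(2\phi^\ast - \iota_{N_1})$ for a sign $s_1 \in \{-1,1\}$, where $\iota_{N_1}$ is the all-ones vector. The all-ones vector trivially has $N_1 \geq cN_1$ ones. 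In the same way, write $\tilde\psi = s_2(2\psi^\ast - \iota_{N_2})$ with $\psi^\ast$ having at least $N_2/2$ ones. Expanding the bilinear form gives
\begin{align*}
\left|\sum_{i,j} X_{ij}\tilde\phi_i\tilde\psi_j\right| \leq 4\left|\sum_{i,j}X_{ij}\phi^\ast_i\psi^\ast_j\right| + 2\left|\sum_{i,j}X_{ij}\phi^\ast_i\right| + 2\left|\sum_{i,j}X_{ij}\psi^\ast_j\right| + \left|\sum_{i,j}X_{ij}\right|.
\end{align*}
Each of the four index pairs $(\phi^\ast,\psi^\ast)$, $(\phi^\ast,\iota_{N_2})$, $(\iota_{N_1},\psi^\ast)$ and $(\iota_{N_1},\iota_{N_2})$ lies in $\mathcal{G}_c$. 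So each absolute value is at most $||X||_{\square;c}$, and the right-hand side is at most $(4+2+2+1)||X||_{\square;c} = 9||X||_{\square;c}$. Since $\tilde\phi,\tilde\psi$ were arbitrary, this gives $||X||_{\infty\to1} \leq 9||X||_{\square;c}$.

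The only step that needs any care is the choice of representation. The naive route first bounds $||X||_{\square} \leq 4||X||_{\square;c}$ by writing small indicators as $\iota$ minus a large complement. It then applies Lemma~\ref{lem:cut-infty}, which only yields the constant $16$. Working directly with the sign vectors and choosing the majority sign class avoids that loss and gives the constant $9$.
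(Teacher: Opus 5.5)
Your proof is correct and follows the paper's argument essentially verbatim. The lower bound comes from $\mathcal{G}_c\subseteq\mathcal{G}_0$ and Lemma~\ref{lem:cut-infty}. The upper bound flips each sign vector to its majority class, so that $(\phi,\psi)\in\mathcal{G}_{1/2}\subseteq\mathcal{G}_c$, and then bounds the four terms of the expansion of $(2\phi-1)(2\psi-1)$ by $(4+2+2+1)\,||X||_{\square;c}$.
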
 

\begin{proof}
The first inequality $||X||_{\square;c} \leq ||X||_{\infty\to1}$ follows from the fact that $||X||_{\square;c} \leq ||X||_{\square;0} = ||X||_{\square}$ for any $c \in [0,1/2]$ and by Lemma~\ref{lem:cut-infty} which implies that $||X||_{\square} \leq  ||X||_{\infty\to1}$. \newline

We now show the second inequality that $||X||_{\infty\to1} \leq 9||X||_{\square;c}$. Let $\tilde{\phi} \in \{-1,1\}^{N_1}$ and $\tilde{\psi} \in \{-1,1\}^{N_2}$ be arbitrary and set $\phi = \frac{1+\tilde{\phi} }{2}\mathbbm{1}\{\sum_{i \in [N_1]}\tilde{\phi}_{i} \geq 0\} + \frac{1-\tilde{\phi}}{2}\mathbbm{1}\{\sum_{i \in [N_1]}\tilde{\phi}_{i} < 0\}$ and $\psi = \frac{1+\tilde{\psi} }{2}\mathbbm{1}\{\sum_{j \in [N_2]}\tilde{\psi}_{j} \geq 0\} + \frac{1-\tilde{\psi}}{2}\mathbbm{1}\{\sum_{j \in [N_2]}\tilde{\psi}_{j} < 0\}$ so that, by construction, $(\phi,\psi) \in \mathcal{G}_{1/2}$ and  
 \begin{align*}
 \left|\sum_{i \in [N_1],j\in[N_2]}X_{ij}\tilde{\phi}_{i}\tilde{\psi}_{j}\right| 
 = \left|\sum_{i \in [N_1], j \in [N_2]} X_{ij}\left(2\phi_i - 1\right)\left(2 \psi_j - 1 \right)\right| \\
 = \left|4\sum_{i \in [N_1], j \in [N_2]}X_{ij}\phi_i\psi_j - 2\sum_{i \in [N_1], j \in [N_2]}X_{ij}\phi_i - 2\sum_{i \in [N_1], j \in [N_2]}X_{ij}\psi_j  + \sum_{i \in [N_1], j \in [N_2]}X_{ij}\right|\\
 \leq  9\max_{(\phi,\psi) \in \mathcal{G}_{1/2}}\left|\sum_{i \in [N_1], j\in[N_2]}X_{ij}\phi_i\psi_j\right|
 \end{align*}
 where the first equality is because $\phi$ and $\psi$ are chosen so that $\left(2\phi_i - 1\right)$ is equal to $\tilde{\phi}$ or $-\tilde{\phi}$ and $\left(2\psi_j - 1\right)$ is equal to $\tilde{\psi}$ or $-\tilde{\psi}$, the inequality follows from the triangle inequality and the fact that $(\phi,\psi)$, $(\phi,\iota_{N_2})$, $(\iota_{N_1},\psi)$, and $(\iota_{N_1},\iota_{N_2})$ are all elements of $\mathcal{G}_{1/2}$, and $\iota_{N_t}$ is a $N_t$ dimensional vector with every entry equal to $1$ for $t \in \{1,2\}$. Since the choice of $\tilde{\phi}$ and $\tilde{\psi}$ was arbitrary, it follows that $||X||_{\infty\to1} \leq 9\max_{(\phi,\psi) \in \mathcal{G}_{1/2}}\left|\sum_{i \in [N_1], j\in[N_2]}X_{ij}\phi_i\psi_j\right|$ and since $\mathcal{G}_{1/2} \subseteq \mathcal{G}_{c}$ for $c \in [0,1/2]$ it follows that $||X||_{\infty\to1} \leq 9\max_{(\phi,\psi) \in \mathcal{G}_{c}}\left|\sum_{i \in [N_1], j\in[N_2]}X_{ij}\phi_i\psi_j\right| := 9||X||_{\square;c}$ for $c \in [0,1/2]$. 
\end{proof}

The following Lemma~\ref{lem:grothendieck} is a version of Grothendieck's inequality due to \cite{krivine1979constantes}.  

\begin{lemma}[Grothendieck]\label{lem:grothendieck}
For any $N_1\times N_2$ matrix $X$ with real-valued entries such that $||X||_{\infty\to1} \leq 1$, we have that 
\begin{align}
\sup_{\phi_i,\psi_j \in \mathbb{H}: ||\phi_i||_{\mathbb{H}}, ||\psi_j||_{\mathbb{H}} \leq 1}\left|\sum_{i \in [N_1], j \in [N_2]}X_{ij}\left<\phi_i,\psi_j\right>_{\mathbb{H}}\right| \leq K_G(d)
\end{align}
where $\mathbb{H}$ is an arbitrary real Hilbert space, $||\cdot||_{\mathbb{H}}$ and $\left<.,.\right>_{\mathbb{H}}$ are the associated norm and inner product, $d = \dim(\mathbb{H}) \in \mathbb{N}$, and $K_G(d)$ is the real Grothendieck constant of order $d$. 
\end{lemma}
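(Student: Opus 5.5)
The plan is to reduce the statement to the classical form of Grothendieck's inequality and then reproduce Krivine's argument, which gives $K_G(d)\leq K_G\leq \pi/(2\ln(1+\sqrt{2}))$. If $K_G(d)$ is read as the smallest constant for which the display holds over all matrices with $||X||_{\infty\to1}\leq 1$ and all $d$-dimensional Hilbert spaces, the inequality holds by definition. The substantive content is then that this constant is finite and bounded by Krivine's value, and this is the form in which the lemma is used later (the constant $\ln(1+\sqrt2)$ in $c^*(\alpha)$ comes from it).

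\textbf{Reductions.} I would first make three reductions.
\begin{itemize}
\item[(a)] Only the $N_1+N_2$ vectors $\phi_i,\psi_j$ matter. So we may replace $\mathbb H$ by their span, which has dimension at most $\min(d,N_1+N_2)$, and then identify it with $\mathbb R^{d'}$.
\item[(b)] The map $(\phi,\psi)\mapsto \sum_{ij}X_{ij}\langle\phi_i,\psi_j\rangle$ is linear in each $\phi_i$ separately and in each $\psi_j$ separately. Its absolute value is therefore convex in each vector, so its supremum over the unit ball is attained at unit vectors. We may thus assume $||\phi_i||=||\psi_j||=1$.
\item[(c)] By homogeneity it suffices to treat $||X||_{\infty\to1}=1$.
\end{itemize}

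\textbf{Gaussian sign rounding.} Let $g$ be a standard Gaussian vector in $\mathbb R^{d'}$. Grothendieck's identity states that for unit vectors $u,v$,
\[
\mathbb E\bigl[\mathrm{sign}\langle g,u\rangle\,\mathrm{sign}\langle g,v\rangle\bigr]=\tfrac{2}{\pi}\arcsin\langle u,v\rangle .
\]
I would prove it by restricting to the two-dimensional plane spanned by $u$ and $v$: there the product of signs equals $-1$ exactly when a uniformly random line through the origin separates $u$ and $v$, which happens with probability $\angle(u,v)/\pi$. This gives $1-2\angle(u,v)/\pi=\tfrac2\pi\arcsin\langle u,v\rangle$.

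Since every realization of the sign vectors lies in $\{-1,1\}^{N_1}\times\{-1,1\}^{N_2}$, it follows that
\[
\Bigl|\sum_{ij}X_{ij}\,\tfrac{2}{\pi}\arcsin\langle u_i,v_j\rangle\Bigr|\leq ||X||_{\infty\to1}
\]
for any unit vectors $u_i,v_j$.

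\textbf{Krivine's construction.} Set $c=\ln(1+\sqrt2)$, so that $\sinh c=1$. Given the unit vectors $\phi_i,\psi_j$, I would build unit vectors $u_i,v_j$ in the direct sum $\bigoplus_{k\geq0}(\mathbb R^{d'})^{\otimes(2k+1)}$ by
\[
u_i=\Bigl(\sqrt{\tfrac{c^{2k+1}}{(2k+1)!}}\,\phi_i^{\otimes(2k+1)}\Bigr)_{k\geq0},\qquad
v_j=\Bigl((-1)^k\sqrt{\tfrac{c^{2k+1}}{(2k+1)!}}\,\psi_j^{\otimes(2k+1)}\Bigr)_{k\geq0}.
\]
Using $\langle a^{\otimes m},b^{\otimes m}\rangle=\langle a,b\rangle^m$ and the Taylor series of $\sin$, this gives $\langle u_i,v_j\rangle=\sin(c\langle\phi_i,\psi_j\rangle)$. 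The same series with all signs positive gives $||u_i||^2=||v_j||^2=\sinh c=1$.

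This step is the main point of care. One must check that the construction lands in a Hilbert space where the rounding step applies. The vectors live in an infinite-dimensional space, but only finitely many of them are involved, so we can pass to their finite-dimensional span before rounding. The Gaussian identity only needs finitely many vectors, so this is enough.

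\textbf{Putting the steps together.} Apply the rounding bound to the $u_i,v_j$ just constructed. Because $\tfrac2\pi\arcsin\bigl(\sin(c\langle\phi_i,\psi_j\rangle)\bigr)=\tfrac{2c}{\pi}\langle\phi_i,\psi_j\rangle$, which uses $|c\langle\phi_i,\psi_j\rangle|\leq c<\pi/2$, the bound becomes
\[
\tfrac{2c}{\pi}\Bigl|\sum_{ij}X_{ij}\langle\phi_i,\psi_j\rangle\Bigr|\leq 1 .
\]
Hence the supremum in the statement is at most $\pi/(2\ln(1+\sqrt2))$. In particular $K_G(d)\leq K_G\leq \pi/(2\ln(1+\sqrt2))<\infty$ for every $d$.
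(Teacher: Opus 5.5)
Your proof is correct and takes the same route as the paper, whose proof is simply a citation to Krivine (1979). You reconstruct exactly Krivine's argument, namely Grothendieck's sign-rounding identity combined with the odd-tensor-power embedding that realizes $\sin(c\langle\phi_i,\psi_j\rangle)$ with $\sinh c=1$, which yields the bound $K_G(\infty)\leq\pi/(2\ln(1+\sqrt{2}))$ that the paper uses downstream (the value $K_G(2)=\sqrt{2}$ also quoted from that source is not recovered by your argument, but nothing later relies on it).
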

\begin{proof}
See \cite{krivine1979constantes}, who demonstrates that $K_G(2) = \sqrt{2} \leq 1.4143$ and $K_G(\infty) \leq \frac{\pi}{2\ln(1+\sqrt{2})} \leq 1.7823$.
\end{proof}

The following Lemma~\ref{lem:dagger-infty} relates $||\cdot||_{\dagger}$ to $||\cdot||_{\infty\to1}$ and $||\cdot||_{\spadesuit}$. The upper bound in (\ref{infonetodagger}) is Theorem 3 of \cite{gittens2009error}. The lower bound in (\ref{infonetodagger}) builds on an argument given in the third paragraph of Section 4.2 of \cite{rudelson2007sampling}. In this result, our contribution is to replace their use of Khintchine's inequality (which applies to Rademacher sums) with an alternative bound derived from Lemma~\ref{lem:grothendieck} above. 

\begin{lemma}\label{lem:dagger-infty} Let $Z$ be a random matrix with independent mean-zero entries. Suppose the expectations $\mathbbm{E}\left[||Z||_{\dagger}\right]$ and $\mathbbm{E}\left[||Z||_{\infty\to1}\right]$ exist. Then
\begin{align}
(2K_G(\infty))^{-1}\mathbbm{E}\left[||Z||_{\dagger}\right] \leq \mathbbm{E}\left[||Z||_{\infty\to1}\right] \leq 2\mathbbm{E}\left[||Z||_{\dagger}\right] \label{infonetodagger} \\
\mathbbm{E}\left[||Z||_{\spadesuit}\right] \leq 2\mathbbm{E}\left[||Z||_{\dagger}\right] \label{spadesuittodagger}
\end{align}
where $K_G(\infty) \leq \frac{\pi}{2\ln(1 + \sqrt{2})} \leq 1.7823$. 
\end{lemma}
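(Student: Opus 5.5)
The plan is to prove the three inequalities separately. The lower bound in (\ref{infonetodagger}) will follow from a deterministic inequality that holds for every fixed matrix. The upper bound in (\ref{infonetodagger}) will be imported from \cite{gittens2009error}. The bound (\ref{spadesuittodagger}) will follow from a short symmetrization argument.

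For the lower bound, I would show that for any fixed $N_1\times N_2$ real matrix $X$,
\begin{align*}
||X||_{1,2}\leq K_G(\infty)\,||X||_{\infty\to1} \quad\text{and}\quad ||X^T||_{1,2}\leq K_G(\infty)\,||X||_{\infty\to1}.
\end{align*}
If $||X||_{\infty\to1}=0$ then $X=0$ and there is nothing to show, so otherwise I would rescale $X$ by $||X||_{\infty\to1}$ and apply Lemma~\ref{lem:grothendieck}. To do so, take $\mathbb H=\mathbb R^{N_2}$ and $\psi_j=e_j$, the standard basis vectors. Take $\phi_i=X_{i\cdot}/||X_{i\cdot}||_2$, the normalized $i$th row, with $\phi_i=0$ when the row vanishes. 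All of these vectors have norm at most one. Then
\begin{align*}
\sum_{i,j}X_{ij}\langle\phi_i,\psi_j\rangle=\sum_i \frac{\sum_j X_{ij}^2}{||X_{i\cdot}||_2}=||X||_{1,2},
\end{align*}
so Lemma~\ref{lem:grothendieck} gives $||X||_{1,2}\leq K_G(N_2)||X||_{\infty\to1}\leq K_G(\infty)||X||_{\infty\to1}$. Here $K_G(d)$ is nondecreasing in $d$ because a lower-dimensional Hilbert space embeds isometrically in a higher-dimensional one. The same argument applied to $X^T$, together with $||X^T||_{\infty\to1}=||X||_{\infty\to1}$, gives the second inequality. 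Summing the two gives $||X||_\dagger\leq 2K_G(\infty)||X||_{\infty\to1}$ pointwise. Taking $X=Z$ and expectations yields the left inequality of (\ref{infonetodagger}); this step needs neither independence nor mean zero.

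For (\ref{spadesuittodagger}), I would work row by row. Let $\varepsilon_j$ be i.i.d.\ Rademacher variables independent of $Z$. Since the entries of each row are independent and mean zero, the standard symmetrization inequality gives
\begin{align*}
\mathbbm E\Big|\sum_j Z_{ij}\Big|\leq 2\,\mathbbm E\Big|\sum_j\varepsilon_jZ_{ij}\Big|.
\end{align*}
Conditionally on $Z$, Jensen's inequality gives $\mathbbm E_\varepsilon|\sum_j\varepsilon_jZ_{ij}|\leq(\sum_jZ_{ij}^2)^{1/2}$. Summing over $i$ yields $\mathbbm E||Z||_{1,\star}\leq 2\mathbbm E||Z||_{1,2}$. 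The same argument over columns gives $\mathbbm E||Z^T||_{1,\star}\leq 2\mathbbm E||Z^T||_{1,2}$, and adding the two gives (\ref{spadesuittodagger}).

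The upper bound $\mathbbm E||Z||_{\infty\to1}\leq 2\mathbbm E||Z||_\dagger$ is the step I expect to be the main obstacle. It is a genuine two-sided empirical-process bound: a maximum over $2^{N_1+N_2}$ sign patterns must be controlled by quantities that each involve only a single row or a single column. I would invoke Theorem 3 of \cite{gittens2009error} directly; its hypotheses are exactly independent, mean-zero entries with the stated expectations finite. If a self-contained argument were preferred, I would follow their route. First symmetrize, as above, to replace $Z$ by the Rademacher-multiplied matrix $(\varepsilon_{ij}Z_{ij})_{ij}$. Then write $||\cdot||_{\infty\to1}=\max_{\tilde\psi}\sum_i|\sum_j\cdot\,\tilde\psi_j|$ and apply a decoupling and comparison argument that splits the bound into a row term and a column term. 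The delicate part is tracking the constants so that the final factor is $2$.
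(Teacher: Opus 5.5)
Your proposal is correct and follows the paper's proof essentially step for step. It proves the deterministic bound $\|X\|_\dagger \le 2K_G(\infty)\|X\|_{\infty\to1}$ via Grothendieck with normalized rows and standard basis vectors, obtains the upper bound in (\ref{infonetodagger}) by citing Theorem 3 of \cite{gittens2009error}, and gets (\ref{spadesuittodagger}) by symmetrization followed by Jensen/Khintchine. The only difference is bookkeeping: you pay the factor $2$ inside the Rademacher symmetrization row by row, whereas the paper passes to $W = Z - Z'$, applies Khintchine to $W$ to get $\mathbbm{E}\|W\|_\spadesuit \le \mathbbm{E}\|W\|_\dagger$, and pays the factor $2$ through the triangle inequality for $\|\cdot\|_\dagger$.
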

\begin{proof}
The inequality $ \mathbbm{E}\left[||Z||_{\infty\to1}\right] \leq 2\mathbbm{E}\left[||Z||_{\dagger}\right]
$ is Theorem 3 of \cite{gittens2009error}. \newline 

To demonstrate inequality $(2K_{G}(\infty))^{-1}\mathbbm{E}\left[||Z||_{\dagger}\right] \leq \mathbbm{E}\left[||Z||_{\infty\to1}\right] $, we actually show the stronger result that for any $N_1 \times N_2$ matrix $X$, $||X||_{\dagger} \leq 2K_{G}(\infty)||X||_{\infty\to1}$. To show the stronger result, we first show that $||X||_{1,2} \leq K_{G}(\infty)||X||_{\infty\to1}$. Lemma~\ref{lem:grothendieck} above implies that  $\sum_{ij}X_{ij}\sum_{t}\phi_{it}\psi_{jt} \leq K_{G}(\infty)||X||_{\infty\to1}$ for any $\phi_i, \psi_j \in \mathbbm{R}^{N_2}$ such that $||\phi_i||_2, ||\psi_j||_2 \leq 1$. Choosing $\phi_{it} = X_{it}/\sqrt{\sum_{t \in [N_2]}X_{it}^2}$ if $\sqrt{\sum_{t \in [N_2]}X_{it}^2} > 0$, $\phi_{it} = 0$ otherwise,   and  $\psi_{jt} = \mathbbm{1}\{j = t\}$ gives $\sum_{ij}X_{ij}\sum_{t}\phi_{it}\psi_{jt} = \sum_{ij}X_{ij}\sum_{t}\frac{X_{it}}{\sqrt{\sum_{t}X_{it}^2}}\mathbbm{1}\left\{\sqrt{\sum_{t \in [N_2]}X_{it}^2} > 0\right\}\mathbbm{1}\{j = t\} = \sum_{i}\sqrt{\sum_j X_{ij}^2} = ||X||_{1,2}$. It follows that $||X||_{1,2} \leq K_{G}(\infty)||X||_{\infty\to1}$. \newline

The claim  $||X||_{\dagger} \leq 2K_{G}(\infty)||X||_{\infty\to1}$ is then because
\begin{align*}
||X||_{\dagger} := ||X||_{1,2} + ||X^T||_{1,2} \leq   K_{G}(\infty)\left( ||X||_{\infty\to1} + ||X^{T}||_{\infty\to1}\right) = 2K_G(\infty)||X||_{\infty\to1}
\end{align*}
where the first equality is the definition of $||\cdot||_{\dagger}$, the inequality is from the above result that $||X||_{1,2} \leq K_{G}(\infty)||X||_{\infty\to1}$, and the second equality is due to the fact that $||X||_{\infty\to1} = ||X^T||_{\infty\to1}$. The inequality $(2K_{G}(\infty))^{-1}\mathbbm{E}\left[||Z||_{\dagger}\right] \leq \mathbbm{E}\left[||Z||_{\infty\to1}\right]$ then follows from taking expectations on both sides and dividing by $2K_{G}(\infty)$. \newline

Finally, the inequality $\mathbbm{E}\left[||Z||_{\spadesuit}\right] \leq 2\mathbbm{E}\left[||Z||_{\dagger}\right]$ follows from a symmetrization argument, Khintchine's inequality, and Jensen's inequality. Let $Z'$ be an independent
copy of $Z$ and set $W:=Z-Z'$. Since $\mathbb E[Z'\mid Z]=0$ entrywise and
$\|\cdot\|_{\spadesuit}$ is convex,
\[
\mathbb E\|Z\|_{\spadesuit}
=
\mathbb E\left\|\mathbb E[Z-Z'\mid Z]\right\|_{\spadesuit}
\leq
\mathbb E\|Z-Z'\|_{\spadesuit}.
\]
The entries of $W$ are independent and symmetric. Conditional on the
magnitudes $\{|W_{ij}|\}_{i,j}$, Khintchine's inequality gives, for each
row $i$,
\[
\mathbb E\left[
\left|\sum_{j\in[N_2]}W_{ij}\right|
\,\middle|\,
\{|W_{ij}|\}_{i,j}
\right]
\leq
\left(\sum_{j\in[N_2]}W_{ij}^2\right)^{1/2}.
\]
Summing over $i$ and taking expectations yields
\[
\mathbb E\sum_{i\in[N_1]}
\left|\sum_{j\in[N_2]}W_{ij}\right|
\leq
\mathbb E\sum_{i\in[N_1]}
\left(\sum_{j\in[N_2]}W_{ij}^2\right)^{1/2}.
\]
Applying the same argument to the columns gives $\mathbb E\|W\|_{\spadesuit} \leq \mathbb E\|W\|_\dagger$ and so $\mathbb E\|Z\|_{\spadesuit} \leq \mathbb E\|Z-Z'\|_\dagger \leq \mathbb E\|Z\|_\dagger+\mathbb E\|Z'\|_\dagger = 2\mathbb E\|Z\|_\dagger.$
\end{proof}

The following Lemma~\ref{lem:frobenius-dagger} relates the $||\cdot||_{F}$ norm to the $||\cdot||_{1,2}$ and $||\cdot||_{\dagger}$ norms. 

\begin{lemma}\label{lem:frobenius-dagger}
For any $N_1 \times N_2$ matrix $X$ with real valued entries, we have that $||X||_F \leq ||X||_{1,2}$ and $2||X||_{F} \leq ||X||_{\dagger}$.  
\end{lemma}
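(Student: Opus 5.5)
The plan is to prove the two inequalities separately, each as an elementary comparison of norms; no probabilistic input is needed.

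\textbf{First inequality, $||X||_F \leq ||X||_{1,2}$.} Write $a_i := \sqrt{\sum_{j\in[N_2]}X_{ij}^2} \geq 0$ for the Euclidean norm of the $i$th row. Then $||X||_F^2 = \sum_{i\in[N_1]} a_i^2$ and $||X||_{1,2} = \sum_{i\in[N_1]} a_i$. Since the $a_i$ are nonnegative, expanding gives
\[
\Big(\sum_{i\in[N_1]} a_i\Big)^2 = \sum_{i\in[N_1]} a_i^2 + \sum_{i\neq k} a_i a_k \geq \sum_{i\in[N_1]} a_i^2 .
\]
Taking square roots yields $||X||_F \leq ||X||_{1,2}$. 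Equivalently, this is the comparison $||a||_2 \leq ||a||_1$ for the vector $a=(a_1,\ldots,a_{N_1})$.

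\textbf{Second inequality, $2||X||_F \leq ||X||_\dagger$.} The same argument applied to $X^T$, using column norms $b_j := \sqrt{\sum_{i\in[N_1]}X_{ij}^2}$, gives $||X^T||_F \leq ||X^T||_{1,2}$. Because $||X^T||_F = ||X||_F$, adding the row and column inequalities gives
\[
2||X||_F \leq ||X||_{1,2} + ||X^T||_{1,2} = ||X||_\dagger ,
\]
where the last equality is the definition of $||\cdot||_\dagger$ in Appendix Section~\ref{app:proof-notation}.

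There is no real obstacle here. The only point needing care is that every $a_i$ and $b_j$ is nonnegative, so the cross terms in the expansion are nonnegative. Zero rows or columns cause no difficulty, since they contribute zero to both sides.
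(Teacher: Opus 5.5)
Your proof is correct and matches the paper's: the paper obtains $\|X\|_F \leq \|X\|_{1,2}$ from subadditivity of the square root applied across rows (your expansion of $\bigl(\sum_i a_i\bigr)^2$ is just a proof of that subadditivity), and then gets $2\|X\|_F \leq \|X\|_\dagger$ by applying the same bound to $X^T$ and using $\|X\|_F = \|X^T\|_F$, exactly as you do.
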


\begin{proof}
The first inequality follows 
\begin{align*}
||X||_F := \sqrt{\sum_{i \in [N_1], j \in [N_2]}X_{ij}^2} \leq \sum_{ i \in [N_1]}\sqrt{\sum_{j \in [N_2]}X_{ij}^2} := ||X||_{1,2}
\end{align*}
where the inequality is because the $\sqrt{\cdot}$ function is subadditive. The second inequality follows
\begin{align*}
2||X||_F =  ||X||_F + ||X^T||_F \leq ||X||_{1,2} + ||X^T||_{1,2} := ||X||_{\dagger} 
\end{align*}
where the first equality is because $||X||_F = ||X^T||_F$ and the inequality is because $||X||_F \leq ||X||_{1,2}$. 
\end{proof}

\subsubsection{Lemmas describing concentration results}\label{app:concentration-lemmas}
The following Lemma~\ref{lem:bernstein} is Bernstein's inequality. The condition that the random variables $X_i$ are uniformly bounded almost surely can be weakened to one that only controls certain moments of $X_i$. See, for example, Theorem 2.10 in Section 2.8 of \cite{boucheron2013concentration}. 
\begin{lemma}[Bernstein]\label{lem:bernstein}
Let $X_1,\ldots, X_n$ be independent real-valued random variables with finite variance such that $X_i \leq B$ for some $B > 0$ almost surely for all $i \leq n$. Let $S = \sum_{i=1}^{n}(X_i - \mathbbm{E}\left[X_i\right])$ and $\nu = \sum_{i=1}^{n} \mathbbm{E}\left[X_i^2\right]$. Then for all $t > 0$, 
\begin{align}
\mathbbm{P}\left(S \geq t\right) \leq \exp\left(-\frac{t^2}{2(\nu +Bt/3)}\right).
\end{align}
\end{lemma}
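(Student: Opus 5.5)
The plan is the standard Cramér--Chernoff argument: bound the moment generating function of each summand using only the one-sided bound $X_i \leq B$ and the raw second moment $\mathbbm{E}[X_i^2]$, then optimize over the exponent. Fix $\lambda > 0$ with $\lambda B < 3$. By Markov's inequality applied to $e^{\lambda S}$ and independence,
\begin{align*}
\mathbbm{P}(S \geq t) \leq e^{-\lambda t}\prod_{i=1}^{n}\mathbbm{E}\left[e^{\lambda(X_i - \mathbbm{E}[X_i])}\right].
\end{align*}
So it suffices to control each factor.

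\textbf{Step 1: the MGF bound.} Let $\varphi(u) := e^u - u - 1$. I will use the elementary fact that $u \mapsto \varphi(u)/u^2$ (extended by $1/2$ at $u = 0$) is nondecreasing on $\mathbbm{R}$.

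\begin{itemize}
\item Since $\lambda X_i \leq \lambda B$ almost surely, monotonicity gives $\varphi(\lambda X_i) \leq X_i^2\,\varphi(\lambda B)/B^2$.
\item Taking expectations, $\mathbbm{E}[e^{\lambda X_i}] \leq 1 + \lambda\mathbbm{E}[X_i] + \mathbbm{E}[X_i^2]\,\varphi(\lambda B)/B^2$.
\item Using $\ln(1+x) \leq x$, this yields $\ln \mathbbm{E}[e^{\lambda(X_i - \mathbbm{E}[X_i])}] \leq \mathbbm{E}[X_i^2]\,\varphi(\lambda B)/B^2$.
\end{itemize}

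Summing over $i$ gives $\ln \mathbbm{E}[e^{\lambda S}] \leq \nu\,\varphi(\lambda B)/B^2$. Note that this is exactly why the statement can use the uncentered $\nu = \sum_i \mathbbm{E}[X_i^2]$.

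\textbf{Step 2: the rational bound on $\varphi$.} Next I will bound $\varphi(u) \leq \frac{u^2}{2(1-u/3)}$ for $u \in [0,3)$. Expand $\varphi(u) = \sum_{k\geq 2} u^k/k!$ and use $k! \geq 2\cdot 3^{k-2}$ for $k \geq 2$. This gives $\varphi(u) \leq \frac{u^2}{2}\sum_{k \geq 0}(u/3)^k$, which is the claimed geometric series. Combining with Step 1,
\begin{align*}
\mathbbm{P}(S \geq t) \leq \exp\left(-\lambda t + \frac{\nu\lambda^2}{2(1-\lambda B/3)}\right).
\end{align*}

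\textbf{Step 3: choice of $\lambda$.} Take $\lambda = t/(\nu + Bt/3)$, which satisfies $\lambda B < 3$. Then $1 - \lambda B/3 = \nu/(\nu + Bt/3)$. The quadratic term equals $\lambda^2(\nu + Bt/3)/2 = t^2/(2(\nu + Bt/3))$, while $-\lambda t = -t^2/(\nu + Bt/3)$. Adding the two gives exactly the exponent $-t^2/(2(\nu + Bt/3))$.

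The only step requiring real care is the monotonicity of $\varphi(u)/u^2$ over all of $\mathbbm{R}$, including negative $u$, since $X_i$ is bounded only from above. I would verify it by writing
\begin{align*}
\frac{\varphi(u)}{u^2} = \int_0^1 (1-s)\,e^{su}\,ds,
\end{align*}
which is manifestly nondecreasing in $u$ because each $e^{su}$ with $s \in [0,1]$ is nondecreasing in $u$.
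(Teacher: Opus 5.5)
Your argument is correct. The paper itself gives no derivation: it only cites equation (2.10) of \cite{boucheron2013concentration}, where the bound is obtained from Bennett's inequality.

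\textbf{Comparison with the cited route.} The first half of that derivation matches your Step 1: Chernoff plus monotonicity of $\varphi(u)/u^2$ gives $\ln\mathbbm{E}[e^{\lambda S}]\le \nu\varphi(\lambda B)/B^2$. From there the cited derivation optimizes $\lambda$ exactly. This yields Bennett's bound $\exp\{-(\nu/B^2)h(Bt/\nu)\}$ with $h(u)=(1+u)\ln(1+u)-u$, and the stated form then follows from the elementary inequality $h(u)\ge u^2/\{2(1+u/3)\}$.

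You instead relax $\varphi$ to the sub-gamma form $u^2/\{2(1-u/3)\}$ before optimizing, and substitute an explicit $\lambda$. This avoids computing a Legendre transform and proving the inequality for $h$.

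Each route buys something different:
\begin{itemize}
\item The Bennett route gives a sharper intermediate bound, whose exponent grows like $t\ln t$ rather than linearly in $t$ for large $t$.
\item Your route is shorter and fully elementary. It also produces the sub-gamma bound $\ln\mathbbm{E}[e^{\lambda S}]\le\nu\lambda^2/\{2(1-\lambda B/3)\}$, which is the form that extends to the moment-condition version the paper mentions (Theorem 2.10 of the same reference).
\end{itemize}

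\textbf{One small fix.} The choice $\lambda=t/(\nu+Bt/3)$ satisfies $\lambda B<3$ only when $\nu>0$. If $\nu=0$, every $X_i$ equals $0$ almost surely, so $S=0$ and the inequality holds trivially. Add that line before Step 3.
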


\begin{proof}
See equation (2.10) on page 36 of \cite{boucheron2013concentration}. 
\end{proof}

%
%

The following Lemma~\ref{lem:klein-rio} is a version of Talagrand's inequality for the maximum of an empirical process due to \cite{klein2005concentration}. 

\begin{lemma}[Klein and Rio]\label{lem:klein-rio}
Let $X_1,\ldots, X_n$ be independent random variables with values in some Polish space $\mathcal{X}$ and let $\mathcal{S}$ be a countable class of measurable functions from $\mathcal{X}$ into $[-1,1]^n$. For $s = \left(s^1,\ldots,s^n\right)$  in $\mathcal{S}$ let $S_n(s) = s^1(X_1)+\ldots+ s^n(X_n)$. Suppose that $\mathbbm{E}\left[s^k(X_k)\right] = 0$ for every $s \in \mathcal{S}$ and $k \in [n]$. Then for any positive $x$
\begin{align}\label{thm1.1c}
\mathbbm{P}\left(Z \geq \mathbbm{E}\left[Z\right] + x \right) \leq \exp\left(-\frac{x^2}{2\nu + 3x}\right)
\end{align}
and 
\begin{align}\label{thm1.2c}
\mathbbm{P}\left(Z \leq \mathbbm{E}\left[Z\right] - x \right) \leq \exp\left(-\frac{x^2}{2\nu + 2x}\right)
\end{align}
where $Z = \sup\{S_n(s): s \in \mathcal{S}\}$, $\nu = V_n + 2\mathbbm{E}\left[Z\right]$ and $V_n = \sup_{s \in \mathcal{S}}Var S_n(s)$. 
\end{lemma}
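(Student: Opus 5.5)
This is a restatement of Theorems 1.1 and 1.2 of \cite{klein2005concentration}, so the plan is to invoke those results directly rather than reprove them. The first step is to check that the hypotheses in the statement match theirs. They require independent $X_1,\ldots,X_n$ taking values in a Polish space, a countable class $\mathcal{S}$ of vector-valued functions whose coordinates $s^k$ map into $[-1,1]$, and the centering condition $\mathbbm{E}[s^k(X_k)]=0$ for every $s$ and $k$. Their conclusions are exactly (\ref{thm1.1c}) for the upper tail and (\ref{thm1.2c}) for the lower tail, with $\nu = V_n + 2\mathbbm{E}[Z]$.

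In our later application the index set is $\mathcal{G}_c$, which is finite. Countability is therefore automatic and no separability or measurability issues arise for $Z$. The random variables will be the rescaled residuals $\epsilon_{ij}/(2B)$ together with sign-augmented indicators $\pm g_{i,1}g_{j,2}$. Since $|\epsilon_{ij}|\le 2B$ under Assumption~\ref{ass:model}, these take values in $[-1,1]$ and are centered, so the lemma applies after this rescaling. I would record this rescaling when the lemma is used, not here.

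If a self-contained argument were wanted, I would follow the entropy method. The steps, in order, are:
\begin{itemize}
\item[(i)] Tensorize the entropy of $e^{\lambda Z}$ over the independent coordinates.
\item[(ii)] Bound each one-coordinate entropy term using the modified logarithmic Sobolev inequality. In this bound, compare $Z$ with the leave-one-out supremum $Z_k$, attained at a maximizing $s$, and use $|s^k|\le 1$ and the centering condition to control $Z-Z_k$.
\item[(iii)] Obtain a differential inequality for the log moment generating function $L(\lambda)=\ln \mathbbm{E}e^{\lambda(Z-\mathbbm{E}Z)}$ of Herbst type, in which the weak variance $V_n$ and $\mathbbm{E}[Z]$ enter.
\item[(iv)] Integrate this inequality and finish with a Chernoff bound.
\end{itemize}

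The main obstacle is the constants. A crude version of step (iii) easily yields Bousquet-type bounds with worse constants in $\nu$ and in the linear term. Getting precisely $2\nu+3x$ for the upper tail and $2\nu+2x$ for the lower tail requires Klein and Rio's refined treatment of the differential inequality, especially for the left tail. For that reason I would simply cite their Theorems 1.1 and 1.2 for these exact constants.
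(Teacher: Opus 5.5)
Your proposal is correct and matches the paper. The paper's proof is likewise a citation: (\ref{thm1.1c}) is Theorem 1.1(c) and (\ref{thm1.2c}) is Theorem 1.2(c) of \cite{klein2005concentration}, and the rescaling by $2B$ is carried out afterwards in Lemma~\ref{lem:cut-concentration}, as you planned.
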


\begin{proof}
Result (\ref{thm1.1c}) is Theorem 1.1(c) and result (\ref{thm1.2c}) is Theorem  1.2(c) of \cite{klein2005concentration}.
\end{proof}

\subsubsection{Additional lemmas}\label{app:additional-lemmas}

The two results in this subsection are, to our knowledge, original to our paper. They concern the matrix of idiosyncratic-errors $\epsilon$ defined in Section~\ref{sec:network-structure}. Unless stated otherwise, all probabilities, expectations, and variances in this subsection are taken under $\mathbbm P_F$, under which $\mu$ and $\sigma$ are deterministic, the entries of $\epsilon=Y-\mu$ are independent, and they satisfy $\mathbbm E_F[\epsilon_{ij}]=0$ and $\mathbbm E_F[\epsilon_{ij}^2]=\sigma_{ij}^2$. Evaluating the pointwise
results below at the random realized array $\mathbf F$ gives the corresponding
conditional-on-$\mathcal H$ statements almost surely. 

For $\delta\in(0,1)$, write
$a_\delta:=3(1-\delta^2)^2/\{(32-8\delta^2)B^2\}$.

\begin{lemma}\label{lem:residual-norm-relations}
Suppose Assumption~\ref{ass:model}. For every fixed $F\in\mathcal F$ such
that $\lVert\sigma\rVert_F>0$ and every $\delta\in(0,1)$,
\begin{align}
\lVert\sigma\rVert_F
&\geq \mathbbm E_F\lVert\epsilon\rVert_F
\geq \delta\lVert\sigma\rVert_F
\left\{1-e^{-a_\delta\lVert\sigma\rVert_F^2}\right\},
\label{swapexp1}\\
\mathbbm E_F\lVert\epsilon\rVert_F
&\leq \lVert\sigma\rVert_F
\leq \delta^{-1}\mathbbm E_F\lVert\epsilon\rVert_F
\left\{1-e^{-a_\delta(\mathbbm E_F\lVert\epsilon\rVert_F)^2}\right\}^{-1},
\label{swapexp2}\\
\mathbbm E_F\lVert\epsilon\rVert_\dagger
&\geq 2\delta\lVert\sigma\rVert_F
\left\{1-e^{-a_\delta\lVert\sigma\rVert_F^2}\right\}.
\label{swapexp3}
\end{align}
\end{lemma}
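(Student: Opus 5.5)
The plan is to reduce everything to a lower-tail bound for the scalar random variable $S:=\lVert\epsilon\rVert_F^2=\sum_{i,j}\epsilon_{ij}^2$. Under $\mathbbm P_F$ it is a sum of independent terms with $\mathbbm E_F S=\lVert\sigma\rVert_F^2$.

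\emph{Upper bound in (\ref{swapexp1}).} The inequality $\mathbbm E_F\lVert\epsilon\rVert_F\le\lVert\sigma\rVert_F$ is Jensen's inequality for the concave map $x\mapsto\sqrt x$ applied to $S$.

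\emph{Lower bound in (\ref{swapexp1}).} I write
\[
\mathbbm E_F\sqrt S\;\ge\;\delta\lVert\sigma\rVert_F\,\mathbbm P_F\!\left(S\ge\delta^2\lVert\sigma\rVert_F^2\right),
\]
so it suffices to show $\mathbbm P_F(S<\delta^2\lVert\sigma\rVert_F^2)\le e^{-a_\delta\lVert\sigma\rVert_F^2}$. For this I apply Lemma~\ref{lem:bernstein} to the independent variables $X_{ij}:=-\epsilon_{ij}^2$ with $t=(1-\delta^2)\lVert\sigma\rVert_F^2$. The event $\{S<\delta^2\lVert\sigma\rVert_F^2\}$ equals $\{\sum(X_{ij}-\mathbbm E X_{ij})>t\}$.

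Because $|\epsilon_{ij}|\le 2B$ under Assumption~\ref{ass:model}, each $X_{ij}$ is bounded above by $0\le 4B^2$, so $4B^2$ can serve as the Bernstein bound. The same fact gives
\[
\nu=\sum_{i,j}\mathbbm E_F\epsilon_{ij}^4\le 4B^2\sum_{i,j}\sigma_{ij}^2=4B^2\lVert\sigma\rVert_F^2 .
\]
Plugging these in, the Bernstein exponent is
\[
\frac{(1-\delta^2)^2\lVert\sigma\rVert_F^4}{2\left(4B^2\lVert\sigma\rVert_F^2+4B^2(1-\delta^2)\lVert\sigma\rVert_F^2/3\right)}
=\frac{3(1-\delta^2)^2\lVert\sigma\rVert_F^2}{(32-8\delta^2)B^2}
=a_\delta\lVert\sigma\rVert_F^2 ,
\]
which is exactly the stated rate. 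This Bernstein step, and in particular matching the constant $a_\delta$, is the only real obstacle. Everything else is algebra.

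\emph{(\ref{swapexp2}).} The left inequality is the Jensen bound again. For the right inequality, first note that (\ref{swapexp1}) together with $\lVert\sigma\rVert_F>0$ gives $\mathbbm E_F\lVert\epsilon\rVert_F>0$, so the right-hand side is finite. Next, $\mathbbm E_F\lVert\epsilon\rVert_F\le\lVert\sigma\rVert_F$ implies
\[
1-e^{-a_\delta(\mathbbm E_F\lVert\epsilon\rVert_F)^2}\;\le\;1-e^{-a_\delta\lVert\sigma\rVert_F^2}.
\]
Rearranging the lower bound in (\ref{swapexp1}) as $\lVert\sigma\rVert_F\le\delta^{-1}\mathbbm E_F\lVert\epsilon\rVert_F\{1-e^{-a_\delta\lVert\sigma\rVert_F^2}\}^{-1}$ and then replacing the smaller denominator by the one in $\mathbbm E_F\lVert\epsilon\rVert_F$ gives the claim.

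\emph{(\ref{swapexp3}).} By Lemma~\ref{lem:frobenius-dagger}, $\lVert\epsilon\rVert_\dagger\ge 2\lVert\epsilon\rVert_F$ pointwise. Taking expectations and applying the lower bound in (\ref{swapexp1}) gives (\ref{swapexp3}).
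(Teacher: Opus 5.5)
Your proposal is correct and follows the paper's proof essentially step for step. The shared steps are: Jensen for the upper bounds; the indicator lower bound combined with Lemma~\ref{lem:bernstein} at $t=(1-\delta^2)\lVert\sigma\rVert_F^2$, with upper bound $4B^2$ and $\nu\leq 4B^2\lVert\sigma\rVert_F^2$, which yields exactly $a_\delta$; monotonicity of $u\mapsto 1-e^{-a_\delta u^2}$ for (\ref{swapexp2}); and Lemma~\ref{lem:frobenius-dagger} for (\ref{swapexp3}). The only differences are cosmetic: you apply Bernstein to the uncentered $-\epsilon_{ij}^2$ rather than the centered $\sigma_{ij}^2-\epsilon_{ij}^2$, and you explicitly check that $\mathbbm E_F\lVert\epsilon\rVert_F>0$ so the right side of (\ref{swapexp2}) is finite.
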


\begin{proof}
Jensen's inequality gives
$\mathbbm E_F\lVert\epsilon\rVert_F
\leq\{\mathbbm E_F\lVert\epsilon\rVert_F^2\}^{1/2}
=\lVert\sigma\rVert_F$, proving the first inequalities in
\eqref{swapexp1}--\eqref{swapexp2}. 

To prove the second inequality in \eqref{swapexp1}, we start with 
$\lVert\epsilon\rVert_F\geq
\delta\lVert\sigma\rVert_F
\mathbbm 1\{\lVert\epsilon\rVert_F>
\delta\lVert\sigma\rVert_F\}$ which implies that 
$\mathbbm E_F\lVert\epsilon\rVert_F
\geq \delta\lVert\sigma\rVert_F
\left\{1-\mathbbm P_F(\lVert\epsilon\rVert_F
\leq\delta\lVert\sigma\rVert_F)\right\}$.
Since the random variables $Z_{ij}:=\sigma_{ij}^2-\epsilon_{ij}^2$ are independent and mean-zero under $\mathbbm{P}_F$ with $|Z_{ij}|\leq4B^2$ and $\operatorname{Var}_F(Z_{ij})\leq \mathbbm E_F[\epsilon_{ij}^4]\leq4B^2\sigma_{ij}^2$, Lemma ~\ref{lem:bernstein} implies that $\mathbbm P_F(\lVert\epsilon\rVert_F
\leq\delta\lVert\sigma\rVert_F) \leq \exp\{-a_\delta\lVert\sigma\rVert_F^2\}$. Combining this inequality with the previous one gives the second inequality in \eqref{swapexp1}. 

Because $\mathbbm E_F\lVert\epsilon\rVert_F\leq\lVert\sigma\rVert_F$ and
$u\mapsto1-e^{-a_\delta u^2}$ is nondecreasing, replacing
$\lVert\sigma\rVert_F$ inside the exponential in \eqref{swapexp1} by
$\mathbbm E_F\lVert\epsilon\rVert_F$ and dividing proves
\eqref{swapexp2}. Finally,
$\lVert\epsilon\rVert_\dagger\geq2\lVert\epsilon\rVert_F$ by
Lemma~\ref{lem:frobenius-dagger}. Taking expectations and applying the second inequality in 
\eqref{swapexp1} proves \eqref{swapexp3}.
\end{proof}

The next lemma combines Lemmas~\ref{lem:cut-infty},
\ref{lem:restricted-cut}, and \ref{lem:dagger-infty} with
Lemma~\ref{lem:klein-rio}. For brevity, define
$\Gamma_F:=\lVert\sigma\rVert_F^2
+4B\mathbbm E_F\lVert\epsilon\rVert_\dagger
+B\lVert\sigma\rVert_F$.

\begin{lemma}\label{lem:cut-concentration}
Suppose Assumption~\ref{ass:model}. For every fixed $F\in\mathcal F$,
$x>0$, and $c\in(0,1/2]$,
\begin{align}
\mathbbm P_F\!\left(
\lVert\epsilon\rVert_{\square;c}
\geq \mathbbm E_F\lVert\epsilon\rVert_\dagger
+\tfrac14\lVert\sigma\rVert_F+x\right)
&\leq \exp\!\left\{-\frac{x^2}{2\Gamma_F+6Bx}\right\},
\label{ub}\\
\mathbbm P_F\!\left(
\lVert\epsilon\rVert_{\square;c}
\leq \{18K_G(\infty)\}^{-1}
\mathbbm E_F\lVert\epsilon\rVert_\dagger-x\right)
&\leq \exp\!\left\{-\frac{x^2}{2\Gamma_F+4Bx}\right\},
\label{lb}
\end{align}
where $K_G(\infty)\leq\pi/\{2\ln(1+\sqrt2)\}\leq1.7823$.
\end{lemma}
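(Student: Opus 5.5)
We prove both displays by applying Lemma~\ref{lem:klein-rio} to the supremum that defines the restricted cut norm. The matrix-norm lemmas are used only to bound $\mathbbm E_F\lVert\epsilon\rVert_{\square;c}$ from above and below in terms of $\mathbbm E_F\lVert\epsilon\rVert_\dagger$.

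\emph{Setting up Lemma~\ref{lem:klein-rio}.} Take the $n=N_1N_2$ independent variables to be the entries $\epsilon_{ij}$. Index the function class by triples $s=(\phi,\psi,\varsigma)$ with $(\phi,\psi)\in\mathcal G_c$ and $\varsigma\in\{-1,1\}$. Set $s^{ij}(\epsilon_{ij}):=\varsigma\,\epsilon_{ij}\phi_i\psi_j/(2B)$. Under Assumption~\ref{ass:model}, $|\epsilon_{ij}|\leq 2B$, so each coordinate lies in $[-1,1]$ and has mean zero. The class is finite, hence countable. Then $Z:=\sup_s S_n(s)=\lVert\epsilon\rVert_{\square;c}/(2B)$, because the sign $\varsigma$ turns the supremum into a supremum of absolute values. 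The weak variance is
\[
V_n=\sup_s\sum_{ij}\sigma_{ij}^2\phi_i\psi_j/(4B^2)\leq\lVert\sigma\rVert_F^2/(4B^2).
\]

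\emph{Upper bound on the mean.} Since $\lVert\epsilon\rVert_{\square;c}\leq\lVert\epsilon\rVert_\square$, the second inequality of Lemma~\ref{lem:cut-infty} gives
\[
4\lVert\epsilon\rVert_{\square;c}\leq\lVert\epsilon\rVert_{\infty\to1}+\lVert\epsilon\rVert_\spadesuit+\Big|\sum_{ij}\epsilon_{ij}\Big|.
\]
We take expectations term by term. By \eqref{infonetodagger}, $\mathbbm E_F\lVert\epsilon\rVert_{\infty\to1}\leq2\mathbbm E_F\lVert\epsilon\rVert_\dagger$. By \eqref{spadesuittodagger}, $\mathbbm E_F\lVert\epsilon\rVert_\spadesuit\leq2\mathbbm E_F\lVert\epsilon\rVert_\dagger$. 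By Jensen and independence, $\mathbbm E_F|\sum_{ij}\epsilon_{ij}|\leq\lVert\sigma\rVert_F$. Together these give
\[
\mathbbm E_F\lVert\epsilon\rVert_{\square;c}\leq \mathbbm E_F\lVert\epsilon\rVert_\dagger+\tfrac14\lVert\sigma\rVert_F=:U_F .
\]
This bound serves two purposes. First, the event in \eqref{ub} is contained in $\{Z\geq\mathbbm E_F Z+x/(2B)\}$. Second, it bounds $\nu=V_n+2\mathbbm E_F Z\leq \lVert\sigma\rVert_F^2/(4B^2)+U_F/B$.

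\emph{Deriving \eqref{ub}.} Apply \eqref{thm1.1c} with deviation $x/(2B)$ and multiply numerator and denominator of the exponent by $4B^2$. The exponent becomes $-x^2/(4B^2\cdot2\nu+6Bx)$. Moreover,
\[
8B^2\nu\leq 2\lVert\sigma\rVert_F^2+8B\,\mathbbm E_F\lVert\epsilon\rVert_\dagger+2B\lVert\sigma\rVert_F=2\Gamma_F .
\]
Since the bound is monotone in $\nu$, this yields \eqref{ub}.

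\emph{Lower bound on the mean.} The second inequality of Lemma~\ref{lem:restricted-cut} gives $\lVert\epsilon\rVert_{\square;c}\geq\frac19\lVert\epsilon\rVert_{\infty\to1}$ pathwise. Combined with the lower bound in \eqref{infonetodagger}, this gives
\[
\mathbbm E_F\lVert\epsilon\rVert_{\square;c}\geq\{18K_G(\infty)\}^{-1}\mathbbm E_F\lVert\epsilon\rVert_\dagger .
\]
Hence the event in \eqref{lb} is contained in $\{Z\leq \mathbbm E_F Z-x/(2B)\}$.

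\emph{Deriving \eqref{lb}.} Apply \eqref{thm1.2c} and rescale exactly as before, using the same upper bound on $\nu$. The linear term becomes $2\cdot\frac{x}{2B}\cdot4B^2=4Bx$, so the exponent is $-x^2/(2\Gamma_F+4Bx)$.

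\emph{Main obstacle.} The step needing the most care is the bookkeeping in $\nu$. Klein--Rio's $\nu$ contains the unknown $\mathbbm E_F Z$, so in \emph{both} tails we must use the upper bound $U_F$ to dominate it, while the lower tail separately needs the lower bound on the mean. We must also check that each replacement moves the bound in the conservative direction: a larger $\nu$ and a smaller centering shift both only weaken the inequalities. The constants $2\Gamma_F$, $6Bx$ and $4Bx$ should then come out exactly from the $2B$ rescaling.
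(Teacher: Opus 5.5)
Your proposal is correct and follows essentially the same route as the paper. Both apply Lemma~\ref{lem:klein-rio} to the $2B$-rescaled restricted cut norm and sandwich its mean between $\{18K_G(\infty)\}^{-1}\mathbbm E_F\lVert\epsilon\rVert_\dagger$ and $\mathbbm E_F\lVert\epsilon\rVert_\dagger+\tfrac14\lVert\sigma\rVert_F$ via Lemmas~\ref{lem:cut-infty}, \ref{lem:restricted-cut}, and \ref{lem:dagger-infty}. Both then use the upper bound to dominate $\nu$ in both tails and undo the rescaling to obtain $2\Gamma_F+6Bx$ and $2\Gamma_F+4Bx$.

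The differences are cosmetic. The paper encodes each observation as a triple $(\epsilon^*_{ij},i,j)$ in a product Polish space and records $V_n=\lVert\sigma^*\rVert_F^2$ with equality. You index the coordinate functions directly and bound $V_n$ only from above, which is valid once you take $\mathcal X=[-2B,2B]$ so that each coordinate function is $[-1,1]$-valued.
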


\begin{proof}
Fix $F\in\mathcal F$, $x>0$, and $c\in(0,1/2]$, and work under
$\mathbbm P_F$. Set $\epsilon^*_{ij}=\epsilon_{ij}/(2B)$ and
$\sigma^*_{ij}=\sigma_{ij}/(2B)$. To match the notation of
Lemma~\ref{lem:klein-rio}, let $n=N_1N_2$, set
$k(i,j)=(i-1)N_2+j$, and let $i(k)=1+\lfloor(k-1)/N_2\rfloor$ and
$j(k)=k-(i(k)-1)N_2$. Take
$\mathcal X=[-1,1]\times[N_1]\times[N_2]$, with the finite sets given
the discrete topology, and define
$X_k=(\epsilon^*_{i(k)j(k)},i(k),j(k))$. Under $\mathbbm{P}_F$, 
$X_1,\ldots,X_n$ are independent random variables taking values in the
Polish space $\mathcal X$.

For $\eta\in\{-1,1\}$ and $(\phi,\psi)\in\mathcal G_c$, let
$s_{\eta,\phi,\psi}:\mathcal X\to[-1,1]^n$ have $k$th coordinate
$s_{\eta,\phi,\psi}^k(z,i,j)
=\eta z\phi_i\psi_j\mathbbm 1\{k=k(i,j)\}$, and let
$\mathcal S$ be the collection of these functions. The class
$\mathcal S$ is finite, hence countable; each coordinate function is
measurable and takes values in $[-1,1]$. Moreover, for every
$s=s_{\eta,\phi,\psi}\in\mathcal S$ and $k\in[n]$,
$\mathbbm E_F[s^k(X_k)]
=\eta\phi_{i(k)}\psi_{j(k)}
\mathbbm E_F[\epsilon^*_{i(k)j(k)}]=0$.
Using exactly the notation of Lemma~\ref{lem:klein-rio},
\begin{align*}
S_n(s_{\eta,\phi,\psi})
&=\sum_{k=1}^n s_{\eta,\phi,\psi}^k(X_k)
 =\eta\sum_{i,j}\epsilon^*_{ij}\phi_i\psi_j,\\
Z:=\sup_{s\in\mathcal S}S_n(s)
&=\lVert\epsilon^*\rVert_{\square;c},\\
V_n:=\sup_{s\in\mathcal S}\operatorname{Var}_F\{S_n(s)\}
&=\max_{(\phi,\psi)\in\mathcal G_c}
  \sum_{i,j}\sigma_{ij}^{*2}\phi_i\psi_j
 =\lVert\sigma^*\rVert_F^2.
\end{align*}
The variance equality uses independence and
$(\eta\phi_i\psi_j)^2=\phi_i\psi_j$. The last maximum is attained by
the two all-ones vectors, which belong to $\mathcal G_c$. Thus all the
hypotheses of Lemma~\ref{lem:klein-rio} are satisfied, and its notation
is matched by $Z=\lVert\epsilon^*\rVert_{\square;c}$,
$V_n=\lVert\sigma^*\rVert_F^2$, and
$\nu=V_n+2\mathbbm E_F[Z]$. Consequently,
\eqref{thm1.1c}--\eqref{thm1.2c} apply under $\mathbbm P_F$.

It remains to bound $\mathbbm E_F[Z]$.
Lemma~\ref{lem:cut-infty}, applied pathwise to
$\epsilon^*$, Lemma~\ref{lem:dagger-infty}, and Jensen's
inequality give
\begin{align*}
4\mathbbm E_F[Z]
&\leq4\mathbbm E_F\lVert\epsilon^*\rVert_\square
\leq\mathbbm E_F\lVert\epsilon^*\rVert_{\infty\to1}
+\mathbbm E_F\lVert\epsilon^*\rVert_\spadesuit
+\mathbbm E_F\left|\sum_{i,j}\epsilon^*_{ij}\right|
\leq 4\mathbbm E_F\lVert\epsilon^*\rVert_\dagger
+\lVert\sigma^*\rVert_F,\\
9\mathbbm E_F[Z]
&\overset{\text{Lemma~\ref{lem:restricted-cut}}}{\geq}
\mathbbm E_F\lVert\epsilon^*\rVert_{\infty\to1}
\overset{\text{Lemma~\ref{lem:dagger-infty}}}{\geq}
\{2K_G(\infty)\}^{-1}
\mathbbm E_F\lVert\epsilon^*\rVert_\dagger.
\end{align*}
Here Lemma~\ref{lem:dagger-infty} gives the two upper bounds
$\mathbbm E_F\lVert\epsilon^*\rVert_{\infty\to1}
\leq2\mathbbm E_F\lVert\epsilon^*\rVert_\dagger$ and
$\mathbbm E_F\lVert\epsilon^*\rVert_\spadesuit
\leq2\mathbbm E_F\lVert\epsilon^*\rVert_\dagger$. Its hypotheses
hold because $\epsilon^*$ has independent mean-zero bounded entries under $\mathbbm{P}_F$.
Also,
$\mathbbm E_F|\sum_{i,j}\epsilon^*_{ij}|
\leq\{\operatorname{Var}_F(\sum_{i,j}\epsilon^*_{ij})\}^{1/2}
=\lVert\sigma^*\rVert_F$.

It follows that, with
$U=\mathbbm E_F\lVert\epsilon^*\rVert_\dagger
+\lVert\sigma^*\rVert_F/4$ and
$L=\{18K_G(\infty)\}^{-1}
\mathbbm E_F\lVert\epsilon^*\rVert_\dagger$, we have that
$L\leq\mathbbm E_F[Z]\leq U$ and
$\nu\leq\lVert\sigma^*\rVert_F^2+2U$. Therefore
\eqref{thm1.1c}--\eqref{thm1.2c} imply
\begin{align*}
\mathbbm P_F\{Z\geq U+x'\}
&\leq\exp\left\{-\frac{x'^2}
{2(\lVert\sigma^*\rVert_F^2+2U)+3x'}\right\},\\
\mathbbm P_F\{Z\leq L-x'\}
&\leq\exp\left\{-\frac{x'^2}
{2(\lVert\sigma^*\rVert_F^2+2U)+2x'}\right\}.
\end{align*}
Finally set $x'=x/(2B)$ and use
$2B\lVert\epsilon^*\rVert_{\square;c}
=\lVert\epsilon\rVert_{\square;c}$,
$2B\mathbbm E_F\lVert\epsilon^*\rVert_\dagger
=\mathbbm E_F\lVert\epsilon\rVert_\dagger$, and
$2B\lVert\sigma^*\rVert_F=\lVert\sigma\rVert_F$. In particular,
for $a\in\{2,3\}$,
$4B^2\{2(\lVert\sigma^*\rVert_F^2+2U)+ax'\}
=2\Gamma_F+2aBx$. The two preceding bounds are therefore exactly
\eqref{ub} and \eqref{lb}.
\end{proof}

\subsection{Propositions}\label{app:propositions}
This section contains proofs of the four propositions stated in Section~\ref{sec:results} of the main text. 

\propone*

\begin{proof}
Since $CI_{1}(g_1,g_2;\alpha) := \hat{\theta}(g_1,g_2) \pm \left[K_1(\alpha) \times \hat{\sigma}(g_1,g_2)\right]/\sqrt{m_1m_2}$ with \\ $K_1(\alpha) := \sqrt{1.39(N_1+N_2) - 2\ln(\alpha/2)}$, it is sufficient to show that Assumptions~\ref{ass:model}, \ref{ass:nondegeneracy}(s), and \ref{ass:scale-consistency}(i)  imply, for any $\alpha \in (0,1)$,
\begin{align*}
\limsup_{F \in \mathcal{F}: N_1,N_2 \to \infty}\mathbbm{P}_{F}\left(\cup_{(g_1,g_2) \in \mathcal{G}_c} \left\{\left|\hat{\theta}(g_1,g_2) - \theta(g_1,g_2)\right| \geq  K_1(\alpha)\hat{\sigma}(g_1,g_2)/\sqrt{m_1m_2}\right\}\right) \leq \alpha.
\end{align*}
 Fix $\alpha \in (0,1)$.
 For each $(g_1,g_2) \in \mathcal{G}_c$ define the three events 
\begin{align*}
A(g_1,g_2) &:= \left\{\left|\hat{\theta}(g_1,g_2) - \theta(g_1,g_2)\right| \geq  K_1(\alpha)\hat{\sigma}(g_1,g_2)/\sqrt{m_1m_2}\right\}, \\
B(g_1,g_2) &:= \left\{\hat{\sigma}(g_1,g_2) \geq (\sigma(g_1,g_2) - r'(g_1,g_2))\right\},  \text{ and  } \\
C(g_1,g_2) &:= \left\{\left|\hat{\theta}(g_1,g_2) - \theta(g_1,g_2)\right| \geq  K_1(\alpha)\left(\sigma(g_1,g_2) - r'(g_1,g_2)\right)/\sqrt{m_1m_2}\right\}
\end{align*}
where $r'(g_1,g_2) = r\sigma(g_1,g_2)$ and $r$ is the sequence of real numbers defined in Assumption~\ref{ass:scale-consistency}. For a fixed $F \in \mathcal{F}$, three events are related by
\begin{align*}
\mathbbm{P}_F\left(\cup_{(g_1,g_2) \in \mathcal{G}_c}A(g_1,g_2)\right) = \mathbbm{P}_F\left(\cup_{(g_1,g_2) \in \mathcal{G}_c}\left\{A(g_1,g_2) \cap \left\{B(g_1,g_2) \cup B(g_1,g_2)^C\right\}\right\}\right) \\
=  \mathbbm{P}_F\left(\cup_{(g_1,g_2) \in \mathcal{G}_c}\left\{\left\{A(g_1,g_2) \cap B(g_1,g_2)\right\} \cup \left\{A(g_1,g_2) \cap B(g_1,g_2)^C\right\}\right\}\right) \\
= \mathbbm{P}_F\left(\left\{\cup_{(g_1,g_2) \in \mathcal{G}_c}\left\{A(g_1,g_2) \cap B(g_1,g_2)\right\}\right\} \cup \left\{\cup_{(g_1,g_2) \in \mathcal{G}_c}\left\{A(g_1,g_2) \cap B(g_1,g_2)^C\right\}\right\}\right)\\
\leq \mathbbm{P}_F\left(\cup_{(g_1,g_2) \in \mathcal{G}_c}\left\{A(g_1,g_2) \cap B(g_1,g_2)\right\}\right) + \mathbbm{P}_F\left(\cup_{(g_1,g_2) \in \mathcal{G}_c}\left\{A(g_1,g_2) \cap B(g_1,g_2)^C\right\}\right)\\
\leq \mathbbm{P}_F\left(\cup_{(g_1,g_2) \in \mathcal{G}_c}C(g_1,g_2)\right) + \mathbbm{P}_F\left(\cup_{(g_1,g_2) \in \mathcal{G}_c}B(g_1,g_2)^C\right).
\end{align*}
where $B(g_1,g_2)^C := \left\{\hat{\sigma}(g_1,g_2) < (\sigma(g_1,g_2) - r'(g_1,g_2))\right\}$ is the complement of the event $B(g_1,g_2)$, the first inequality is the union bound, and the second inequality is because the events $A(g_1,g_2)$ and  $B(g_1,g_2)$ imply the event $C(g_1,g_2)$ and the events $A(g_1,g_2)$ and $B(g_1,g_2)^C$ imply the event $B(g_1,g_2)^C$. In Steps 1-3 below we show that\\ $\limsup_{F \in \mathcal{F}: N_1,N_2\to\infty}\mathbbm{P}_F\left(\cup_{(g_1,g_2) \in \mathcal{G}_c}C(g_1,g_2)\right) \leq \alpha$. In Step 4 below we show that\\ $\lim_{F \in \mathcal{F}: N_1,N_2\to\infty}\mathbbm{P}_F\left(\cup_{(g_1,g_2) \in \mathcal{G}_c}B(g_1,g_2)^C\right) = 0$. It follows that\\ $\limsup_{F \in \mathcal{F}: N_1,N_2\to\infty}\mathbbm{P}_F\left(\cup_{(g_1,g_2) \in \mathcal{G}_c}A(g_1,g_2)\right) \leq \alpha$ which is sufficient for the conclusion of Proposition~\ref{prop:propone}.   \newline

\textbf{Step 1:} In this step we show that $\limsup_{F \in \mathcal{F}: N_1,N_2\to\infty}\mathbbm{P}_F\left(\cup_{(g_1,g_2) \in \mathcal{G}_c}C(g_1,g_2)\right) \leq \alpha$. Recall that for a fixed $(g_1,g_2) \in \mathcal{G}_c$, the estimation error 
\begin{align*}
\hat{\theta}(g_1,g_2) - \theta(g_1,g_2)  = \frac{1}{m_1m_2}\sum_{i\in[N_1],j\in[N_2]}\epsilon_{ij}g_{i,1}g_{j,2}
\end{align*}
where $m_t := \sum_{i \in [N_t]}g_{i,t}$. Since, under $\mathbbm P_F$, the summands on the right-hand side are independent, mean-zero, and uniformly bounded in absolute value by $2B$, Lemma~\ref{lem:bernstein} and the union bound implies that for any $\{x(g_1,g_2)\}_{(g_1,g_2) \in \mathcal{G}_c}$ with $\min_{(g_1,g_2) \in \mathcal{G}_c}x(g_1,g_2) > 0$ and fixed $F \in \mathcal{F}$, 
\begin{align*}
\mathbbm{P}_F\left(\cup_{(g_1,g_2) \in \mathcal{G}_c}\left\{\left|\hat{\theta}(g_1,g_2) - \theta(g_1,g_2)\right| \geq  x(g_1,g_2)\right\}\right)\\ 
\leq 2^{N_1+N_2+1}\max_{(g_1,g_2) \in \mathcal{G}_c}\exp\left(-\frac{ x(g_1,g_2)^2m_1^2m_2^2}{2\left(\sigma(g_1,g_2)^2 + 2B x(g_1,g_2)/3\right)m_1m_2}\right)
\end{align*}
since for any $c > 0$, $|\mathcal{G}_c| \leq |\mathcal{G}_0| = 2^{N_1+N_2}$ where $|\mathcal{G}_c| = \sum_{k \geq cN_1, l \geq cN_2}{N_1 \choose k}{N_2 \choose l}$ is the number of elements in the set $\mathcal{G}_c$. Setting $x(g_1,g_2) = K_1(\alpha)\left(\sigma(g_1,g_2)-r'(g_1,g_2)\right)/\sqrt{m_1m_2}$ so that the event $\left\{\left|\hat{\theta}(g_1,g_2) - \theta(g_1,g_2)\right| \geq  x(g_1,g_2)\right\}$ is equivalent to $C(g_1,g_2) $ gives 
\begin{align*}
 \mathbbm{P}_F\left(\cup_{(g_1,g_2) \in \mathcal{G}_c}C(g_1,g_2)\right) \\
 \leq 2^{N_1+N_2+1}\max_{(g_1,g_2) \in \mathcal{G}_c}\exp\left(-\frac{K_1(\alpha)^2\left(\sigma(g_1,g_2)-r'(g_1,g_2)\right)^2m_1m_2}{2\sigma(g_1,g_2)^2m_1m_2 + 4BK_1(\alpha) \left(\sigma(g_1,g_2)-r'(g_1,g_2)\right)\sqrt{m_1m_2}/3}\right) \\
= 2^{N_1+N_2+1}\max_{(g_1,g_2) \in \mathcal{G}_c}\exp\left(-\frac{K_1(\alpha)^2(1-r)^2\sigma(g_1,g_2)^2m_1m_2}{2D}\right) \\
 = 2^{N_1+N_2+1}\exp\left(-\frac{\tilde{K}_1(\alpha)^2}{2}\right)\max_{(g_1,g_2) \in \mathcal{G}_c}\exp\left(\frac{-K_1(\alpha)^2(1-r)^2\sigma(g_1,g_2)^2m_1m_2 + \tilde{K}_1(\alpha)^2D}{2D}\right)
\end{align*}
where $\tilde{K}_1(\alpha) := \sqrt{2\ln(2)(N_1+N_2) - 2\ln(\alpha/2)}$ and $D := \sigma(g_1,g_2)^2m_1m_2 + 2BK_1(\alpha)(1-r)\sigma(g_1,g_2)\sqrt{m_1m_2}/3$. \newline

We show in Step 2 below that $2^{N_1+N_2 +1}\exp\left(\frac{-\tilde{K}_1(\alpha)^2}{2}\right) = \alpha$ and in Step 3 below that
\begin{align*}
\lim_{F \in \mathcal{F}: N_1,N_2 \to \infty}\max_{(g_1,g_2) \in \mathcal{G}_c}\exp\left(\frac{-K_1(\alpha)^2(1-r)^2\sigma(g_1,g_2)^2m_1m_2 + \tilde{K}_1(\alpha)^2D}{2D}\right) = 0.
\end{align*}
It follows from these two steps that $\limsup_{F \in \mathcal{F}: N_1,N_2 \to \infty}\mathbbm{P}_F\left(\cup_{(g_1,g_2) \in \mathcal{G}_c}C(g_1,g_2)\right) \leq \alpha$. \newline

\textbf{Step 2:} In this step we show that $2^{N_1+N_2 +1}\exp\left(\frac{-\tilde{K}_1(\alpha)^2}{2}\right) = \alpha$. Since $\tilde{K}_1(\alpha)^2 = 2\ln(2)(N_1+N_2)-2\ln(\alpha/2)$ we have 
\begin{align*}
2^{N_1+N_2 +1}\exp\left(\frac{-\tilde{K}_1(\alpha)^2}{2}\right) = 2^{N_1+N_2 +1}\exp\left(\ln(\alpha/2) - \ln(2)(N_1+N_2)\right) \\
= 2^{N_1+N_2+1}\exp(\ln(\alpha) - \ln(2)(N_1+N_2+1)) = 2^{N_1+N_2+1}\alpha 2^{-N_1 - N_2 - 1} =  \alpha.
\end{align*}

\textbf{Step 3:} In this step we show that
\begin{align}\label{step3}
\lim_{F \in \mathcal{F}: N_1,N_2 \to \infty}
\max_{(g_1,g_2) \in \mathcal{G}_c}
\exp\left(\frac{-K_1(\alpha)^2(1-r)^2\sigma(g_1,g_2)^2m_1m_2+\tilde{K}_1(\alpha)^2D}{2D}\right)=0
\end{align}
where $\tilde{K}_1(\alpha) := \sqrt{2\ln(2)(N_1+N_2)-2\ln(\alpha/2)}$ and $D := \sigma(g_1,g_2)^2m_1m_2 + 2BK_1(\alpha)(1-r)\sigma(g_1,g_2)\sqrt{m_1m_2}/3$. Define $\Delta_N := K_1(\alpha)^2(1-r)^2-\tilde{K}_1(\alpha)^2$.  Since $K_1(\alpha)^2=1.39(N_1+N_2)-2\ln(\alpha/2)$ and $\tilde{K}_1(\alpha)^2=2\ln(2)(N_1+N_2)-2\ln(\alpha/2)$, we have that 
\begin{align*}
\Delta_N = \bigl(1.39(1-r)^2-2\ln(2)\bigr)(N_1+N_2)+2\ln(\alpha/2)\bigl(1-(1-r)^2\bigr).
\end{align*}
Because $1.39 - 2\ln(2) > 0.003$ and $r\to0$ under Assumption~\ref{ass:scale-consistency}(i), $\Delta_N \geq 0.003(N_1+N_2)$ for $\min(N_1,N_2)$ sufficiently large. Now fix $(g_1,g_2)\in\mathcal{G}_c$ and define
\begin{align*}
q(g_1,g_2):= \frac{\sigma(g_1,g_2)\sqrt{m_1m_2}}{K_1(\alpha)(1-r)}
\end{align*}
so that the exponent in \eqref{step3} can be written as
\begin{align*}
R(g_1,g_2) := \frac{-K_1(\alpha)^2(1-r)^2\sigma(g_1,g_2)^2m_1m_2+\tilde{K}_1(\alpha)^2D}{2D}  
=\frac{-\Delta_N q(g_1,g_2)+\frac{2B}{3}\tilde{K}_1(\alpha)^2}{2\left(q(g_1,g_2)+\frac{2B}{3}\right)}.
\end{align*}

We show that $q(g_1,g_2)$ diverges uniformly over $\mathcal{G}_c$. Since $(g_1,g_2)\in\mathcal{G}_c$, we have $m_1m_2\geq c^2N_1N_2$ and, since $\alpha\in(0,1)$ is fixed, there exists a constant $C_K<\infty$, depending only on $\alpha$, such that $K_1(\alpha) \leq C_K\sqrt{N_1+N_2}$ for all sufficiently large $N_1+N_2$. Also, since $r\to0$, we have $3/2 \geq 1-r\geq 1/2$ for all sufficiently large $\min(N_1,N_2)$. Therefore, for all sufficiently large $\min(N_1,N_2)$, $q(g_1,g_2)  \geq \frac{2c\,\sigma(g_1,g_2)\sqrt{N_1N_2}}{3C_K\sqrt{N_1+N_2}}$ and so eventually
\begin{align*}
\inf_{(g_1,g_2)\in\mathcal{G}_c}q(g_1,g_2) \geq
\frac{2c}{3C_K\sqrt{2}}\sqrt{\min(N_1,N_2)}\inf_{(g_1,g_2)\in\mathcal{G}_c}\sigma(g_1,g_2)
\end{align*}
which diverges to infinity by Assumption~\ref{ass:nondegeneracy}(s). 

Since $\Delta_N\geq 0.003(N_1+N_2)$ and $\tilde{K}_1(\alpha)^2$ is of order $N_1+N_2$, the ratio $\tilde{K}_1(\alpha)^2/\Delta_N$ is bounded for all sufficiently large $\min(N_1,N_2)$. Since $q(g_1,g_2)\to\infty$ uniformly over $\mathcal{G}_c$, we have, for all sufficiently large $\min(N_1,N_2)$, $\Delta_N q(g_1,g_2) \geq
\frac{4B}{3}\tilde{K}_1(\alpha)^2$ uniformly over $(g_1,g_2)\in\mathcal{G}_c$ and so
\begin{align*}
-\Delta_N q(g_1,g_2)+\frac{2B}{3}\tilde{K}_1(\alpha)^2 \leq -\frac{1}{2}\Delta_N q(g_1,g_2).
\end{align*}
Also, since $q(g_1,g_2)\to\infty$ uniformly over $\mathcal{G}_c$, we have, for all sufficiently large $\min(N_1,N_2)$, $q(g_1,g_2)+\frac{2B}{3} \leq 2q(g_1,g_2)$ uniformly over $(g_1,g_2)\in\mathcal{G}_c$. It follows that, for all sufficiently large $\min(N_1,N_2)$,
\begin{align*}
R(g_1,g_2)
\leq
-\frac{\Delta_N}{8}
\leq
-\frac{0.003}{8}(N_1+N_2)
\end{align*}
uniformly over $(g_1,g_2)\in\mathcal{G}_c$, and so
\begin{align*}
\max_{(g_1,g_2)\in\mathcal{G}_c}
\exp\left(R(g_1,g_2)\right)
\leq
\exp\left(-\frac{0.003}{8}(N_1+N_2)\right)
\to0,
\end{align*}
which demonstrates \eqref{step3}.\newline

%
%

\textbf{Step 4:} The result $\lim_{F \in \mathcal{F}: N_1,N_2\to\infty}\mathbbm{P}_F\left(\cup_{(g_1,g_2) \in \mathcal{G}_c}B(g_1,g_2)^C\right) = 0$ follows directly from Assumption~\ref{ass:scale-consistency}(i) because, by definition of $r'(g_1,g_2) := r \sigma(g_1,g_2)$, the event $\cup _{(g_1,g_2) \in \mathcal{G}_c}B(g_1,g_2)^C$ implies the event $\{\max_{(g_1,g_2) \in \mathcal{G}_c}\frac{\left|\hat{\sigma}(g_1,g_2) - \sigma(g_1,g_2)\right|}{\sigma(g_1,g_2)} > r\}$.
\end{proof}

\proptwo*

\begin{proof}
Since $CI_{2}(g_1,g_2;\alpha) := \hat{\theta}(g_1,g_2) \pm \left[\hat{\bar{\tau}} + K_2(\alpha)\hat{V}\right]/m_1m_2$ with \\ $K_2(\alpha) := \sqrt{- 2\ln(\alpha)}$, it is sufficient to show that Assumptions~\ref{ass:model}, \ref{ass:nondegeneracy}(w), \ref{ass:scale-consistency}(ii) and \ref{ass:scale-consistency}(iii) imply, for any $\alpha \in (0,1)$,
\begin{align*}
\limsup_{F \in \mathcal{F}: N_1,N_2 \to \infty}\mathbbm{P}_F\left(\cup_{(g_1,g_2) \in \mathcal{G}_c} \left\{\left|\hat{\theta}(g_1,g_2) - \theta(g_1,g_2)\right| \geq  \left[\hat{\bar{\tau}} + K_2(\alpha)\hat{V}\right]/m_1m_2\right\}\right) \leq \alpha.
\end{align*}
 Fix $\alpha \in (0,1)$. For each $(g_1,g_2) \in \mathcal{G}_c$ define the three events 
\begin{align*}
A(g_1,g_2) &:= \left\{\left|\hat{\theta}(g_1,g_2) - \theta(g_1,g_2)\right| \geq  \left[\hat{\bar{\tau}} + K_2(\alpha)\hat{V}\right]/m_1m_2\right\}, \\
B(g_1,g_2) &:= \left\{\hat{\bar{\tau}} \geq (\bar{\tau} - r')\right\} \cap \left\{\hat{V} \geq (V - r'')\right\}, \text{ and  } \\
C(g_1,g_2) &:= \left\{\left|\hat{\theta}(g_1,g_2) - \theta(g_1,g_2)\right| \geq   \left[(\bar{\tau} - r') + K_2(\alpha)(V - r'')\right]/m_1m_2\right\}
\end{align*}
where $r' := r\bar{\tau}$, $r'' = rV$, and $r$ is the sequence of real numbers defined in Assumption~\ref{ass:scale-consistency}. For any fixed $F \in \mathcal{F}$, three events are related by
\begin{align*}
\mathbbm{P}_F\left(\cup_{(g_1,g_2) \in \mathcal{G}_c}A(g_1,g_2)\right) = \mathbbm{P}_F\left(\cup_{(g_1,g_2) \in \mathcal{G}_c}\left\{A(g_1,g_2) \cap \left\{B(g_1,g_2) \cup B(g_1,g_2)^C\right\}\right\}\right) \\
=  \mathbbm{P}_F\left(\cup_{(g_1,g_2) \in \mathcal{G}_c}\left\{\left\{A(g_1,g_2) \cap B(g_1,g_2)\right\} \cup \left\{A(g_1,g_2) \cap B(g_1,g_2)^C\right\}\right\}\right) \\
= \mathbbm{P}_F\left(\left\{\cup_{(g_1,g_2) \in \mathcal{G}_c}\left\{A(g_1,g_2) \cap B(g_1,g_2)\right\}\right\} \cup \left\{\cup_{(g_1,g_2) \in \mathcal{G}_c}\left\{A(g_1,g_2) \cap B(g_1,g_2)^C\right\}\right\}\right)\\
\leq \mathbbm{P}_F\left(\cup_{(g_1,g_2) \in \mathcal{G}_c}\left\{A(g_1,g_2) \cap B(g_1,g_2)\right\}\right) + \mathbbm{P}_F\left(\cup_{(g_1,g_2) \in \mathcal{G}_c}\left\{A(g_1,g_2) \cap B(g_1,g_2)^C\right\}\right)\\
\leq \mathbbm{P}_F\left(\cup_{(g_1,g_2) \in \mathcal{G}_c}C(g_1,g_2)\right) + \mathbbm{P}_F\left(\cup_{(g_1,g_2) \in \mathcal{G}_c}B(g_1,g_2)^C\right).
\end{align*}
where $B(g_1,g_2)^C := \left\{\hat{\bar{\tau}} < (\bar{\tau} - r')\right\} \cup \left\{\hat{V} < (V - r'')\right\}$ is the complement of the event $B(g_1,g_2)$, the first inequality is the union bound, and the second inequality is because the events $A(g_1,g_2)$ and  $B(g_1,g_2)$ imply the event $C(g_1,g_2)$ and the events $A(g_1,g_2)$ and $B(g_1,g_2)^C$ imply the event $B(g_1,g_2)^C$. In Steps 1-3 below we show that $\limsup_{F \in \mathcal{F}: N_1,N_2\to\infty}\mathbbm{P}_F\left(\cup_{(g_1,g_2) \in \mathcal{G}_c}C(g_1,g_2)\right) \leq \alpha$ follows from Lemmas~\ref{lem:residual-norm-relations} and \ref{lem:cut-concentration} in Appendix Section~\ref{app:additional-lemmas}. In Step 4 below we show that \\$\lim_{F \in \mathcal{F}: N_1,N_2\to\infty}\mathbbm{P}_F\left(\cup_{(g_1,g_2) \in \mathcal{G}_c}B(g_1,g_2)^C\right) = 0$. It follows that \\$\limsup_{F \in \mathcal{F}: N_1,N_2\to\infty}\mathbbm{P}_F\left(\cup_{(g_1,g_2) \in \mathcal{G}_c}A(g_1,g_2)\right) \leq \alpha$ which is sufficient for the conclusion of Proposition~\ref{prop:proptwo}.   \newline

\textbf{Step 1:} For a fixed $F \in \mathcal{F}$, $(g_1,g_2) \in \mathcal{G}_c$, and $x > 0$, equation (\ref{ub}) of Lemma~\ref{lem:cut-concentration} implies that
\begin{align*}
\mathbbm{P}_F\left(||\epsilon||_{\square;c} \geq \mathbbm{E}_F\left[||\epsilon||_{\dagger}\right] + ||\sigma||_F/4 + x \right) \leq \exp\left(-\frac{x^2}{2(||\sigma||_F^2 + 4B\mathbbm{E}_F\left[||\epsilon||_{\dagger}\right] + B||\sigma||_F) + 6Bx}\right).
\end{align*}
Since $\bar{\tau} := 1.01\mathbbm{E}_F\left[||\epsilon||_{\dagger}\right] + .25||\sigma||_F$ and, for any $(g_1,g_2) \in \mathcal{G}_c$,  the estimation error 
\begin{align*}
\hat{\theta}(g_1,g_2) - \theta(g_1,g_2)  = \frac{1}{m_1m_2}\sum_{i\in[N_1],j\in[N_2]}\epsilon_{ij}g_{i,1}g_{j,2},
\end{align*}
satisfies $\left|\hat{\theta}(g_1,g_2) - \theta(g_1,g_2) \right| \leq ||\epsilon||_{\square;c}/(m_1m_2)$ where $m_t  := \sum_{i \in [N_t]}g_{i,t}$ for $t \in \{1,2\}$, it follows that for any $F \in \mathcal{F}$ and  $x > 0$
\begin{align*}
\mathbbm{P}_F\left(\cup_{(g_1,g_2) \in \mathcal{G}_c}\left\{\left|\hat{\theta}(g_1,g_2) - \theta(g_1,g_2) \right| \geq \left[\bar{\tau} - 0.005\mathbbm{E}_F\left[||\epsilon||_{\dagger}\right] + x\right]/m_1m_2 \right\}\right) \\ 
\leq \exp\left(-\frac{x^2}{2(||\sigma||_F^2 + 4B\mathbbm{E}_F\left[||\epsilon||_{\dagger}\right] + B||\sigma||_F) + 6Bx}\right).
\end{align*}
Set $x' = 0.005\mathbbm{E}_F\left[||\epsilon||_{\dagger}\right] -r' + K_2(\alpha)(V - r'')$ with $V := \sqrt{||\sigma||_F^2 + 4B\mathbbm{E}_F\left[|| \epsilon||_{\dagger}\right] +B||\sigma||_F}$ and $K_2(\alpha) := \sqrt{-2\ln(\alpha)}$. Since $r \to 0$, $x'$ is eventually positive for $\min(N_1,N_2)$ sufficiently large. The event $\left\{\left|\hat{\theta}(g_1,g_2) - \theta(g_1,g_2) \right| \geq \left[\bar{\tau} - 0.005\mathbbm{E}_F\left[||\epsilon||_{\dagger}\right] + x'\right]/m_1m_2 \right\}$ is equivalent to $C(g_1,g_2)$, and so for any fixed $F \in \mathcal{F}$
\begin{align*}
\mathbbm{P}_F\left(\cup_{(g_1,g_2) \in \mathcal{G}_c}C(g_1,g_2)\right)
\leq \exp\left(-\frac{\left(0.005\mathbbm{E}_F\left[||\epsilon||_{\dagger}\right] -r' + K_2(\alpha)(V - r'')\right)^2}{2V^2 + 6B\left( 0.005\mathbbm{E}_F\left[||\epsilon||_{\dagger}\right] -r' + K_2(\alpha)(V - r'')\right)}\right) \\
= \exp\left(-\frac{ K_2(\alpha)^2 V^2 + rem}{2V^2 + 6B\left(0.005\mathbbm{E}_F\left[||\epsilon||_{\dagger}\right] -r' + K_2(\alpha)(V - r'')\right)}\right)\\
=  \exp\left(-\frac{ K_2(\alpha)^2 V^2}{2V^2 + 6B\left(0.005\mathbbm{E}_F\left[||\epsilon||_{\dagger}\right]-r' + K_2(\alpha)(V - r'')\right)}\right)\\
\times\exp\left(-\frac{ rem}{2V^2 + 6B\left(0.005\mathbbm{E}_F\left[||\epsilon||_{\dagger}\right]-r' + K_2(\alpha)(V - r'')\right)}\right) \\
= \exp\left(-\frac{ K_2(\alpha)^2}{2}\right) \exp\left(\frac{ K_2(\alpha)^26B\left(0.005\mathbbm{E}_F\left[||\epsilon||_{\dagger}\right]-r' + K_2(\alpha)(V - r'')\right)}{2\left[2V^2 + 6B\left(0.005\mathbbm{E}_F\left[||\epsilon||_{\dagger}\right]-r' + K_2(\alpha)(V - r'')\right)\right]}\right)\\\times\exp\left(-\frac{ rem}{2V^2 + 6B\left(0.005\mathbbm{E}_F\left[||\epsilon||_{\dagger}\right]-r' + K_2(\alpha)(V - r'')\right)}\right) \\
= \exp\left(-\frac{ K_2(\alpha)^2}{2}\right)\exp\left(\frac{ K_2(\alpha)^26B\left(0.005\mathbbm{E}_F\left[||\epsilon||_{\dagger}\right]-r' + K_2(\alpha)(V - r'')\right) - 2rem}{2\left[2V^2 + 6B\left(0.005\mathbbm{E}_F\left[||\epsilon||_{\dagger}\right]-r' + K_2(\alpha)(V - r'')\right)\right]}\right) 
\end{align*}
where 
\begin{align*}
rem = \left[(0.005\mathbbm{E}_F\left[||\epsilon||_{\dagger}\right] - r')^2 + K_2(\alpha)^2(r'')^2 - 2K_2(\alpha)^2Vr'' + 2(0.005\mathbbm{E}_F\left[||\epsilon||_{\dagger}\right]-r')K_2(\alpha)(V - r'')\right].
\end{align*}
 The first term $\exp\left(-\frac{ K_2(\alpha)^2}{2}\right) = \alpha$. In Steps 2 and 3 below we show that the second term satisfies 
\begin{align*}
\limsup_{F \in \mathcal{F}: N_1,N_2 \to \infty}\exp\left(\frac{ K_2(\alpha)^26B\left(0.005\mathbbm{E}_F\left[||\epsilon||_{\dagger}\right]-r' + K_2(\alpha)(V - r'')\right) - 2rem}{2\left[2V^2 + 6B\left(0.005\mathbbm{E}_F\left[||\epsilon||_{\dagger}\right]-r' + K_2(\alpha)(V - r'')\right)\right]}\right)   \leq 1.
\end{align*}
It follows that $\limsup_{F \in \mathcal{F}: N_1,N_2 \to \infty}\mathbbm{P}_F\left(\cup_{(g_1,g_2) \in \mathcal{G}_c}C(g_1,g_2)\right)\leq \alpha.$ \newline
 
 \textbf{Step 2:} In this step we show that
\begin{align*}
\limsup_{F \in \mathcal{F}: N_1,N_2 \to \infty}\exp\left(\frac{ K_2(\alpha)^26B\left(0.005\mathbbm{E}_F\left[||\epsilon||_{\dagger}\right]-r' + K_2(\alpha)(V - r'')\right) - 2rem}{2\left[2V^2 + 6B\left(0.005\mathbbm{E}_F\left[||\epsilon||_{\dagger}\right]-r' + K_2(\alpha)(V - r'')\right)\right]}\right)   \leq 1.
\end{align*}
To show this, we write
\begin{align}\label{p2s2}
\frac{ K_2(\alpha)^26B\left(0.005\mathbbm{E}_F\left[||\epsilon||_{\dagger}\right]-r' + K_2(\alpha)(V - r'')\right) - 2rem}{2\left[2V^2 + 6B\left(0.005\mathbbm{E}_F\left[||\epsilon||_{\dagger}\right]-r' + K_2(\alpha)(V - r'')\right)\right]} \nonumber\\
= \frac{K_2(\alpha)^26B\left(\mathbbm{E}_F\left[||\epsilon||_{\dagger}\right](0.005-1.01r) -0.25||\sigma||_F r + K_2(\alpha)V(1 - r)\right)/\mathbbm{E}_F\left[||\epsilon||_{\dagger}\right]^2 - 2rem/\mathbbm{E}_F\left[||\epsilon||_{\dagger}\right]^2}{2\left[2V^2 + 6B\left(\mathbbm{E}_F\left[||\epsilon||_{\dagger}\right]\left(0.005-1.01r)-0.25||\sigma||_F r + K_2(\alpha)V(1 - r)\right)\right)\right]/\mathbbm{E}_F\left[||\epsilon||_{\dagger}\right]^2}.
\end{align}
Since $r \to 0$, the denominator 
\begin{align*}
2\left[2V^2 + 6B\left(\mathbbm{E}_F\left[||\epsilon||_{\dagger}\right]\left(0.005-1.01r)-0.25||\sigma||_F r + K_2(\alpha)V(1 - r)\right)\right)\right]/\mathbbm{E}_F\left[||\epsilon||_{\dagger}\right]^2
\end{align*}
 is eventually positive. The first summand in the numerator, 
\begin{align*} 
K_2(\alpha)^26B\left(\mathbbm{E}_F\left[||\epsilon||_{\dagger}\right](0.005-1.01r) - .25||\sigma||_F r + K_2(\alpha)V(1 - r)\right)/\mathbbm{E}_F\left[||\epsilon||_{\dagger}\right]^2
\end{align*}
 converges to $0$ as $N_1,N_2 \to \infty$ for any asymptotic sequence in $\mathcal{F}$, because $\liminf_{F \in \mathcal{F}: N_1,N_2\to \infty}\mathbbm{E}_F\left[||\epsilon||_{\dagger}\right] = \infty$ and $\limsup_{F \in \mathcal{F}: N_1,N_2\to \infty}V/\mathbbm{E}_F\left[||\epsilon||_{\dagger}\right] \leq 1$ (we show this in Step 3 below). The second summand in the numerator, $-2rem/\mathbbm{E}_F\left[||\epsilon||_{\dagger}\right]^2$ is equal to
\begin{align*}
-2\left[(0.005\mathbbm{E}_F\left[||\epsilon||_{\dagger}\right] - r')^2 + K_2(\alpha)^2(r'')^2 - 2K_2(\alpha)^2Vr'' \right. \\ \left.+ 2(0.005\mathbbm{E}_F\left[||\epsilon||_{\dagger}\right]-r')K_2(\alpha)(V - r'')\right]/\mathbbm{E}_F\left[||\epsilon||_{\dagger}\right]^2
\end{align*}
The first summand inside the square bracket $(0.005\mathbbm{E}_F\left[||\epsilon||_{\dagger}\right] - r')^2/\mathbbm{E}_F\left[||\epsilon||_{\dagger}\right]^2$  converges to $0.005^2$ as $N_1,N_2 \to \infty$ for any asymptotic sequence in $\mathcal{F}$. The second summand inside the square bracket $K_2(\alpha)^2r^2V^2/\mathbbm{E}_F\left[||\epsilon||_{\dagger}\right]^2 - 2K_2(\alpha)^2rV^2/\mathbbm{E}_F\left[||\epsilon||_{\dagger}\right]^2$ converges to $0$ as $N_1,N_2 \to \infty$ for any asymptotic sequence in $\mathcal{F}$ since $r \to 0$ and $\limsup_{F \in \mathcal{F}: N_1,N_2\to \infty}V/\mathbbm{E}_F\left[||\epsilon||_{\dagger}\right] \leq 1$ (we show this in Step 3 below). The last summand inside the square bracket $2(0.005\mathbbm{E}_F\left[||\epsilon||_{\dagger}\right]-r')K_2(\alpha)(V - r'')/\mathbbm{E}_F\left[||\epsilon||_{\dagger}\right]^2$ is eventually positive also because $r \to 0$ and \\$\limsup_{F \in \mathcal{F}: N_1,N_2\to \infty}V/\mathbbm{E}_F\left[||\epsilon||_{\dagger}\right] \leq 1$ (we show this in Step 3 below). 

It follows that the numerator of (\ref{p2s2}) is eventually negative and the denominator is eventually positive as $N_1,N_2 \to \infty$ for any asymptotic sequence in $\mathcal{F}$. As a result, (\ref{p2s2}) is eventually negative, and so $\limsup_{F \in \mathcal{F}: N_1,N_2 \to \infty}\exp\left(\frac{ K_2(\alpha)^26B\left(0.005\mathbbm{E}_F\left[||\epsilon||_{\dagger}\right]-r' + K_2(\alpha)(V - r'')\right) - 2rem}{2\left[2V^2 + 6B\left(0.005\mathbbm{E}_F\left[||\epsilon||_{\dagger}\right]-r' + K_2(\alpha)(V - r'')\right)\right]}\right)   \leq 1$.\newline
 
 \textbf{Step 3:} In this step we show that  $\liminf_{F \in \mathcal{F}: N_1,N_2\to \infty}\mathbbm{E}_F\left[||\epsilon||_{\dagger}\right] \to \infty$ and \\$\limsup_{F \in \mathcal{F}: N_1,N_2\to \infty}V/\mathbbm{E}_F\left[||\epsilon||_{\dagger}\right] \leq 1$. To show the first claim, we use Lemma~\ref{lem:residual-norm-relations} of Appendix Section~\ref{app:additional-lemmas} which says that, under Assumption~\ref{ass:model}, for any fixed  $F \in \mathcal{F}$ and $\delta \in (0,1)$
\begin{align*}
\mathbbm{E}_F\left[||\epsilon||_{\dagger}\right] \geq 2\delta ||\sigma||_F\left(1 - \exp\left(-\frac{3(1-\delta^2)^2||\sigma||_F^2}{(32-8\delta^2)B^2}\right)\right).
\end{align*}
Since $\liminf_{F \in \mathcal{F}: N_1,N_2\to \infty}||\sigma||_F \to \infty$ by Assumption~\ref{ass:nondegeneracy}(w), it follows that $\liminf_{F \in \mathcal{F}: N_1,N_2\to \infty}\mathbbm{E}_F\left[||\epsilon||_{\dagger}\right] = \infty$. To show the second claim we write 
\begin{align*}
V/\mathbbm{E}_F\left[||\epsilon||_{\dagger}\right] = \sqrt{||\sigma||_F^2 + 4B\mathbbm{E}_F\left[||\epsilon||_{\dagger}\right] + B||\sigma||_F}/\mathbbm{E}_F\left[||\epsilon||_{\dagger}\right] \\
\leq ||\sigma||_F/\mathbbm{E}_F\left[||\epsilon||_{\dagger}\right] + 2\sqrt{B}/\sqrt{\mathbbm{E}_F\left[||\epsilon||_{\dagger}\right]} + \sqrt{B}\sqrt{||\sigma||_F}/\mathbbm{E}_F\left[||\epsilon||_{\dagger}\right].
\end{align*}
The second summand satisfies $\limsup_{F \in \mathcal{F}: N_1,N_2\to \infty}2\sqrt{B}/\sqrt{\mathbbm{E}_F\left[||\epsilon||_{\dagger}\right]} = 0$ because $B$ is a fixed constant and we demonstrated previously that $\liminf_{F \in \mathcal{F}: N_1,N_2\to \infty}\mathbbm{E}_F\left[||\epsilon||_{\dagger}\right] = \infty$. Applying the inequality from Lemma~\ref{lem:residual-norm-relations} implies that the first summand is bounded by 
\begin{align*}
||\sigma||_F/\mathbbm{E}_F\left[||\epsilon||_{\dagger}\right] \leq \delta^{-1}\left(1 - \exp\left(-\frac{3(1-\delta^2)^2||\sigma||_F^2}{(32-8\delta^2)B^2}\right)\right)^{-1}/2
\end{align*}
Choosing $\delta > 1/2$, and since $\liminf_{F \in \mathcal{F}: N_1,N_2\to \infty}||\sigma||_F = \infty$ by Assumption~\ref{ass:nondegeneracy}(w), it follows that the right-hand side is eventually less than $1$ for any asymptotic sequence in $\mathcal{F}$. The third summand satisfies  $\limsup_{F \in \mathcal{F}: N_1,N_2\to \infty}\sqrt{B}\sqrt{||\sigma||_F}/\mathbbm{E}_F\left[||\epsilon||_{\dagger}\right] = 0$ because we demonstrated before that $||\sigma||_F/\mathbbm{E}_F\left[||\epsilon||_{\dagger}\right]$ is bounded and $\liminf_{F \in \mathcal{F}: N_1,N_2\to \infty}||\sigma||_F = \infty$ by Assumption~\ref{ass:nondegeneracy}(w).  \newline

\textbf{Step 4:} The result $\lim_{F \in \mathcal{F}: N_1,N_2\to\infty}\mathbbm{P}_F\left(\cup_{(g_1,g_2) \in \mathcal{G}_c}B(g_1,g_2)^C\right) = 0$ follows directly from Assumptions~\ref{ass:scale-consistency}(ii) and \ref{ass:scale-consistency}(iii) because the event $\cup _{(g_1,g_2) \in \mathcal{G}_c}B(g_1,g_2)^C$ implies the event $\left\{\frac{\left|\hat{V}- V\right|}{V} > r\right\} \cup \left\{\frac{\left|\hat{\bar{\tau}}- \bar{\tau}\right|}{\bar{\tau}} > r\right\} $ by the definition of $r'$ and $r''$. 
\end{proof}

\propthree*

\begin{proof}
Fix $\alpha \in (0,1)$ and let $K = K(N_1,N_2,\alpha)$ and
$\hat{\sigma}(g_1,g_2)$ be as in the statement of the proposition. Since $K$
is deterministic and the hypothesis
$\limsup_{N_1,N_2\to\infty} K/\sqrt{N_1+N_2} < 1/\sqrt{8\pi}$ is
strict, there exist $s \in (0,1)$ and $n_0 \in \mathbbm{N}$ such that
\begin{align}\label{eq:Kmargin}
K \;\leq\; (1-s)\,\frac{\sqrt{N_1 + N_2}}{\sqrt{8\pi}}
\qquad\text{whenever } \min(N_1,N_2) \geq n_0.
\end{align}

Fix $(g_1^*,g_2^*) \in \text{argmax}_{(g_1,g_2) \in \mathcal{G}_c}\left|\sum_{i \in [N_1], j \in [N_2]}\epsilon_{ij}g_{i,1}g_{j,2}\right|$, using an arbitrary deterministic tie-breaking rule, and let $m_{t}^* = \sum_{i \in [N_t]}g_{i,t}^*$ for $t \in \{1,2\}$. We start with the lower bound
\begin{align}\label{lower}
\left|\hat{\theta}(g_1^*,g_2^*) - \theta(g_1^*,g_2^*)\right| = \left|\frac{1}{m_1^*m_2^*}\sum_{i \in [N_1], j \in [N_2]}\epsilon_{ij}g_{i,1}^*g_{j,2}^*\right| \nonumber\\
\geq \frac{1}{2\sqrt{m_1^*m_2^*}}\left[\sqrt{N_1}\min\left(\frac{1}{N_1}\sum_{i \in [N_1]}(u_{i})_{+},\frac{1}{N_1}\sum_{i \in [N_1]}(u_{i})_{-}\right) \right.\nonumber \\ \left.+ \sqrt{N_2}\min\left(\frac{1}{N_2}\sum_{j \in [N_2]}(v_{j})_{+},\frac{1}{N_2}\sum_{j \in [N_2]}(v_{j})_{-}\right)\right] 
\end{align}
where for any scalar $x$, $(x)_+ := x\mathbbm{1}(x > 0)$, $(x)_- := (-x)\mathbbm{1}(x < 0)$, $u_i := \frac{1}{\sqrt{N_2}}\sum_{j \in [N_2]}\epsilon_{ij}$, $v_j := \frac{1}{\sqrt{N_1}}\sum_{i \in [N_1]}\epsilon_{ij}$, and the second inequality in (\ref{lower}) is derived in Step 1 below.

Let $\mathcal F_0:=\{F_\ell:\ell\geq1\}$ denote the asymptotic
sequence specified in the proposition. Define $\mu_0:=\int y\,dF_0(y)$ and $\sigma_0^2:=\int (y-\mu_0)^2\,dF_0(y)>0$. For every $F_\ell\in\mathcal F_0$, all entries have mean $\mu_0$
and variance $\sigma_0^2$. Hence
$\sigma(g_1,g_2)=\sigma_0$ for every
$(g_1,g_2)\in\mathcal G_c$. Assumption~\ref{ass:scale-consistency}(i)
therefore implies
\[
\frac{\widehat\sigma(g_1^*,g_2^*)}{\sigma_0}\to_p1
\]
along this sequence, even though $(g_1^*,g_2^*)$ is stochastic.

 For any asymptotic sequence of models $F \in \mathcal{F}_0$ with $N_1, N_2 \to \infty$, we show, in Step 2 below, that, for any fixed $t > 0$,
 \begin{align*}
 \mathbbm{P}_F\left(\left|\frac{1}{N_1}\sum_{i \in [N_1]}(u_{i})_{+} - \hat{\sigma}(g_1^*,g_2^*)/\sqrt{2\pi}\right| > t\hat{\sigma}(g_1^*,g_2^*)\right) \to 0,\\ 
 \mathbbm{P}_F\left(\left|\frac{1}{N_1}\sum_{i \in [N_1]}(u_{i})_{-} - \hat{\sigma}(g_1^*,g_2^*)/\sqrt{2\pi}\right| > t\hat{\sigma}(g_1^*,g_2^*)\right) \to 0,\\ 
 \mathbbm{P}_F\left(\left|\frac{1}{N_2}\sum_{j \in [N_2]}(v_{j})_{+} - \hat{\sigma}(g_1^*,g_2^*)/\sqrt{2\pi}\right| > t\hat{\sigma}(g_1^*,g_2^*)\right) \to 0, \text{ and } \\ 
 \mathbbm{P}_F\left(\left|\frac{1}{N_2}\sum_{j \in [N_2]}(v_{j})_{-} - \hat{\sigma}(g_1^*,g_2^*)/\sqrt{2\pi}\right| > t\hat{\sigma}(g_1^*,g_2^*)\right) \to 0.
 \end{align*} 

Applying the four convergence statements above with $t = (s/2)/\sqrt{2\pi}$,
each of the four averages exceeds
$(1-s/2)\,\hat{\sigma}(g_1^*,g_2^*)/\sqrt{2\pi}$ with probability approaching
one. Moreover, $\hat{\sigma}(g_1^*,g_2^*) > 0$ with probability approaching
one, since $\hat{\sigma}(g_1^*,g_2^*)/\sigma_0 \to_p 1$ and $\sigma_0 > 0$.
Combining these facts with \eqref{lower} and noting that
$2\sqrt{2\pi} = \sqrt{8\pi}$, it follows that, with probability approaching
one along any asymptotic sequence of models in $\mathcal{F}_0$ with
$N_1,N_2 \to \infty$,
\begin{align*}
\left|\hat{\theta}(g_1^*,g_2^*) - \theta(g_1^*,g_2^*)\right|
\;\geq\; (1-s/2)\,\frac{\sqrt{N_1}+\sqrt{N_2}}{\sqrt{8\pi}}\,
\frac{\hat{\sigma}(g_1^*,g_2^*)}{\sqrt{m_1^*m_2^*}}
\\\;>\; (1-s)\,\frac{\sqrt{N_1 + N_2}}{\sqrt{8\pi}}\,
\frac{\hat{\sigma}(g_1^*,g_2^*)}{\sqrt{m_1^*m_2^*}}
\;\geq\; K\,\frac{\hat{\sigma}(g_1^*,g_2^*)}{\sqrt{m_1^*m_2^*}},
\end{align*}
where the strict inequality uses $\hat{\sigma}(g_1^*,g_2^*) > 0$ and $\sqrt{N_1}+\sqrt{N_2} > \sqrt{N_1+N_2}$, and the
final inequality is \eqref{eq:Kmargin}. Since
$I(g_1^*,g_2^*;\alpha) = \hat{\theta}(g_1^*,g_2^*) \pm
\left[K \times \hat{\sigma}(g_1^*,g_2^*)\right]/\sqrt{m_1^*m_2^*}$, the event
$\{\theta(g_1^*,g_2^*) \in I(g_1^*,g_2^*;\alpha)\}$ implies
$|\hat{\theta}(g_1^*,g_2^*) - \theta(g_1^*,g_2^*)| \leq
K\hat{\sigma}(g_1^*,g_2^*)/\sqrt{m_1^*m_2^*}$, and therefore
\begin{align*}
\lim_{F \in \mathcal{F}_0:\,N_1,N_2 \to \infty}
\mathbbm{P}_F\left(\theta(g_1^*,g_2^*) \in I(g_1^*,g_2^*;\alpha)\right) = 0.
\end{align*}
Finally, since $(g_1^*,g_2^*) \in \mathcal{G}_c$ implies
$\cap_{(g_1,g_2)\in\mathcal{G}_c}\{\theta(g_1,g_2) \in I(g_1,g_2;\alpha)\}
\subseteq \{\theta(g_1^*,g_2^*) \in I(g_1^*,g_2^*;\alpha)\}$, and since
$\mathcal{F}_0 \subseteq \mathcal{F}$,
\begin{align*}
\liminf_{F\in\mathcal{F}:\,N_1,N_2\to\infty}
\mathbbm{P}_F\left(\cap_{(g_1,g_2)\in\mathcal{G}_c}
\left\{\theta(g_1,g_2)\in I(g_1,g_2;\alpha)\right\}\right) = 0,
\end{align*}
which is the conclusion of the proposition. \newline

\textbf{Step 1:} In this step we show the inequality (\ref{lower}). By definition of $(g_1^*,g_2^*)$, we have that $\left|\sum_{i \in [N_1], j \in [N_2]}\epsilon_{ij}g^*_{i,1}g^*_{j,2}\right| \geq \left|\sum_{i \in [N_1], j \in [N_2]}\epsilon_{ij}g'_{i,1}g'_{j,2}\right|$ for any $(g'_1,g'_2) \in \mathcal{G}_c$. Choosing $g'_{j,2} = 1$ for all $j \in [N_2]$ and $g'_{i,1} = \mathbbm{1}\{\sum_{j \in [N_2]}\epsilon_{ij} \geq 0\}\mathbbm{1}\{\sum_{i \in [N_1]}\mathbbm{1}\{\sum_{j \in [N_2]}\epsilon_{ij} \geq 0\} \geq N_1/2\} + \mathbbm{1}\{\sum_{j \in [N_2]}\epsilon_{ij} \leq 0\}\mathbbm{1}\{\sum_{i \in [N_1]}\mathbbm{1}\{\sum_{j \in [N_2]}\epsilon_{ij} \geq 0\} < N_1/2\}$. By construction, the entries of $g_1'$ and $g_2'$ take values in $\{0,1\}$, $\sum_{j \in [N_2]}g'_{j,2} = N_2$ and $\sum_{i \in [N_1]}g'_{i,1} \geq N_1/2$, and so $(g_1',g_2') \in \mathcal{G}_c$ for any $c \in [0,1/2]$. Under this choice of $(g_1',g_2')$ we have that
\begin{align*}
 \left|\sum_{i \in [N_1], j \in [N_2]}\epsilon_{ij}g'_{i,1}g'_{j,2}\right| = \sum_{i \in [N_1]}\left(\sum_{j \in [N_2]}\epsilon_{ij}\right)_{+}\mathbbm{1}\{\sum_{i \in [N_1]}\mathbbm{1}\{\sum_{j \in [N_2]}\epsilon_{ij} \geq 0\} \geq N_1/2\}  \\
 + \sum_{i \in [N_1]}\left(\sum_{j \in [N_2]}\epsilon_{ij}\right)_{-}\mathbbm{1}\{\sum_{i \in [N_1]}\mathbbm{1}\{\sum_{j \in [N_2]}\epsilon_{ij} \geq 0\} < N_1/2\} \\
 \geq \min\left(\sum_{i \in [N_1]}\left(\sum_{j \in [N_2]}\epsilon_{ij}\right)_{+},\sum_{i \in [N_1]}\left(\sum_{j \in [N_2]}\epsilon_{ij}\right)_{-}\right).
\end{align*}
By the same logic, choosing $g'_{i,1} = 1$ for all $i \in [N_1]$ and $g'_{j,2} = \mathbbm{1}\{\sum_{i \in [N_1]}\epsilon_{ij} \geq 0\}\mathbbm{1}\{\sum_{j\in [N_2]}\mathbbm{1}\{\sum_{i \in [N_1]}\epsilon_{ij} \geq 0\} \geq N_2/2\} + \mathbbm{1}\{\sum_{i \in [N_1]}\epsilon_{ij} \leq 0\}\mathbbm{1}\{\sum_{j \in [N_2]}\mathbbm{1}\{\sum_{i \in [N_1]}\epsilon_{ij} \geq 0\} < N_2/2\}$ gives 
\begin{align*}
 \left|\sum_{i \in [N_1], j \in [N_2]}\epsilon_{ij}g'_{i,1}g'_{j,2}\right| 
 \geq \min\left(\sum_{j \in [N_2]}\left(\sum_{i \in [N_1]}\epsilon_{ij}\right)_{+},\sum_{j \in [N_2]}\left(\sum_{i \in [N_1]}\epsilon_{ij}\right)_{-}\right).
\end{align*}
It follows from these two inequalities that 
\begin{align*}
\left|\frac{1}{m_1^*m_2^*}\sum_{i \in [N_1], j \in [N_2]}\epsilon_{ij}g^*_{i,1}g^*_{j,2}\right|  \geq \frac{1}{2m_1^*m_2^*}\left[\min\left( \sum_{i \in [N_1]}\left(\sum_{j \in [N_2]}\epsilon_{ij}\right)_{+}, \sum_{i \in [N_1]}\left(\sum_{j \in [N_2]}\epsilon_{ij}\right)_{-}\right) \right. \\ \left.+  \min\left(\sum_{j \in [N_2]}\left(\sum_{i \in [N_1]}\epsilon_{ij}\right)_{+},\sum_{j \in [N_2]}\left(\sum_{i \in [N_1]}\epsilon_{ij}\right)_{-}\right) \right] \\
\geq \frac{1}{2\sqrt{m_1^*m_2^*}\sqrt{N_1N_2}}\left[ \min\left(\sum_{i \in [N_1]}\left(\sum_{j \in [N_2]}\epsilon_{ij}\right)_{+},\sum_{i \in [N_1]}\left(\sum_{j \in [N_2]}\epsilon_{ij}\right)_{-}\right) \right. \\ \left.+  \min\left(\sum_{j \in [N_2]}\left(\sum_{i \in [N_1]}\epsilon_{ij}\right)_{+},\sum_{j \in [N_2]}\left(\sum_{i \in [N_1]}\epsilon_{ij}\right)_{-}\right) \right] 
\end{align*}
where the second inequality follows from the fact that $m_1^*m_2^* \leq N_1N_2$. The inequality (\ref{lower}) then follows by distributing the $1/\sqrt{N_1N_2}$ into the square brackets on the right-hand side. \newline

\textbf{Step 2:} In this step we show that, for any fixed $t > 0$ and any asymptotic sequence of models $F \in \mathcal{F}_0$ with $N_1, N_2 \to \infty$
\begin{align}\label{conv}
\mathbbm{P}_F\left(\left|\frac{1}{N_1}\sum_{i \in [N_1]}(u_{i})_{+} - \hat{\sigma}(g_1^*,g_2^*)/\sqrt{2\pi}\right| > t\hat{\sigma}(g_1^*,g_2^*)\right) \to 0.
\end{align}
as $N_1, N_2 \to \infty$. A nearly identical argument gives, mutatis mutandis, that 
\begin{align*}
\mathbbm{P}_F\left(\left|\frac{1}{N_1}\sum_{i \in [N_1]}(u_{i})_{-} - \hat{\sigma}(g_1^*,g_2^*)/\sqrt{2\pi}\right| > t\hat{\sigma}(g_1^*,g_2^*)\right) \to 0,\\ 
\mathbbm{P}_F\left(\left|\frac{1}{N_2}\sum_{j \in [N_2]}(v_{j})_{+} - \hat{\sigma}(g_1^*,g_2^*)/\sqrt{2\pi}\right| > t\hat{\sigma}(g_1^*,g_2^*)\right) \to 0, \text{ and }  \\
\mathbbm{P}_F\left(\left|\frac{1}{N_2}\sum_{j \in [N_2]}(v_{j})_{-} - \hat{\sigma}(g_1^*,g_2^*)/\sqrt{2\pi}\right| > t\hat{\sigma}(g_1^*,g_2^*)\right) \to 0.
\end{align*}  

To show (\ref{conv}), we write 
\begin{align*}
\mathbbm{P}_F\left(\left|\frac{1}{N_1}\sum_{i \in [N_1]}(u_{i})_{+} - \hat{\sigma}(g_1^*,g_2^*)/\sqrt{2\pi}\right| > t\hat{\sigma}(g_1^*,g_2^*)\right) = \mathbbm{P}_F\left(\left|\frac{1}{N_1}\sum_{i \in [N_1]}\left(\frac{u_{i}}{\hat{\sigma}(g_1^*,g_2^*)}\right)_{+} - \frac{1}{\sqrt{2\pi}}\right| > t\right) \\
=  \mathbbm{P}_F\left(\left|\frac{1}{N_1}\sum_{i \in [N_1]}\left(\frac{u_{i}}{\sigma_0}\right)_{+}\left[1 - \frac{\frac{\hat{\sigma}(g_1^*,g_2^*) - \sigma_0}{\sigma_0}}{1 + \frac{\hat{\sigma}(g_1^*,g_2^*) - \sigma_0}{\sigma_0}}\right] - \frac{1}{\sqrt{2\pi}}\right| > t\right) \\
= \mathbbm{P}_F\left(\left|A + B - C\right| > t\right) \leq \mathbbm{1}\left(|A| > t/3\right)+\mathbbm{P}_F\left(|B| > t/3\right)+ \mathbbm{P}_F\left(|C| > t/3\right)
\end{align*}
where $\sigma_0 = \sqrt{\mathbbm{E}_F\left[\epsilon_{ij}^2\right]} > 0$ since $\mathcal{F}_0$ is chosen to have nondegenerate elements, $A := \left(\mathbbm{E}_F\left[\left(\frac{u_i}{\sigma_0}\right)_{+}\right] - \frac{1}{\sqrt{2\pi}}\right)$, $B := \left(\frac{1}{N_1}\sum_{i \in [N_1]}\left(\frac{u_i}{\sigma_0}\right)_{+} - \mathbbm{E}_F\left[\left(\frac{u_i}{\sigma_0}\right)_{+}\right]\right)\left[1 - \frac{\frac{\hat{\sigma}(g_1^*,g_2^*) - \sigma_0}{\sigma_0}}{1 + \frac{\hat{\sigma}(g_1^*,g_2^*) - \sigma_0}{\sigma_0}}\right]$, $C := \mathbbm{E}_F\left[\left(\frac{u_i}{\sigma_0}\right)_{+}\right]\left[\frac{\frac{\hat{\sigma}(g_1^*,g_2^*) - \sigma_0}{\sigma_0}}{1 + \frac{\hat{\sigma}(g_1^*,g_2^*) - \sigma_0}{\sigma_0}}\right]$, and the inequality follows from the union bound. In Step 3 below we show that, for any $t > 0$ and as $N_1,N_2 \to \infty$,  $\mathbbm{P}_F\left(|B| > t/3\right)\to 0$, in Step 4 below we show that $|A| \to 0$, and in Step 5 below we show that $\mathbbm{P}_F\left(|C| > t/3\right) \to 0$. The result (\ref{conv}) follows. \newline

\textbf{Step 3:} In this step we show that, for any fixed $t > 0$ and asymptotic sequence of models $F \in \mathcal{F}_0$ with $N_1, N_2 \to \infty$,  
\begin{align*}
\mathbbm{P}_F\left(\left| \left(\frac{1}{N_1}\sum_{i \in [N_1]}\left(\frac{u_i}{\sigma_0}\right)_{+} - \mathbbm{E}_F\left[\left(\frac{u_i}{\sigma_0}\right)_{+}\right]\right)\left[1 - \frac{\frac{\hat{\sigma}(g_1^*,g_2^*) - \sigma_0}{\sigma_0}}{1 + \frac{\hat{\sigma}(g_1^*,g_2^*) - \sigma_0}{\sigma_0}}\right]\right| > t/3 \right) \to 0
\end{align*}
as $N_1,N_2 \to \infty$.

Let $E_N := \left\{\left|\frac{\hat{\sigma}(g_1^*,g_2^*)-\sigma_0}{\sigma_0}\right| \leq \frac{1}{2}\right\}$. On the event $E_N$, $\left|1 -\frac{\frac{\hat{\sigma}(g_1^*,g_2^*)-\sigma_0}{\sigma_0}}{1+\frac{\hat{\sigma}(g_1^*,g_2^*)-\sigma_0}{\sigma_0}}\right| = \frac{\sigma_0}{\hat{\sigma}(g_1^*,g_2^*)}\leq2$ and so
\begin{align*}
\mathbbm{P}_F\left(\left| \left(\frac{1}{N_1}\sum_{i \in [N_1]}\left(\frac{u_i}{\sigma_0}\right)_{+} - \mathbbm{E}_F\left[\left(\frac{u_i}{\sigma_0}\right)_{+}\right]\right)\left[1 - \frac{\frac{\hat{\sigma}(g_1^*,g_2^*) - \sigma_0}{\sigma_0}}{1 + \frac{\hat{\sigma}(g_1^*,g_2^*) - \sigma_0}{\sigma_0}}\right]\right| > t/3 \right) \\
\leq
\mathbbm{P}_F\left(\left| \frac{1}{N_1}\sum_{i \in [N_1]}\left(\frac{u_i}{\sigma_0}\right)_{+} - \mathbbm{E}\left[\left(\frac{u_i}{\sigma_0}\right)_{+}\right]\right| > t/6 \right)+\mathbbm{P}_F(E_N^C).
\end{align*}
Since $\sigma(g_1^*,g_2^*)=\sigma_0$ under $\mathcal{F}_0$ and Assumption~\ref{ass:scale-consistency}(i) holds uniformly over $\mathcal{G}_c$, we have $\mathbbm{P}_F(E_N^C)\to0$. Since $\mathbbm{E}_F\left[(u_i)_+\right] \geq \mathbbm{E}_F\left[u_i\right] = 0$ and $\mathbbm{E}_F\left[(u_i)_+^2\right]  \leq \mathbbm{E}_F\left[(u_i)^2\right] = \sigma_0^2$, it follows that $\text{Var}_F((u_i)_+) \leq \sigma_0^2$, $\text{Var}_F\left(\left(\frac{u_i}{\sigma_0}\right)_+\right) \leq 1$, and so $\text{Var}_F\left(\frac{1}{N_1}\sum_{i \in [N_1]}\left(\frac{u_i}{\sigma_0}\right)_{+}\right) \leq \frac{1}{N_1}$. It follows from Chebyshev's inequality that for any $t > 0$, 
\begin{align*}
\mathbbm{P}_F\left(\left|\frac{1}{N_1}\sum_{i \in [N_1]}\left(\frac{u_i}{\sigma_0}\right)_{+} - \mathbbm{E}_F\left[\left(\frac{u_i}{\sigma_0}\right)_{+}\right]\right| \geq t/6\right) \leq \frac{36}{t^2N_1}
\end{align*}
which converges to $0$ as $N_1,N_2 \to \infty$. This demonstrates the third step. \newline

\textbf{Step 4:} In this step we show that for any asymptotic sequence of models $F \in \mathcal{F}_0$ with $N_1, N_2 \to \infty$, $\left|\mathbbm{E}_F\left[\left(\frac{u_i}{\sigma_0}\right)_{+}\right]-\frac{1}{\sqrt{2\pi}}\right|\to 0$ as $N_1,N_2 \to \infty$. Since, under $\mathbbm{P}_F$,
\begin{align*}
\frac{u_i}{\sigma_0}
=
\frac{1}{\sqrt{N_2}}\sum_{j \in [N_2]}\frac{\epsilon_{ij}}{\sigma_0},
\end{align*}
$\frac{u_i}{\sigma_0}$ is a standardized sum of $N_2$ independent and identically distributed mean-zero random variables with variance $1$. Therefore, by the Lindeberg-Levy Central Limit Theorem \cite[see, for example, Theorem 1.9.1 A in][]{serfling2009approximation}, $\frac{u_i}{\sigma_0}\overset{d}{\to}Z$, where $Z$ is a standard normal random variable. Since the map $x \mapsto x_+$ is continuous, the continuous mapping theorem gives $\left(\frac{u_i}{\sigma_0}\right)_+\overset{d}{\to}Z_+.$ Moreover, $\mathbbm{E}_F\left[\left(\left(\frac{u_i}{\sigma_0}\right)_+\right)^2\right]\leq\mathbbm{E}_F\left[\left(\frac{u_i}{\sigma_0}\right)^2\right]=1$, so the sequence $\left(\frac{u_i}{\sigma_0}\right)_+$ is uniformly integrable. It follows that $\mathbbm{E}_F\left[\left(\frac{u_i}{\sigma_0}\right)_+\right]\to\mathbbm{E}\left[Z_+\right]$, where 
\begin{align*}
\mathbbm{E}\left[Z_+\right]
=\int_0^\infty x\frac{1}{\sqrt{2\pi}}\exp(-x^2/2)\,dx 
=\frac{1}{\sqrt{2\pi}}.
\end{align*}
It follows that $\left|\mathbbm{E}_F\left[\left(\frac{u_i}{\sigma_0}\right)_+\right]-\frac{1}{\sqrt{2\pi}}\right|\to 0$.\newline

\textbf{Step 5:} In this step we show that, for any fixed $t > 0$ and asymptotic sequence of models $F \in \mathcal{F}_0$ with $N_1, N_2 \to \infty$,
\begin{align*}
\mathbbm{P}_F\left(\left|\mathbbm{E}_F\left[\left(\frac{u_i}{\sigma_0}\right)_{+}\right]\left[\frac{\frac{\hat{\sigma}(g_1^*,g_2^*) - \sigma_0}{\sigma_0}}{1 + \frac{\hat{\sigma}(g_1^*,g_2^*) - \sigma_0}{\sigma_0}}\right]\right| > t/3\right) \to 0
\end{align*}
as $N_1,N_2 \to \infty$. To show this, we define $\Delta_N := \frac{\hat{\sigma}(g_1^*,g_2^*) - \sigma_0}{\sigma_0}$. By Step 4, $\mathbbm{E}_F\left[\left(\frac{u_i}{\sigma_0}\right)_{+}\right] \to 1/\sqrt{2\pi}$, and hence $\mathbbm{E}_F\left[\left(\frac{u_i}{\sigma_0}\right)_{+}\right]$ is bounded for all sufficiently large $N_1,N_2$. Since the entries $\sigma_{ij}$ are all equal to $\sigma_0$ under $\mathcal{F}_0$, we have $\sigma(g_1^*,g_2^*) := \sqrt{\frac{1}{m_1^*m_2^*}\sum_{i \in [N_1], j \in [N_2]}\sigma_{ij}^2g_{i,1}^*g_{j,2}^*}= \sigma_0.$ Because Assumption~\ref{ass:scale-consistency}(i) holds uniformly over $\mathcal{G}_c$, it follows that
\begin{align*}
\Delta_N = \frac{\hat{\sigma}(g_1^*,g_2^*) - \sigma(g_1^*,g_2^*)}{\sigma(g_1^*,g_2^*)} = o_p(1).
\end{align*}

Now define the event $E_N:=\left\{|\Delta_N|\leq\frac{1}{2}\right\}$. Since $\Delta_N=o_p(1)$, we have $\mathbbm{P}_F(E_N^C)\to0$. On the event $E_N$, $\left|\frac{\Delta_N}{1+\Delta_N}\right|\leq2|\Delta_N|$, and so
\begin{align*}
\mathbbm{P}_F\left(\left|\mathbbm{E}_F\left[\left(\frac{u_i}{\sigma_0}\right)_{+}\right]\frac{\Delta_N}{1+\Delta_N}\right| > t/3\right)
\leq\mathbbm{P}_F\left(2\mathbbm{E}_F\left[\left(\frac{u_i}{\sigma_0}\right)_{+}\right]|\Delta_N| > t/3\right)+\mathbbm{P}_F(E_N^C).
\end{align*}
Since $\mathbbm{E}_F\left[\left(\frac{u_i}{\sigma_0}\right)_{+}\right]$ is eventually bounded and $\Delta_N=o_p(1)$, the first term on the right-hand side converges to zero. The second term also converges to zero. This demonstrates Step 5. 
\end{proof}

\propfour*

\begin{proof}
Fix $F \in \mathcal{F}$ and work under $\mathbbm P_F$. Let $(g_1^*,g_2^*) \in \text{argmax}_{(g_1,g_2)\in\mathcal{G}_c}
\left|\sum_{i\in[N_1],j\in[N_2]}\epsilon_{ij}g_{i,1}g_{j,2}\right|$ using an
arbitrary deterministic tie-breaking rule and define
$m_t^* := \sum_{i\in[N_t]}g_{i,t}^*$ for $t\in\{1,2\}$. By construction,
$|\hat{\theta}(g_1^*,g_2^*) - \theta(g_1^*,g_2^*)| =
||\epsilon||_{\square;c}/(m_1^*m_2^*)$. Let $r$ be the sequence of real
numbers in Assumption~\ref{ass:scale-consistency}(ii) and set $r' := r\bar{\tau}$, so that
$\bar{\tau} + r' = \bar{\tau}(1+r)$.

Since $(g_1^*,g_2^*)\in\mathcal{G}_c$,
\begin{align}\label{eq:red1}
\mathbbm{P}_F\left(\cap_{(g_1,g_2)\in\mathcal{G}_c}
\left\{\theta(g_1,g_2)\in I(g_1,g_2;\alpha)\right\}\right)
\leq
\mathbbm{P}_F\left(\theta(g_1^*,g_2^*)\in I(g_1^*,g_2^*;\alpha)\right).
\end{align}
On the event $\{\theta(g_1^*,g_2^*)\in I(g_1^*,g_2^*;\alpha)\}$, both
$\theta(g_1^*,g_2^*)$ and $\hat{\theta}(g_1^*,g_2^*)$ belong to
$I(g_1^*,g_2^*;\alpha)$ (since $\{\hat{\theta}(g_1^*,g_2^*)\in I(g_1^*,g_2^*;\alpha)\}$ by assumption)  and so $|\hat{\theta}(g_1^*,g_2^*)-\theta(g_1^*,g_2^*)| \leq
|I(g_1^*,g_2^*;\alpha)|$. For all $(N_1,N_2)$ with $\min(N_1,N_2) \geq n_0$ it follows
that, $\mathbbm{P}_F$-almost surely,
\begin{align*}
\left\{\theta(g_1^*,g_2^*)\in I(g_1^*,g_2^*;\alpha)\right\}
\subseteq
\left\{|\hat{\theta}(g_1^*,g_2^*)-\theta(g_1^*,g_2^*)|
\leq (1-\delta')\,c^*(\alpha)\,
\frac{\hat{\bar{\tau}}}{m_1^*m_2^*}\right\}
\end{align*}
and so
\begin{align}\label{eq:red2}
\mathbbm{P}_F\left(\theta(g_1^*,g_2^*)\in I(g_1^*,g_2^*;\alpha)\right)
\leq
\mathbbm{P}_F\left(|\hat{\theta}(g_1^*,g_2^*)-\theta(g_1^*,g_2^*)|
\leq (1-\delta')\,c^*(\alpha)\,\frac{\hat{\bar{\tau}}}{m_1^*m_2^*}\right)\\
\leq\;
\mathbbm{P}_F\left(|\hat{\theta}(g_1^*,g_2^*)-\theta(g_1^*,g_2^*)|
\leq (1-\delta')c^*(\alpha)\frac{\bar{\tau}+r'}{m_1^*m_2^*}\right)
+\mathbbm{P}_F\left(\hat{\bar{\tau}} > \bar{\tau}+r'\right).
\end{align}
In Step 6  below we show that
$\lim_{F\in\mathcal{F}:\,N_1,N_2\to\infty}
\mathbbm{P}_F\left(\hat{\bar{\tau}} > \bar{\tau}+r'\right) = 0$. In Steps 1-5 we show that $\limsup_{F\in\mathcal{F}:\,N_1,N_2\to\infty}
\mathbbm{P}_F\left(|\hat{\theta}(g_1^*,g_2^*)-\theta(g_1^*,g_2^*)|
\leq (1-\delta')c^*(\alpha)\frac{\bar{\tau}+r'}{m_1^*m_2^*}\right)
< 1-\alpha$. This demonstrates the claim.  \newline

\textbf{Step 1:}  In this step we show that 
\begin{align*}
\limsup_{F\in\mathcal{F}:\,N_1,N_2\to\infty}
\mathbbm{P}_F\left(|\hat{\theta}(g_1^*,g_2^*)-\theta(g_1^*,g_2^*)|
\leq (1-\delta')c^*(\alpha)\frac{\bar{\tau}+r'}{m_1^*m_2^*}\right)
< 1-\alpha. 
\end{align*}
To do this, we first substitute the Grothendieck
constant $K_G(\infty)$, which is not known, with Krivine's upper bound $\bar{K}_G := \frac{\pi}{2\ln(1+\sqrt{2})} \leq 1.7823$ from Lemma~\ref{lem:grothendieck}. Since $K_G(\infty) \leq \bar{K}_G$, the
lower-tail inequality \eqref{lb} of Lemma~\ref{lem:cut-concentration}
remains valid with $K_G(\infty)$ replaced by $\bar{K}_G$, because, for any
fixed $F \in \mathcal{F}$,
$(18\bar{K}_G)^{-1}\mathbbm{E}_F\left[||\epsilon||_{\dagger}\right] \leq
(18K_G(\infty))^{-1}\mathbbm{E}_F\left[||\epsilon||_{\dagger}\right]$ so the
event on the left-hand side of \eqref{lb} only shrinks. Similarly the
inequality $||X||_{1,2} \leq K_G(\infty)||X||_{\infty\to 1}$ from the proof of
Lemma~\ref{lem:dagger-infty} still holds with $\bar{K}_G$.

We then define the helper function $C_1(\cdot,\cdot)$.  For a target level $\beta \in (0,1)$ and a slack parameter
$\delta'' \in (0,1)$ define
\begin{align*}
A_\beta := 1.01\sqrt{648}\,\bar{K}_G\sqrt{-\ln(\beta)} + 0.25
\quad\text{and}\quad
C_1(\beta,\delta'') :=
\frac{1-\delta''}{9\bar{K}_G\left(A_\beta + 2.02(1-\delta'')\right)}.
\end{align*}
Substituting
$\bar{K}_G = \pi/(2\ln(1+\sqrt{2}))$ into the definition of $c^*(\alpha)$ yields
\begin{align}\label{eq:cstarid}
c^*(\alpha) = \frac{1}{9\bar{K}_G\left(A_{1-\alpha} + 2.02\right)} = C_1(1-\alpha,0). 
\end{align}

In Step 2 below we show that  there exist
$\beta' \in (0,1-\alpha)$ and $\delta'' \in (0,1)$ such that $(1-\delta')\,c^*(\alpha) \;\leq\; C_1(\beta',\delta'')$. In Steps 3--5 below we show that, for every $\beta\in(0,1)$ and
$\delta''\in(0,1)$,
\begin{align}\label{eq:P1}\tag{P1}
\limsup_{F\in\mathcal{F}:\,N_1,N_2\to\infty}
\mathbbm{P}_F\left(|\hat{\theta}(g_1^*,g_2^*)-\theta(g_1^*,g_2^*)|
\leq C_1(\beta,\delta'')\,\frac{\bar{\tau}+r'}{m_1^*m_2^*}\right)
\leq \beta.
\end{align}
Combined, these steps demonstrate the claim that 
\begin{align*}
\limsup_{F\in\mathcal{F}:\,N_1,N_2\to\infty}
\mathbbm{P}_F\left(|\hat{\theta}(g_1^*,g_2^*)-\theta(g_1^*,g_2^*)|
\leq (1-\delta')c^*(\alpha)\frac{\bar{\tau}+r'}{m_1^*m_2^*}\right)
< 1-\alpha.\\
\end{align*} 

\textbf{Step 2:} In this step we show that there exist
$\beta' \in (0,1-\alpha)$ and $\delta'' \in (0,1)$ such that
\begin{align}\label{eq:compare}
(1-\delta')\,c^*(\alpha) \;\leq\; C_1(\beta',\delta'').
\end{align}
To see this, take $\delta'' = \delta'/2$ and observe first that, at
$\beta = 1-\alpha$, the identity \eqref{eq:cstarid} and $1-\delta'/2 \leq 1$,
$A_{1-\alpha} + 2.02(1-\delta'/2) \leq A_{1-\alpha} + 2.02$ give
\begin{align*}
C_1(1-\alpha,\delta'/2)
= \frac{1-\delta'/2}{9\bar{K}_G\left(A_{1-\alpha}+2.02(1-\delta'/2)\right)}
&\;\geq\;
\frac{1-\delta'/2}{9\bar{K}_G\left(A_{1-\alpha}+2.02\right)} \\
&\;=\; (1-\delta'/2)\,c^*(\alpha)
\;>\; (1-\delta')\,c^*(\alpha),
\end{align*}
where the final inequality is strict because $\delta'/2 < \delta'$. Since
$\beta \mapsto A_\beta$ is continuous and decreasing on $(0,1)$, the map
$\beta \mapsto C_1(\beta,\delta'/2)$ is continuous and increasing, and
$C_1(\beta,\delta'/2) \uparrow C_1(1-\alpha,\delta'/2)$ as
$\beta \uparrow 1-\alpha$. The strict inequality in the previous display
therefore guarantees the existence of $\beta' \in (0,1-\alpha)$ satisfying
\eqref{eq:compare}. \newline

\textbf{Step 3:} In this step we show that \eqref{eq:P1} holds for every
$\beta \in (0,1)$ and $\delta'' \in (0,1)$. Throughout, fix $\beta$ and
$\delta''$, and recall that $\bar{\tau} + r' = \bar{\tau}(1+r)$. The scale
$\bar{\tau} = 1.01\mathbbm{E}_F\left[||\epsilon||_{\dagger}\right] +
0.25||\sigma||_F$ combines two terms, and the two lower bounds for
$||\epsilon||_{\square;c}$ used below, the Talagrand-type lower-tail
inequality of Lemma~\ref{lem:cut-concentration} in Step 4 and the
Frobenius-norm bound in Step 5, are effective in different regimes. We
therefore partition the set $\mathcal{F}$ into two parts
$\mathcal{F}_{\beta,\delta}$ and $\mathcal{F}^{c}_{\beta,\delta}$, with
$\mathcal{F} = \mathcal{F}_{\beta,\delta} \cup \mathcal{F}^{c}_{\beta,\delta}$,
\begin{align*}
\mathcal{F}_{\beta,\delta} := \left\{F \in \mathcal{F} :
\mathbbm{E}_F\left[||\epsilon||_{\dagger}\right] \geq
C_{\beta,\delta}||\sigma||_F \right\},
\qquad
\mathcal{F}^c_{\beta,\delta} := \left\{F \in \mathcal{F} :
\mathbbm{E}_F\left[||\epsilon||_{\dagger}\right] <
C_{\beta,\delta}||\sigma||_F \right\},
\end{align*}
where
\begin{align*}
C_{\beta,\delta} :=
\frac{\sqrt{648}\,\bar{K}_G\sqrt{-\ln(\beta)} + 0.25(1-\delta)}
{1.01\delta - 0.01}
\end{align*}
and $\delta \in (0.01/1.01,1)$ is a free parameter, governing the location of
the partition, that is chosen below as a function of $(\beta,\delta'')$. In
Step 4 below we show that, for any $\delta \in (0.01/1.01,1)$,
\begin{align}\label{eq:steptwo}
\limsup_{F \in \mathcal{F}_{\beta,\delta}:\,N_1,N_2 \to \infty}
\mathbbm{P}_F\left(|\hat{\theta}(g_1^*,g_2^*) - \theta(g_1^*,g_2^*)|
\leq \frac{\bar{\tau}(1+r)}{m_1^* m_2^*}\,
\frac{1-\delta}{18 \bar{K}_G} \right) \leq \beta,
\end{align}
and in Step 5 below we show that, for any $\delta \in (0.01/1.01,1)$ and
$\delta'' \in (0,1)$,
\begin{align}\label{eq:stepthree}
\limsup_{F \in \mathcal{F}^c_{\beta,\delta}:\,N_1,N_2 \to \infty}
\mathbbm{P}_F\left(|\hat{\theta}(g_1^*,g_2^*) - \theta(g_1^*,g_2^*)|
\leq \frac{\bar{\tau}(1+r)}{9\bar{K}_G\, m_1^* m_2^*}\,
\frac{(1.01\delta-0.01)(1-\delta'')}{A_\beta}\right) = 0.
\end{align}
Now choose $\delta$ so that the thresholds in \eqref{eq:steptwo} and
\eqref{eq:stepthree} coincide:
\begin{align*}
\delta = \delta_{\beta,\delta''} :=
\frac{A_\beta + 0.02(1-\delta'')}{A_\beta + 2.02(1-\delta'')}.
\end{align*}
This choice is admissible: $\delta_{\beta,\delta''} < 1$ because the
numerator is strictly smaller than the denominator, and
$\delta_{\beta,\delta''} \geq A_\beta/(A_\beta + 2.02) \geq 0.25/2.27 >
0.01/1.01$ because $A_\beta \geq 0.25$. Under this choice, direct calculation
gives
\begin{align*}
1-\delta_{\beta,\delta''} = \frac{2(1-\delta'')}{A_\beta + 2.02(1-\delta'')}
\quad\text{and}\quad
1.01\delta_{\beta,\delta''} - 0.01 =
\frac{A_\beta}{A_\beta + 2.02(1-\delta'')},
\end{align*}
and therefore the constants appearing in \eqref{eq:steptwo} and
\eqref{eq:stepthree} coincide:
\begin{align*}
\frac{1-\delta_{\beta,\delta''}}{18\bar{K}_G}
= \frac{1-\delta''}{9\bar{K}_G\left(A_\beta + 2.02(1-\delta'')\right)}
= \frac{(1.01\delta_{\beta,\delta''}-0.01)(1-\delta'')}{9\bar{K}_G\,A_\beta}
= C_1(\beta,\delta'').
\end{align*}
It follows that, for $\delta = \delta_{\beta,\delta''}$, the events in
\eqref{eq:steptwo} and \eqref{eq:stepthree} are both equal to the event
$\left\{|\hat{\theta}(g_1^*,g_2^*)-\theta(g_1^*,g_2^*)| \leq
C_1(\beta,\delta'')(\bar{\tau}+r')/(m_1^*m_2^*)\right\}$ appearing in
\eqref{eq:P1}. Since, for every $n$,
\begin{align*}
\sup_{F \in \mathcal{F}(n)}\mathbbm{P}_F\left(\cdot\right)
\leq \max\left(
\sup_{F \in \mathcal{F}(n)\cap\mathcal{F}_{\beta,\delta}}
\mathbbm{P}_F\left(\cdot\right),\;
\sup_{F \in \mathcal{F}(n)\cap\mathcal{F}^c_{\beta,\delta}}
\mathbbm{P}_F\left(\cdot\right)\right),
\end{align*}
combining \eqref{eq:steptwo} and \eqref{eq:stepthree} yields
\begin{align*}
\limsup_{F\in\mathcal{F}:\,N_1,N_2\to\infty}
\mathbbm{P}_F\left(|\hat{\theta}(g_1^*,g_2^*)-\theta(g_1^*,g_2^*)|
\leq C_1(\beta,\delta'')\,\frac{\bar{\tau}+r'}{m_1^*m_2^*}\right)
\leq \max(\beta,0) = \beta,
\end{align*}
which is \eqref{eq:P1}. \newline

\textbf{Step 4:} In this step we show \eqref{eq:steptwo} for any
$\delta \in (0.01/1.01,1)$. The strategy is to apply the lower-tail
inequality \eqref{lb} of Lemma~\ref{lem:cut-concentration} at a deviation
$x_{\beta,\delta}$ chosen so that the resulting threshold for
$||\epsilon||_{\square;c}$ matches the one in \eqref{eq:steptwo}. For a fixed
$F \in \mathcal{F}_{\beta,\delta}$, define
\begin{align*}
x_{\beta,\delta} := \frac{\left[\delta - (1-\delta)
\left(r\left(1.01+0.25C_{\beta,\delta}^{-1}\right) +
0.01 + 0.25C_{\beta,\delta}^{-1}\right)\right]
\mathbbm{E}_F\left[||\epsilon||_{\dagger}\right]}{18\bar{K}_G}
\end{align*}
and recall from Section~\ref{sec:second-interval} that
$V := \sqrt{||\sigma||_F^2 + 4B\mathbbm{E}_F\left[||\epsilon||_{\dagger}\right]
+ B||\sigma||_F}$. The quantity $x_{\beta,\delta}$ is eventually positive:
the inequality $\delta > (1-\delta)\left(0.01 +
0.25C_{\beta,\delta}^{-1}\right)$ is equivalent to
$(1.01\delta-0.01)C_{\beta,\delta} > 0.25(1-\delta)$, which holds strictly by
the definition of $C_{\beta,\delta}$, and the remaining term in
$x_{\beta,\delta}$ vanishes because $r \to 0$. The claim then follows from
\begin{align*}
&\mathbbm{P}_F\left(|\hat{\theta}(g_1^*,g_2^*) - \theta(g_1^*,g_2^*)| \leq
\frac{\bar{\tau}(1+r)}{m_1^* m_2^*} \frac{1-\delta}{18 \bar{K}_G} \right) \\
&\qquad= \mathbbm{P}_F\left(\frac{||\epsilon||_{\square;c}}{m_1^*m_2^*} \leq
\frac{\left(1.01\mathbbm{E}_F\left[||\epsilon||_{\dagger}\right] +
0.25||\sigma||_F \right)(1+r)}{m_1^* m_2^*}
\frac{1-\delta}{18 \bar{K}_G} \right)\\
&\qquad\leq \mathbbm{P}_F\left(\frac{||\epsilon||_{\square;c}}{m_1^*m_2^*} \leq
\frac{\mathbbm{E}_F\left[||\epsilon||_{\dagger}\right] (1+r)}{m_1^* m_2^*}
\frac{(1.01+0.25C_{\beta,\delta}^{-1})(1-\delta)}{18\bar{K}_G} \right)\\
&\qquad= \mathbbm{P}_F\left(\frac{||\epsilon||_{\square;c}}{m_1^*m_2^*}  \leq
\frac{\mathbbm{E}_F\left[||\epsilon||_{\dagger}\right]}{18 \bar{K}_G\,
m_1^* m_2^*} - \frac{x_{\beta,\delta}}{m_1^* m_2^*} \right)
\leq \exp\left(-\frac{x_{\beta,\delta}^2}{2V^2 + 4Bx_{\beta,\delta}}\right)
\end{align*}
where the first equality is because $|\hat{\theta}(g_1^*,g_2^*) -
\theta(g_1^*,g_2^*)| = ||\epsilon||_{\square;c}/(m_1^*m_2^*)$, the first
inequality is because $F \in \mathcal{F}_{\beta,\delta}$ implies
$||\sigma||_F \leq C_{\beta,\delta}^{-1}
\mathbbm{E}_F\left[||\epsilon||_{\dagger}\right]$, the second equality
rearranges terms, and the second inequality is the lower-tail inequality
\eqref{lb} of Lemma~\ref{lem:cut-concentration} (with $\bar{K}_G$ in place of
$K_G(\infty)$, and noting that $2V^2 = 2\left(||\sigma||^2_F +
4B\mathbbm{E}_F\left[||\epsilon||_{\dagger}\right] + B||\sigma||_F\right)$)
applied with $x = x_{\beta,\delta}$.

Since $r \to 0$,
$\lim_{F\in\mathcal{F}:\,N_1,N_2\to\infty}||\sigma||_F \to \infty$ by
Assumption~\ref{ass:nondegeneracy}(w),
$\lim_{F\in\mathcal{F}:\,N_1,N_2\to\infty}
\mathbbm{E}_F\left[||\epsilon||_{\dagger}\right] \to \infty$ by Lemma~\ref{lem:residual-norm-relations},
$\bar{K}_G$ and $B$ are fixed, and, by construction,
$\sup_{F\in\mathcal{F}_{\beta,\delta}}
||\sigma||_F/\mathbbm{E}_F\left[||\epsilon||_{\dagger}\right] \leq
C_{\beta,\delta}^{-1}$, we have
$x_{\beta,\delta}/\mathbbm{E}_F\left[||\epsilon||_{\dagger}\right] \to
D_{\beta,\delta}/(18\bar{K}_G)$ with $D_{\beta,\delta} := \delta -
(1-\delta)\left(0.01 + 0.25C_{\beta,\delta}^{-1}\right) > 0$,
$V^2/\mathbbm{E}_F\left[||\epsilon||_{\dagger}\right]^2 \leq
C_{\beta,\delta}^{-2} + o(1)$, and
$Bx_{\beta,\delta}/\mathbbm{E}_F\left[||\epsilon||_{\dagger}\right]^2 \to 0$,
in each case uniformly over $\mathcal{F}_{\beta,\delta}$. It follows that
\begin{align*}
\limsup_{F \in \mathcal{F}_{\beta,\delta}:\,N_1,N_2 \to \infty}
\mathbbm{P}_F\left(|\hat{\theta}(g_1^*,g_2^*) - \theta(g_1^*,g_2^*)|
\leq \frac{\bar{\tau}(1+r)}{m_1^* m_2^*}\,
\frac{1-\delta}{18 \bar{K}_G} \right)
\leq \exp\left(-\frac{D_{\beta,\delta}^2\,C_{\beta,\delta}^2}
{648\bar{K}_G^2}\right) = \beta,
\end{align*}
where the final equality holds because
\begin{align*}
D_{\beta,\delta}\,C_{\beta,\delta}
= (1.01\delta - 0.01)C_{\beta,\delta} - 0.25(1-\delta)
= \sqrt{648}\,\bar{K}_G\sqrt{-\ln(\beta)}
\end{align*}
by the definition of $C_{\beta,\delta}$, so that the exponent equals
$\ln(\beta)$. \newline

\textbf{Step 5:} In this step we show \eqref{eq:stepthree} for any
$\delta \in (0.01/1.01,1)$ and $\delta'' \in (0,1)$. The idea is that, on
$\mathcal{F}^c_{\beta,\delta}$, the scale $\bar{\tau}$ is bounded above by a
constant multiple of $||\sigma||_F$ (this is verified below), so a lower
bound on the estimation error in terms of the Frobenius norm
$||\epsilon||_F$ alone suffices; since $||\epsilon||_F$ concentrates around
$||\sigma||_F$ with probability approaching one, the limit in
\eqref{eq:stepthree} is $0$ rather than merely at most $\beta$. The claim
follows from
\begin{align}\label{secondlowerbound}
|\hat{\theta}(g_1^*,g_2^*) - \theta(g_1^*,g_2^*)|
= \frac{||\epsilon||_{\square;c}}{m_1^*m_2^*} \geq
\frac{||\epsilon||_{\infty\to1}}{9m_1^*m_2^*}
\geq \frac{||\epsilon||_{1,2}}{9\bar{K}_G\,m_1^*m_2^*}
\geq \frac{||\epsilon||_{F}}{9\bar{K}_G\,m_1^*m_2^*}
\end{align}
where the first inequality is from Lemma~\ref{lem:restricted-cut}, the second
inequality is from the second paragraph in the proof of
Lemma~\ref{lem:dagger-infty} (which, as noted in Step 1, holds with
$\bar{K}_G$ in place of $K_G(\infty)$), and the third inequality is because
the $\sqrt{\cdot}$ function is subadditive. Since the entries of $\epsilon$
are uniformly bounded by $2B$ by Assumption~\ref{ass:model},
Lemma~\ref{lem:bernstein} implies that, for any fixed $t > 0$ and
$F \in \mathcal{F}_{\beta,\delta}^{c}$,
\begin{align*}
\mathbbm{P}_F\left(\sum_{i \in [N_1], j \in [N_2]}(\sigma_{ij}^2-\epsilon_{ij}^2)
\geq t\right) \leq \exp\left(-\frac{t^2}{2\nu + 8B^2t/3}\right)
\end{align*}
for $\nu = \sum_{i \in [N_1], j \in [N_2]}
\mathbbm{E}_F\left[(\sigma_{ij}^2-\epsilon_{ij}^2)^2\right]$. Assumption~\ref{ass:model} also
implies that $\nu \leq 4B^2||\sigma||_F^2$, so that
\begin{align*}
\mathbbm{P}_F\left(||\epsilon||_F^2 \leq ||\sigma||_F^2 - t\right) \leq
\exp\left(-\frac{t^2}{8B^2||\sigma||_F^2 + 8B^2t/3}\right).
\end{align*}
Choosing $t = t_{r,\delta''} := \left(1 -
(1+r)^2(1-\delta'')^2\right)||\sigma||_F^2$, which is eventually positive
since $r\to0$, taking square roots on either side of the inequality in the
probability on the left-hand side, and dividing by $m_1^*m_2^*$ yields
\begin{align*}
\mathbbm{P}_F\left(\frac{||\epsilon||_F}{m_1^*m_2^*} \leq
\frac{||\sigma||_F(1+r)(1-\delta'')}{m_1^*m_2^*}\right)
\leq \exp\left(-\frac{t_{r,\delta''}^2}{8B^2||\sigma||_F^2 +
8B^2t_{r,\delta''}/3}\right).
\end{align*}
Combining this inequality with \eqref{secondlowerbound} gives
\begin{align*}
\mathbbm{P}_F\left(|\hat{\theta}(g_1^*,g_2^*) - \theta(g_1^*,g_2^*)| \leq
\frac{||\sigma||_F(1+r)(1-\delta'')}{9\bar{K}_G\,m_1^*m_2^*}\right)
\leq \exp\left(-\frac{t_{r,\delta''}^2}{8B^2||\sigma||_F^2 +
8B^2t_{r,\delta''}/3}\right).
\end{align*}
Now, for every $F \in \mathcal{F}_{\beta,\delta}^{c}$, we have
$\mathbbm{E}_F\left[||\epsilon||_{\dagger}\right] <
C_{\beta,\delta}||\sigma||_F$ and so $\bar{\tau} <
(1.01C_{\beta,\delta}+0.25)||\sigma||_F$. Direct calculation using the
definition of $C_{\beta,\delta}$ gives
\begin{align*}
1.01\,C_{\beta,\delta} + 0.25
&= \frac{1.01\left(\sqrt{648}\,\bar{K}_G\sqrt{-\ln(\beta)} +
0.25(1-\delta)\right) + 0.25\left(1.01\delta - 0.01\right)}{1.01\delta - 0.01}
\\
&= \frac{A_\beta}{1.01\delta - 0.01},
\end{align*}
since $1.01 \times 0.25(1-\delta) + 0.25(1.01\delta - 0.01) = 0.25$.
Therefore $||\sigma||_F > \bar{\tau}(1.01\delta-0.01)/A_\beta$ on
$\mathcal{F}^c_{\beta,\delta}$, and so
\begin{align*}
&\mathbbm{P}_F\left(|\hat{\theta}(g_1^*,g_2^*) - \theta(g_1^*,g_2^*)| \leq
\frac{\bar{\tau}(1+r)}{9\bar{K}_G\,m_1^*m_2^*}
\frac{(1.01\delta-0.01)(1-\delta'')}{A_\beta}\right) \\
&\qquad\leq \exp\left(-\frac{t_{r,\delta''}^2}{8B^2||\sigma||_F^2 +
8B^2t_{r,\delta''}/3}\right).
\end{align*}
Since $t_{r,\delta''}/||\sigma||_F^2 \to 1-(1-\delta'')^2 > 0$ as $r \to 0$
and $\lim_{F \in \mathcal{F}:\,N_1,N_2 \to \infty}||\sigma||_F \to \infty$
by Assumption~\ref{ass:nondegeneracy}(w), the right-hand side converges to
$0$, which gives \eqref{eq:stepthree}. \newline

\textbf{Step 6:} In this step we show that
$\lim_{F\in\mathcal{F}:\,N_1,N_2\to\infty}
\mathbbm{P}_F\left(\hat{\bar{\tau}} > \bar{\tau} + r'\right) = 0$. Under
Assumptions~\ref{ass:model} and \ref{ass:nondegeneracy}(w), $\bar{\tau} \geq 0.25||\sigma||_F > 0$ for all
sufficiently large $\min(N_1,N_2)$, and so, by the definition $r' :=
r\bar{\tau}$, the event $\left\{\hat{\bar{\tau}} > \bar{\tau} + r'\right\}$ is
contained in the event $\left\{\left|\frac{\hat{\bar{\tau}} -
\bar{\tau}}{\bar{\tau}}\right| > r\right\}$. The claim then follows directly
from Assumption~\ref{ass:scale-consistency}(ii).
\end{proof}

\end{spacing}
\clearpage
\section{Estimating the variance parameters}\label{app:variance-estimation}
This appendix provides three strategies for constructing $\hat\sigma(g_1,g_2)$, $\hat{\bar\tau}$, and $\hat V$.  Section~\ref{sec:deterministic-scale-bounds} gives deterministic upper bounds that yield valid, though potentially conservative, confidence intervals. Section~\ref{sec:raw-plugin} proposes estimators that use $Y$ as a plug-in for $\epsilon$. We provide conditions under which the resulting estimators are conservative and also show that, under an additional vanishing signal condition, they are consistent in the sense of Assumption~\ref{ass:scale-consistency}. Section~\ref{sec:usvt} proposes estimators that use $Y - \hat{\mu}$ as a plug-in for $\epsilon$ where $\hat{\mu}$ is estimated via singular value thresholding. We provide conditions under which these estimators are consistent in the sense of Assumption~\ref{ass:scale-consistency}.

\begin{remark}
Throughout this appendix, stochastic statements are established for a
fixed realized conditional law $F\in\mathcal F$. Except in lemmas stated for
generic random variables, probabilities, expectations, and variances are taken
under $\mathbbm P_F$ and denoted by $\mathbbm P_F$, $\mathbbm E_F$, and
$\operatorname{Var}_F$. Under Assumption~\ref{ass:model}, the entries of $Y$
are independent under $\mathbbm P_F$, while $\mu$, $\sigma$, $\bar\tau$,
$V$, and the population quantities introduced below are deterministic functions
of $F$. As in the main text, we suppress the dependence on $F$ in our notation. 
\end{remark}

\subsection{Deterministic bounds}
\label{sec:deterministic-scale-bounds}
Our first strategy uses only the known support bound. Since
$|\epsilon_{ij}|\leq2B$, define
\begin{align*}
\hat\sigma(g_1,g_2)=2B,\\
\hat V
=\left\{8\bigl(N_1\sqrt{N_2}+N_2\sqrt{N_1}\bigr)
       +4N_1N_2+2\sqrt{N_1N_2}\right\}^{1/2}B,\\
\hat{\bar\tau}
=2.02\bigl(N_1\sqrt{N_2}+N_2\sqrt{N_1}\bigr)B
  +0.5\sqrt{N_1N_2}\,B.
\end{align*}

\begin{lemma}
\label{lem:deterministic-scale-bounds}
Suppose Assumption~\ref{ass:model} and fix $c>0$. Then, for every
$(g_1,g_2)\in\mathcal G_c$,
\[
\sigma(g_1,g_2)\leq\hat\sigma(g_1,g_2),
\qquad
\bar\tau\leq\hat{\bar\tau},
\qquad
V\leq\hat V.
\]
\end{lemma}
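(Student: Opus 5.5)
The plan is to bound every ingredient by the entrywise support restriction of Assumption~\ref{ass:model}. Since each $F_{ij}$ is supported on $[-B,B]$, we have $|\mu_{ij}|\leq B$, and therefore $|\epsilon_{ij}|=|Y_{ij}-\mu_{ij}|\leq 2B$ almost surely under $\mathbbm P_F$. It follows that $\sigma_{ij}^2=\mathbbm E_F[\epsilon_{ij}^2]\leq 4B^2$. In fact the sharper bound $\sigma_{ij}^2\leq B^2$ holds, because the variance is at most the second moment $\int y^2\,dF_{ij}\leq B^2$. The cruder bound $4B^2$ is all the stated constants require.

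\emph{First inequality.} By definition, $\sigma(g_1,g_2)^2$ is an average of the $\sigma_{ij}^2$ over the $m_1m_2$ pairs in the group. Each term is at most $4B^2$, so $\sigma(g_1,g_2)\leq 2B=\hat\sigma(g_1,g_2)$. This needs $m_1m_2>0$, which holds on $\mathcal G_c$ because $c>0$.

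\emph{Pathwise bound on the dagger norm.} Deterministically,
\[
\sum_{i}\Bigl(\sum_j\epsilon_{ij}^2\Bigr)^{1/2}\leq N_1\sqrt{4B^2N_2}=2BN_1\sqrt{N_2},
\]
and symmetrically for columns. Hence $\|\epsilon\|_\dagger\leq 2B(N_1\sqrt{N_2}+N_2\sqrt{N_1})$ pathwise, and the same bound holds for $\mathbbm E_F\|\epsilon\|_\dagger$. Likewise $\|\sigma\|_F\leq 2B\sqrt{N_1N_2}$.

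\emph{Second and third inequalities.} Substituting into the definition of $\bar\tau$ gives
\[
\bar\tau\leq 1.01\cdot 2B(N_1\sqrt{N_2}+N_2\sqrt{N_1})+0.25\cdot 2B\sqrt{N_1N_2}=\hat{\bar\tau}.
\]
Substituting into the expression under the square root defining $V$ gives
\[
4B\cdot 2B(N_1\sqrt{N_2}+N_2\sqrt{N_1})+4B^2N_1N_2+B\cdot 2B\sqrt{N_1N_2}.
\]
This equals $B^2\{8(N_1\sqrt{N_2}+N_2\sqrt{N_1})+4N_1N_2+2\sqrt{N_1N_2}\}$, whose square root is exactly $\hat V$. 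Since the square root is monotone, $V\leq\hat V$ follows.

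There is no real obstacle here. The only care needed is bookkeeping: the constants in $\hat V$ and $\hat{\bar\tau}$ must match the $2B$ bound on $|\epsilon_{ij}|$ and the $4B^2$ bound on $\sigma_{ij}^2$, and the arithmetic above confirms that they do.
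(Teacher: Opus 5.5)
Your proposal is correct and follows the paper's proof essentially line for line. Like the paper, you use $|\epsilon_{ij}|\leq 2B$ to bound $\sigma(g_1,g_2)$, $\|\sigma\|_F$, and $\mathbbm E_F\|\epsilon\|_\dagger$, then substitute into the definitions of $\bar\tau$ and $V^2$, and your arithmetic reproduces $\hat{\bar\tau}$ and $\hat V$ exactly.
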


\begin{proof}
For $(g_1,g_2)\in\mathcal G_c$, let
$m_t=\sum_{i\in[N_t]}g_{i,t}$. Since $|\epsilon_{ij}|\leq2B$,
\begin{align*}
\sigma(g_1,g_2)
&=\left\{
\frac{1}{m_1m_2}
\sum_{i\in[N_1],j\in[N_2]}
\sigma_{ij}^2g_{i,1}g_{j,2}
\right\}^{1/2}
&\leq
\left\{
\frac{1}{m_1m_2}
\sum_{i\in[N_1],j\in[N_2]}
4B^2g_{i,1}g_{j,2}
\right\}^{1/2}
=2B.
\end{align*}
Similarly, $\|\sigma\|_F\leq2\sqrt{N_1N_2}\,B$ and $\mathbbm E_F\|\epsilon\|_\dagger \leq 2\bigl(N_1\sqrt{N_2}+N_2\sqrt{N_1}\bigr)B$  so
\[
\bar\tau =1.01\,\mathbbm E_F\|\epsilon\|_\dagger+0.25\|\sigma\|_F \leq
2.02\bigl(N_1\sqrt{N_2}+N_2\sqrt{N_1}\bigr)B +0.5\sqrt{N_1N_2}\,B
\]
and 
\[
V^2=\|\sigma\|_F^2+B\|\sigma\|_F
+4B\mathbbm E_F\|\epsilon\|_\dagger \leq \left\{8\bigl(N_1\sqrt{N_2}+N_2\sqrt{N_1}\bigr)
       +4N_1N_2+2\sqrt{N_1N_2}\right\}B^2.
\]
\end{proof}

Let $CI_1^\circ(g_1,g_2;\alpha)$ and $CI_2^\circ(g_1,g_2;\alpha)$ denote the oracle intervals obtained with the infeasible choices of $\hat{\sigma}(g_1,g_2) = \sigma(g_1,g_2)$, $\hat{\bar{\tau}} = \bar\tau$, and $\hat{V} = V$. These oracle choices satisfy Assumption~\ref{ass:scale-consistency} trivially. Propositions~\ref{prop:propone} and~\ref{prop:proptwo} imply that $CI_1^\circ$ satisfies \eqref{simul} under Assumptions~\ref{ass:model} and~\ref{ass:nondegeneracy}(s), and $CI_2^\circ$ satisfies \eqref{simul} under Assumptions~\ref{ass:model} and~\ref{ass:nondegeneracy}(w). Since the feasible and oracle intervals have the same center, and their lengths are nondecreasing in $\sigma(g_1,g_2)$, $\bar{\tau}$, and $V$, Lemma~\ref{lem:deterministic-scale-bounds} implies that, for every realization of $Y$, $(g_1,g_2)\in\mathcal G_c$, and $\quad t\in\{1,2\}$
\[
CI_t(g_1,g_2;\alpha)\supseteq CI_t^\circ(g_1,g_2;\alpha).
\]
It follows that the feasible interval $CI_1$ based on the deterministic bounds satisfies \eqref{simul} under Assumptions~\ref{ass:model} and~\ref{ass:nondegeneracy}(s). The feasible interval $CI_2$ based on the deterministic bounds atisfies \eqref{simul} under Assumptions~\ref{ass:model} and~\ref{ass:nondegeneracy}(w).  These intervals may be substantially wider than the oracle intervals, however, and need not attain the rates in Propositions~\ref{prop:propthree} and~\ref{prop:propfour}.

\subsection{Raw second-moment plug-in}\label{sec:raw-plugin}
The second approach uses $Y$ as a plug-in for $\epsilon$ in the formulas for $\sigma(g_1,g_2)$, $\bar{\tau}$, and $V$. Because
\[
\mathbbm E_F[Y_{ij}^2]
=\sigma_{ij}^2+\mu_{ij}^2\geq\sigma_{ij}^2,
\]
the resulting squared plug-ins have nonnegative population bias. We first provide sufficient conditions for the resulting estimates to be conservative. We then show that, under an additional vanishing signal condition, the estimates further satisfy Assumption~\ref{ass:scale-consistency}. We also consider a shifted version of the raw plug-in estimator that is more conservative, but is valid under weaker assumptions. 

\subsubsection{Setting and notation}\label{sec:raw-setting}

For a fixed $F\in\mathcal F$, let
$\mathcal D=\mathcal D(F)\subseteq[N_1]\times[N_2]$ denote the deterministic set
of potentially nonzero network entries. For $(i,j)\notin\mathcal D$, the
marginal law $F_{ij}$ is degenerate at zero, so $Y_{ij}=0$
$\mathbbm P_F$-almost surely. Under $\mathbbm P_F$, the variables
$\{Y_{ij}:(i,j)\in\mathcal D\}$ are independent and satisfy $|Y_{ij}|\leq B$
$\mathbbm P_F$-almost surely, where $B$ is fixed across the model class. This formulation
includes a rectangular network by taking
$\mathcal D=[N_1]\times[N_2]$ and the paper's representation of a loopless
undirected network by taking $N_1=N_2=N$ and
$\mathcal D=\{(i,j):i<j\}$.

Write
\[
\mu_{ij}=\mathbbm E_F[Y_{ij}],\qquad
\epsilon_{ij}=Y_{ij}-\mu_{ij},\qquad
\sigma_{ij}^2=\mathbbm E_F[\epsilon_{ij}^2],
\]
and
\[
\vartheta_{ij}=\mathbbm E_F[Y_{ij}^2]
=\sigma_{ij}^2+\mu_{ij}^2.
\]
Thus $\vartheta_{ij}\geq\sigma_{ij}^2$ entrywise.  Define the row and column
energies and their conditional means by
\[
q_i=\sum_jY_{ij}^2,\qquad
p_j=\sum_iY_{ij}^2,\qquad
\bar q_i=\sum_j\vartheta_{ij},\qquad
\bar p_j=\sum_i\vartheta_{ij},
\]
and define the row and column variance totals
\[
v_i=\sum_j\sigma_{ij}^2\leq\bar q_i,
\qquad
w_j=\sum_i\sigma_{ij}^2\leq\bar p_j.
\]
All sums may be taken over the full rectangle because entries outside $\mathcal D$
are zero.  The relevant norms are
\begin{align*}
\lVert X\rVert_F^2&=\sum_{ij}X_{ij}^2,\\
\lVert X\rVert_{1,2}&=\sum_i\left(\sum_jX_{ij}^2\right)^{1/2},\\
\lVert X\rVert_\dagger&=\lVert X\rVert_{1,2}+\lVert X^\top\rVert_{1,2}.
\end{align*}
Hence
\[
\lVert Y\rVert_\dagger=\sum_i\sqrt{q_i}+\sum_j\sqrt{p_j}.
\]

For $(g_1,g_2)\in\mathcal G_c$, let $m_t$ denote the size of group $g_t$ and write
\[
\sigma(g_1,g_2)^2
=\frac{1}{m_1m_2}\sum_{ij}\sigma_{ij}^2g_{i,1}g_{j,2}.
\]
The oracle global scales are
\[
\bar\tau
=1.01\,\mathbbm E_F[\lVert \epsilon\rVert_\dagger]+0.25\,\lVert \sigma\rVert_F
\]
and
\[
V^2
=\lVert \sigma\rVert_F^2+4B\,\mathbbm E_F[\lVert \epsilon\rVert_\dagger]+B\,\lVert \sigma\rVert_F.
\]
The \emph{raw plug-in} is the estimator that is defined to set
$\widehat\mu^{\,0}\equiv0$ for every sample:
\[
\widehat\sigma^{\,0}(g_1,g_2)^2
=\frac{1}{m_1m_2}\sum_{ij}Y_{ij}^2g_{i,1}g_{j,2},
\]
\[
\widehat{\bar\tau}^{\,0}=1.01\,\lVert Y\rVert_\dagger+0.25\,\lVert Y\rVert_F,
\qquad
\widehat V^{\,0}
=\left(\lVert Y\rVert_F^2+4B\,\lVert Y\rVert_\dagger+B\,\lVert Y\rVert_F\right)^{1/2}.
\]

To allow structural zero rows and columns, define the active sets
\[
\mathcal I_+=\{i:\bar q_i>0\},\qquad
\mathcal J_+=\{j:\bar p_j>0\},\qquad
\mathcal D_+=\{(i,j)\in\mathcal D:\vartheta_{ij}>0\}.
\]
If $\bar q_i=0$, then $q_i=0$ $\mathbbm P_F$-almost surely and $v_i=0$; the analogous
statement holds for columns.  Define
\[
S=\sum_{i\in\mathcal I_+}\sqrt{\bar q_i}
 +\sum_{j\in\mathcal J_+}\sqrt{\bar p_j},
\qquad
H=\sum_{i\in\mathcal I_+}\bar q_i^{-1/2}
 +\sum_{j\in\mathcal J_+}\bar p_j^{-1/2}.
\]
When $S>0$, define the dimensionless energy-dispersion index
\[
\lambda=B^2\frac{H}{S};
\]
set $\lambda=\infty$ when $S=0$.  Define the signal-share index
\[
\chi
=\max_{(i,j):\,\vartheta_{ij}>0}
  \frac{\mu_{ij}^2}{\vartheta_{ij}}\in[0,1],
\]
with $\chi=0$ if every $\vartheta_{ij}$ is zero.  Entrywise,
\[
\sigma_{ij}^2\geq(1-\chi)\vartheta_{ij}.
\]
For binary networks, $\vartheta_{ij}=\mu_{ij}$, so $\chi$ is the largest
connection probability.

Let $\mathcal F(n)$ be the model class defined in Section~\ref{sec:asymptotics}. For each
$n$, define the deterministic uniform envelopes
\[
\lambda_n=\sup_{F\in\mathcal F(n)}\lambda,
\qquad
\chi_n=\sup_{F\in\mathcal F(n)}\chi,
\]
\[
s_n
=\inf_{F\in\mathcal F(n)}\inf_{g\in\mathcal G_c}
 \frac{\nu_g}{N_1+N_2},
\qquad
\nu_g:=\sum_{ij}\sigma_{ij}^2g_{i,1}g_{j,2},
\]
and
\[
a_n=\inf_{F\in\mathcal F(n)}\lVert \sigma\rVert_F^2.
\]
For $F\in\mathcal F(n)$, define the dimension-balance quantity
\[
\kappa(F)
=\frac{\min\{N_1(F),N_2(F)\}^2}
{\max\{N_1(F),N_2(F)\}},
\qquad
\underline{\kappa}_n
=\inf_{F\in\mathcal F(n)}\kappa(F).
\]
In this notation, the condition $a_n\to\infty$ is exactly
Assumption~\ref{ass:nondegeneracy}(w).  Moreover, pointwise in $F$,
\[
\frac{c^2}{2}\min(N_1,N_2)
 \min_{g\in\mathcal G_c}\sigma(g)^2
\leq
\inf_{g\in\mathcal G_c}\frac{\nu_g}{N_1+N_2}
\leq
\min(N_1,N_2)
 \min_{g\in\mathcal G_c}\sigma(g)^2.
\]
Since $c$ is fixed, $s_n\to\infty$ is equivalent  to Assumption~\ref{ass:nondegeneracy}(s).

The pointwise arguments below are carried out under $\mathbbm P_F$. The
displayed envelopes make the resulting bounds uniform over
$F\in\mathcal F(n)$. Conditional and unconditional implications are as
described in the probability convention above.

\subsubsection{One-sided scale adequacy}\label{sec:raw-one-sided-definition}
We say that a plug-in triple satisfies \emph{one-sided scale adequacy} if there
exists a deterministic sequence $r_n\to0$ such that
\begin{align}
\sup_{F\in\mathcal F(n)}\mathbbm P_F\left(
 \sup_{g\in\mathcal G_c}
 \frac{\sigma(g)-\widehat\sigma^{\,0}(g)}{\sigma(g)}>r_n
\right)&\longrightarrow0, \tag{i$^-$}\label{eq:oneside-sigma}\\
\sup_{F\in\mathcal F(n)}\mathbbm P_F\left(
 \frac{\bar\tau-\widehat{\bar\tau}^{\,0}}{\bar\tau}>r_n
\right)&\longrightarrow0, \tag{ii$^-$}\label{eq:oneside-tau}\\
\sup_{F\in\mathcal F(n)}\mathbbm P_F\left(
 \frac{V-\widehat V^{\,0}}{V}>r_n
\right)&\longrightarrow0. \tag{iii$^-$}\label{eq:oneside-V}
\end{align}
The ratios are used only when their denominators are positive, which the
relevant nondegeneracy assumptions ensure eventually.  These are the halves
of the paper's Assumption~\ref{ass:scale-consistency} that rule out material \emph{underestimation}.
This condition is weaker than the paper's Assumption~\ref{ass:scale-consistency}, which is a two-sided condition.

\subsubsection{Preliminary lemmas}\label{sec:raw-preliminary-lemmas}

The next three lemmas are generic probability inequalities and do not use the
fixed-$F$ convention.

\begin{lemma}[Two variance inequalities]\label{lem:raw-var}
\emph{(a)} If $Z\in[0,a]$ almost surely, then
$\operatorname{Var}(Z)\leq a\mathbbm E Z$.

\emph{(b)} If $|Y|\leq B$ almost surely, with mean $\mu$ and variance
$\sigma^2$, then
\[
\operatorname{Var}(Y^2)\leq4B^2\sigma^2.
\]
\end{lemma}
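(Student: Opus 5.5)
Both parts follow from elementary pointwise inequalities, so the plan is simply to prove each one directly.

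\emph{Part (a).} Since $0\leq Z\leq a$ almost surely, we have $Z^2\leq aZ$ pointwise. Taking expectations gives $\mathbbm E[Z^2]\leq a\mathbbm E Z$. Then
\[
\operatorname{Var}(Z)=\mathbbm E[Z^2]-(\mathbbm E Z)^2\leq\mathbbm E[Z^2]\leq a\mathbbm E Z .
\]

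\emph{Part (b).} We use the fact that the variance is the smallest mean squared deviation from any constant. Taking the constant $\mu^2$ gives
\[
\operatorname{Var}(Y^2)\leq\mathbbm E\bigl[(Y^2-\mu^2)^2\bigr].
\]
Factor $Y^2-\mu^2=(Y-\mu)(Y+\mu)$. Since $|Y|\leq B$ almost surely, we also have $|\mu|\leq B$, so $|Y+\mu|\leq 2B$. Hence, pointwise,
\[
(Y^2-\mu^2)^2\leq 4B^2(Y-\mu)^2 .
\]
Taking expectations yields $\mathbbm E[(Y^2-\mu^2)^2]\leq 4B^2\sigma^2$, which proves (b).

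No step here is a genuine obstacle. The only point needing care in (b) is to centre at $\mu^2$ rather than at $\mathbbm E[Y^2]$, because centring at $\mu^2$ is what allows the factorisation through $Y-\mu$.
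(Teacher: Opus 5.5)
Your proof is correct and follows the paper's argument essentially line for line. In (a) you bound $\operatorname{Var}(Z)\leq\mathbbm E Z^2\leq a\mathbbm E Z$; in (b) you use that the variance minimizes mean squared deviation over constants, center at $\mu^2$, factor $Y^2-\mu^2=(Y-\mu)(Y+\mu)$, and bound $|Y+\mu|\leq 2B$, where your explicit justification via $|\mu|\leq B$ is a detail the paper leaves implicit.
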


\begin{proof}
For part (a),
$\operatorname{Var}(Z)\leq\mathbbm E Z^2\leq a\mathbbm E Z$.
For part (b), variance minimizes mean squared distance over constants, so
\[
\operatorname{Var}(Y^2)
\leq\mathbbm E(Y^2-\mu^2)^2
=\mathbbm E[(Y-\mu)^2(Y+\mu)^2]
\leq4B^2\mathbbm E(Y-\mu)^2.
\]
\end{proof}

\begin{lemma}[A one-sided second-order bound for the square root]
\label{lem:raw-sqrt}
For every $x\geq0$ and $m>0$,
\[
\sqrt{x}
\geq
\sqrt m+\frac{x-m}{2\sqrt m}
 -\frac{(x-m)^2}{2m^{3/2}}.
\]
Consequently, if $X\geq0$ and $\mathbbm E X=m>0$, then
\[
\mathbbm E\sqrt X
\geq
\sqrt m-\frac{\operatorname{Var}(X)}{2m^{3/2}}.
\]
If, in addition, $\operatorname{Var}(X)\leq am$ for some $a>0$, then
\[
\mathbbm E\sqrt X\geq\sqrt m\left(1-\frac{a}{2m}\right).
\]
Under the same variance condition, for every $m\geq0$,
\[
\mathbbm E\sqrt{X+a}\geq\sqrt m.
\]
\end{lemma}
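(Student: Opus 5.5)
The first claim rests on an elementary pointwise inequality. Set $u = x-m$ and look at $f(x) := \sqrt{x} - \sqrt m - \frac{u}{2\sqrt m} + \frac{u^2}{2m^{3/2}}$. I will show $f \geq 0$ on $[0,\infty)$ by rationalizing. Write
\[
\sqrt{x}-\sqrt m = \frac{u}{\sqrt x+\sqrt m},
\]
so that
\[
\sqrt x - \sqrt m - \frac{u}{2\sqrt m} = u\left(\frac{1}{\sqrt x+\sqrt m}-\frac{1}{2\sqrt m}\right) = -\frac{u(\sqrt x-\sqrt m)}{2\sqrt m(\sqrt x+\sqrt m)} = -\frac{u^2}{2\sqrt m(\sqrt x+\sqrt m)^2}.
\]
It then suffices to have $(\sqrt x+\sqrt m)^2 \geq m$, which holds for every $x\geq 0$. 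This gives $-\frac{u^2}{2\sqrt m(\sqrt x+\sqrt m)^2} \geq -\frac{u^2}{2m^{3/2}}$, and that is exactly the claimed bound. This identity is the only real content. The step that needs care is noticing that the bound must hold near $x=0$, where the Taylor remainder blows up. The rationalized form avoids this problem because the denominator is bounded below by $m$ and not by $x$.

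The consequences then follow in order. For the expectation bound, apply the inequality with $x=X$ and take expectations. The linear term has mean zero because $\mathbbm E X = m$, and $\mathbbm E(X-m)^2 = \operatorname{Var}(X)$, which gives $\mathbbm E\sqrt X \geq \sqrt m - \operatorname{Var}(X)/(2m^{3/2})$. Under $\operatorname{Var}(X)\leq am$ this becomes $\sqrt m - a/(2\sqrt m) = \sqrt m(1-a/(2m))$.

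For the last claim, first dispose of the case $m=0$. Then $\sqrt{X+a}\geq 0 = \sqrt m$ trivially, so suppose $m>0$. I will apply the expectation bound to $X' = X+a$, which satisfies $\mathbbm E X' = m+a$ and $\operatorname{Var}(X') = \operatorname{Var}(X)\leq am$. This yields
\[
\mathbbm E\sqrt{X+a} \geq \sqrt{m+a} - \frac{am}{2(m+a)^{3/2}}.
\]
It remains to check that the right-hand side is at least $\sqrt m$. Multiplying through by $(m+a)^{3/2}$, the requirement becomes $(m+a)^2 - am/2 \geq \sqrt m\,(m+a)^{3/2}$.

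Set $t = a/m$ and divide by $m^2$. The requirement becomes $(1+t)^2 - t/2 \geq (1+t)^{3/2}$. Dividing by $(1+t)^{3/2}$, it suffices to show $\sqrt{1+t} - \frac{t}{2(1+t)^{3/2}} \geq 1$. Using $\sqrt{1+t}\geq 1 + \frac{t}{2(1+t)}$ (again by rationalizing: $\sqrt{1+t}-1 = t/(\sqrt{1+t}+1)\geq t/(2\sqrt{1+t})\geq t/(2(1+t))$), it is enough that $\frac{t}{2(1+t)} \geq \frac{t}{2(1+t)^{3/2}}$. This holds since $(1+t)^{3/2}\geq 1+t$. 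Verifying this final scalar inequality cleanly is the main chore.
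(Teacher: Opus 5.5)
Your proof is correct. Its overall structure matches the paper's: a pointwise inequality, then expectations, then the expectation bound applied to $X+a$ followed by a scalar inequality. The real difference is how you prove the pointwise bound.

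The paper substitutes $z=\sqrt{x/m}$. It then checks that the gap between the two sides, divided by $\sqrt m$, factors as $\frac12 z(z-1)^2(z+2)\geq 0$. This is a one-line polynomial verification, and it also shows the equality cases $x=0$ and $x=m$.

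Your rationalization instead gives the exact remainder $\sqrt x-\sqrt m-\frac{x-m}{2\sqrt m}=-\frac{(x-m)^2}{2\sqrt m(\sqrt x+\sqrt m)^2}$. This explains where the constant $1/(2m^{3/2})$ comes from: it is the worst case of that denominator over $x\geq 0$, attained at $x=0$. It would also give a sharper coefficient immediately if $X$ were known to be bounded away from zero.

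For the last claim, your normalization $t=a/m$ with the intermediate bound $t/(2(1+t))$ is a slightly longer version of the paper's direct chain $\sqrt{m+a}-\sqrt m=\frac{a}{\sqrt{m+a}+\sqrt m}\geq\frac{a}{2\sqrt{m+a}}\geq\frac{am}{2(m+a)^{3/2}}$. So the ``main chore'' you anticipate is really one line.

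Your treatment of $m=0$, which simply observes $\sqrt{X+a}\geq 0$, is more direct than the paper's, which first argues that $X=0$ almost surely.
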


\begin{proof}
Set $z=\sqrt{x/m}$.  The difference between the left side of the first
inequality and its right side, divided by $\sqrt m$, equals
\[
z-1-\frac{z^2-1}{2}+\frac{(z^2-1)^2}{2}
=\frac12 z(z-1)^2(z+2)\geq0.
\]
Taking expectations proves the second display, and substituting
$\operatorname{Var}(X)\leq am$ proves the third. For the final display with
$m>0$, apply the second display to $X+a$, whose mean is $m+a>0$ and whose
variance is at most $am$:
\[
\mathbbm E\sqrt{X+a}
\geq
\sqrt{m+a}-\frac{am}{2(m+a)^{3/2}}
\geq\sqrt m,
\]
where the last step follows from
\[
\sqrt{m+a}-\sqrt m
=\frac{a}{\sqrt{m+a}+\sqrt m}
\geq\frac{a}{2\sqrt{m+a}}
\geq\frac{am}{2(m+a)^{3/2}}.
\]
When $m=0$, the hypothesis $\operatorname{Var}(X)\leq am=0$ and the
nonnegativity of $X$ imply that $X=0$ almost surely, and the conclusion reads
$\sqrt a\geq0$.
\end{proof}

\begin{lemma}[Bounded differences]\label{lem:raw-mcdiarmid}
Let $X_1,\ldots,X_L$ be independent and let $f(X_1,\ldots,X_L)$ change by at
most $c_\ell$ when only coordinate $\ell$ is changed.  Then, for every
$t>0$,
\[
\mathbbm P\{|f-\mathbbm E f|>t\}
\leq
2\exp\left(-\frac{2t^2}{\sum_{\ell=1}^Lc_\ell^2}\right).
\]
Each one-sided tail is bounded by the same exponential without the factor
$2$.
\end{lemma}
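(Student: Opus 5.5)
This is McDiarmid's bounded-differences inequality, and I would prove it by the Doob martingale method combined with Hoeffding's lemma. Let $\mathcal{A}_\ell$ be the sigma-field generated by $X_1,\ldots,X_\ell$, with $\mathcal{A}_0$ trivial. Define $D_\ell:=\mathbbm E[f\mid\mathcal A_\ell]-\mathbbm E[f\mid\mathcal A_{\ell-1}]$, so that $f-\mathbbm Ef=\sum_{\ell=1}^L D_\ell$ is a sum of martingale differences.

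The key step is to show that, conditional on $\mathcal A_{\ell-1}$, the increment $D_\ell$ lies in an interval of length at most $c_\ell$.

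1. By independence, $\mathbbm E[f\mid\mathcal A_\ell]=\int f(X_1,\ldots,X_\ell,x_{\ell+1},\ldots,x_L)\,dP(x_{\ell+1},\ldots,x_L)$.
2. Set $U_\ell:=\sup_x \int f(X_{1:\ell-1},x,x_{\ell+1:L})\,dP$ and $L_\ell:=\inf_x$ of the same integral. Both are $\mathcal A_{\ell-1}$-measurable.
3. The bounded-differences hypothesis gives $U_\ell-L_\ell\leq c_\ell$.
4. Since $D_\ell$ has conditional mean zero and lies in $[L_\ell-\mathbbm E[f\mid\mathcal A_{\ell-1}],\,U_\ell-\mathbbm E[f\mid\mathcal A_{\ell-1}]]$, Hoeffding's lemma yields $\mathbbm E[e^{sD_\ell}\mid\mathcal A_{\ell-1}]\leq e^{s^2c_\ell^2/8}$ for all real $s$.

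Iterating the tower property from $\ell=L$ down to $\ell=1$ then gives
\[
\mathbbm E\, e^{s(f-\mathbbm Ef)}\leq \exp\Bigl(s^2\sum_\ell c_\ell^2/8\Bigr).
\]

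For the upper tail, Markov's inequality gives $\mathbbm P(f-\mathbbm Ef>t)\leq\exp(-st+s^2\sum_\ell c_\ell^2/8)$. Optimizing at $s=4t/\sum_\ell c_\ell^2$ yields $\exp(-2t^2/\sum_\ell c_\ell^2)$. Applying the same argument to $-f$, which has the same bounded differences, gives the lower tail. A union bound over the two tails supplies the factor $2$ in the two-sided statement.

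The only delicate step is item 2, namely justifying the measurability of $U_\ell$ and $L_\ell$ and the integral representation of the conditional expectation. If that becomes awkward, I would avoid the supremum over $x$ altogether. Conditionally on $\mathcal A_{\ell-1}$, $D_\ell$ is a function $h(X_\ell)$ satisfying $|h(x)-h(x')|\leq c_\ell$ for all $x,x'$, so its range has length at most $c_\ell$, which is all Hoeffding's lemma requires. This step can also simply be cited as standard, for example Theorem 6.2 of \cite{boucheron2013concentration}.
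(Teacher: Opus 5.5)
Your proof is correct. It is the standard Doob-martingale proof of McDiarmid's inequality, and the constants work out: $\mathbbm E e^{s(f-\mathbbm E f)}\le \exp(s^2\sum_\ell c_\ell^2/8)$, and then $s=4t/\sum_\ell c_\ell^2$ gives $\exp(-2t^2/\sum_\ell c_\ell^2)$. Applying this to $-f$ and taking a union bound gives the two-sided version.

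The paper, by contrast, gives no argument at all and simply cites the bounded differences inequality in \cite{boucheron2013concentration}. Your route makes the lemma self-contained and shows exactly where independence enters. Given $\mathcal A_{\ell-1}$, $X_\ell$ still has its marginal law, so $D_\ell$ is a centered function of $X_\ell$ alone with oscillation at most $c_\ell$, and this is what Hoeffding's lemma needs. The citation buys brevity.

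Two small remarks on presentation. First, the variant in your last paragraph is the cleaner one, and you could lead with it. Set $g(x_{1:\ell}):=\int f(x_{1:\ell},y)\,dP(y)$, fix $x_{1:\ell-1}$, apply Hoeffding's lemma to $g(x_{1:\ell-1},X_\ell)$, and integrate using Fubini. This avoids the measurability question for the suprema $U_\ell$ and $L_\ell$ entirely. Second, integrability is automatic, because the bounded-differences hypothesis forces $\sup f-\inf f\le\sum_\ell c_\ell$. The degenerate case $\sum_\ell c_\ell^2=0$ is trivial, since $f$ is then constant, but note it explicitly because your choice of $s$ is undefined there.
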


\begin{proof}
See, for example, Theorem 6.5 of \cite{boucheron2013concentration}.
\end{proof}

\subsubsection{One-sided scale adequacy of the raw plug-in}\label{sec:raw-one-sided}

\begin{lemma}[One-sided adequacy of the raw plug-in]
\label{lem:raw-one-sided}
Suppose Assumption~\ref{ass:model} holds, with $B$ fixed across $\mathcal F$.

\begin{itemize}
\item[(i)] If $s_n\to\infty$, then \eqref{eq:oneside-sigma} holds with
\[
r_{\sigma,n}^-=s_n^{-1/4}
\]
for all sufficiently large $n$.  More precisely,
\begin{align*}
&\sup_{F\in\mathcal F(n)}\mathbbm P_F\left(
 \sup_{g\in\mathcal G_c}
 \frac{\sigma(g)-\widehat\sigma^{\,0}(g)}{\sigma(g)}
 >r_{\sigma,n}^-
\right)\\
&\qquad\leq
\exp\left[
 2n\left(\log2-\frac{\sqrt{s_n}}{9B^2}\right)
\right]
\longrightarrow0.
\end{align*}

\item[(ii)--(iii)] If $a_n\to\infty$ and $\lambda_n\to0$, then
\eqref{eq:oneside-tau}--\eqref{eq:oneside-V} hold with
\[
r_{\tau V,n}^-
=\frac{\lambda_n}{2}+\lambda_n^{1/4}+a_n^{-1/4}.
\]
The common failure probability is at most
\[
\exp\left(-\frac{\sqrt{a_n}}{9B^2}\right)
+
\exp\left(-\frac{1}{2\sqrt{\lambda_n}}\right)
\longrightarrow0.
\]
\end{itemize}

If both sets of conditions hold, all three one-sided requirements hold with
the single sequence
\[
r_n^-=
\max\{r_{\sigma,n}^-,r_{\tau V,n}^-\}.
\]
Consequently, the coverage conclusions of Propositions~\ref{prop:propone} and~\ref{prop:proptwo} continue to
hold when their respective parts of Assumption~\ref{ass:scale-consistency} are replaced by the
corresponding one-sided requirements above.
\end{lemma}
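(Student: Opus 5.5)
Part (i) is proved with Bernstein plus a union bound over $\mathcal G_c$. Parts (ii)--(iii) are proved by first comparing expectations deterministically and then concentrating $\lVert Y\rVert_\dagger$ and $\lVert Y\rVert_F^2$. The final "consequently" clause comes from a monotonicity observation on the coverage proofs.

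\textbf{Part (i).} Fix $F\in\mathcal F(n)$ and $g\in\mathcal G_c$. Since $\widehat\sigma^{\,0}(g)^2m_1m_2=\sum_{ij}Y_{ij}^2g_{i,1}g_{j,2}$ has mean $\sum\vartheta_{ij}g_{i,1}g_{j,2}\geq\nu_g$, it suffices to bound the probability of a lower deviation below $\nu_g$. Underestimation with $(\sigma(g)-\widehat\sigma^{\,0}(g))/\sigma(g)>r$ forces $\widehat\sigma^{\,0}(g)^2<(1-r)^2\sigma(g)^2$. That in turn forces
\[
\sum_{ij}(\vartheta_{ij}-Y_{ij}^2)g_{i,1}g_{j,2}>(1-(1-r)^2)\nu_g\geq r\nu_g .
\]
I apply Lemma~\ref{lem:bernstein} to the independent summands $\vartheta_{ij}-Y_{ij}^2\leq B^2$. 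By Lemma~\ref{lem:raw-var}(b), their total variance is at most $4B^2\nu_g$. With $r=s_n^{-1/4}$ the exponent is of order $r^2\nu_g/B^2\gtrsim s_n^{-1/2}s_n(N_1+N_2)/B^2$, and bounding the constants carefully should give at least $\sqrt{s_n}(N_1+N_2)/(9B^2)$. A union bound over $|\mathcal G_c|\leq2^{N_1+N_2}$ and $N_1+N_2\geq2n$ then gives the displayed bound $\exp[2n(\log2-\sqrt{s_n}/(9B^2))]$.

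\textbf{Parts (ii)--(iii), Step 1: deterministic comparison of expectations.} I compare $\mathbbm E\sqrt{q_i}$ with $\mathbbm E\lVert\epsilon_{i\cdot}\rVert_2$. Jensen gives $\mathbbm E\lVert\epsilon_{i\cdot}\rVert\leq\sqrt{v_i}\leq\sqrt{\bar q_i}$. By Lemma~\ref{lem:raw-var}(a) with $a=B^2$, $\operatorname{Var}(q_i)\leq B^2\bar q_i$. Lemma~\ref{lem:raw-sqrt} then yields $\mathbbm E\sqrt{q_i}\geq\sqrt{\bar q_i}(1-B^2/(2\bar q_i))$. Summing over active rows and columns gives $\mathbbm E\lVert Y\rVert_\dagger\geq S-\tfrac{B^2}{2}H=S(1-\lambda/2)\geq(1-\lambda/2)\mathbbm E\lVert\epsilon\rVert_\dagger$. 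For the Frobenius terms, $\mathbbm E\lVert Y\rVert_F^2\geq\lVert\sigma\rVert_F^2$.

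\textbf{Parts (ii)--(iii), Step 2: concentration.} I apply Lemma~\ref{lem:raw-mcdiarmid} to $\lVert Y\rVert_\dagger$, a convex function of the entries. Plain bounded differences give $c_{ij}\leq 2B$, which is too crude. The obstacle I expect is getting a relative deviation of order $\lambda^{1/4}$. My first attempt is to note that changing entry $(i,j)$ changes $\sqrt{q_i}$ by at most about $B^2/\sqrt{q_i}$. I would then work on the event where each $q_i\geq\bar q_i/2$, or instead use a self-bounding or Efron--Stein lower-tail inequality. This gives a variance proxy of order $B^2H$, hence a deviation $t=\lambda^{1/4}S$ with failure probability about $\exp(-t^2/(B^2H\,S\cdot\text{const}))$. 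If that is too weak, I would fall back on the lower-tail Bernstein bound for each $q_i$ row by row. This is where I expect the $\exp(-1/(2\sqrt{\lambda_n}))$ term to arise. For $\lVert Y\rVert_F^2$, Bernstein as in Lemma~\ref{lem:residual-norm-relations} gives a deviation $a_n^{-1/4}\lVert\sigma\rVert_F^2$ with failure probability at most $\exp(-\sqrt{a_n}/(9B^2))$.

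\textbf{Parts (ii)--(iii), Step 3: combining.} $\widehat{\bar\tau}^{\,0}$ and $\widehat V^{\,0}$ are monotone in $\lVert Y\rVert_\dagger$ and $\lVert Y\rVert_F$, with positive coefficients. On the intersection of the good events, each therefore falls short of $\bar\tau$ or $V$ by a relative amount of at most $\lambda_n/2+\lambda_n^{1/4}+a_n^{-1/4}$. For $V$ I use $\sqrt{\cdot}$ and the fact that relative errors on squared terms only shrink. The common failure probability is the sum of the two exponentials above.

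\textbf{Final clause.} In the proofs of Propositions~\ref{prop:propone} and \ref{prop:proptwo}, Step 4 uses only the events $\hat\sigma\geq\sigma(1-r)$, $\hat{\bar\tau}\geq\bar\tau(1-r)$ and $\hat V\geq V(1-r)$. These are exactly the one-sided requirements, so the coverage conclusions carry over verbatim.
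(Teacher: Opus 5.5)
Your part (i), the bias comparison $\mathbbm E_F\lVert Y\rVert_\dagger\geq S(1-\lambda/2)$ together with $\mathbbm E_F\lVert\epsilon\rVert_\dagger\leq S$, the Frobenius Bernstein step, the assembly, and the final clause all match the paper.

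\textbf{The gap.} The problem is Step~2 for $\lVert Y\rVert_\dagger$. This is the crux of (ii)--(iii), and you leave it open with several hedged options. Your diagnosis that plain bounded differences with constant $2B$ is ``too crude'' is wrong. It is crude only if you count all $N_1N_2$ coordinates.

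\textbf{The missing idea.} Only the entries in $\mathcal D_+=\{(i,j):\vartheta_{ij}>0\}$ are random, and $|\mathcal D_+|\leq|\mathcal I_+||\mathcal J_+|$. Cauchy--Schwarz then gives
\[
|\mathcal I_+|=\sum_{i\in\mathcal I_+}\bar q_i^{1/4}\bar q_i^{-1/4}
\leq\Bigl(\sum_{i\in\mathcal I_+}\sqrt{\bar q_i}\Bigr)^{1/2}\Bigl(\sum_{i\in\mathcal I_+}\bar q_i^{-1/2}\Bigr)^{1/2}\leq\sqrt{SH},
\]
and likewise $|\mathcal J_+|\leq\sqrt{SH}$. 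Hence $|\mathcal D_+|\leq SH=\lambda S^2/B^2$. Applying Lemma~\ref{lem:raw-mcdiarmid} to these coordinates with $c=2B$ gives
\[
\mathbbm P_F\{\lVert Y\rVert_\dagger<\mathbbm E_F\lVert Y\rVert_\dagger-t\}\leq\exp\{-t^2/(2B^2|\mathcal D_+|)\}\leq\exp\{-t^2/(2\lambda S^2)\}.
\]
Taking $t=\lambda_n^{1/4}S$ yields exactly $\exp\{-1/(2\sqrt{\lambda_n})\}$. This is where the $B^2HS$ in your guessed exponent comes from. Your claimed ``variance proxy of order $B^2H$'' is neither derived nor consistent with that exponent.

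\textbf{Why your alternatives do not close the gap as stated.} The hypothesis $\lambda_n\to0$ only says that a root-energy-weighted harmonic mean of the active energies diverges. It therefore allows many active rows or columns with bounded expected energy (Remark~\ref{rem:raw-lambda}). Such rows can have $\mathbbm P_F(q_i<\bar q_i/2)$ bounded away from zero, for example a row of many Bernoulli entries with total mean one.
\begin{itemize}
\item The event $\{q_i\geq\bar q_i/2\text{ for all active }i\}$ is then not high-probability.
\item A row-by-row Bernstein bound combined with a union bound fails for the same reason.
\item Lemma~\ref{lem:raw-mcdiarmid} also needs worst-case differences, not differences holding only on an event. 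So the local bound $B^2/\sqrt{q_i}$ would require an extra typical-bounded-differences argument.
\end{itemize}
Your self-bounding idea can be made rigorous. Let $f_{(ij)}$ be the value of $\lVert Y\rVert_\dagger$ after setting $Y_{ij}=0$. One checks that $0\leq\lVert Y\rVert_\dagger-f_{(ij)}\leq2B$ and $\sum_{ij}(\lVert Y\rVert_\dagger-f_{(ij)})\leq\lVert Y\rVert_\dagger$. This gives a lower tail $\exp\{-t^2/(4B\,\mathbbm E_F\lVert Y\rVert_\dagger)\}$. However, that is a different argument with a different exponent from the one you anticipate, and you did not carry it out.
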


\begin{proof}
Fix $n$ and $F\in\mathcal F(n)$, and work under $\mathbbm P_F$.

\medskip
\noindent\textbf{Part (i).}
For $g=(g_1,g_2)$, write
\[
\nu_g=\sum_{ij}\sigma_{ij}^2g_{i,1}g_{j,2}
=m_1m_2\sigma(g)^2
\]
and
\[
\zeta_{ij}
=(\vartheta_{ij}-Y_{ij}^2)g_{i,1}g_{j,2}.
\]
The $\zeta_{ij}$ are independent and centered under $\mathbbm P_F$,
$\zeta_{ij}\leq B^2$, and Lemma~\ref{lem:raw-var} gives
\[
\sum_{ij}\operatorname{Var}_F(\zeta_{ij})
\leq4B^2\nu_g.
\]
Moreover,
\[
m_1m_2\widehat\sigma^{\,0}(g)^2
=\sum_{ij}\vartheta_{ij}g_{i,1}g_{j,2}-\sum_{ij}\zeta_{ij}
\geq\nu_g-\sum_{ij}\zeta_{ij}.
\]
Thus the raw squared plug-in has nonnegative population bias; the next
argument controls the probability that sampling variation nevertheless
produces substantial underestimation.  For $r\in(0,1]$,
\begin{align*}
\left\{\frac{\sigma(g)-\widehat\sigma^{\,0}(g)}{\sigma(g)}>r\right\}
&\subseteq
\{\widehat\sigma^{\,0}(g)^2<(1-r)^2\sigma(g)^2\}\\
&\subseteq
\left\{\sum_{ij}\zeta_{ij}>(2r-r^2)\nu_g\right\}\\
&\subseteq
\left\{\sum_{ij}\zeta_{ij}>r\nu_g\right\}.
\end{align*}
Bernstein's inequality therefore yields
\begin{align}
\mathbbm P_F\left(\sum_{ij}\zeta_{ij}>r\nu_g\right)
&\leq
\exp\left(
 -\frac{r^2\nu_g^2}
 {2(4B^2\nu_g+B^2r\nu_g/3)}
\right)\nonumber\\
&\leq
\exp\left(-\frac{r^2\nu_g}{9B^2}\right).
\label{eq:raw-bernstein}
\end{align}
A union bound over at most $2^{N_1+N_2}$ group pairs and the inequality
$\nu_g\geq s_n(N_1+N_2)$ show, for every $F\in\mathcal F(n)$, that the
failure probability at $r=s_n^{-1/4}$ is at most
\[
\exp\left\{(N_1+N_2)
\left(\log2-\frac{\sqrt{s_n}}{9B^2}\right)\right\}.
\]
The parenthesized factor is negative eventually, and
$N_1+N_2\geq2n$, which gives the uniform bound in the statement. The
condition $s_n\to\infty$ also implies that all relevant $\nu_g$ are positive
eventually.

\medskip
\noindent\textbf{Parts (ii)--(iii), Step 1: the Frobenius component.}
The $\mathbbm P_F$-mean of $\lVert Y\rVert_F^2$ is
$\sum_{ij}\vartheta_{ij}\geq\lVert \sigma\rVert_F^2$.  Hence
\begin{align*}
\{\lVert Y\rVert_F^2<(1-r_1)\lVert \sigma\rVert_F^2\}
\subseteq
\left\{
\sum_{ij}(\vartheta_{ij}-Y_{ij}^2)
>r_1\lVert \sigma\rVert_F^2
\right\}.
\end{align*}
The same Bernstein calculation gives
\[
\mathbbm P_F\{\lVert Y\rVert_F^2<(1-r_1)\lVert \sigma\rVert_F^2\}
\leq
\exp\left(-\frac{r_1^2\lVert \sigma\rVert_F^2}{9B^2}\right).
\]
Take $r_1=a_n^{-1/4}$.  Since $\lVert \sigma\rVert_F^2\geq a_n$, the right side is no
larger than
\[
\exp\left(-\frac{\sqrt{a_n}}{9B^2}\right).
\]
On the complementary event,
\[
\lVert Y\rVert_F\geq\sqrt{1-r_1}\lVert \sigma\rVert_F
\geq(1-r_1)\lVert \sigma\rVert_F.
\]

\medskip
\noindent\textbf{Step 2: the dagger component---bias.}
For $i\in\mathcal I_+$, Lemmas~\ref{lem:raw-var} and~\ref{lem:raw-sqrt} give
\[
\mathbbm E_F[\sqrt{q_i}]
\geq
\sqrt{\bar q_i}-\frac{B^2}{2\sqrt{\bar q_i}},
\]
and analogously for $j\in\mathcal J_+$.  Inactive terms are identically zero,
so
\[
\mathbbm E_F[\lVert Y\rVert_\dagger]
\geq
S-\frac{B^2}{2}H
=S\left(1-\frac{\lambda}{2}\right).
\]
On the oracle side, Jensen's inequality gives
\begin{align}
\mathbbm E_F[\lVert \epsilon\rVert_\dagger]
&\leq
\sum_{i\in\mathcal I_+}\sqrt{v_i}
+
\sum_{j\in\mathcal J_+}\sqrt{w_j}
\leq S.
\label{eq:raw-oracle-upper}
\end{align}

\medskip
\noindent\textbf{Step 3: the dagger component---concentration.}
Only the coordinates in $\mathcal D_+$ can vary.  Let $L=|\mathcal D_+|$.  Changing one
such coordinate changes one row-energy square root and one column-energy
square root, each by at most $B$, so the bounded-differences constant is at
most $2B$.  Also,
\[
L\leq|\mathcal I_+||\mathcal J_+|.
\]
By Cauchy--Schwarz,
\[
|\mathcal I_+|
\leq
\left(\sum_{i\in\mathcal I_+}\sqrt{\bar q_i}\right)^{1/2}
\left(\sum_{i\in\mathcal I_+}\bar q_i^{-1/2}\right)^{1/2}
\leq\sqrt{SH},
\]
and the same bound holds for $|\mathcal J_+|$.  Therefore
\[
L\leq SH=\frac{\lambda S^2}{B^2}.
\]
Lemma~\ref{lem:raw-mcdiarmid} implies, for $t>0$,
\[
\mathbbm P_F\{\lVert Y\rVert_\dagger<\mathbbm E_F[\lVert Y\rVert_\dagger]-t\}
\leq
\exp\left(-\frac{t^2}{2B^2L}\right).
\]
Taking $t=r_2S$ and using $\lambda\leq\lambda_n$ gives
\[
\mathbbm P_F\{\lVert Y\rVert_\dagger<\mathbbm E_F[\lVert Y\rVert_\dagger]-r_2S\}
\leq
\exp\left(-\frac{r_2^2}{2\lambda_n}\right).
\]
With $r_2=\lambda_n^{1/4}$, the exponent is
$-1/(2\sqrt{\lambda_n})$.  Combining this event with Step~2 and
\eqref{eq:raw-oracle-upper},
\[
\lVert Y\rVert_\dagger
\geq
\left(1-\frac{\lambda_n}{2}-\lambda_n^{1/4}\right)
\mathbbm E_F[\lVert \epsilon\rVert_\dagger]
\]
with the claimed probability.

\medskip
\noindent\textbf{Step 4: assembly.}
Let
\[
r_{\tau V,n}^-
=\frac{\lambda_n}{2}+\lambda_n^{1/4}+a_n^{-1/4}.
\]
This sequence is below one eventually.  On the intersection of the good
events in Steps~1 and~3,
\[
\widehat{\bar\tau}^{\,0}
\geq
(1-r_{\tau V,n}^-)\bar\tau.
\]
Moreover, term by term,
\[
(\widehat V^{\,0})^2
\geq
(1-r_{\tau V,n}^-)V^2,
\]
so
\[
\widehat V^{\,0}
\geq
\sqrt{1-r_{\tau V,n}^-}\,V
\geq
(1-r_{\tau V,n}^-)V.
\]
The union of the two failure probabilities is the bound stated in the lemma.
\end{proof}

\begin{corollary}[Coverage with the raw plug-in]
\label{cor:raw-coverage}
Let $CI_1$ and $CI_2$ be the confidence intervals formed using  $\widehat\sigma^{\,0}(g_1,g_2)$,
$\widehat{\bar\tau}^{\,0}$, and $\widehat V^{\,0}$. Under Assumption~\ref{ass:model} and $s_n\to\infty$, $CI_1$ satisfies \eqref{simul}. Under Assumption~\ref{ass:model}, $a_n\to\infty$, and $\lambda_n\to0$,
$CI_2$ satisfies \eqref{simul}. If both sets of conditions hold, the interval $CI_\cap$ described in Section~\ref{sec:simultaneous-results} also satisfies \eqref{simul}.
\end{corollary}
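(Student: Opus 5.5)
The corollary will follow by combining Lemma~\ref{lem:raw-one-sided} with the observation that the coverage proofs of Propositions~\ref{prop:propone} and \ref{prop:proptwo} use only the lower halves of Assumption~\ref{ass:scale-consistency}. So the plan is to re-read those proofs and check that the two-sided consistency condition enters only through the ``underestimation'' events $B(g_1,g_2)^C$.

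\textbf{Interval $CI_1$.} In the proof of Proposition~\ref{prop:propone}, the union over $\mathcal G_c$ of the miss events $A(g_1,g_2)$ is split into $\cup C(g_1,g_2)$ and $\cup B(g_1,g_2)^C$, with $B(g_1,g_2)^C=\{\hat\sigma(g_1,g_2)<(1-r)\sigma(g_1,g_2)\}$.
\begin{itemize}
\item Steps 1--3 bound $\mathbbm P_F(\cup C)$. They use only Assumption~\ref{ass:model}, Assumption~\ref{ass:nondegeneracy}(s), and $r\to0$; the estimator does not appear.
\item Step 4 needs only $\mathbbm P_F(\cup B^C)\to0$ uniformly. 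This is exactly \eqref{eq:oneside-sigma}.
\end{itemize}
Under $s_n\to\infty$, two facts close the argument. First, $s_n\to\infty$ is equivalent to Assumption~\ref{ass:nondegeneracy}(s), as noted in Section~\ref{sec:raw-setting}. Second, Lemma~\ref{lem:raw-one-sided}(i) supplies \eqref{eq:oneside-sigma} with $r_n=s_n^{-1/4}\to0$. Running Steps 1--3 with this $r_n$ then gives simultaneous coverage for $CI_1$ built with $\hat\sigma^{\,0}$.

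\textbf{Interval $CI_2$.} The proof of Proposition~\ref{prop:proptwo} has the same structure. Here $B^C=\{\hat{\bar\tau}<(1-r)\bar\tau\}\cup\{\hat V<(1-r)V\}$, and Steps 1--3 need only Assumption~\ref{ass:nondegeneracy}(w) and $r\to0$. The two conditions map over as follows.
\begin{itemize}
\item $a_n\to\infty$ is precisely Assumption~\ref{ass:nondegeneracy}(w).
\item Lemma~\ref{lem:raw-one-sided}(ii)--(iii) gives \eqref{eq:oneside-tau}--\eqref{eq:oneside-V} with $r_n=r^-_{\tau V,n}\to0$.
\end{itemize}
So Step 4 goes through, and $CI_2$ with $\hat{\bar\tau}^{\,0},\hat V^{\,0}$ satisfies \eqref{simul}.

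\textbf{Combined interval $CI_\cap$.} Apply the two results at level $\alpha/2$ each and take a union bound on the two miss events. When both sets of conditions hold, a single sequence $r_n^-=\max\{r^-_{\sigma,n},r^-_{\tau V,n}\}$ serves both, so the two limsups add to at most $\alpha$.

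The main obstacle is the bookkeeping in verifying that nothing in Steps 1--3 of either proposition used overestimation control. In Proposition~\ref{prop:proptwo} the quantities $r'=r\bar\tau$ and $r''=rV$ appear inside the deviation $x'$ and in the remainder term, but only as deterministic shrinkages of the threshold. I would check that $x'>0$ eventually and that Steps 2--3 depend only on $r\to0$, not on any upper bound on $\hat{\bar\tau}$ or $\hat V$. This is the point where the one-sided sequence substitutes cleanly for $r$.
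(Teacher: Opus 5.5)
Your proposal is correct and follows the paper's own argument. The coverage proofs of Propositions~\ref{prop:propone} and~\ref{prop:proptwo} use Assumption~\ref{ass:scale-consistency} only through the underestimation events $B(g_1,g_2)^C$, and Lemma~\ref{lem:raw-one-sided} supplies exactly those one-sided bounds, with $s_n\to\infty$ and $a_n\to\infty$ matching Assumptions~\ref{ass:nondegeneracy}(s) and~(w); your union bound at level $\alpha/2$ for $CI_\cap$ just makes explicit the step the paper leaves to Section~\ref{sec:simultaneous-results}.
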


\begin{proof}
The proofs of Propositions~\ref{prop:propone} and~\ref{prop:proptwo} use Assumption~\ref{ass:scale-consistency} only to rule out material
underestimation of the scale quantities. Proposition~\ref{prop:propone} uses
$\widehat\sigma(g)\geq(1-o_p(1))\sigma(g)$ uniformly in $\mathbbm P_F$-probability over
$g\in\mathcal G_c$. Proposition~\ref{prop:proptwo} uses
$\widehat{\bar\tau}\geq(1-o_p(1))\bar\tau$ and
$\widehat V\geq(1-o_p(1))V$. Lemma~\ref{lem:raw-one-sided} supplies exactly
these bounds for the raw plug-in. The remaining steps in the two proposition
proofs are unchanged.
\end{proof}

\subsubsection{Two-sided consistency under a vanishing signal share}\label{sec:raw-two-sided}

\begin{lemma}[Two-sided consistency under a vanishing signal share]
\label{lem:raw-two-sided}
Suppose the conditions of Lemma~\ref{lem:raw-one-sided} hold and
$\chi_n\to0$.

\begin{itemize}
\item[(i)] Under $s_n\to\infty$, define
\[
r_{\sigma,n}=2\chi_n+s_n^{-1/4}.
\]
For all sufficiently large $n$, $r_{\sigma,n}\leq1$ and
\begin{align*}
&\sup_{F\in\mathcal F(n)}\mathbbm P_F\left(
 \sup_{g\in\mathcal G_c}
 \frac{|\widehat\sigma^{\,0}(g)-\sigma(g)|}{\sigma(g)}
 >r_{\sigma,n}
\right)\\
&\qquad\leq
2\exp\left[
 2n\left(\log2-\frac{\sqrt{s_n}}{9B^2}\right)
\right]
\longrightarrow0.
\end{align*}

\item[(ii)--(iii)] Under $a_n\to\infty$ and $\lambda_n\to0$, define
\[
r_{\tau V,n}
=2\chi_n+15\lambda_n^{1/4}+a_n^{-1/4}.
\]
Then
\[
\sup_{F\in\mathcal F(n)}\mathbbm P_F\left(
 \frac{|\widehat{\bar\tau}^{\,0}-\bar\tau|}{\bar\tau}>r_{\tau V,n}
\right)
\longrightarrow0
\]
and
\[
\sup_{F\in\mathcal F(n)}\mathbbm P_F\left(
 \frac{|\widehat V^{\,0}-V|}{V}>r_{\tau V,n}
\right)
\longrightarrow0.
\]
For either probability, a common upper bound is
\[
2\exp\left(-\frac{\sqrt{a_n}}{9B^2}\right)
+
2\exp\left(-\frac{1}{2\sqrt{\lambda_n}}\right).
\]
\end{itemize}

If all conditions hold, the paper's two-sided Assumption~\ref{ass:scale-consistency} holds for the raw
plug-in with the single sequence
\[
r_n^*=\max\{r_{\sigma,n},r_{\tau V,n}\}.
\]
\end{lemma}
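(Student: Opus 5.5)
The proof takes the one-sided lower bounds of Lemma~\ref{lem:raw-one-sided} as given and supplies the matching upper bounds. Overestimation has two sources: the population bias from $\vartheta_{ij}=\sigma_{ij}^2+\mu_{ij}^2$, and upward sampling fluctuation. The bias is controlled by the signal-share index. Since $\sigma_{ij}^2\geq(1-\chi)\vartheta_{ij}$ entrywise, every population ``energy'' quantity is at most $(1-\chi)^{-1}$ times its oracle counterpart: $\sum_{ij}\vartheta_{ij}g_{i,1}g_{j,2}\leq(1-\chi)^{-1}\nu_g$, $\sum_{ij}\vartheta_{ij}\leq(1-\chi)^{-1}\lVert\sigma\rVert_F^2$, and $\bar q_i\leq(1-\chi)^{-1}v_i$, $\bar p_j\leq(1-\chi)^{-1}w_j$. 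After taking square roots, a relative bias factor $(1-\chi_n)^{-1/2}\leq1+\chi_n$ remains for $\chi_n\leq1/2$. This is the source of the $2\chi_n$ term in both $r_{\sigma,n}$ and $r_{\tau V,n}$.

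\textbf{Part (i).} I apply Bernstein (Lemma~\ref{lem:bernstein}) to the centered variables $-\zeta_{ij}=(Y_{ij}^2-\vartheta_{ij})g_{i,1}g_{j,2}$. These are bounded above by $B^2$ and have total variance at most $4B^2\nu_g$ by Lemma~\ref{lem:raw-var}(b). This gives
\[
\mathbbm P_F\Bigl(m_1m_2\widehat\sigma^{\,0}(g)^2>(1-\chi)^{-1}\nu_g+r\nu_g\Bigr)\leq\exp\Bigl(-\frac{r^2\nu_g}{9B^2}\Bigr).
\]
I take $r=s_n^{-1/4}$ and use $\sqrt{(1-\chi)^{-1}+r}\leq1+\chi+r\leq1+r_{\sigma,n}$. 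A union bound over $2^{N_1+N_2}$ pairs with $\nu_g\geq s_n(N_1+N_2)$ gives the same exponential bound as in Lemma~\ref{lem:raw-one-sided}(i). Adding the lower-tail bound from that lemma (whose threshold $s_n^{-1/4}\leq r_{\sigma,n}$) produces the factor $2$.

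\textbf{Parts (ii)--(iii).} For the Frobenius component, the same Bernstein argument with $r_1=a_n^{-1/4}$ gives $\lVert Y\rVert_F\leq(1+\chi_n+a_n^{-1/4})\lVert\sigma\rVert_F$ outside probability $\exp(-\sqrt{a_n}/(9B^2))$. The dagger component is the main obstacle, because the upper bound must be compared with $\mathbbm E_F\lVert\epsilon\rVert_\dagger$, not with $S$.

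For the upper side, Jensen's inequality gives $\mathbbm E_F\lVert Y\rVert_\dagger\leq S$. The bounded-differences argument of Lemma~\ref{lem:raw-one-sided}, Step 3, applied to the upper tail with $t=\lambda_n^{1/4}S$, gives $\lVert Y\rVert_\dagger\leq(1+\lambda_n^{1/4})S$ outside probability $\exp(-1/(2\sqrt{\lambda_n}))$.

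It then remains to show that $S\leq(1+\chi_n+C\lambda_n^{1/4})\,\mathbbm E_F\lVert\epsilon\rVert_\dagger$. Write $S\leq(1-\chi)^{-1/2}\bigl(\sum_i\sqrt{v_i}+\sum_j\sqrt{w_j}\bigr)$, and bound $\mathbbm E_F\sqrt{\sum_j\epsilon_{ij}^2}$ from below using Lemma~\ref{lem:raw-sqrt}. With $X=\sum_j\epsilon_{ij}^2$, which has mean $v_i$ and variance at most $4B^2v_i$ (via $\operatorname{Var}(\epsilon^2)\leq4B^2\sigma^2$), this gives $\mathbbm E\sqrt X\geq\sqrt{v_i}-2B^2/\sqrt{v_i}$. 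The difficulty is that $v_i$ may be much smaller than $\bar q_i$ when $\chi$ is not negligible, so the error sum $\sum_i B^2/\sqrt{v_i}$ must be related to $H$. Using $v_i\geq(1-\chi)\bar q_i$ with $\chi_n\leq1/2$ bounds the error by $4B^2H=4\lambda S$.

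Rows with very small $\bar q_i$ need care. For these I would instead use the crude bound $\mathbbm E\sqrt X\geq0$ and truncate: rows with $\bar q_i<B^2\lambda_n^{-1/2}$ contribute at most $\lambda_n^{1/2}\cdot$ (their share of $S$) after a Cauchy--Schwarz split against $H$. This is why the constant becomes $15\lambda_n^{1/4}$ rather than $O(\lambda_n)$. Collecting terms gives $\mathbbm E_F\lVert\epsilon\rVert_\dagger\geq(1-\chi_n-c\lambda_n^{1/4})S$.

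\textbf{Assembly.} On the good events, $\widehat{\bar\tau}^{\,0}\leq(1+r_{\tau V,n})\bar\tau$ term by term, and $(\widehat V^{\,0})^2\leq(1+r_{\tau V,n})^2V^2$ term by term. The terms with coefficient $B$ are linear in the norms, which only helps. Combining with the lower bounds of Lemma~\ref{lem:raw-one-sided}, whose rate $r^-_{\tau V,n}\leq r_{\tau V,n}$, yields the two-sided statements with failure probability at most twice the sum of the two exponentials. Taking $r_n^*$ to be the maximum of the two rates gives Assumption~\ref{ass:scale-consistency}.
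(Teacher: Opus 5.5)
Your proposal is correct and follows the paper's proof essentially step for step: the bias is controlled through $\vartheta_{ij}\le(1-\chi_n)^{-1}\sigma_{ij}^2$, Bernstein handles the cell and Frobenius scales, Jensen plus bounded differences give $\lVert Y\rVert_\dagger\le(1+\lambda_n^{1/4})S$, and Lemma~\ref{lem:raw-sqrt} with $v_i\ge(1-\chi_n)\bar q_i$ bounds $S$ by $(1+\chi_n)(1+O(\lambda_n))\,\mathbbm E_F\lVert\epsilon\rVert_\dagger$ (the paper gets $(1+\chi_n)(1+8\lambda_n)$). Your truncation of low-energy rows is an unnecessary detour and you misidentify the source of the rate: the bound $\mathbbm E\sqrt X\ge\sqrt{v_i}-2B^2/\sqrt{v_i}$ is valid for every active row even when negative, its summed error is already $O(\lambda)S$ because $H$ weights the small rows, and the $\lambda_n^{1/4}$ in $r_{\tau V,n}$ comes from the bounded-differences deviation via $(1+\lambda_n^{1/4})(1+\chi_n+12\lambda_n)\le1+\chi_n+15\lambda_n^{1/4}$, not from truncation; also, keeping $\sqrt{1-\chi_n}$ instead of collecting to $1-\chi_n$ before inverting gives the clean factor $1+\chi_n$ you state as your target.
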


\begin{proof}
The underestimation halves follow from Lemma~\ref{lem:raw-one-sided}, because
the rates in the present lemma are larger.  It remains to control
overestimation.  For all sufficiently large $n$, assume
$\chi_n\leq1/2$ and $\lambda_n\leq1/8$.  Then, entrywise,
\[
\vartheta_{ij}
\leq
\frac{\sigma_{ij}^2}{1-\chi_n}
\leq
(1+2\chi_n)\sigma_{ij}^2.
\]

\medskip
\noindent\textbf{Part (i): cell-specific scale.}
Let
\[
\Theta_g=\sum_{ij}\vartheta_{ij}g_{i,1}g_{j,2}
\leq(1+2\chi_n)\nu_g.
\]
For $r\in[2\chi_n,1]$,
\begin{align*}
\{\widehat\sigma^{\,0}(g)>(1+r)\sigma(g)\}
&\subseteq
\left\{-\sum_{ij}\zeta_{ij}>(1+r)^2\nu_g-\Theta_g\right\}\\
&\subseteq
\left\{-\sum_{ij}\zeta_{ij}>r\nu_g\right\},
\end{align*}
where $\zeta_{ij}$ is defined in the proof of
Lemma~\ref{lem:raw-one-sided}.  The last inclusion follows because
\[
(1+r)^2-(1+2\chi_n)
=2r+r^2-2\chi_n\geq r.
\]
The summands $-\zeta_{ij}$ are centered, bounded above by $B^2$, and have
total variance at most $4B^2\nu_g$.  Applying
\eqref{eq:raw-bernstein}, taking
$r=2\chi_n+s_n^{-1/4}$, and union-bounding over both tails and all group pairs
gives the stated result.

\medskip
\noindent\textbf{Parts (ii)--(iii), Step 1: Frobenius upper tail.}
Summing the entrywise comparison gives
\[
\mathbbm E_F[\lVert Y\rVert_F^2]
=\sum_{ij}\vartheta_{ij}
\leq
(1+2\chi_n)\lVert \sigma\rVert_F^2.
\]
With $r_1=a_n^{-1/4}$,
\[
\mathbbm P_F\{\lVert Y\rVert_F^2>(1+2\chi_n+r_1)\lVert \sigma\rVert_F^2\}
\leq
\exp\left(-\frac{\sqrt{a_n}}{9B^2}\right).
\]
On the complementary event,
\[
\lVert Y\rVert_F
\leq
(1+2\chi_n+r_1)\lVert \sigma\rVert_F.
\]

\medskip
\noindent\textbf{Step 2: reverse comparison for the oracle dagger scale.}
For $i\in\mathcal I_+$, let
\[
T_i=\sum_j\epsilon_{ij}^2,
\qquad
\mathbbm E_F[T_i]=v_i.
\]
Since $|\epsilon_{ij}|\leq2B$,
\[
\operatorname{Var}_F(T_i)
\leq
\sum_j\mathbbm E_F[\epsilon_{ij}^4]
\leq4B^2v_i.
\]
Also $v_i\geq(1-\chi_n)\bar q_i>0$.  Lemma~\ref{lem:raw-sqrt} therefore
implies
\[
\mathbbm E_F[\sqrt{T_i}]
\geq
\sqrt{v_i}-\frac{2B^2}{\sqrt{v_i}},
\]
and analogously for active columns.  Summing yields
\begin{align}
\mathbbm E_F[\lVert \epsilon\rVert_\dagger]
&\geq
\sqrt{1-\chi_n}\,S
-
\frac{2B^2}{\sqrt{1-\chi_n}}H\nonumber\\
&=
\sqrt{1-\chi_n}\,S
\left(1-\frac{2\lambda}{1-\chi_n}\right).
\label{eq:raw-oracle-lower}
\end{align}
Because $\chi_n\leq1/2$ and $\lambda_n\leq1/8$,
\[
S
\leq
(1+\chi_n)(1+8\lambda_n)
\mathbbm E_F[\lVert \epsilon\rVert_\dagger]
\leq
(1+\chi_n+12\lambda_n)
\mathbbm E_F[\lVert \epsilon\rVert_\dagger].
\]

\medskip
\noindent\textbf{Step 3: dagger upper tail.}
The upper-tail form of Lemma~\ref{lem:raw-mcdiarmid}, together with the same
bound $L\leq\lambda S^2/B^2$, gives, for
$r_2=\lambda_n^{1/4}$,
\[
\mathbbm P_F\{\lVert Y\rVert_\dagger>\mathbbm E_F[\lVert Y\rVert_\dagger]+r_2S\}
\leq
\exp\left(-\frac{1}{2\sqrt{\lambda_n}}\right).
\]
On the complementary event, Jensen's inequality and the preceding display
give
\begin{align*}
\lVert Y\rVert_\dagger
&\leq(1+r_2)S\\
&\leq
(1+r_2)(1+\chi_n+12\lambda_n)
\mathbbm E_F[\lVert \epsilon\rVert_\dagger]\\
&\leq
(1+\chi_n+15\lambda_n^{1/4})
\mathbbm E_F[\lVert \epsilon\rVert_\dagger].
\end{align*}
For the last inequality, use $\chi_n\leq1/2$,
$\lambda_n\leq1/8$, and $\lambda_n\leq\lambda_n^{1/4}$.

\medskip
\noindent\textbf{Step 4: assembly.}
Let
\[
r_{\tau V,n}
=2\chi_n+15\lambda_n^{1/4}+a_n^{-1/4}.
\]
On the good events in Steps~1 and~3,
\[
\widehat{\bar\tau}^{\,0}
\leq
(1+r_{\tau V,n})\bar\tau.
\]
Likewise, term by term,
\[
(\widehat V^{\,0})^2
\leq
(1+r_{\tau V,n})V^2,
\]
which implies
\[
\widehat V^{\,0}\leq(1+r_{\tau V,n})V.
\]
Combining these upper-tail bounds with Lemma~\ref{lem:raw-one-sided} proves
the two-sided statements.
\end{proof}

\subsubsection{A shifted plug-in without a row-column energy-dispersion condition}
\label{sec:raw-shifted}

The dagger component of the unshifted raw plug-in requires the
energy-dispersion condition $\lambda_n\to0$ because rows or columns with small
expected energy make the square root severely concave. Shifting every energy
by $B^2$ removes this curvature problem at the cost of deliberate upward
bias: each zero-energy row or column contributes $B$ to the shifted norm even
though the corresponding oracle term is zero. The shifted plug-in therefore
provides a fallback when the energy-dispersion condition is considered
implausible, at the cost of additional conservatism.

\begin{corollary}[Shifted plug-in]
\label{cor:raw-shifted}
Suppose Assumption~\ref{ass:model} holds, with $B$ fixed across $\mathcal F$. Define
\[
\lVert Y\rVert_\dagger^{+}
=\sum_i\sqrt{q_i+B^2}+\sum_j\sqrt{p_j+B^2},
\]
\[
\widehat{\bar\tau}^{\,+}
=1.01\,\lVert Y\rVert_\dagger^{+}+0.25\,\lVert Y\rVert_F,
\qquad
\widehat V^{\,+}
=\left(\lVert Y\rVert_F^2+4B\,\lVert Y\rVert_\dagger^{+}
+B\,\lVert Y\rVert_F\right)^{1/2}.
\]
Suppose $a_n\to\infty$ and
\[
\underline{\kappa}_n\longrightarrow\infty.
\]
Define
\[
r_{\mathrm{sh},n}
=\underline{\kappa}_n^{-1/8}+a_n^{-1/4}.
\]
Then, for all sufficiently large $n$, $r_{\mathrm{sh},n}\leq1$ and
\[
\sup_{F\in\mathcal F(n)}\mathbbm P_F\left(
\frac{\bar\tau-\widehat{\bar\tau}^{\,+}}{\bar\tau}
>r_{\mathrm{sh},n}
\right)\longrightarrow0,
\]
\[
\sup_{F\in\mathcal F(n)}\mathbbm P_F\left(
\frac{V-\widehat V^{\,+}}{V}
>r_{\mathrm{sh},n}
\right)\longrightarrow0.
\]
More precisely, for all sufficiently large $n$,
\[
\begin{aligned}
&\sup_{F\in\mathcal F(n)}\mathbbm P_F\Big(
\widehat{\bar\tau}^{\,+}<(1-r_{\mathrm{sh},n})\bar\tau
\ \text{or}\ 
\widehat V^{\,+}<(1-r_{\mathrm{sh},n})V
\Big)\\
&\qquad\leq
\exp\left(-\frac{\sqrt{a_n}}{9B^2}\right)
+2\exp\left(
-\frac{3}{4\sqrt{2}}\,\underline{\kappa}_n^{3/8}
\right).
\end{aligned}
\]
\end{corollary}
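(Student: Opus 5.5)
The plan is to mirror Steps 1 and 4 of the proof of Lemma~\ref{lem:raw-one-sided}, and to replace the dagger component (Steps 2--3 there) by an argument that exploits the shift. The Frobenius component is unchanged: Step~1 of that proof gives $\lVert Y\rVert_F\geq(1-a_n^{-1/4})\lVert\sigma\rVert_F$ outside an event of probability at most $\exp(-\sqrt{a_n}/(9B^2))$. This term accounts for the first summand of the failure bound.

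It therefore suffices to show that
\[
\lVert Y\rVert_\dagger^{+}\geq(1-\underline{\kappa}_n^{-1/8})\,\mathbbm E_F\lVert\epsilon\rVert_\dagger
\]
outside an event of probability at most $2\exp(-\tfrac{3}{4\sqrt2}\underline\kappa_n^{3/8})$. Given that, the term-by-term assembly of Step~4 yields both one-sided statements for $\widehat{\bar\tau}^{\,+}$ and $\widehat V^{\,+}$, since $\sqrt{1-r}\geq 1-r$.

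\emph{Bias.} Since $Y_{ij}^2\in[0,B^2]$, Lemma~\ref{lem:raw-var}(a) gives $\operatorname{Var}_F(q_i)\leq B^2\bar q_i$. The last display of Lemma~\ref{lem:raw-sqrt} with $a=B^2$ then gives $\mathbbm E_F\sqrt{q_i+B^2}\geq\sqrt{\bar q_i}$ for every row, with no activity or dispersion condition, and the same holds for columns. Combined with \eqref{eq:raw-oracle-upper}, this gives $\mathbbm E_F\lVert Y\rVert^+_\dagger\geq S\geq\mathbbm E_F\lVert\epsilon\rVert_\dagger$. I also intend to extract a strict surplus from the shift, $\mathbbm E_F\sqrt{q_i+B^2}-\sqrt{\bar q_i}\gtrsim B^2/\sqrt{\bar q_i+B^2}\gtrsim B/\sqrt{N_2}$. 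Summed over rows and columns, this surplus is at least of order $B(N_1/\sqrt{N_2}+N_2/\sqrt{N_1})\geq B\sqrt{\kappa(F)}$. The surplus is the budget against which fluctuations are charged, so that no dispersion index $\lambda$ is needed.

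\emph{Concentration (main obstacle).} The row part $\sum_i\sqrt{q_i+B^2}$ and the column part $\sum_j\sqrt{p_j+B^2}$ are each sums of functions of disjoint blocks, so I treat them separately; this produces the factor $2$ in the failure bound. Each map $x\mapsto\sqrt{x+B^2}$ is $1/(2B)$-Lipschitz and has derivative at most $1/(2\sqrt{q_i+B^2})$. Hence changing one entry moves a row term by at most $B/2$, and the row term has variance of order $B^2$ (Efron--Stein with the derivative bound). Rows are independent, so Bernstein's inequality (Lemma~\ref{lem:bernstein}), applied to the independent row terms, should give a lower tail of the form $\exp\{-t^2/(c_1B^2N_1+c_2Bt)\}$.

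I then take $t$ equal to the surplus plus $\underline\kappa_n^{-1/8}$ times the smaller side's contribution, and aim to show that the exponent is at least $\tfrac{3}{4\sqrt2}\underline\kappa_n^{3/8}$. This bookkeeping is delicate. A crude global McDiarmid bound over all $N_1N_2$ entries is too weak here, since it only gives exponents of order $(N_1+N_2)^2/(N_1N_2)$, which is bounded. The proof must instead exploit the independence across rows and the $O(B^2)$ variance per row term, and balance the $\underline\kappa_n^{-1/8}$ slack against the Bernstein linear term; I expect the $3/8$ power to come out of exactly that balance.

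Finally, I note that $r_{\mathrm{sh},n}\leq1$ eventually because $\underline\kappa_n,a_n\to\infty$, and I take the union of the two failure events.
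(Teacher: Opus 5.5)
Your outer structure matches the paper's. You use Step~1 of Lemma~\ref{lem:raw-one-sided} for the Frobenius part, the last display of Lemma~\ref{lem:raw-sqrt} with $a=B^2$ for the bias, separate Bernstein bounds for the independent row and column sums, a union bound, and the Step~4 assembly. However, the step you flag as the main obstacle is left as an expectation ("should give", "I expect"), and the ingredients you sketch for it would not deliver the stated bound.

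\textbf{The linear term.} Lemma~\ref{lem:bernstein} needs an almost-sure bound on the summands $\mathbbm E_F X_i-X_i$, where $X_i=\sqrt{q_i+B^2}$. The per-entry bounded-difference constant $B/2$ is not that bound. Since $X_i\in[B,B\sqrt{N_2+1}]$, the linear term is $B\sqrt{N_2+1}\,t$, not $c_2Bt$. With a linear term of order $Bt$ no $3/8$ power would arise; that power comes exactly from the $\sqrt{N_2+1}$ range.

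\textbf{The surplus budget.} The per-row surplus $\gtrsim B^2/\sqrt{\bar q_i+B^2}$ is not supplied by Lemma~\ref{lem:raw-sqrt}. The second-order bound behind it yields only $B^4/\{2(\bar q_i+B^2)^{3/2}\}$ per row. Even granting your summed bound $B(N_1/\sqrt{N_2}+N_2/\sqrt{N_1})$, a deviation $t\asymp BN_1/\sqrt{N_2}$ against variance $N_1B^2$ and range $B\sqrt{N_2}$ gives a row exponent of order $N_1/N_2$. That stays bounded whenever $N_1\leq N_2$, even though $\kappa(F)\to\infty$. Adding $\underline\kappa_n^{-1/8}$ times a side's contribution to $\mathbbm E_F\lVert\epsilon\rVert_\dagger$ does not rescue it either. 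Under $a_n\to\infty$ alone, that contribution can be $o(BN_1)$ when $\lVert\sigma\rVert_F$ grows slowly or is concentrated on few rows.

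\textbf{The missing idea.} The shift makes every row term at least $B$ pointwise, so $\mathbbm E_F R\geq N_1B$ for $R=\sum_iX_i$. Take $t=r\,\mathbbm E_F R\geq rN_1B$ with $r=\underline\kappa_n^{-1/8}$. Use variance proxy $v=B^2N_1$, which follows from $\operatorname{Var}_F(X_i)\leq\operatorname{Var}_F(q_i)/(\bar q_i+B^2)\leq B^2$ via the identity $X_i-\sqrt{\bar q_i+B^2}=(q_i-\bar q_i)/(\sqrt{q_i+B^2}+\sqrt{\bar q_i+B^2})$; no Efron--Stein is needed. Use range $b=B\sqrt{N_2+1}$. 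Then the Bernstein exponent satisfies $t^2/\{2(v+bt/3)\}\geq\frac14\min\{r^2N_1,\,3rN_1/\sqrt{N_2+1}\}$. Since $N_1\geq\kappa(F)$ and $N_1/\sqrt{N_2+1}\geq\sqrt{\kappa(F)/2}$, the two exponents are at least $\underline\kappa_n^{3/4}/4$ and $\frac{3}{4\sqrt2}\underline\kappa_n^{3/8}$; the latter is the smaller once $\underline\kappa_n\geq8$. On the good row and column events, $\lVert Y\rVert_\dagger^{+}\geq(1-r)\,\mathbbm E_F\lVert Y\rVert_\dagger^{+}\geq(1-r)\,\mathbbm E_F\lVert\epsilon\rVert_\dagger$, using only your bias bound. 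No surplus accounting is needed.
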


\begin{proof}
By Lemma~\ref{lem:raw-var}(a),
$\operatorname{Var}_F(q_i)\leq B^2\bar q_i$, so the final
display of Lemma~\ref{lem:raw-sqrt}, with $a=B^2$, gives
\[
\mathbbm E_F[\sqrt{q_i+B^2}]\geq\sqrt{\bar q_i}
\]
for every row, including rows with $\bar q_i=0$, and analogously for columns.
Summing over all rows and columns and using Jensen's inequality as in
\eqref{eq:raw-oracle-upper},
\[
\mathbbm E_F[\lVert Y\rVert_\dagger^{+}]
\geq S
\geq\mathbbm E_F[\lVert \epsilon\rVert_\dagger].
\]

For the row component, write $R=\sum_iX_i$ with
$X_i=\sqrt{q_i+B^2}$. Distinct rows involve disjoint independent
coordinates, so the $X_i$ are independent under $\mathbbm P_F$. They
lie in $[B,B\sqrt{N_2+1}]$, and since
\[
X_i-\sqrt{\bar q_i+B^2}
=\frac{q_i-\bar q_i}
{\sqrt{q_i+B^2}+\sqrt{\bar q_i+B^2}},
\]
variance minimizes mean squared distance over constants, so
\[
\operatorname{Var}_F(X_i)
\leq
\frac{\operatorname{Var}_F(q_i)}{\bar q_i+B^2}
\leq\frac{B^2\bar q_i}{\bar q_i+B^2}
\leq B^2.
\]
Bernstein's inequality applied to
$\mathbbm E_F[X_i]-X_i$, with
$t=r\,\mathbbm E_F[R]$ and
$\mathbbm E_F[R]\geq N_1B$, gives
\[
\mathbbm P_F\{R<(1-r)\mathbbm E_F[R]\}
\leq
\exp\left[
 -\min\left\{
 \frac{r^2N_1}{4},
 \frac{3rN_1}{4\sqrt{N_2+1}}
 \right\}
\right],
\]
where
\[
\frac{t^2}{2(v+bt/3)}
\geq\frac14\min\left\{\frac{t^2}{v},\frac{3t}{b}\right\}
\]
with $v=B^2N_1$ and $b=B\sqrt{N_2+1}$. Take
$r=\underline{\kappa}_n^{-1/8}$. For every $F\in\mathcal F(n)$,
\[
N_1\geq\kappa(F)\geq\underline{\kappa}_n,
\qquad
\frac{N_1}{\sqrt{N_2+1}}
\geq\sqrt{\frac{\kappa(F)}{2}}
\geq\sqrt{\frac{\underline{\kappa}_n}{2}}.
\]
The two exponents are therefore at least
$\underline{\kappa}_n^{3/4}/4$ and
$\frac{3}{4\sqrt{2}}\underline{\kappa}_n^{3/8}$, respectively. For all
$\underline{\kappa}_n\geq8$, the latter is the smaller bound. The column
component satisfies the same inequality with $N_1$ and $N_2$ interchanged.
A union bound therefore gives, with $\mathbbm P_F$-probability at least
\[
1-2\exp\left(
-\frac{3}{4\sqrt{2}}\underline{\kappa}_n^{3/8}
\right),
\]
\[
\lVert Y\rVert_\dagger^{+}
\geq
(1-\underline{\kappa}_n^{-1/8})
\mathbbm E_F[\lVert \epsilon\rVert_\dagger].
\]

The Frobenius bound from Step~1 of the proof of
Lemma~\ref{lem:raw-one-sided}, with the same
$r_1=a_n^{-1/4}$, gives
\[
\lVert Y\rVert_F\geq(1-r_1)\lVert\sigma\rVert_F,
\qquad
\lVert Y\rVert_F^2\geq(1-r_1)\lVert\sigma\rVert_F^2
\]
except on an event of $\mathbbm P_F$-probability at most
$\exp\{-\sqrt{a_n}/(9B^2)\}$. On the intersection of these events, the
term-by-term assembly in Step~4 of that proof gives
\[
\widehat{\bar\tau}^{\,+}
\geq(1-r_{\mathrm{sh},n})\bar\tau,
\qquad
(\widehat V^{\,+})^2
\geq(1-r_{\mathrm{sh},n})V^2.
\]
Since $r_{\mathrm{sh},n}\leq1$ eventually,
$\widehat V^{\,+}\geq(1-r_{\mathrm{sh},n})V$. The coverage claim follows
by repeating the proof of Corollary~\ref{cor:raw-coverage}, which uses only
the one-sided lower bounds.
\end{proof}

\subsubsection{Remarks}\label{sec:raw-remarks}

\begin{remark}[Binary and undirected networks]\label{rem:raw-binary}
For a rectangular binary network, $B=1$, $q_i$ and $p_j$ are the degrees on
the two sides, and
\[
\chi_n=\sup_{F\in\mathcal F(n)}\max_{ij}\mu_{ij}.
\]
Since $\operatorname{Var}_F(Y_{ij}^2)=\sigma_{ij}^2$, the coefficient
$1/(9B^2)$ in the Bernstein exponents can also be sharpened to $3/8$.

For a loopless undirected network represented as an upper-triangular matrix,
\[
q_i=\sum_{j>i}Y_{ij}^2,
\qquad
p_i=\sum_{j<i}Y_{ji}^2.
\]
In the binary case, the ordinary degree of node $i$ is $q_i+p_i$.  A dyad $Y_{ij}$ with $i<j$ affects one row-energy term and one column-energy term, so its influence on $\lVert Y\rVert_\dagger$ is at most $2B$.
The proofs above therefore apply directly under the paper's upper-triangular
convention. The row components entering Corollary~\ref{cor:raw-shifted}
remain mutually independent under this convention, as do the column
components, because distinct rows and distinct columns of the upper-triangular
matrix involve disjoint dyads. The row and column collections need not be
independent of one another, because the proof treats them separately and
combines the two events by a union bound. When $N_1=N_2=N$,
$\kappa(F)=N$, so the uniform dimension condition in that corollary reduces
to $N\to\infty$.
\end{remark}

\begin{remark}[Interpretation of $\lambda$ and empirical diagnostics]
\label{rem:raw-lambda}
Pooling active rows and columns and assigning weights proportional to the
square root of expected energy gives
\[
\lambda
=\sum_k\omega_k\frac{B^2}{e_k},
\]
where $e_k$ denotes the relevant expected row or column energy.  Thus
$1/\lambda$ is a root-energy-weighted harmonic mean of $e_k/B^2$.  In a
homogeneous model, $\lambda=B^2/e$, so $\lambda_n\to0$ requires diverging
expected energy.  The condition is weaker than requiring the minimum active
energy to diverge because sufficiently low-energy rows or columns may carry
negligible square-root-energy mass.

A sample analogue formed from $q_i$ and $p_j$ may be reported as a descriptive
diagnostic, but the lemmas above do not establish that it consistently
estimates $\lambda$.  In particular, omitting rows or columns with zero
\emph{observed} energy can remove precisely the low-expected-energy terms that
make the population index large.  Likewise, a small maximum density among a
collection of large cells does not verify $\chi_n\to0$, because cell densities
are averages whereas $\chi_n$ is an entrywise maximum.
\end{remark}

\begin{remark}[A fixed signal-share bound]\label{rem:raw-conservatism}
Suppose $\chi_n\leq\bar\chi<1$, not necessarily with $\bar\chi\to0$.  Under
the remaining conditions of the relevant part of
Lemma~\ref{lem:raw-one-sided}, the same comparisons imply
\[
\widehat\sigma^{\,0}(g)
\leq
\{(1-\bar\chi)^{-1/2}+o_p(1)\}\sigma(g)
\]
uniformly over $g\in\mathcal G_c$, and analogously
\[
\widehat{\bar\tau}^{\,0}
\leq
\{(1-\bar\chi)^{-1/2}+o_p(1)\}\bar\tau,
\qquad
\widehat V^{\,0}
\leq
\{(1-\bar\chi)^{-1/2}+o_p(1)\}V.
\]
Thus Lemma~\ref{lem:raw-one-sided} still gives coverage, with asymptotic width
inflation bounded by the displayed constant.  When $\chi_n\to0$, this factor
converges to one and Lemma~\ref{lem:raw-two-sided} recovers full calibration.

The global nondegeneracy condition is redundant under a fixed signal-share
bound and $\lambda_n\to0$.  To see this, let
$T=\sum_{ij}\vartheta_{ij}$.  Since every active row or column energy is at
most $T$,
\[
H\geq\frac{S}{T},
\qquad
\lambda=B^2\frac{H}{S}\geq\frac{B^2}{T}.
\]
Hence $T\geq B^2/\lambda$, and
\[
\lVert \sigma\rVert_F^2\geq(1-\bar\chi)T
\geq\frac{(1-\bar\chi)B^2}{\lambda}.
\]
\end{remark}

\begin{remark}[Which results use which halves]\label{rem:raw-halves}
The proofs of Propositions~\ref{prop:propone} and~\ref{prop:proptwo} use only the underestimation halves of
Assumption~\ref{ass:scale-consistency}. Proposition~\ref{prop:propone} needs
$\widehat\sigma^{\,0}(g)\geq(1-o_p(1))\sigma(g)$ uniformly in $\mathbbm P_F$-probability, and Proposition~\ref{prop:proptwo} needs $\widehat{\bar\tau}^{\,0}\geq(1-o_p(1))\bar\tau$ and $\widehat V^{\,0}\geq(1-o_p(1))V$.  Lemma~\ref{lem:raw-one-sided} is therefore enough for simultaneous coverage.  The optimality results use all of Assumption~\ref{ass:scale-consistency}. Proposition~\ref{prop:propthree} uses two-sided consistency of the cell scale on its least-favorable subclass, and Proposition~\ref{prop:propfour} uses an upper bound on $\widehat{\bar\tau}^{\,0}$.  Those requirements are supplied by Lemma~\ref{lem:raw-two-sided}.
\end{remark}

\subsection{Residual plug-in based on singular-value thresholding}
\label{sec:usvt}
The third approach estimates the conditional mean matrix $\mu$ using the singular-value-thresholding estimator of \citet{chatterjee2015matrix}. It then forms the residuals $\hat\epsilon_{ij}=Y_{ij}-\hat\mu_{ij}$ and uses them as plug-ins into formulas for $\sigma(g_1,g_2)$, $\bar{\tau}$, and $V$. The main restriction justifying this approach is that $\mu$ is effectively low rank. Section~\ref{sec:usvt-algorithm} gives the estimator,
Section~\ref{sec:usvt-assumptions} states sufficient conditions,
Section~\ref{sec:usvt-consistency} gives the consistency results, and
Section~\ref{sec:usvt-proofs} contains their proofs. Sections
\ref{sec:sparse-thresholds} and~\ref{sec:cbar-calibration} develop a sparse
threshold and its degree-adaptive calibration.

\subsubsection{Algorithm}\label{sec:usvt-algorithm}
The following algorithm builds on Section 1.2 of \citet{chatterjee2015matrix}. Since there is no missing data in our setting, $\hat p=1$. Let
\[
N_{\min}:=\min\{N_1,N_2\},
\qquad
N_{\max}:=\max\{N_1,N_2\}.
\]
The researcher chooses a fixed $\eta\in(0,1)$ that does not depend on the data or vary with $N_1,N_2$. \citet{chatterjee2015matrix} suggests $\eta=0.01$.

\begin{itemize}
\item[1.] Define $\tilde Y:=Y/B$, so the entries of $\tilde Y$ are bounded by $1$ in absolute value.

\item[2.] Let
\[
\tilde Y=\sum_{i\in[N_{\min}]}s_i u_i v_i^T
\]
be a singular value decomposition, with the singular values in nonincreasing order.

\item[3.] Define
\[
S:=\left\{i:s_i\geq(2+\eta)\sqrt{N_{\max}}\right\},
\qquad
W:=\sum_{i\in S}s_i u_i v_i^T.
\]

\item[4.] Let $\Pi_{[-1,1]}$ denote entrywise Euclidean projection onto $[-1,1]$, and define
\[
\hat\mu:=B\,\Pi_{[-1,1]}(W).
\]

\item[5.] Define
\begin{align*}
\hat{\bar\tau}
&:=1.01\|Y-\hat\mu\|_{\dagger}+0.25\|Y-\hat\mu\|_F,\\
\hat V
&:=\left(\|Y-\hat\mu\|_F^2+B\|Y-\hat\mu\|_F
+4B\|Y-\hat\mu\|_{\dagger}\right)^{1/2},\\
\hat\sigma(g_1,g_2)
&:=\frac{\|(Y-\hat\mu)g_1g_2\|_F}{\sqrt{m_1m_2}},
\qquad (g_1,g_2)\in\mathcal G_c.
\end{align*}
Here $(Y-\hat\mu)g_1g_2$ denotes the matrix with $ij$th entry
$(Y_{ij}-\hat\mu_{ij})g_{i,1}g_{j,2}$; equivalently,
\[
(Y-\hat\mu)g_1g_2
=\operatorname{diag}(g_1)(Y-\hat\mu)\operatorname{diag}(g_2).
\]
\end{itemize}

\begin{remark}\label{rem:usvt-undirected}
For an undirected loopless network represented by the upper-triangular convention of Section~\ref{sec:model-terminology}, all spectral estimators in this section are applied to the symmetrized matrix
\[
Y^s:=Y+Y^T,
\qquad
\mu^s:=\mu+\mu^T.
\]
After thresholding, the estimated diagonal is set to zero and the upper-triangular entries are used to define $\hat\mu$ and the residual estimators in Step 5. Assumptions~\ref{ass:usvt-signal-noise} and \ref{ass:usvt-low-rank} below are imposed on the symmetrized mean matrix $\mu^s$; the corresponding Frobenius noise scale satisfies $\|\sigma^s\|_F=\sqrt2\|\sigma\|_F$. If the estimated diagonal is set to zero, then
\[
\|\hat\mu^s-\mu^s\|_F^2=2\|\hat\mu-\mu\|_F^2,
\]
so all mean-estimation bounds transfer to the upper-triangular representation up to fixed constants. This symmetrization is important: the upper-triangular representation of even a homogeneous off-diagonal mean matrix need not be effectively low rank, whereas its symmetrization has a uniformly bounded nuclear-to-Frobenius norm ratio.
\end{remark}

\begin{remark}[The rank-zero branch and the unshifted raw plug-in]
\label{rem:usvt-rankzero}
If the retained singular-value set in Step~3 is empty, then $W=0$ and
$\hat\mu=0$. On that event, the residual estimators in this section coincide
numerically with the unshifted raw plug-ins
$\widehat\sigma^{\,0}$, $\widehat{\bar\tau}^{\,0}$, and
$\widehat V^{\,0}$ from Section~\ref{sec:raw-plugin}. A rank-zero estimate in
one realized dataset does not by itself transfer the repeated-sampling results
for the unshifted raw plug-ins to the adaptive spectral procedure, because the
event $\{\hat\mu=0\}$ depends on $Y$.

There is, however, a formal high-probability connection. Let
$E_n=\{\hat\mu=0\}$. For any failure event $A_n$ appearing in the one-sided
or two-sided scale requirements, the spectral and unshifted raw estimators
agree on $E_n$. Thus, for every fixed $F$,
\[
\mathbbm P_F(A_n^{\mathrm{USVT}})
\leq
\mathbbm P_F(A_n^{0})+\mathbbm P_F(E_n^c).
\]
Consequently, if the conditions of the relevant unshifted raw-plugin result in
Section~\ref{sec:raw-plugin} hold and
\[
\sup_{F\in\mathcal F(n)}
\mathbbm P_F(\hat\mu\neq0)\longrightarrow0,
\]
then the USVT residual estimators inherit that result: the one-sided coverage
conclusion under Lemma~\ref{lem:raw-one-sided}, or
Assumption~\ref{ass:scale-consistency} under
Lemma~\ref{lem:raw-two-sided}. This numerical identity does not apply to the
shifted plug-ins in Section~\ref{sec:raw-shifted}, which replace
$\lVert Y\rVert_\dagger$ by $\lVert Y\rVert_\dagger^{+}$. When the rank-zero
event does not occur with probability approaching one, the general USVT
results below remain the relevant justification.
\end{remark}

The dense threshold in Step 3 can be conservative for sparse Bernoulli networks. Section~\ref{sec:sparse-thresholds} gives a formal alternative under additional sparse-graph regularity conditions.

\subsubsection{Assumptions}\label{sec:usvt-assumptions}
We impose four additional assumptions on the class $\mathcal F$. Assumption~\ref{ass:usvt-nondegeneracy} is a nondegeneracy condition, Assumption~\ref{ass:usvt-signal-noise} compares the magnitude of the systematic and idiosyncratic components, Assumption~\ref{ass:usvt-low-rank} is an effective low-rank condition, and Assumption~\ref{ass:usvt-rates} requires the mean-estimation and empirical residual-norm errors to be negligible at the scales used by the confidence intervals.

The first additional assumption is a lower bound on the amount of residual variation.
\begin{assumption}\label{ass:usvt-nondegeneracy}
\begin{itemize}
\item[w.] 
\[
\liminf_{F\in\mathcal F:N_1,N_2\to\infty}
\frac{\|\sigma\|_F^2}{N_1+N_2}=\infty.
\]
\item[s.]
\[
\liminf_{F\in\mathcal F:N_1,N_2\to\infty}
\frac{S_{\sigma,c}}{\sqrt{N_1+N_2}}=\infty,
\qquad
S_{\sigma,c}:=\min_{(g_1,g_2)\in\mathcal G_c}
\|\sigma g_1g_2\|_F.
\]
\end{itemize}
\end{assumption}

Assumption~\ref{ass:usvt-nondegeneracy}(w) requires the total residual variance to diverge faster than $N_1+N_2$. In a balanced Bernoulli network with typical expected degree $d_N$, it roughly requires $d_N\to\infty$. Thus it permits sparse networks with diverging average degree, but excludes bounded-degree regimes. This is stronger than the weak nondegeneracy condition needed for the oracle $CI_2$, reflecting the additional difficulty of estimating its scale quantities from one observed network. Assumption~\ref{ass:usvt-nondegeneracy}(s) imposes the analogous requirement uniformly over the large group pairs in $\mathcal G_c$.

The second additional assumption bounds the systematic component relative to the residual component.
\begin{assumption}\label{ass:usvt-signal-noise}
\[
\limsup_{F\in\mathcal F:N_1,N_2\to\infty}
\frac{\|\mu\|_F}{\|\sigma\|_F}<\infty.
\]
\end{assumption}

If this condition is badly violated, then a small error in estimating the mean matrix can be large relative to the residual scale that the plug-in estimators are intended to recover.

The third additional assumption is an effective low-rank condition. Let $\|\mu\|_*$ denote the nuclear norm of $\mu$.
\begin{assumption}\label{ass:usvt-low-rank}
\[
\limsup_{F\in\mathcal F:N_1,N_2\to\infty}
\frac{\|\mu\|_*}{\|\mu\|_F}<\infty,
\]
where the ratio is defined to be zero when $\mu=0$.
\end{assumption}

Assumption~\ref{ass:usvt-low-rank} is weaker than uniformly bounded rank because
\[
\frac{\|\mu\|_*}{\|\mu\|_F}
\leq\sqrt{\operatorname{rank}(\mu)}.
\]
It instead requires the singular values to be sufficiently concentrated. The condition is satisfied by fixed-rank interactive fixed-effects models, stochastic blockmodels with a fixed number of blocks, random dot-product graphs with fixed latent dimension, and some gravity, beta-model, or latent-space specifications under additional finite-dimensional or smoothness restrictions. See, for example, Section 3 of \citet{alidaee2020recovering}. Related nuclear-norm restrictions are used by \citet{moon2018nuclear,beyhum2019square,chernozhukov2023inference}. We regard Assumption~\ref{ass:usvt-low-rank} as a substantive and potentially restrictive condition.

For the fourth assumption, define
\begin{align*}
L_N&:=\sqrt{N_1}+\sqrt{N_2},\\
q_N&:=B\sqrt{N_{\max}},\\
T_\sigma
&:=\sum_{i\in[N_1]}\left(\sum_{j\in[N_2]}\sigma_{ij}^2\right)^{1/2}
+\sum_{j\in[N_2]}\left(\sum_{i\in[N_1]}\sigma_{ij}^2\right)^{1/2}.
\end{align*}
\begin{assumption}\label{ass:usvt-rates}
\begin{itemize}
\item[w.] There exists a deterministic sequence $r=r_{N_1,N_2}>0$ such that $r\to0$ and
\begin{align*}
&\liminf_{F\in\mathcal F:N_1,N_2\to\infty}
\frac{\mathbbm E_F\|\epsilon\|_\dagger}{T_\sigma}>0,\\
&\liminf_{F\in\mathcal F:N_1,N_2\to\infty}
\frac{rT_\sigma}{\sqrt{N_1N_2}}=\infty,\\
&\limsup_{F\in\mathcal F:N_1,N_2\to\infty}
\frac{L_N\sqrt{q_N\|\sigma\|_F}}{rT_\sigma}=0.
\end{align*}

\item[s.] There exists a deterministic sequence $r=r_{N_1,N_2}>0$ such that $r\to0$ and
\begin{align*}
&\liminf_{F\in\mathcal F:N_1,N_2\to\infty}
\frac{rS_{\sigma,c}}{\sqrt{N_1+N_2}}=\infty,\\
&\limsup_{F\in\mathcal F:N_1,N_2\to\infty}
\frac{\sqrt{q_N\|\sigma\|_F}}{rS_{\sigma,c}}=0.
\end{align*}
\end{itemize}
\end{assumption}

The first condition in Assumption~\ref{ass:usvt-rates}(w) says that $T_\sigma$ gives the correct order of $\mathbbm E_F\|\epsilon\|_\dagger$. The second guarantees concentration of the empirical row-column and Frobenius residual norms at the scale needed for $CI_2$. The final condition makes the USVT mean-estimation error negligible relative to $rT_\sigma$. Assumption~\ref{ass:usvt-rates}(s) supplies the uniform concentration and mean-estimation rates needed for $\hat\sigma(g_1,g_2)$.

\subsubsection{Consistency}\label{sec:usvt-consistency}
Our first result gives sufficient conditions for the two scale estimators used by $CI_2$.

\begin{proposition}\label{prop:usvt-bartau-V}
Suppose Assumptions~\ref{ass:model}, \ref{ass:usvt-nondegeneracy}(w), \ref{ass:usvt-signal-noise}, \ref{ass:usvt-low-rank}, and \ref{ass:usvt-rates}(w) hold, and construct $\hat{\bar\tau}$ and $\hat V$ according to Section~\ref{sec:usvt-algorithm}. Then Assumption~\ref{ass:nondegeneracy}(w) of the main text holds. Moreover, there exists a deterministic sequence $r_5\to0$ such that
\begin{align*}
\lim_{F\in\mathcal F:N_1,N_2\to\infty}
\mathbbm P_F\left(
\left|\frac{\hat{\bar\tau}-\bar\tau}{\bar\tau}\right|>r_5
\right)&=0,\\
\lim_{F\in\mathcal F:N_1,N_2\to\infty}
\mathbbm P_F\left(
\left|\frac{\hat V-V}{V}\right|>r_5
\right)&=0.
\end{align*}
Thus Assumptions~\ref{ass:scale-consistency}(ii) and \ref{ass:scale-consistency}(iii) of the main text hold.
\end{proposition}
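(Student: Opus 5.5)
The plan is to compare the residual matrix $\hat\epsilon:=Y-\hat\mu=\epsilon+(\mu-\hat\mu)$ with the true residual $\epsilon$, and then compare $\epsilon$ with its expectation. Write $\Delta:=\mu-\hat\mu$. The norms $\|\cdot\|_\dagger$ and $\|\cdot\|_F$ obey the triangle inequality, so
\[
\bigl|\|\hat\epsilon\|_\dagger-\|\epsilon\|_\dagger\bigr|\leq\|\Delta\|_\dagger\leq L_N\|\Delta\|_F,
\qquad
\bigl|\|\hat\epsilon\|_F-\|\epsilon\|_F\bigr|\leq\|\Delta\|_F .
\]
Here $\|X\|_{1,2}\leq\sqrt{N_1}\|X\|_F$ by Cauchy--Schwarz, and the same bound holds for $\|X^T\|_{1,2}$ with $\sqrt{N_2}$. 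The argument then has three pieces.

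\textbf{Mean-estimation error.} Following Theorem 1.1 of \citet{chatterjee2015matrix} (the USVT analysis), under Assumption~\ref{ass:model} I expect
\[
\mathbbm E_F\|\hat\mu-\mu\|_F^2\lesssim q_N\|\mu\|_* + (\text{lower-order terms}).
\]
Their proof actually bounds $\|W-\tilde\mu\|_F^2$ on the event $\{\|\tilde\epsilon\|_{op}\leq(1+\eta/2)\sqrt{N_{\max}}\}$, and that event has probability approaching one by a matrix concentration inequality for bounded independent entries. Assumptions~\ref{ass:usvt-low-rank} and \ref{ass:usvt-signal-noise} then give $\|\mu\|_*\lesssim\|\mu\|_F\lesssim\|\sigma\|_F$, so $\|\Delta\|_F=O_p(\sqrt{q_N\|\sigma\|_F})$. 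The third condition of Assumption~\ref{ass:usvt-rates}(w) then makes $L_N\|\Delta\|_F=o_p(rT_\sigma)$.

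\textbf{Concentration of the true residual norms.} By bounded differences (Lemma~\ref{lem:raw-mcdiarmid}), $\|\epsilon\|_\dagger$ changes by at most $4B$ when one entry is changed. Hence its fluctuations are $O_p(\sqrt{N_1N_2})$, which is $o(rT_\sigma)$ by the second condition of Assumption~\ref{ass:usvt-rates}(w). For the Frobenius norm, the Bernstein argument of Lemma~\ref{lem:residual-norm-relations} gives $\|\epsilon\|_F=\|\sigma\|_F(1+o_p(1))$, with uniformity coming from Assumption~\ref{ass:usvt-nondegeneracy}(w).

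\textbf{Assembly.} Jensen gives $\mathbbm E_F\|\epsilon\|_\dagger\leq T_\sigma$, and the first condition of Assumption~\ref{ass:usvt-rates}(w) gives the reverse inequality up to a constant, so $T_\sigma\asymp\mathbbm E_F\|\epsilon\|_\dagger$. Combining the pieces,
\[
\|\hat\epsilon\|_\dagger=\mathbbm E_F\|\epsilon\|_\dagger\,(1+o_p(1)),
\qquad
\|\hat\epsilon\|_F=\|\sigma\|_F\,(1+o_p(1)),
\]
with deterministic rates built from $r$ and the ratios in the assumptions. Since $\bar\tau$ and $V^2$ are positive combinations of $\mathbbm E_F\|\epsilon\|_\dagger$, $\|\sigma\|_F$ and $\|\sigma\|_F^2$, relative consistency of each term passes to $\hat{\bar\tau}$ and $\hat V$ by the same term-by-term argument as in Step 4 of Lemma~\ref{lem:raw-one-sided}. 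The $V$ step uses $\hat V/V\leq\max$ of the ratios of the components, together with the square root. Assumption~\ref{ass:nondegeneracy}(w) holds immediately because Assumption~\ref{ass:usvt-nondegeneracy}(w) forces $\|\sigma\|_F^2\geq N_1+N_2\to\infty$ eventually.

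\textbf{Main obstacle.} The hard step is the uniform, high-probability USVT bound $\|\hat\mu-\mu\|_F^2\lesssim q_N\|\mu\|_*$. Chatterjee's result is stated in expectation, with normalizations and a projection step that need translation. I would redo his deterministic lemma: if $\|\tilde Y-\tilde\mu\|_{op}\leq(1+\eta/2)\sqrt{N_{\max}}$, then
\[
\|W-\tilde\mu\|_F^2\leq C_\eta\sqrt{N_{\max}}\,\|\tilde\mu\|_*,
\]
and projection onto $[-1,1]$ does not increase the error. This reduces the problem to an operator-norm concentration bound for a matrix with bounded independent entries, such as Latała's bound or the Bandeira--van Handel bound. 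Uniformity over $\mathcal F$ follows because every constant depends only on $B$ and $\eta$.
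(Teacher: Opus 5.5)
Your route is essentially the paper's. You split $Y-\hat\mu=\epsilon-\Delta$ and bound $\|\Delta\|_F$ by a deterministic singular-value-thresholding inequality on a high-probability operator-norm event, using that projection is nonexpansive. You then pass to $\|\Delta\|_\dagger\leq L_N\|\Delta\|_F$, control $\|\epsilon\|_\dagger$ by bounded differences and $\|\epsilon\|_F$ by Bernstein, and assemble term by term using $T_\sigma\asymp\mathbbm E_F\|\epsilon\|_\dagger$.

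The step that fails as written is the one you flag as the main obstacle. The event $\{\|\tilde\epsilon\|_{\mathrm{op}}\leq(1+\eta/2)\sqrt{N_{\max}}\}$ need not have probability tending to one. The entries of $\tilde\epsilon=(Y-\mu)/B$ can have variance close to one, and then $\|\tilde\epsilon\|_{\mathrm{op}}\approx\sqrt{N_1}+\sqrt{N_2}$. For example, take $N_1=N_2=N$ and $Y_{ij}=\pm B$ with probability $1/2$ each. With $\mu=0$ and, say, $r=N^{-1/8}$, this satisfies every hypothesis of the proposition, yet $\|\tilde\epsilon\|_{\mathrm{op}}/\sqrt N\to2$. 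The dense threshold $(2+\eta)\sqrt{N_{\max}}$ is calibrated to that constant $2$. The event you actually need is $\{\|\tilde\epsilon\|_{\mathrm{op}}\leq(2+\eta_0)\sqrt{N_{\max}}\}$ for a fixed $\eta_0\in(0,\eta)$. You then apply the deterministic lemma with a gap $\delta_0$ satisfying $(1+\delta_0)(2+\eta_0)\leq2+\eta$, which is the paper's Lemma~\ref{lem:svt-oracle}.

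Proving that event requires an operator-norm bound with the sharp leading constant $2$ for independent, bounded, not necessarily symmetric entries. The tools you name do not supply it. Latała's inequality has an unspecified absolute constant. The Bandeira--van Handel bound for non-symmetrically distributed entries loses a factor $\sqrt2$ through symmetrization; the paper's sparse threshold in Appendix~\ref{sec:sparse-thresholds} carries exactly this factor. That bound therefore gives only about $2\sqrt2\sqrt{N_{\max}}$, which lies above the threshold whenever $\eta<2\sqrt2-2$, including the suggested $\eta=0.01$.

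The paper closes this step with Theorem 3.4 of \citet{chatterjee2015matrix}, applied to $\tilde\epsilon/2$, whose entries are bounded by one with variances at most $1/4$. This gives $\mathbbm P_F(\|\tilde\epsilon\|_{\mathrm{op}}>(2+\eta_0)\sqrt{N_{\max}})\to0$ uniformly over $\mathcal F$. With that substitution the rest of your argument goes through. The only other point to make explicit is $\|\Delta\|_F=o_p(\|\sigma\|_F)$, which you need for $\|Y-\hat\mu\|_F=\|\sigma\|_F(1+o_p(1))$. It follows from $T_\sigma\leq L_N\|\sigma\|_F$ and the third condition of Assumption~\ref{ass:usvt-rates}(w).
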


Our second result gives sufficient conditions for the cell-specific scale estimator used by $CI_1$.

\begin{proposition}\label{prop:usvt-sigma}
Suppose Assumptions~\ref{ass:model}, \ref{ass:usvt-nondegeneracy}(s), \ref{ass:usvt-signal-noise}, \ref{ass:usvt-low-rank}, and \ref{ass:usvt-rates}(s) hold, and construct $\hat\sigma(g_1,g_2)$ according to Section~\ref{sec:usvt-algorithm}. Then Assumption~\ref{ass:nondegeneracy}(s) of the main text holds. Moreover, there exists a deterministic sequence $r_6\to0$ such that
\[
\lim_{F\in\mathcal F:N_1,N_2\to\infty}
\mathbbm P_F\left(
\max_{(g_1,g_2)\in\mathcal G_c}
\left|
\frac{\hat\sigma(g_1,g_2)-\sigma(g_1,g_2)}
{\sigma(g_1,g_2)}
\right|>r_6
\right)=0.
\]
Thus Assumption~\ref{ass:scale-consistency}(i) of the main text holds.
\end{proposition}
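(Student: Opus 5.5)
The plan is to compare $\hat\sigma(g_1,g_2)$ with $\sigma(g_1,g_2)$ through the oracle quantity $\|\epsilon g_1g_2\|_F/\sqrt{m_1m_2}$, uniformly over $\mathcal G_c$. Write $\nu_g:=\|\sigma g_1g_2\|_F^2=m_1m_2\sigma(g_1,g_2)^2$, so that $\nu_g\geq S_{\sigma,c}^2$ for every $(g_1,g_2)\in\mathcal G_c$.

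\textbf{Step 0: Assumption~\ref{ass:nondegeneracy}(s) holds.} Since $m_1m_2\leq N_1N_2$, we have $\sigma(g_1,g_2)^2\geq S_{\sigma,c}^2/(N_1N_2)$. Hence
\[
\min(N_1,N_2)\min_{(g_1,g_2)\in\mathcal G_c}\sigma(g_1,g_2)^2\geq \frac{S_{\sigma,c}^2}{N_{\max}}\geq \frac{S_{\sigma,c}^2}{N_1+N_2}.
\]
The right-hand side diverges uniformly by Assumption~\ref{ass:usvt-nondegeneracy}(s). This is the first claim.

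\textbf{Step 1: decomposition.} Since $Y-\hat\mu=\epsilon+(\mu-\hat\mu)$ and $\|(\mu-\hat\mu)g_1g_2\|_F\leq\|\mu-\hat\mu\|_F$, the triangle inequality gives, pathwise and for all $(g_1,g_2)\in\mathcal G_c$,
\[
\frac{|\hat\sigma(g_1,g_2)-\sigma(g_1,g_2)|}{\sigma(g_1,g_2)}
\leq \frac{\bigl|\|\epsilon g_1g_2\|_F-\sqrt{\nu_g}\bigr|}{\sqrt{\nu_g}}+\frac{\|\mu-\hat\mu\|_F}{S_{\sigma,c}}.
\]
It therefore suffices to show two things, uniformly over $\mathcal F(n)$. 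The first term must be at most a multiple of $r$ uniformly in $g$ with probability tending to one, where $r$ is the sequence of Assumption~\ref{ass:usvt-rates}(s). The second term must be $o(r)$, or at least $o(1)$, with probability tending to one. Then $r_6$ can be taken as a constant multiple of $r$.

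\textbf{Step 2: uniform concentration of $\|\epsilon g_1g_2\|_F^2$.} This follows Part (i) of Lemma~\ref{lem:raw-one-sided}. The variables $(\sigma_{ij}^2-\epsilon_{ij}^2)g_{i,1}g_{j,2}$ are independent, centered, bounded by $4B^2$, and have total variance at most $4B^2\nu_g$. Bernstein's inequality (Lemma~\ref{lem:bernstein}), applied to both tails, gives
\[
\mathbbm P_F\bigl(|\|\epsilon g_1g_2\|_F^2-\nu_g|>r\nu_g\bigr)\leq 2\exp(-C r^2\nu_g/B^2)
\]
for a universal constant $C$. We take a union bound over the at most $2^{N_1+N_2}$ group pairs and use $\nu_g\geq S_{\sigma,c}^2$. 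The total failure probability is then at most $2\exp\{(N_1+N_2)(\ln 2-Cr^2S_{\sigma,c}^2/(B^2(N_1+N_2)))\}$. This tends to zero because $rS_{\sigma,c}/\sqrt{N_1+N_2}\to\infty$ by the first part of Assumption~\ref{ass:usvt-rates}(s). Since $|\sqrt{x}-1|\leq|x-1|$ for $x\geq 0$, the first term in Step 1 is then at most $r$ uniformly in $g$.

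\textbf{Step 3: the USVT error, which is the main obstacle.} The aim is $\|\hat\mu-\mu\|_F^2\lesssim q_N\|\mu\|_*$ with probability tending to one, uniformly over $\mathcal F(n)$. I would follow Chatterjee's deterministic lemma, applied to $\tilde Y=\mu/B+\epsilon/B$.
\begin{itemize}
\item[(a)] Show that on the event $\|\epsilon\|_{op}\leq(2+\eta/2)B\sqrt{N_{\max}}$, hard thresholding at $(2+\eta)\sqrt{N_{\max}}$ gives $\|W-\mu/B\|_F^2\leq K(\eta)\sqrt{N_{\max}}\,\|\mu/B\|_*$.
\item[(b)] Note that entrywise projection onto $[-1,1]$ cannot increase the error, since $\mu/B\in[-1,1]$ entrywise.
\item[(c)] Show that the spectral-norm event has probability tending to one uniformly. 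For independent mean-zero entries bounded by $2B$ with variances $\sigma_{ij}^2\leq B^2$, Chatterjee's Theorem 3.4 or a Bandeira--van Handel type bound should deliver this. I expect the exact constant $2+\eta/2$ to need care here; if a cleaner constant is unavailable, I would adjust the threshold argument.
\end{itemize}
Assumptions~\ref{ass:usvt-low-rank} and \ref{ass:usvt-signal-noise} then give $\|\mu\|_*\leq C\|\mu\|_F\leq C'\|\sigma\|_F$. Hence
\[
\|\hat\mu-\mu\|_F\lesssim\sqrt{q_N\|\sigma\|_F}=o(rS_{\sigma,c})
\]
by the second part of Assumption~\ref{ass:usvt-rates}(s). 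So the second term in Step 1 is $o(r)$, and $r_6:=2r$ works. Combining Steps 0--3 gives the conclusion of Proposition~\ref{prop:usvt-sigma}, and therefore Assumption~\ref{ass:scale-consistency}(i).
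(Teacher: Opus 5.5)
Your proposal is correct and follows the paper's proof essentially step for step. It uses the same reduction to Assumption~\ref{ass:nondegeneracy}(s) via $m_1m_2\leq N_1N_2$ and the same triangle-inequality split: Bernstein plus a union bound over $2^{N_1+N_2}$ pairs for $\|\epsilon g_1g_2\|_F$, and $\|\hat\mu-\mu\|_F/S_{\sigma,c}$ controlled by the deterministic SVT lemma together with Chatterjee's Theorem 3.4, giving $r_6=2r$. The constant you flag in Step 3(c) is handled exactly as in the paper: apply Theorem 3.4 to $\epsilon/(2B)$, whose entries are bounded by one with variances at most $1/4$, with any fixed $\eta_0<\eta$; this yields $\|\epsilon\|_{\mathrm{op}}\leq(2+\eta_0)B\sqrt{N_{\max}}$ with probability approaching one, uniformly over the class.
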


If the assumptions of both propositions hold, let $r_*:=\max\{r_5,r_6\}$. Then $r_*\to0$ and all three parts of Assumption~\ref{ass:scale-consistency} hold with the common sequence $r_*$. Consequently, the conditions in this section verify the nondegeneracy and scale-consistency assumptions used in Propositions~\ref{prop:propone}--\ref{prop:propfour}. The additional
hypotheses stated in each proposition remain in force.

\subsubsection{Proofs of the consistency propositions}\label{sec:usvt-proofs}
We first record a deterministic singular-value thresholding inequality that will be used for both the dense and sparse thresholds.

\begin{lemma}\label{lem:svt-oracle}
Let $Y$ and $\mu$ be $N_1\times N_2$ real matrices. Let $\mathcal I\subset\mathbb R$ be a closed interval containing every entry of $\mu$, and let
\[
Y=\sum_{i\in[N_{\min}]}s_i u_i v_i^T
\]
be a singular value decomposition. Fix $t>0$ and $\delta>0$, and suppose
\[
(1+\delta)\|Y-\mu\|_{\mathrm{op}}\leq t.
\]
Define
\[
W:=\sum_{i:s_i\geq t}s_i u_i v_i^T,
\]
and let $\hat\mu$ be the entrywise Euclidean projection of $W$ onto $\mathcal I$. Then
\[
\|\hat\mu-\mu\|_F
\leq
\sqrt{2\left(2+\delta^{-1}\right)}
\left(t\|\mu\|_*\right)^{1/2}.
\]
\end{lemma}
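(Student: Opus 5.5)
The plan is to follow the standard hard-thresholding argument of Chatterjee (2015, Lemma 3.5), adapted to a general interval $\mathcal I$ and threshold $t$. First I reduce to bounding $\|W-\mu\|_F$. Entrywise Euclidean projection onto a closed interval containing every $\mu_{ij}$ is a contraction toward any point of that interval, so $|\hat\mu_{ij}-\mu_{ij}|\leq|W_{ij}-\mu_{ij}|$ for every entry. Hence $\|\hat\mu-\mu\|_F\leq\|W-\mu\|_F$, and it suffices to show
\[
\|W-\mu\|_F^2\leq 2(2+\delta^{-1})\,t\,\|\mu\|_*.
\]

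Next I bound $\|W-\mu\|_F$ using rank and the operator norm. Write $E:=Y-\mu$, so $\|E\|_{\mathrm{op}}\leq t/(1+\delta)$. Since $Y-W$ consists of the singular components with $s_i<t$, we have $\|W-Y\|_{\mathrm{op}}\leq t$. Therefore
\[
\|W-\mu\|_{\mathrm{op}}\leq\|W-Y\|_{\mathrm{op}}+\|E\|_{\mathrm{op}}\leq t+\frac{t}{1+\delta}\leq 2t.
\]
The matrix $W-\mu$ is not low rank, so I split $\mu$ itself. Let $\mu=\sum_k \rho_k a_k b_k^T$ be its SVD, and set $\mu_1$ to be the part with $\rho_k\geq \gamma t$ and $\mu_2=\mu-\mu_1$, where $\gamma>0$ is chosen later in terms of $\delta$. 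Then $\operatorname{rank}(\mu_1)\leq \|\mu\|_*/(\gamma t)$ and $\|\mu_2\|_F^2\leq \gamma t\|\mu_2\|_*\leq \gamma t\|\mu\|_*$.

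I also need a bound on $\operatorname{rank}(W)$, which is the number of indices with $s_i\geq t$. By Weyl's inequality, $s_i\leq \sigma_i(\mu)+\|E\|_{\mathrm{op}}$, so $s_i\geq t$ forces $\sigma_i(\mu)\geq t-t/(1+\delta)=t\delta/(1+\delta)$. The number of such $i$ is at most $(1+\delta)\|\mu\|_*/(\delta t)$. Then
\[
\|W-\mu_1\|_F^2\leq \operatorname{rank}(W-\mu_1)\,\|W-\mu_1\|_{\mathrm{op}}^2,
\]
with $\operatorname{rank}(W-\mu_1)\leq \operatorname{rank}(W)+\operatorname{rank}(\mu_1)$ and $\|W-\mu_1\|_{\mathrm{op}}\leq 2t+\|\mu_2\|_{\mathrm{op}}\leq (2+\gamma)t$. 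Combining with $\|W-\mu\|_F^2\leq 2\|W-\mu_1\|_F^2+2\|\mu_2\|_F^2$ gives a bound of the form $C(\delta,\gamma)\,t\|\mu\|_*$.

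The main obstacle is obtaining the stated constant $2(2+\delta^{-1})$ rather than something looser. The crude route above yields roughly $2(2+\gamma)^2[(1+\delta)/\delta+1/\gamma]+2\gamma$, which is too large. To fix this I would mimic Chatterjee's sharper step. The idea is to bound the operator norm of the relevant difference by about $t$ rather than $(2+\gamma)t$, by splitting the rank of $W-\mu_1$ into the projection onto retained singular directions and its complement. I would also choose $\gamma=\delta/(1+\delta)$ so that $\mu_1$ is aligned with the same cutoff as $W$. If the exact constant does not come out, I would accept a constant depending only on $\delta$; downstream uses only need $\|\hat\mu-\mu\|_F\lesssim (t\|\mu\|_*)^{1/2}$.
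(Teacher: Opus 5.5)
Your reduction to $\|W-\mu\|_F$ via the contraction property of the projection is correct and matches the paper. So are the bound $\|W-\mu\|_{\mathrm{op}}\leq t+t/(1+\delta)\leq 2t$ and the Weyl count $|S|\leq(1+\delta)\|\mu\|_*/(\delta t)$ for $S=\{i:s_i\geq t\}$. But the proposal does not prove the lemma as stated, because the statement fixes the constant $\sqrt{2(2+\delta^{-1})}$. Your rank-times-squared-operator-norm route cannot reach that constant for any $\gamma>0$. Its constant is at least $2\cdot 4\cdot(1+\delta)/\delta=8+8\delta^{-1}$, which strictly exceeds $2(2+\delta^{-1})=4+2\delta^{-1}$. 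The ``sharper step'' you defer to is never specified. Falling back to an unspecified $C(\delta)$ proves a weaker statement than the one asked.

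The missing idea is that $W-\mu$ need not be low rank; it only needs a controlled nuclear norm. Use the Schatten--H\"older inequality
\[
\|A\|_F^2=\sum_k\sigma_k(A)^2\leq\|A\|_{\mathrm{op}}\|A\|_*
\]
with $A=W-\mu$, so $\|W-\mu\|_F^2\leq 2t\left(\|W\|_*+\|\mu\|_*\right)$. Then bound $\|W\|_*$ with the Weyl inequality you already have:
\[
\|W\|_*=\sum_{i\in S}s_i(Y)\leq\sum_{i\in S}s_i(\mu)+|S|\frac{t}{1+\delta}\leq\|\mu\|_*+\delta^{-1}\|\mu\|_*.
\]
Combining the two gives $\|W-\mu\|_F^2\leq 2(2+\delta^{-1})\,t\,\|\mu\|_*$, which is exactly the stated bound. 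This needs no splitting of $\mu$ and no auxiliary $\gamma$, and it is the paper's argument.
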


\begin{proof}
Let $S:=\{i:s_i(Y)\geq t\}$. Weyl's inequality gives
\[
|s_i(Y)-s_i(\mu)|\leq\|Y-\mu\|_{\mathrm{op}}
\leq\frac{t}{1+\delta}.
\]
First,
\[
\|Y-W\|_{\mathrm{op}}<t,
\qquad
\|W-\mu\|_{\mathrm{op}}
\leq t+\frac{t}{1+\delta}
\leq2t.
\]
Second, for each $i\in S$,
\[
s_i(\mu)\geq t-\frac{t}{1+\delta}
=\frac{\delta t}{1+\delta},
\]
so
\[
|S|\leq\frac{(1+\delta)\|\mu\|_*}{\delta t}.
\]
Third,
\begin{align*}
\|W\|_*
&=\sum_{i\in S}s_i(Y)\\
&\leq\sum_{i\in S}s_i(\mu)
+|S|\frac{t}{1+\delta}\\
&\leq\left(1+\delta^{-1}\right)\|\mu\|_*.
\end{align*}
Therefore,
\begin{align*}
\|W-\mu\|_F^2
&\leq\|W-\mu\|_{\mathrm{op}}
\|W-\mu\|_*\\
&\leq2t\left(\|W\|_*+\|\mu\|_*\right)\\
&\leq2t\left(2+\delta^{-1}\right)\|\mu\|_*.
\end{align*}
Projection onto $\mathcal I$ is coordinatewise nonexpansive relative to every point in $\mathcal I$, so $\|\hat\mu-\mu\|_F\leq\|W-\mu\|_F$.
\end{proof}

\begin{remark}\label{rem:usvt-uniform}
All stochastic-order statements below are uniform over $F\in\mathcal F$ as $N_1,N_2\to\infty$. For a family of random variables $Z=Z_F$ and positive, possibly $F$-dependent, deterministic scales $a=a_F$, we write $Z=O_p(a)$ if
\[
\lim_{M\to\infty}
\limsup_{F\in\mathcal F:N_1,N_2\to\infty}
\mathbbm P_F(|Z|>Ma)=0,
\]
and $Z=o_p(a)$ if, for every $\varepsilon>0$,
\[
\limsup_{F\in\mathcal F:N_1,N_2\to\infty}
\mathbbm P_F(|Z|>\varepsilon a)=0.
\]
The probabilistic inputs are Bernstein's inequality, the bounded-differences inequality, and the fixed-threshold spectral-norm bound in Theorem 3.4 of \citet{chatterjee2015matrix}. Their constants depend only on the fixed entry bound and tuning constants, not on other features of $F$. Lemma~\ref{lem:svt-oracle} is deterministic.
\end{remark}

\begin{proof}[Proof of Proposition~\ref{prop:usvt-bartau-V}]
Assumption~\ref{ass:usvt-nondegeneracy}(w) immediately implies $\|\sigma\|_F\to\infty$, so Assumption~\ref{ass:nondegeneracy}(w) of the main text holds. Let $r$ be the sequence in Assumption~\ref{ass:usvt-rates}(w), let
\[
\epsilon:=Y-\mu,
\qquad
\Delta:=\hat\mu-\mu,
\qquad
L_N:=\sqrt{N_1}+\sqrt{N_2},
\]
and interpret all stochastic orders in the sense of Remark~\ref{rem:usvt-uniform}.

We first control the error from estimating $\mu$. Write
\[
\tilde\mu:=\mu/B,
\qquad
\tilde\epsilon:=(Y-\mu)/B,
\qquad
t_N:=(2+\eta)\sqrt{N_{\max}}.
\]
Under $\mathbbm P_F$, the entries of $\tilde\epsilon/2$ are independent, mean zero, bounded by one, and have variances bounded by $1/4$. Applying Theorem 3.4 of \citet{chatterjee2015matrix} with any fixed $\eta_0\in(0,\eta)$ gives
\[
\mathbbm P_F\left(
\|\tilde\epsilon\|_{\mathrm{op}}
>(2+\eta_0)\sqrt{N_{\max}}
\right)\to0
\]
uniformly over $\mathcal F$. Fix $\delta_0>0$ such that
\[
(1+\delta_0)(2+\eta_0)\leq2+\eta.
\]
On the resulting high-probability event, Lemma~\ref{lem:svt-oracle}, applied to $\tilde Y$, $\tilde\mu$, $t_N$, and $\mathcal I=[-1,1]$, gives
\begin{align}\label{eq:dense-delta-rate}
\|\Delta\|_F
&\leq
B\sqrt{2(2+\delta_0^{-1})}
\left(t_N\|\tilde\mu\|_*\right)^{1/2}\notag\\
&=O_p\left(\sqrt{q_N\|\mu\|_*}\right)
=O_p\left(\sqrt{q_N\|\sigma\|_F}\right),
\end{align}
where the final equality uses Assumptions~\ref{ass:usvt-signal-noise} and \ref{ass:usvt-low-rank}. Since
\[
\|\Delta\|_\dagger
\leq L_N\|\Delta\|_F,
\]
Assumption~\ref{ass:usvt-rates}(w) yields
\begin{align}
\|\Delta\|_\dagger&=o_p(rT_\sigma),\label{eq:dense-delta-dagger}\\
\|\Delta\|_F&=o_p(r\|\sigma\|_F).\label{eq:dense-delta-frob}
\end{align}
For the second display, use $T_\sigma\leq L_N\|\sigma\|_F$ to obtain
\[
\frac{\sqrt{q_N\|\sigma\|_F}}{r\|\sigma\|_F}
\leq
\frac{L_N\sqrt{q_N\|\sigma\|_F}}{rT_\sigma}
\to0.
\]
Assumption~\ref{ass:usvt-rates}(w) also gives a constant $c_\dagger>0$ such that
\[
\mathbbm E_F\|\epsilon\|_\dagger\geq c_\dagger T_\sigma
\]
for all sufficiently large dimensions. Consequently,
\[
\bar\tau
=1.01\mathbbm E_F\|\epsilon\|_\dagger+0.25\|\sigma\|_F
\geq1.01c_\dagger T_\sigma,
\]
and \eqref{eq:dense-delta-dagger} implies
\[
\|\Delta\|_\dagger=o_p(r\bar\tau).
\]
Because $V\geq\|\sigma\|_F$, \eqref{eq:dense-delta-frob} also gives
\[
\|\Delta\|_F=o_p(rV).
\]

Next, changing one entry of $\epsilon$ changes $\|\epsilon\|_\dagger$ by at most a fixed multiple of $B$. Lemma~\ref{lem:raw-mcdiarmid} therefore gives
\[
\|\epsilon\|_\dagger-\mathbbm E_F\|\epsilon\|_\dagger
=O_p\left(B\sqrt{N_1N_2}\right).
\]
Since $rT_\sigma/\sqrt{N_1N_2}\to\infty$ and $\bar\tau\gtrsim T_\sigma$,
\begin{align}\label{eq:dagger-concentration}
\left|
\|\epsilon\|_\dagger-\mathbbm E_F\|\epsilon\|_\dagger
\right|
=o_p(r\bar\tau).
\end{align}

Bernstein's inequality applied to
$\sum_{i,j}(\epsilon_{ij}^2-\sigma_{ij}^2)$ gives
\begin{align}\label{eq:frob-square-concentration}
\left|
\|\epsilon\|_F^2-\|\sigma\|_F^2
\right|
=O_p(B\|\sigma\|_F).
\end{align}
Assumption~\ref{ass:usvt-rates}(w), together with $T_\sigma\leq L_N\|\sigma\|_F$ and $\sqrt{N_1N_2}/L_N\to\infty$, implies $r\|\sigma\|_F\to\infty$. Hence
\begin{align}
\left|
\|\epsilon\|_F^2-\|\sigma\|_F^2
\right|&=o_p(r\|\sigma\|_F^2)=o_p(rV^2),\label{eq:frob-square-small}\\
\frac{\|\epsilon\|_F}{\|\sigma\|_F}&=1+o_p(1),\label{eq:frob-ratio}
\end{align}
and therefore
\begin{align}\label{eq:frob-norm-small}
\left|
\|\epsilon\|_F-\|\sigma\|_F
\right|
=O_p(B)=o_p(r\bar\tau).
\end{align}
Moreover,
\[
B\left|
\|\epsilon\|_F-\|\sigma\|_F
\right|
=o_p(rV^2),
\]
because $V^2\geq4Bc_\dagger T_\sigma$ and $rT_\sigma\to\infty$.

Using the reverse triangle inequality,
\begin{align*}
|\hat{\bar\tau}-\bar\tau|
&\leq1.01\|\Delta\|_\dagger
+1.01\left|
\|\epsilon\|_\dagger-
\mathbbm E_F\|\epsilon\|_\dagger
\right|\\
&\quad+0.25\|\Delta\|_F
+0.25\left|
\|\epsilon\|_F-\|\sigma\|_F
\right|\\
&=o_p(r\bar\tau).
\end{align*}
Thus the relative error of $\hat{\bar\tau}$ is $o_p(r)$.

For $\hat V$, write
\begin{align*}
|\hat V^2-V^2|
&\leq
\left|
\|\epsilon-\Delta\|_F^2-\|\sigma\|_F^2
\right|\\
&\quad+B\left|
\|\epsilon-\Delta\|_F-\|\sigma\|_F
\right|\\
&\quad+4B\left|
\|\epsilon-\Delta\|_\dagger-
\mathbbm E_F\|\epsilon\|_\dagger
\right|.
\end{align*}
For the first term,
\begin{align*}
\left|
\|\epsilon-\Delta\|_F^2-\|\sigma\|_F^2
\right|
&\leq
\left|
\|\epsilon\|_F^2-\|\sigma\|_F^2
\right|
+2\|\epsilon\|_F\|\Delta\|_F
+\|\Delta\|_F^2.
\end{align*}
By \eqref{eq:frob-ratio}, $\|\epsilon\|_F=O_p(\|\sigma\|_F)=O_p(V)$. Together with $\|\Delta\|_F=o_p(rV)$ and $r\to0$, the last two terms are respectively $o_p(rV^2)$ and $o_p(r^2V^2)=o_p(rV^2)$. The first term is $o_p(rV^2)$ by \eqref{eq:frob-square-small}. The remaining two summands are $o_p(rV^2)$ by \eqref{eq:dense-delta-frob}, \eqref{eq:frob-norm-small}, \eqref{eq:dense-delta-dagger}, and \eqref{eq:dagger-concentration}, using $B\bar\tau\lesssim V^2$. Hence
\[
|\hat V^2-V^2|=o_p(rV^2).
\]
Since $|\sqrt a-\sqrt b|\leq\sqrt{|a-b|}$ for nonnegative $a,b$,
\[
\frac{|\hat V-V|}{V}=o_p(\sqrt r).
\]
Taking $r_5:=3\sqrt r$ proves the proposition.
\end{proof}

\begin{proof}[Proof of Proposition~\ref{prop:usvt-sigma}]
Assumption~\ref{ass:usvt-nondegeneracy}(s) implies Assumption~\ref{ass:nondegeneracy}(s) of the main text. Indeed, for every $(g_1,g_2)\in\mathcal G_c$,
\begin{align*}
\sqrt{\min(N_1,N_2)}\,\sigma(g_1,g_2)
&=\frac{\sqrt{\min(N_1,N_2)}}{\sqrt{m_1m_2}}
\|\sigma g_1g_2\|_F\\
&\geq\frac{\|\sigma g_1g_2\|_F}{\sqrt{N_{\max}}}\\
&\geq\frac{\|\sigma g_1g_2\|_F}{\sqrt{N_1+N_2}},
\end{align*}
which diverges uniformly by Assumption~\ref{ass:usvt-nondegeneracy}(s).

Let $r$ be the sequence in Assumption~\ref{ass:usvt-rates}(s), and let $\epsilon:=Y-\mu$ and $\Delta:=\hat\mu-\mu$. The same dense spectral event and Lemma~\ref{lem:svt-oracle} used in the preceding proof give
\[
\|\Delta\|_F
=O_p\left(\sqrt{q_N\|\sigma\|_F}\right).
\]
Therefore,
\begin{align}\label{eq:prop6-mean-error}
\max_{(g_1,g_2)\in\mathcal G_c}
\frac{\|\Delta g_1g_2\|_F}
{\|\sigma g_1g_2\|_F}
\leq
\frac{\|\Delta\|_F}{S_{\sigma,c}}
=o_p(r)
\end{align}
by Assumption~\ref{ass:usvt-rates}(s).

For the empirical residual norms, define
\begin{align*}
A(g_1,g_2)
&:=\|\sigma g_1g_2\|_F^2,\\
S(g_1,g_2)
&:=\sum_{i,j}
(\epsilon_{ij}^2-\sigma_{ij}^2)g_{i,1}g_{j,2}.
\end{align*}
For any fixed $(g_1,g_2)$, Bernstein's inequality gives, for all sufficiently small $r$,
\[
\mathbbm P_F\left(
|S(g_1,g_2)|>rA(g_1,g_2)
\right)
\leq
2\exp\left(
-C\frac{r^2A(g_1,g_2)}{B^2}
\right)
\]
for a universal constant $C>0$. Since $|\mathcal G_c|\leq2^{N_1+N_2}$ and $A(g_1,g_2)\geq S_{\sigma,c}^2$, the union bound yields
\[
\mathbbm P_F\left(
\max_{(g_1,g_2)\in\mathcal G_c}
\frac{|S(g_1,g_2)|}{A(g_1,g_2)}>r
\right)
\leq
2^{N_1+N_2+1}
\exp\left(-C\frac{r^2S_{\sigma,c}^2}{B^2}\right)
\to0
\]
by Assumption~\ref{ass:usvt-rates}(s). Hence
\begin{align}\label{eq:prop6-residual-error}
\max_{(g_1,g_2)\in\mathcal G_c}
\frac{
\left|
\|\epsilon g_1g_2\|_F-
\|\sigma g_1g_2\|_F
\right|}
{\|\sigma g_1g_2\|_F}
=o_p(r),
\end{align}
because $|\sqrt a-\sqrt b|\leq|a-b|/\sqrt b$ for $a,b\geq0$.

Finally, the reverse triangle inequality gives
\begin{align*}
\left|
\frac{\hat\sigma(g_1,g_2)-\sigma(g_1,g_2)}
{\sigma(g_1,g_2)}
\right|
&\leq
\frac{\|\Delta g_1g_2\|_F}
{\|\sigma g_1g_2\|_F}\\
&\quad+
\frac{
\left|
\|\epsilon g_1g_2\|_F-
\|\sigma g_1g_2\|_F
\right|}
{\|\sigma g_1g_2\|_F}.
\end{align*}
Combining \eqref{eq:prop6-mean-error} and \eqref{eq:prop6-residual-error} proves the result with $r_6:=2r$.
\end{proof}

\subsubsection{Sparse threshold}\label{sec:sparse-thresholds}
The dense threshold in Section~\ref{sec:usvt-algorithm} is calibrated to the
worst case permitted by Assumption~\ref{ass:model}, in which residual variances may be of
constant order. For a sparse Bernoulli network, the relevant row and column
variance scales can be much smaller than $N_{\max}$, and the operator norm of
$Y-\mu$ can therefore be much smaller than $\sqrt{N_{\max}}$. Here we provide a
sparse threshold under additional regularity conditions. Section
\ref{sec:cbar-calibration} then gives a data-driven calibration based on
observed degrees, together with a dense fallback.

Throughout this subsection and the next, the entries of $Y$ are Bernoulli and
$B=1$. For a directed or bipartite network, define
\[
\rho:=\frac{1}{N_1N_2}\sum_{i\in[N_1],j\in[N_2]}\mu_{ij},
\qquad
\hat\rho:=\frac{1}{N_1N_2}\sum_{i\in[N_1],j\in[N_2]}Y_{ij},
\qquad
\varrho_N:=\sqrt{N_{\max}\rho},
\]
and let
\[
d_N:=N_1+N_2.
\]
Define the expected row and column degrees and their observed analogues by
\[
e_{i,1}:=\sum_{j\in[N_2]}\mu_{ij},
\qquad
e_{j,2}:=\sum_{i\in[N_1]}\mu_{ij},
\qquad
d_{i,1}:=\sum_{j\in[N_2]}Y_{ij},
\qquad
d_{j,2}:=\sum_{i\in[N_1]}Y_{ij}.
\]
Thus $N_2\rho$ and $N_1\rho$ are the average expected row and column
degrees. Define
\begin{align}
C_N^*
&:=\max\left\{
\max_{i\in[N_1]}
\frac{\sum_{j\in[N_2]}\sigma_{ij}^2}{N_2\rho},
\;
\max_{j\in[N_2]}
\frac{\sum_{i\in[N_1]}\sigma_{ij}^2}{N_1\rho}
\right\},\label{eq:Cstar-directed}\\
D_N^*
&:=\max\left\{
\max_{i\in[N_1]}\frac{e_{i,1}}{N_2\rho},
\;
\max_{j\in[N_2]}\frac{e_{j,2}}{N_1\rho}
\right\}.\label{eq:Dstar-directed}
\end{align}

For an undirected loopless network on $N$ nodes, compute $\rho$ and $\hat\rho$
over the $\binom N2$ admissible unordered dyads, set
\[
\varrho_N:=\sqrt{N\rho},
\qquad
d_N:=N,
\]
and define
\begin{align}
C_N^*
&:=\max_{i\in[N]}
\frac{\sum_{j\neq i}\sigma_{ij}^2}{(N-1)\rho},
\qquad
D_N^*
:=\max_{i\in[N]}
\frac{\sum_{j\neq i}\mu_{ij}}{(N-1)\rho}.\label{eq:C-D-undirected}
\end{align}
The spectral estimator is applied to the symmetrized matrix as in Remark
\ref{rem:usvt-undirected}.

These ratios are well-defined under the assumption that $\rho>0$, which holds eventually under
the conditions imposed below. Because
$\sigma_{ij}^2=\mu_{ij}(1-\mu_{ij})\leq\mu_{ij}$ for Bernoulli entries,
\[
0\leq C_N^*\leq D_N^*,
\qquad
D_N^*\geq1.
\]
Thus $D_N^*$ is a conservative upper proxy for $C_N^*$. The two ratios need
not be close: high expected degrees may arise from nearly deterministic links,
which contribute little conditional variance. Under the additional sparse
probability condition $\max_{i,j}\mu_{ij}\to0$, however, each row and column
variance sum is uniformly asymptotic to its corresponding expected degree, and
therefore $C_N^*/D_N^*\to1$.

Let $\bar C>0$ be fixed. The sparse algorithm is identical to Section
\ref{sec:usvt-algorithm}, with $B=1$ and projection onto $[0,1]$, except that
the threshold is
\begin{align}\label{eq:sparse-threshold}
\hat t(\bar C)
:=\left(2\sqrt{2\bar C}+\eta\right)
\sqrt{N_{\max}\hat\rho}.
\end{align}
For an undirected network, $N_{\max}=N$. On $\{\hat\rho=0\}$, set
$\hat\mu=0$. The factor $\sqrt2$ is inherited from the probabilistic
symmetrization argument in Remark 3.13 of \citet{bandeira2016sharp}, which
allows the centered Bernoulli entries to have asymmetric distributions. 

\begin{assumption}\label{ass:sparse}
\begin{itemize}
\item[i.] Every $F\in\mathcal F$ has support $\{0,1\}$, and $B=1$.

\item[ii.]
\[
\limsup_{F\in\mathcal F:N_1,N_2\to\infty}C_N^*\leq\bar C,
\]
where $\bar C$ is the fixed constant in \eqref{eq:sparse-threshold}.

\item[iii.]
\[
\liminf_{F\in\mathcal F:N_1,N_2\to\infty}
\frac{N_{\max}\rho}{\log d_N}=\infty.
\]
\end{itemize}
\end{assumption}

Assumption~\ref{ass:sparse}(ii) is a row- and column-level variance-envelope
condition. It is weaker than the entrywise condition
$\max_{ij}\sigma_{ij}^2\leq\bar C\rho$, which implies it with the same
constant. It requires that no row or column carry a conditional variance load
more than a fixed multiple of the average expected degree on the corresponding
side of the network. Assumption~\ref{ass:sparse}(iii) is a logarithmic-degree
condition. Below this scale, an unregularized sparse adjacency matrix need not
concentrate around its mean in operator norm at the scale $\varrho_N$; see
\citet{le2017concentration}.

\begin{lemma}\label{lem:sparse-spectral}
Suppose Assumptions~\ref{ass:model} and \ref{ass:sparse} hold. Then, uniformly over
$F\in\mathcal F$ in the sense of Remark~\ref{rem:usvt-uniform},
\begin{itemize}
\item[i.] $\hat\rho/\rho\to_p1$;

\item[ii.] there exists a constant $\delta=\delta(\eta,\bar C)>0$ such that
\[
\mathbbm P_F\left(
(1+\delta)\|Y-\mu\|_{\mathrm{op}}
\leq\hat t(\bar C)
\text{ and }
\hat t(\bar C)>0
\right)\to1;
\]

\item[iii.]
\[
\hat t(\bar C)=O_p(\varrho_N).
\]
More specifically, with probability approaching one,
\[
\hat t(\bar C)
\leq
\sqrt2\left(2\sqrt{2\bar C}+\eta\right)\varrho_N.
\]
\end{itemize}
\end{lemma}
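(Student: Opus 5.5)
The argument splits into three parts taken in order: concentration of $\hat\rho$, an operator-norm bound for $Y-\mu$ at the scale $\varrho_N$ with an explicit constant, and a comparison of that bound with $\hat t(\bar C)$.

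\textbf{Part (i).} The quantity $N_1N_2(\hat\rho-\rho)=\sum_{ij}\epsilon_{ij}$ is a sum of independent, centered terms bounded by one. Its variance is at most $\sum_{ij}\mu_{ij}=N_1N_2\rho$, since $\sigma_{ij}^2\le\mu_{ij}$ for Bernoulli entries. Lemma~\ref{lem:bernstein}, applied to both tails, gives $\mathbbm P_F(|\hat\rho-\rho|>t\rho)\le 2\exp\{-c\,t^2 N_1N_2\rho\}$ for $t\le 1$. We have $N_1N_2\rho\ge N_{\max}\rho\to\infty$ uniformly by Assumption~\ref{ass:sparse}(iii). Hence $\hat\rho/\rho\to_p1$ uniformly over $\mathcal F$. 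In the undirected case the same argument runs over the $\binom N2$ dyads.

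\textbf{Spectral bound.} The main step is to show that, for every fixed $\xi>0$, with probability tending to one uniformly,
\[
\|Y-\mu\|_{\mathrm{op}}\le\bigl(2\sqrt{2\bar C}+\xi\bigr)\varrho_N .
\]
I would obtain this from the sharp nonasymptotic bound of \citet{bandeira2016sharp} (their Corollary 3.12 together with the symmetrization of Remark 3.13). For a matrix with independent centered entries bounded by one, that bound reads
\[
\mathbbm E\|X\|_{\mathrm{op}}\le(1+\varepsilon)\sqrt2\,(\varsigma_1+\varsigma_2)+C_\varepsilon\sqrt{\log d_N},
\]
where $\varsigma_1,\varsigma_2$ are the largest row and column standard-deviation norms. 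For rectangular $X$ I would apply it to the symmetric dilation of $X$.

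By Assumption~\ref{ass:sparse}(ii) and the definition of $C_N^*$, each of $\varsigma_1,\varsigma_2$ is eventually at most $\sqrt{(\bar C+o(1))N_{\max}\rho}$. The main term is therefore at most $2\sqrt{2\bar C}\,\varrho_N(1+o(1))$. The remainder term is $o(\varrho_N)$ by Assumption~\ref{ass:sparse}(iii). Concentration around the mean then follows from Talagrand's convex-Lipschitz inequality, which controls fluctuations at scale $O_p(1)=o(\varrho_N)$.

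I expect this step to be the main obstacle. The difficulty is tracking the exact constant $2\sqrt{2\bar C}$, especially handling rows and columns on different sides when $N_1\ne N_2$. If the dilation loses constants, the fallback is to apply the bound to $XX^T$-type blocks, or simply to carry the constant through the dilation directly.

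\textbf{Parts (ii) and (iii).} By part (i), $\sqrt{N_{\max}\hat\rho}=\varrho_N(1+o_p(1))$. It follows that
\[
\hat t(\bar C)=\bigl(2\sqrt{2\bar C}+\eta\bigr)\varrho_N(1+o_p(1)).
\]
This is positive with probability approaching one, since $\rho>0$ eventually. Choose $\xi=\eta/2$ and $\delta>0$ small enough that
\[
(1+\delta)\bigl(2\sqrt{2\bar C}+\eta/2\bigr)<\bigl(2\sqrt{2\bar C}+\eta\bigr)(1-\delta).
\]
On the intersection of the high-probability events from the spectral bound and from part (i), we then have $(1+\delta)\|Y-\mu\|_{\mathrm{op}}\le\hat t(\bar C)$, which proves (ii). Part (iii) follows directly from $\hat\rho\le2\rho$ with probability tending to one, which yields the displayed $\sqrt2$ factor.
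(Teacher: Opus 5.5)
Your proposal is correct and follows the paper's proof essentially step by step. Both arguments use Bernstein's inequality for $\hat\rho/\rho$, then the Bandeira--van Handel bound with the $\sqrt2$ symmetrization factor of their Remark~3.13, applied to the self-adjoint dilation with $\nu_N\le\sqrt{(\bar C+a)N_{\max}\rho}$, and finally the same margin comparison between $(1+\delta)(2\sqrt{2\bar C}+\eta/2)$ and $(1-u_0)(2\sqrt{2\bar C}+\eta)$ on the event $\sqrt{\hat\rho/\rho}\ge 1-u_0$.

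The only difference is cosmetic. The paper quotes the tail form $\mathbbm P_F(\|Y-\mu\|_{\mathrm{op}}\ge(1+\varepsilon)2\sqrt2\,\nu_N+t)\le d_N e^{-t^2/c_\varepsilon}$ directly with $t=\sqrt{2c_\varepsilon\log d_N}=o(\varrho_N)$, so your separate expectation-plus-Talagrand step is unnecessary. Your concern about the dilation losing constants is also unfounded: $\|\mathcal D(X)\|_{\mathrm{op}}=\|X\|_{\mathrm{op}}$, and the dilation's row-variance scale is $\max(\varsigma_1,\varsigma_2)$, which is at most $\sqrt{C_N^*N_{\max}\rho}$ under the normalization of $C_N^*$.
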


\begin{proof}
We give the proof for a directed or bipartite network. The undirected proof is
identical after restricting sums to admissible unordered dyads and applying
the same argument to the symmetric adjacency matrix.

\textbf{Part i.}
Write
\[
\hat\rho-\rho
=\frac{1}{N_1N_2}\sum_{i,j}\epsilon_{ij}.
\]
Under $\mathbbm P_F$, the summands are independent, mean zero, bounded by one, and satisfy
\[
\sum_{i,j}\mathbbm E_F[\epsilon_{ij}^2]
=\|\sigma\|_F^2
\leq\sum_{i,j}\mu_{ij}
=N_1N_2\rho.
\]
For each fixed $u\in(0,1]$, Bernstein's inequality therefore gives
\[
\mathbbm P_F\left(
\left|\frac{\hat\rho}{\rho}-1\right|>u
\right)
\leq
2\exp\left(
-\frac{u^2\rho N_1N_2}{2+2u/3}
\right).
\]
Assumption~\ref{ass:sparse}(iii), together with
$\min(N_1,N_2)\to\infty$, implies
$\rho N_1N_2=\min(N_1,N_2)N_{\max}\rho\to\infty$. Hence the right-hand
side converges to zero uniformly. In particular, $\hat\rho>0$ with
probability approaching one.

\textbf{Part ii.}
Let
\[
\nu_N
:=\max\left\{
\max_i\left(\sum_j\sigma_{ij}^2\right)^{1/2},
\max_j\left(\sum_i\sigma_{ij}^2\right)^{1/2}
\right\}.
\]
Choose $a>0$ sufficiently small that
\[
2\sqrt{2(\bar C+a)}
<
2\sqrt{2\bar C}+\frac{\eta}{8}.
\]
Then choose $\varepsilon\in(0,1/2]$ sufficiently small that
\[
(1+\varepsilon)2\sqrt{2(\bar C+a)}
\leq
2\sqrt{2\bar C}+\frac{\eta}{4}.
\]
By Assumption~\ref{ass:sparse}(ii), for all sufficiently large dimensions,
$C_N^*\leq\bar C+a$, and hence
\[
\nu_N
\leq
\sqrt{(\bar C+a)N_{\max}\rho}
=\sqrt{\bar C+a}\,\varrho_N.
\]

Apply Remark 3.13 of \citet{bandeira2016sharp} to the self-adjoint dilation
\[
\mathcal D(Y-\mu)
:=
\begin{pmatrix}
0&Y-\mu\\
(Y-\mu)^T&0
\end{pmatrix}.
\]
The entries on and above the diagonal are independent and centered, their
absolute values are bounded by one, the maximum row variance scale of the
dilation is $\nu_N$, and the uniform entry bound in Remark 3.13 satisfies
$\widetilde\sigma_*\leq1$. Moreover,
$\|\mathcal D(Y-\mu)\|_{\mathrm{op}}=\|Y-\mu\|_{\mathrm{op}}$. Thus there
is a finite constant $c_\varepsilon$ such that, for every $t\geq0$,
\begin{align}\label{eq:centered-bvh-tail}
\mathbbm P_F\left(
\|Y-\mu\|_{\mathrm{op}}
\geq
(1+\varepsilon)2\sqrt2\,\nu_N+t
\right)
\leq
d_N\exp\left(-\frac{t^2}{c_\varepsilon}\right).
\end{align}
Choose
\[
t_N:=\sqrt{2c_\varepsilon\log d_N}.
\]
The right-hand side of \eqref{eq:centered-bvh-tail} is then at most
$d_N^{-1}$. Moreover, Assumption~\ref{ass:sparse}(iii) implies
$t_N=o(\varrho_N)$. It follows that, with probability approaching one
uniformly over $\mathcal F$,
\begin{align}\label{eq:sparse-noise-bound}
\|Y-\mu\|_{\mathrm{op}}
\leq
\left(2\sqrt{2\bar C}+\frac{\eta}{2}\right)\varrho_N.
\end{align}

Fix $u_0=u_0(\eta,\bar C)\in(0,1)$ and $\delta=\delta(\eta,\bar C)>0$
satisfying the deterministic inequality
\[
(1-u_0)\left(2\sqrt{2\bar C}+\eta\right)
\geq
(1+\delta)\left(2\sqrt{2\bar C}+\frac{\eta}{2}\right);
\]
such constants exist because the ratio of the two parenthesized factors
exceeds one. By Part i, with probability approaching one,
$\sqrt{\hat\rho/\rho}\geq1-u_0$, and on that event
\[
\hat t(\bar C)
=\left(2\sqrt{2\bar C}+\eta\right)\sqrt{N_{\max}\hat\rho}
\geq
(1-u_0)\left(2\sqrt{2\bar C}+\eta\right)\varrho_N
\geq
(1+\delta)\left(2\sqrt{2\bar C}+\frac{\eta}{2}\right)\varrho_N.
\]
Combining this display with \eqref{eq:sparse-noise-bound} yields
\[
\hat t(\bar C)
\geq
(1+\delta)\|Y-\mu\|_{\mathrm{op}}
\]
on the intersection of the two high-probability events. Positivity follows
from Part i.

\textbf{Part iii.}
Part i implies that $\hat\rho\leq2\rho$ with probability approaching one,
which gives the stated upper bound.

All probability inequalities used above are finite-sample bounds whose
constants depend only on the fixed quantities $\eta$, $\bar C$, and
$\varepsilon$; the convergence is therefore uniform in the sense of Remark
\ref{rem:usvt-uniform}.
\end{proof}

\begin{corollary}\label{cor:sparse}
Suppose the hypotheses of Proposition~\ref{prop:usvt-bartau-V} hold, add
Assumption~\ref{ass:sparse}, and replace the final condition of Assumption~\ref{ass:usvt-rates}(w) by
\[
\limsup_{F\in\mathcal F:N_1,N_2\to\infty}
\frac{L_N\sqrt{\varrho_N\|\sigma\|_F}}
{rT_\sigma}=0.
\]
If $\hat{\bar\tau}$ and $\hat V$ are constructed using the threshold
\eqref{eq:sparse-threshold}, then the conclusion of Proposition
\ref{prop:usvt-bartau-V} holds.

Similarly, suppose the hypotheses of Proposition~\ref{prop:usvt-sigma} hold,
add Assumption~\ref{ass:sparse}, and replace the final condition of Assumption~\ref{ass:usvt-rates}(s) by
\[
\limsup_{F\in\mathcal F:N_1,N_2\to\infty}
\frac{\sqrt{\varrho_N\|\sigma\|_F}}
{rS_{\sigma,c}}=0.
\]
If $\hat\sigma(g_1,g_2)$ is constructed using the same sparse threshold, then
the conclusion of Proposition~\ref{prop:usvt-sigma} holds.
\end{corollary}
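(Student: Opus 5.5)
The plan is to reduce the sparse case to the deterministic oracle inequality of Lemma~\ref{lem:svt-oracle}. Lemma~\ref{lem:sparse-spectral} replaces the dense spectral event used in the proofs of Propositions~\ref{prop:usvt-bartau-V} and~\ref{prop:usvt-sigma}. After that, every remaining step of those proofs can be reused unchanged, because they never use the threshold except through the bound on $\|\Delta\|_F$.

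\textbf{Step 1 (mean-estimation rate).} By Lemma~\ref{lem:sparse-spectral}(ii), there is a fixed $\delta=\delta(\eta,\bar C)>0$ such that, with probability approaching one uniformly over $\mathcal F$, we have $(1+\delta)\|Y-\mu\|_{\mathrm{op}}\leq\hat t(\bar C)$ and $\hat t(\bar C)>0$. On this event I apply Lemma~\ref{lem:svt-oracle} with $t=\hat t(\bar C)$ and $\mathcal I=[0,1]$; this interval contains every entry of $\mu$ because the entries are Bernoulli means. The lemma gives $\|\Delta\|_F\leq\sqrt{2(2+\delta^{-1})}\,(\hat t(\bar C)\|\mu\|_*)^{1/2}$. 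By Lemma~\ref{lem:sparse-spectral}(iii), $\hat t(\bar C)\leq\sqrt2(2\sqrt{2\bar C}+\eta)\varrho_N$ with probability approaching one. Assumptions~\ref{ass:usvt-signal-noise} and~\ref{ass:usvt-low-rank} then give $\|\mu\|_*=O(\|\sigma\|_F)$, so $\|\Delta\|_F=O_p(\sqrt{\varrho_N\|\sigma\|_F})$. This is \eqref{eq:dense-delta-rate} with $q_N$ replaced by $\varrho_N$.

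There is one point to check. Lemma~\ref{lem:svt-oracle} is deterministic, but its threshold $t$ is fixed. Here $\hat t$ is data-dependent, so the lemma has to be applied pathwise: for each realization in the good event, the realized threshold satisfies its hypothesis. The data-dependence therefore causes no problem.

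\textbf{Step 2 (first claim).} The modified final condition of Assumption~\ref{ass:usvt-rates}(w), combined with $\|\Delta\|_\dagger\leq L_N\|\Delta\|_F$, yields \eqref{eq:dense-delta-dagger}. Next, $T_\sigma\leq L_N\|\sigma\|_F$ gives $\sqrt{\varrho_N\|\sigma\|_F}/(r\|\sigma\|_F)\leq L_N\sqrt{\varrho_N\|\sigma\|_F}/(rT_\sigma)\to0$, which yields \eqref{eq:dense-delta-frob}. The concentration statements \eqref{eq:dagger-concentration}--\eqref{eq:frob-norm-small} do not involve $\hat\mu$, so they carry over verbatim. The final assembly bounding $|\hat{\bar\tau}-\bar\tau|$ and $|\hat V^2-V^2|$ is then identical, and gives the conclusion with $r_5=3\sqrt r$.

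\textbf{Step 3 (second claim).} From Step 1, $\max_{\mathcal G_c}\|\Delta g_1g_2\|_F/\|\sigma g_1g_2\|_F\leq\|\Delta\|_F/S_{\sigma,c}=o_p(r)$ by the modified condition of Assumption~\ref{ass:usvt-rates}(s). The uniform Bernstein/union-bound control of the empirical residual norms is unchanged. Hence the proof of Proposition~\ref{prop:usvt-sigma} goes through with $r_6=2r$.

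The main obstacle is Step 1: the random threshold, and verifying that Lemma~\ref{lem:sparse-spectral} supplies both the lower bound needed for Lemma~\ref{lem:svt-oracle} and the upper bound giving the $\varrho_N$ rate, simultaneously and uniformly. For undirected networks there is a further complication. One must invoke Remark~\ref{rem:usvt-undirected} for the symmetrized matrix and the zeroed diagonal; these change only fixed constants, but this should be stated.
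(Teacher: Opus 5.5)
Your proposal is correct and follows essentially the same route as the paper. You intersect the high-probability events of Lemma~\ref{lem:sparse-spectral}(ii)--(iii) and apply Lemma~\ref{lem:svt-oracle} pathwise with the realized threshold and projection interval $[0,1]$, which gives $\|\hat\mu-\mu\|_F=O_p(\sqrt{\varrho_N\|\sigma\|_F})$. You then observe that the modified rate conditions supply exactly the mean-estimation bounds used in Propositions~\ref{prop:usvt-bartau-V} and~\ref{prop:usvt-sigma}, with the residual-concentration steps unchanged. Your explicit check that only the final rate condition involves $\hat\mu$, and your note on the undirected symmetrization, spell out what the paper's proof leaves implicit.
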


\begin{proof}
By Lemma~\ref{lem:sparse-spectral}, there is an event whose probability
approaches one, uniformly over $\mathcal F$, on which
\[
(1+\delta)\|Y-\mu\|_{\mathrm{op}}\leq\hat t(\bar C),
\qquad
\hat t(\bar C)>0,
\qquad
\hat t(\bar C)\leq\sqrt2\left(2\sqrt{2\bar C}+\eta\right)\varrho_N.
\]
On this event, Lemma~\ref{lem:svt-oracle} applies pathwise with the
realized threshold and projection interval $[0,1]$, giving
\[
\|\hat\mu-\mu\|_F
\leq\sqrt{2\left(2+\delta^{-1}\right)}
\left(\hat t(\bar C)\,\|\mu\|_*\right)^{1/2}
=O_p\left(\sqrt{\varrho_N\|\sigma\|_F}\right),
\]
where the final bound uses the displayed upper bound on $\hat t(\bar C)$ and
Assumptions~\ref{ass:usvt-signal-noise} and \ref{ass:usvt-low-rank}. The modified rate conditions then give exactly the
mean-estimation bounds used in the proofs of Propositions
\ref{prop:usvt-bartau-V} and \ref{prop:usvt-sigma}; all remaining
residual-concentration steps are unchanged.
\end{proof}

\begin{remark}[Relation to concentrated variation]\label{rem:sparse-heterogeneity}
Assumption~\ref{ass:sparse}(ii) does not impose homogeneous variances across
dyads, but it does restrict how strongly stochastic variation can be
concentrated across nodes. In sequences where an asymptotically vanishing
fraction of rows or columns carries most of the residual variation, $C_N^*$
will generally diverge, and no fixed-$\bar C$ sparse result applies. This does
not exclude such sequences from the model class or from the main confidence-
interval theory in Section~\ref{sec:main-results}. It only means that the sparse residual estimator
need not attain the improved mean-estimation rate on those sequences. The
adaptive rule below then uses the dense threshold. For example, the
deterministic core--periphery designs of Appendix Section~\ref{sec:simulation-models} have
$D_N^*\asymp\sqrt N$: the inflated degree ratio eventually exceeds any fixed
grid, and the adaptive rule of Section~\ref{sec:cbar-calibration} falls back
to the dense threshold. (Those designs also have bounded average degree, so
they lie outside the formal scope of the sparse results in any case.) Because
$D_N^*$ may be much larger than $C_N^*$ when high-degree links are nearly
deterministic, the degree-based rule can also revert to the dense threshold
in some models for which the variance envelope itself is benign; this is the
price of using an observable conservative proxy.
\end{remark}

\subsubsection{Data-driven calibration of the threshold constant}
\label{sec:cbar-calibration}

The fixed sparse result requires a constant $\bar C$ that bounds the
unobserved variance-load ratio $C_N^*$. Although $C_N^*$ is not generally
estimable from one network, the degree ratio $D_N^*$ is observable up to
sampling error and satisfies $C_N^*\leq D_N^*$. For a directed or bipartite
network, define
\begin{align}\label{eq:Dhat-directed}
\hat D_N
:=\max\left\{
\max_{i\in[N_1]}\frac{d_{i,1}}{N_2\hat\rho},
\;
\max_{j\in[N_2]}\frac{d_{j,2}}{N_1\hat\rho}
\right\}.
\end{align}
For an undirected loopless network, define instead
\begin{align}\label{eq:Dhat-undirected}
\hat D_N
:=\max_{i\in[N]}
\frac{d_i}{(N-1)\hat\rho},
\qquad
d_i:=\sum_{j\neq i}Y_{ij}.
\end{align}
Thus $\hat D_N$ estimates $D_N^*$, not $C_N^*$; it is used because
$D_N^*$ is a conservative upper bound for the variance-load ratio.

\begin{lemma}\label{lem:degree-ratio}
Suppose Assumptions~\ref{ass:model} and \ref{ass:sparse}(i) hold. For a directed or
bipartite network, suppose
\begin{align}\label{eq:two-sided-degrees}
\liminf_{F\in\mathcal F:N_1,N_2\to\infty}
\frac{\min(N_1,N_2)\rho}{\log(N_1+N_2)}
=\infty.
\end{align}
For an undirected loopless network, replace this condition by
\begin{align}\label{eq:undirected-degrees}
\liminf_{F\in\mathcal F:N\to\infty}
\frac{(N-1)\rho}{\log N}
=\infty.
\end{align}
Then, uniformly over $F\in\mathcal F$,
\[
\frac{\hat D_N}{D_N^*}\to_p1.
\]
Consequently, for every fixed $\gamma>0$,
\[
\mathbbm P_F\left(
C_N^*\leq D_N^*\leq(1+\gamma)\hat D_N
\right)\to1.
\]
\end{lemma}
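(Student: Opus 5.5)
The plan is to show that the numerator of $\hat D_N$ (the maximum normalized observed degree) and the denominator $\hat\rho$ are each uniformly close, in relative terms, to their population counterparts. The ratio statement then follows.

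First, Part i of Lemma~\ref{lem:sparse-spectral} gives $\hat\rho/\rho\to_p1$, and its Bernstein argument goes through verbatim here. Condition \eqref{eq:two-sided-degrees} implies $\rho N_1N_2\to\infty$, which is all that argument used. For the undirected case, \eqref{eq:undirected-degrees} implies $\rho\binom N2\to\infty$.

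Second, the degrees. Write $d_{i,1}-e_{i,1}=\sum_j\epsilon_{ij}$, a sum of independent centered terms bounded by one with variance at most $\sum_j\mu_{ij}=e_{i,1}$. Lemma~\ref{lem:bernstein} applied to both tails gives, for fixed $u\in(0,1]$,
\[
\mathbbm P_F\left(|d_{i,1}-e_{i,1}|>u\max(e_{i,1},N_2\rho)\right)\leq 2\exp\left(-\frac{u^2N_2\rho}{2+2u/3}\right).
\]
The key choice is to measure each deviation against the floor $N_2\rho$ rather than against $e_{i,1}$, since individual expected degrees may be tiny. The variance bound $e_{i,1}\leq\max(e_{i,1},N_2\rho)$ makes this work.

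A union bound over the $N_1$ rows and $N_2$ columns gives a total failure probability of at most
\[
2(N_1+N_2)\exp\left(-c_u\min(N_1,N_2)\rho\right),
\]
which tends to zero uniformly by \eqref{eq:two-sided-degrees}. The rate $\min(N_1,N_2)\rho\gg\log(N_1+N_2)$ is exactly what this union bound requires, and controlling the maximum over all nodes is the step I expect to be the main obstacle.

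On the good event, every normalized degree is within the stated tolerance:
\[
\left|\frac{d_{i,1}}{N_2\rho}-\frac{e_{i,1}}{N_2\rho}\right|\leq u\max\!\left(\frac{e_{i,1}}{N_2\rho},1\right)\leq uD_N^*,
\]
using $D_N^*\geq1$. The analogous bound holds for columns. Taking the maximum, $|\max_k(\text{observed ratio})-D_N^*|\leq uD_N^*$, because the maximum is $1$-Lipschitz in the sup-norm. Combining with $\hat\rho/\rho\in[1-u,1+u]$ gives $\hat D_N/D_N^*\in[(1-u)/(1+u),(1+u)/(1-u)]$ with probability approaching one uniformly. Since $u$ is arbitrary, $\hat D_N/D_N^*\to_p1$.

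The undirected case is identical. It uses $d_i=\sum_{j\neq i}Y_{ij}$, a sum of independent entries, with the union bound taken over $N$ nodes.

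Finally, $C_N^*\leq D_N^*$ holds deterministically because $\sigma_{ij}^2\leq\mu_{ij}$. The event $D_N^*\leq(1+\gamma)\hat D_N$ contains $\{\hat D_N/D_N^*\geq1/(1+\gamma)\}$, whose probability tends to one by the ratio result.
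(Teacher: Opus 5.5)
Your proof is correct and follows essentially the same route as the paper: Bernstein's inequality plus a union bound over nodes, at a tolerance scale bounded below by the average expected degree, combined with ratio consistency of $\hat\rho$. The only difference is cosmetic. You measure each node's deviation against $\max(e_{i,1},N_2\rho)$ and compare directly to $D_N^*$, whereas the paper uses the side-specific maximum expected degree $e_{1,\max}$ (respectively $e_{2,\max}$) and then takes the maximum of the row and column ratios.
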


The average-degree conditions in the lemma are convenient, transparent
sufficient conditions. The proof only requires the maximum expected row and
column degrees to dominate the corresponding logarithmic dimensions, together
with consistency of $\hat\rho$; we retain the stated conditions because they
avoid introducing additional notation.

\begin{proof}
We give the directed or bipartite argument. Let
\[
e_{1,\max}:=\max_i e_{i,1},
\qquad
e_{2,\max}:=\max_j e_{j,2}.
\]
For every fixed $\varepsilon\in(0,1)$, Bernstein's inequality and a union
bound give
\begin{align*}
\mathbbm P_F\left(
\max_i|d_{i,1}-e_{i,1}|>\varepsilon e_{1,\max}
\right)
&\leq
2N_1\exp\left(-c_\varepsilon e_{1,\max}\right),\\
\mathbbm P_F\left(
\max_j|d_{j,2}-e_{j,2}|>\varepsilon e_{2,\max}
\right)
&\leq
2N_2\exp\left(-c_\varepsilon e_{2,\max}\right),
\end{align*}
for a constant $c_\varepsilon>0$. The bound does not require every row or
column to have a large mean; it uses only
$e_{i,1}\leq e_{1,\max}$ and $e_{j,2}\leq e_{2,\max}$. Since
\[
e_{1,\max}\geq N_2\rho,
\qquad
e_{2,\max}\geq N_1\rho,
\]
condition \eqref{eq:two-sided-degrees} makes both right-hand sides converge
to zero uniformly. It follows that
\[
\frac{\max_i d_{i,1}}{e_{1,\max}}\to_p1,
\qquad
\frac{\max_j d_{j,2}}{e_{2,\max}}\to_p1.
\]
The same condition and Bernstein's inequality give $\hat\rho/\rho\to_p1$.
Therefore each of the two normalized observed maximum degrees is ratio-
consistent for its population counterpart, and taking their maximum gives
$\hat D_N/D_N^*\to_p1$. The undirected proof is identical, using the $N-1$
independent Bernoulli variables incident to each fixed node; dependence
between different node degrees does not affect the union bound. The final
claim follows from $C_N^*\leq D_N^*$ and ratio consistency.
\end{proof}

Fix before examining the data a constant $\gamma>0$, an integer $K$, and a
finite grid
\[
\mathcal C:=\{C_1,\ldots,C_K\},
\qquad
1\leq C_1<\cdots<C_K.
\]
The quantities $\gamma$, $K$, and $\mathcal C$ do not vary along the
asymptotic sequence. Requiring $C_1\geq1$ is without loss, because
$D_N^*\geq1$ deterministically. Define
\[
\hat C
:=\min\left\{C\in\mathcal C:
C\geq(1+\gamma)\hat D_N\right\},
\]
when this set is nonempty. Let
\[
t_D:=(2+\eta)\sqrt{N_{\max}}
\]
be the dense threshold from Section~\ref{sec:usvt-algorithm}, with
$N_{\max}=N$ in the undirected case, and define
\begin{align}\label{eq:adaptive-threshold}
\hat t_A
:=
\begin{cases}
\min\{\hat t(\hat C),t_D\},&\text{if $\hat\rho>0$ and $\hat C$ exists},\\[2pt]
t_D,&\text{otherwise}.
\end{cases}
\end{align}
On $\{\hat\rho=0\}$ the ratio $\hat D_N$ is undefined and the rule uses the
dense threshold. Thus the rule inflates the observed degree ratio, rounds
upward to a fixed grid, and uses the smaller of the resulting sparse
threshold and the dense threshold. If the inflated degree ratio lies above
the grid, the procedure uses the dense threshold. A possible default is
$\gamma=1$ and a fixed grid such as $\{1,2,4,8,16,32,64\}$.

\begin{corollary}\label{cor:sparse-adaptive}
Suppose Assumptions~\ref{ass:model} and \ref{ass:sparse}(i) hold, together with
\eqref{eq:two-sided-degrees} in the directed or bipartite case or
\eqref{eq:undirected-degrees} in the undirected case.
\begin{itemize}
\item[i.] \emph{Spectral domination.}
There exists a constant $\delta_{\min}>0$, depending only on the fixed
quantities $\eta$ and $\mathcal C$, such that
\[
\mathbbm P_F\left(
(1+\delta_{\min})\|Y-\mu\|_{\mathrm{op}}
\leq\hat t_A
\right)\to1
\]
uniformly over $F\in\mathcal F$, without imposing a bounded variance or
degree ratio.

\item[ii.] \emph{Sparse order.}
If, in addition,
\begin{align}\label{eq:grid-rate-condition}
\limsup_{F\in\mathcal F:N_1,N_2\to\infty}D_N^*
<\frac{C_K}{1+\gamma},
\end{align}
then $\hat C$ exists with probability approaching one and
\[
\hat t_A=O_p(\varrho_N).
\]

\item[iii.] \emph{Feasible scale estimation, sparse branch.}
Under the condition in Part ii, suppose the hypotheses of Proposition
\ref{prop:usvt-bartau-V} also hold and replace the final condition of
Assumption~\ref{ass:usvt-rates}(w) by
\[
\limsup_{F\in\mathcal F:N_1,N_2\to\infty}
\frac{L_N\sqrt{\varrho_N\|\sigma\|_F}}
{rT_\sigma}=0.
\]
Then the conclusion of Proposition~\ref{prop:usvt-bartau-V} holds for the
estimators constructed from $\hat t_A$. Likewise, under the hypotheses of
Proposition~\ref{prop:usvt-sigma}, replace the final condition of Assumption~\ref{ass:usvt-rates}(s) by
\[
\limsup_{F\in\mathcal F:N_1,N_2\to\infty}
\frac{\sqrt{\varrho_N\|\sigma\|_F}}
{rS_{\sigma,c}}=0.
\]
Then the conclusion of Proposition~\ref{prop:usvt-sigma} holds for the
estimator constructed from $\hat t_A$.

\item[iv.] \emph{Baseline dense-rate guarantee.}
Suppose, in addition to the hypotheses of Part i, that the hypotheses of
Proposition~\ref{prop:usvt-bartau-V} hold with the original final condition
of Assumption~\ref{ass:usvt-rates}(w). Then the conclusion of Proposition
\ref{prop:usvt-bartau-V} holds for the estimators constructed from
$\hat t_A$, whether or not \eqref{eq:grid-rate-condition} holds. The
analogous statement holds for Proposition~\ref{prop:usvt-sigma} under
Assumption~\ref{ass:usvt-rates}(s) with its original final condition.
\end{itemize}
\end{corollary}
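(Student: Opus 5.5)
The plan is to treat the adaptive threshold $\hat t_A$ as a minimum of the dense threshold and finitely many sparse thresholds, and then reduce each part to Lemma~\ref{lem:sparse-spectral}, Lemma~\ref{lem:degree-ratio}, and the deterministic Lemma~\ref{lem:svt-oracle}.

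\textbf{Part i.} First I would show that the dense threshold always dominates the noise. Theorem 3.4 of \citet{chatterjee2015matrix}, used exactly as in the proof of Proposition~\ref{prop:usvt-bartau-V}, gives $(1+\delta_0)\|Y-\mu\|_{\mathrm{op}}\leq t_D$ with probability approaching one, uniformly.

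It then remains to show that $\hat t(\hat C)$ also dominates whenever the sparse branch is used. On that event the selected constant satisfies $\hat C\geq(1+\gamma)\hat D_N$. By Lemma~\ref{lem:degree-ratio}, with probability approaching one we have $C_N^*\leq D_N^*\leq(1+\gamma)\hat D_N\leq\hat C$.

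The obstacle is that $\hat C$ is random, while Lemma~\ref{lem:sparse-spectral} is stated for a fixed $\bar C$ under $\limsup C_N^*\leq\bar C$. I would handle this by working pathwise for each grid point $C_k$. The proof of Lemma~\ref{lem:sparse-spectral}(ii) only uses the deterministic inequality $C_N^*\leq C_k+a$, the fact that $\hat\rho/\rho\to_p1$, and the log-degree condition. The log-degree condition follows from \eqref{eq:two-sided-degrees}, since $N_{\max}\rho\geq\min(N_1,N_2)\rho$.

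So for each $k$ I would define the event
\[
A_k:=\{C_N^*\leq C_k\}\cap\{(1+\delta_k)\|Y-\mu\|_{\mathrm{op}}>\hat t(C_k)\}.
\]
Here $C_N^*\leq C_k$ is a deterministic condition on $F$. Rerunning the tail bound \eqref{eq:centered-bvh-tail} with $\nu_N\leq\sqrt{C_k}\varrho_N$ gives $\mathbbm P_F(A_k)\to0$ uniformly, including models with $C_N^*\leq C_k$ for which $C_N^*$ is not bounded along the whole class.

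A union over the $K$ grid points then gives the claim with $\delta_{\min}:=\min(\delta_0,\delta_1,\ldots,\delta_K)$. Since $\hat t_A$ equals either $t_D$ or $\hat t(C_k)$ for some $k$ with $C_k\geq C_N^*$, both branches are dominated.

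\textbf{Part ii.} By Lemma~\ref{lem:degree-ratio}, $\hat D_N/D_N^*\to_p1$. Under \eqref{eq:grid-rate-condition}, this gives $(1+\gamma)\hat D_N<C_K$ with probability approaching one, so $\hat C$ exists and $\hat C\leq C_K$. Then $\hat t_A\leq\hat t(C_K)\leq\sqrt2(2\sqrt{2C_K}+\eta)\varrho_N$, where the last bound uses $\hat\rho\leq2\rho$ with probability approaching one, as in Lemma~\ref{lem:sparse-spectral}(iii).

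\textbf{Parts iii and iv.} On the intersection of the high-probability events above, apply Lemma~\ref{lem:svt-oracle} pathwise with the realized threshold $\hat t_A$ and slack $\delta_{\min}$. This gives
\[
\|\hat\mu-\mu\|_F\leq\sqrt{2(2+\delta_{\min}^{-1})}\,(\hat t_A\|\mu\|_*)^{1/2}.
\]
For Part iii, insert the bound $\hat t_A=O_p(\varrho_N)$ from Part ii and Assumptions~\ref{ass:usvt-signal-noise}--\ref{ass:usvt-low-rank} to get $\|\Delta\|_F=O_p(\sqrt{\varrho_N\|\sigma\|_F})$. The rest of the proofs of Propositions~\ref{prop:usvt-bartau-V} and \ref{prop:usvt-sigma} then goes through verbatim under the modified rate conditions, as in Corollary~\ref{cor:sparse}.

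For Part iv, use $\hat t_A\leq t_D$, which holds always by construction. This gives $\|\Delta\|_F=O_p(\sqrt{q_N\|\sigma\|_F})$, and the original dense-rate proofs apply unchanged.

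The step I expect to be the main obstacle is making Part i hold for the random $\hat C$ without any bounded-ratio assumption. That requires checking that the Bandeira--van Handel tail bound and the $\hat\rho$ consistency are genuinely uniform over the models with $C_N^*\leq C_k$, with constants depending only on $\eta$ and the grid.
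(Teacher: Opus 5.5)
Your proposal is correct and follows the paper's proof essentially step for step. Your event $A_k$ is the complement of the paper's $G_k=\{C_N^*>C_k\}\cup E_k$; as in the paper, it is combined with the dense-threshold event over the finite grid, and Lemma~\ref{lem:degree-ratio} is used in the same way to ensure $\hat C\geq C_N^*$. Parts iii--iv then follow, as in the paper, by applying Lemma~\ref{lem:svt-oracle} pathwise with the realized threshold, together with $\hat t_A\leq\hat t(C_K)$ for Part iii and $\hat t_A\leq t_D$ for Part iv.
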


\begin{proof}
\textbf{Part i.}
Let $E_D$ be the gapped noise-domination event for the dense threshold used in
the proofs of Propositions~\ref{prop:usvt-bartau-V} and
\ref{prop:usvt-sigma}; there is a fixed $\delta_D>0$ such that
$\mathbbm P_F(E_D)\to1$ uniformly under Assumption~\ref{ass:model}.

The degree condition in the corollary implies
$N_{\max}\rho/\log d_N\to\infty$. For each $C_k\in\mathcal C$, apply the
finite-sample argument in the proof of Lemma~\ref{lem:sparse-spectral} with
$\bar C=C_k$ and $a=0$. This yields a constant $\delta_k>0$ and an event $E_k$
whose failure probability tends uniformly to zero over the finite-sample
subclass of models satisfying $C_N^*\leq C_k$. Define
\[
G_k
:=
\{C_N^*>C_k\}\cup E_k.
\]
Because $C_N^*$ is a population quantity, the first event is deterministic for
a fixed model. Hence $\mathbbm P_F(G_k)\to1$ uniformly for every $k$. Since the
grid is finite, $E_D\cap\bigcap_{k=1}^K G_k$ has probability approaching one.
Define
\[
\delta_{\min}
:=\min\{\delta_D,\delta_1,\ldots,\delta_K\}>0.
\]
By Lemma~\ref{lem:degree-ratio}, with probability approaching one,
\[
C_N^*\leq(1+\gamma)\hat D_N.
\]
On this event, if $\hat C$ exists then $\hat C\geq C_N^*$, so both the sparse
threshold $\hat t(\hat C)$ and the dense threshold $t_D$ exceed
$(1+\delta_{\min})\|Y-\mu\|_{\mathrm{op}}$. Their minimum therefore does as
well. If $\hat C$ does not exist, or if $\hat\rho=0$, then $\hat t_A=t_D$ and
the dense event applies.

\textbf{Part ii.}
By Lemma~\ref{lem:degree-ratio}, $\hat D_N/D_N^*\to_p1$. The strict margin in
\eqref{eq:grid-rate-condition} therefore implies
\[
(1+\gamma)\hat D_N<C_K
\]
with probability approaching one. Hence $\hat C$ exists and
$\hat C\leq C_K$. Since $\hat\rho/\rho\to_p1$,
\[
\hat t_A
\leq\hat t(C_K)
=O_p(\varrho_N).
\]

\textbf{Part iii.}
Condition \eqref{eq:grid-rate-condition}, together with
$C_N^*\leq D_N^*$, implies Assumption~\ref{ass:sparse}(ii) with
$\bar C=C_K$, while the degree condition in the corollary implies
Assumption~\ref{ass:sparse}(iii). By Parts i and ii, with probability
approaching one,
$(1+\delta_{\min})\|Y-\mu\|_{\mathrm{op}}\leq\hat t_A$ and
$\hat t_A=O_p(\varrho_N)$. The adaptive threshold is strictly positive by
definition. Lemma~\ref{lem:svt-oracle} applies pathwise with the realized threshold and
projection interval $[0,1]$, giving
\[
\|\hat\mu-\mu\|_F
=O_p\left(\sqrt{\varrho_N\|\mu\|_*}\right)
=O_p\left(\sqrt{\varrho_N\|\sigma\|_F}\right)
\]
under Assumptions~\ref{ass:usvt-signal-noise} and \ref{ass:usvt-low-rank}. The modified rate conditions give the same
mean-estimation bounds used in Propositions~\ref{prop:usvt-bartau-V} and
\ref{prop:usvt-sigma}; their remaining residual-concentration arguments are
unchanged.

\textbf{Part iv.}
By construction, $0<\hat t_A\leq t_D$ deterministically. On the event in
Part i, Lemma~\ref{lem:svt-oracle} applies pathwise with
$t=\hat t_A$, $\delta=\delta_{\min}$, and projection interval $[0,1]$, giving
\[
\|\hat\mu-\mu\|_F
\leq
\sqrt{2\left(2+\delta_{\min}^{-1}\right)}
\left(\hat t_A\,\|\mu\|_*\right)^{1/2}
\leq
\sqrt{2\left(2+\delta_{\min}^{-1}\right)}
\left(t_D\,\|\mu\|_*\right)^{1/2}
=O_p\left(\sqrt{q_N\|\sigma\|_F}\right),
\]
where the final bound uses $t_D=(2+\eta)\sqrt{N_{\max}}$, $q_N=\sqrt{N_{\max}}$
since $B=1$, and Assumptions~\ref{ass:usvt-signal-noise} and \ref{ass:usvt-low-rank}. This is the mean-estimation bound used
in the proofs of Propositions~\ref{prop:usvt-bartau-V} and
\ref{prop:usvt-sigma} under the original final conditions of Assumption~\ref{ass:usvt-rates};
their remaining residual-concentration arguments are unchanged.
\end{proof}

\begin{remark}[Direction of threshold error]\label{rem:direction}
Choosing a threshold below the operator-norm scale is a particular concern:
idiosyncratic singular components may enter $\hat\mu$, potentially causing
the residual-based estimators of $\bar\tau$, $V$, and
$\sigma(g_1,g_2)$ to understate uncertainty. Rounding the degree calibration
upward and using the dense fallback reduce this risk. A larger threshold often
leaves more variation in the residual, but we do not assert general
monotonicity for the entrywise-clipped estimator, the $\dagger$-norm, or the
group-specific residual norms. Nor does a larger threshold automatically make
the resulting feasible confidence intervals conservative without the
corresponding consistency or one-sided conditions.
\end{remark}

\begin{remark}[Scope and reporting]\label{rem:sparse-scope}
Part i of Corollary~\ref{cor:sparse-adaptive} establishes a spectral-
domination event, not by itself consistency of the residual scale estimators
or coverage of the feasible intervals. Under the original dense rate
conditions, Part iv shows that the adaptive estimator satisfies the baseline
dense consistency conclusions regardless of which branch of the rule is
selected. Under condition \eqref{eq:grid-rate-condition} and the modified
sparse rate conditions, Part iii yields the sharper sparse mean-estimation
rate. The stated degree calibration requires diverging average degree on both
sides of a rectangular network. This is a convenient sufficient condition,
not a necessary condition for degree-ratio estimation. In bounded or slowly
growing degree regimes, the present unregularized spectral theory does not
apply, and the deterministic estimators of Section~\ref{sec:deterministic-scale-bounds} remain available.
\end{remark}

\begin{remark}[Bounded-degree networks]\label{rem:sparse-bounded-degree}
When expected degrees are bounded, an unregularized adjacency matrix need not
concentrate around its mean at the sparse operator-norm scale. The trimming and
regularization methods of \citet{le2017concentration} suggest a possible
extension. Incorporating such an estimator into the present residual plug-in
procedure would additionally require controlling regularization bias relative
to the original mean matrix and establishing suitable concentration for the
row-column residual norms. We leave this extension to future work.
\end{remark}

\section{Simulation evidence}\label{sec:simulations}
The simulations are designed to address four questions. First, how severely can
data-dependent group selection distort a conventional confidence interval that
treats the selected groups as fixed? Second, do the proposed confidence
intervals retain useful power against a selected network
structure? Third, can the shorter width of $CI_2$ under concentrated variation
translate into a power advantage over $CI_1$? Fourth, how do the estimates in Appendix~\ref{app:variance-estimation} compare with their oracle
counterparts?  \looseness=-1

\subsection{Designs and implementation}\label{sec:simulation-models}
All simulations use undirected, unweighted, loopless networks on $N$ agents,
represented by their upper-triangular adjacency matrices. For $i<j$,
\[
Y_{ij}\sim\operatorname{Bernoulli}(\mu_{ij})
\]
independently across dyads. We use
\[
N\in\{200,500,1000,2000\},
\qquad
\alpha=0.05,
\]
and perform $2{,}000$ Monte Carlo replications for each design and
network size.

The designs compare a core $C$ containing $m=N/4$ agents with its complement
$P=C^c$. Following Remark~\ref{rem:alternative-normalization}, we report the
natural densities among eligible unordered dyads,
\[
\widehat\theta_{CC}
=
\binom{m}{2}^{-1}
\sum_{i<j}Y_{ij}\mathbbm 1\{i,j\in C\}
\]
and
\[
\widehat\theta_{PP}
=
\binom{N-m}{2}^{-1}
\sum_{i<j}Y_{ij}\mathbbm 1\{i,j\in P\}.
\]
Our main inferential object is the difference
\[
\Delta(C)
=
\theta_{CC}(C)-\theta_{PP}(C),
\qquad
\widehat\Delta(C)
=
\widehat\theta_{CC}(C)-\widehat\theta_{PP}(C).
\]
For any interval construction $t\in\{0,1,2\}$, let
$CI_t^{CC}=[L_t^{CC},U_t^{CC}]$ and
$CI_t^{PP}=[L_t^{PP},U_t^{PP}]$ denote the corresponding intervals for the
two densities. We construct the difference interval as
\[
CI_{t,\Delta}
=
\left[
L_t^{CC}-U_t^{PP},
\;
U_t^{CC}-L_t^{PP}
\right].
\]
We say that the interval detects a positive core--periphery difference when
the lower endpoint of $CI_{t,\Delta}$ is positive.

We use four designs. The first is a homogeneous selection-null design. Let
\[
\rho_N=\frac{2\log N}{N-1},
\qquad
\mu_{ij}=\rho_N.
\]
For the controlled-search rule, we generate $K$ candidate cores of size
$N/4$ independently of the network and hold this collection fixed across
Monte Carlo replications. We then select
\[
\widehat C_K
=
\arg\max_{1\leq k\leq K}
\widehat\Delta(C_k),
\qquad
K\in\{1,10,100,1000\}.
\]
For the spectral rule, we select the $N/4$ agents with the largest absolute
entries of the leading eigenvector of the symmetric adjacency matrix. Under
this design, $\Delta(C)=0$ for every possible group $C$, so rejection is a
false positive.

The second design studies power against a core--periphery alternative. Let
$C^*$ contain the first $N/4$ agents and define
\[
a_i
=
\begin{cases}
2.75,&i\in C^*,\\[2pt]
5/12,&i\notin C^*.
\end{cases}
\]
Let
\[
\bar a_N
=
\binom{N}{2}^{-1}
\sum_{i<j}a_i a_j,
\qquad
\mu_{ij}
=
\rho_N\frac{a_i a_j}{\bar a_N}.
\]
The normalization preserves the same average connection probability as in the
homogeneous design while producing substantial degree heterogeneity. We use
the spectral rule above to select the reported core.

The third design concentrates all stochastic variation in a sublinear set of
nodes. Let
\[
r_N=\left\lceil1.5\sqrt N\right\rceil
\]
and let the fixed core $C^*$ contain the $r_N$ active nodes together with
$N/4-r_N$ deterministic filler nodes. Define
\[
p_N=\min\{0.50,\;2.25N^{-1/4}\}
\]
and
\[
\mu_{ij}
=
p_N\mathbbm 1\{i,j\leq r_N\}.
\]
Consequently,
\[
\Delta(C^*)
=
p_N
\frac{\binom{r_N}{2}}{\binom{N/4}{2}},
\]
while the periphery density and its variance are zero. This design satisfies
the weak global nondegeneracy condition in Assumption~\ref{ass:nondegeneracy}(w)
but not the strong condition in Assumption~\ref{ass:nondegeneracy}(s).
Accordingly, $CI_2$ is the theoretically justified interval in this design;
we include $CI_1$ to illustrate the cost of its global
$\sqrt{N_1+N_2}$ inflation.

The fourth design is used to compare feasible scale estimators when the mean
matrix contains a detectable low-rank component. It is a two-block model with
block shares $1/4$ and $3/4$ and probability template
\[
B_0
=
\begin{pmatrix}
3&0.5\\
0.5&4/3
\end{pmatrix}.
\]
We divide $B_0$ by its pair-weighted average and multiply it by
$3N^{-1/2}$. This produces approximately equal expected degrees in the two
blocks while giving the mean matrix a singular component that exceeds the
degree-adaptive USVT threshold.

We report both oracle and feasible intervals. The oracle intervals use the
known conditional variances, $V$, and $\bar\tau$. The conditional expectation
in $\bar\tau$ is approximated using $10{,}000$ auxiliary draws from the exact
blockwise row- and column-energy distributions. The principal feasible
results use the unshifted raw second-moment plug-in from
Section~\ref{sec:raw-plugin}. We also evaluate the deterministic bounds from
Section~\ref{sec:deterministic-scale-bounds} and the degree-adaptive USVT
estimator from Section~\ref{sec:usvt}; because of its greater computational
cost, USVT is evaluated in $250$ replications for $N\leq1{,}000$. The shifted
raw plug-in in Section~\ref{sec:raw-shifted} is not included because it is
designed as a conservative coverage fallback rather than as a calibrated
scale estimator. With $2{,}000$ replications, the Monte Carlo standard error
of a reported rejection or coverage probability is at most $0.011$.
\looseness=-1

\subsection{Conventional inference after group selection}
\label{sec:simulation-selection}

Table~\ref{tab:simulation-selection} reports the probability that the
conventional difference interval $CI_{0,\Delta}$ falsely excludes zero under
the homogeneous design. To isolate the effect of group selection from scale
estimation, the table gives $CI_0$ the oracle conditional variance.

\begin{table}[!htbp]
\centering
\small
\setlength{\tabcolsep}{7pt}
\renewcommand{\arraystretch}{1.10}
\caption{False-positive rates after group selection under a homogeneous network}
\label{tab:simulation-selection}
\begin{tabular}{rccccc}
\toprule
& \multicolumn{4}{c}{Controlled search} & Spectral\\
\cmidrule(lr){2-5}
$N$ & $K=1$ & $K=10$ & $K=100$ & $K=1{,}000$ & rule\\
\midrule
200   & 0.011 & 0.080 & 0.567 & 1.000 & 1.000\\
500   & 0.012 & 0.073 & 0.533 & 1.000 & 1.000\\
1,000 & 0.013 & 0.065 & 0.521 & 0.999 & 1.000\\
2,000 & 0.015 & 0.067 & 0.512 & 1.000 & 1.000\\
\bottomrule
\end{tabular}
\begin{minipage}{0.94\textwidth}
\footnotesize
\emph{Notes:} Each entry is the fraction of $2{,}000$ replications in which
the oracle conventional interval for the selected core--periphery density
difference excludes zero. The true difference is zero for every candidate
group. The controlled rule selects the largest observed difference among
$K$ prespecified candidate cores. The spectral rule selects the $N/4$ nodes
with the largest absolute entries of the leading adjacency-matrix
eigenvector. In every corresponding cell, the oracle versions of $CI_1$ and
$CI_2$ covered the selected difference in all replications and never excluded
zero.
\end{minipage}
\end{table}

With one prespecified candidate, the false-positive probability is small. It
increases rapidly with the breadth of the search. When $K=100$, the
conventional interval falsely detects a core--periphery difference in roughly
one-half of the replications, and when $K=1{,}000$, it does so essentially
every time. The failure is even more pronounced under the spectral rule,
where $CI_{0,\Delta}$ excludes zero in every replication for every network
size. Because the data-generating process is homogeneous and the conventional
interval uses the oracle variance, this failure is caused by treating the
selected groups as fixed rather than by model misspecification or inaccurate
scale estimation.

The simultaneous intervals are conservative in this experiment. We observe
no noncoverage for either $CI_{1,\Delta}$ or $CI_{2,\Delta}$ in the reported
replications. The simulations therefore do not suggest exact finite-sample
calibration at the nominal level; instead, they illustrate the contrast
between severe post-selection failure of $CI_0$ and the protection provided
by simultaneous coverage. \looseness=-1

\subsection{Power against a selected core--periphery alternative}
\label{sec:simulation-ci1-power}

Panel A of Table~\ref{tab:simulation-power} studies the stronger
core--periphery alternative. The spectral rule recovers the core
almost perfectly. Its average overlap with $C^*$ exceeds $0.9996$ at every
network size. The table reports power for the selected density difference, so
the inferential target is the group actually returned by the spectral rule
rather than the latent group.

\begin{table}[!htbp]
\centering
\scriptsize
\setlength{\tabcolsep}{4.2pt}
\renewcommand{\arraystretch}{1.10}
\caption{Power under core--periphery and concentrated-variation alternatives}
\label{tab:simulation-power}
\begin{tabular}{rcccccc}
\toprule
$N$ & Gap & Overlap
& Oracle MoE & Oracle power
& Raw MoE & Raw power\\
\midrule
\multicolumn{7}{l}{\emph{Panel A: $CI_1$ after spectral core selection}}\\
\midrule
200   & 0.3955 & 0.9998 & 0.3544 & 1.000 & 0.4532 & 0.000\\
500   & 0.1844 & 0.9997 & 0.1754 & 0.982 & 0.1938 & 0.001\\
1,000 & 0.1023 & 0.9998 & 0.0965 & 1.000 & 0.1017 & 0.719\\
2,000 & 0.0562 & 0.9998 & 0.0517 & 1.000 & 0.0532 & 1.000\\
\midrule
\multicolumn{7}{l}{\emph{Panel B: feasible comparison under concentrated variation}}\\
\midrule
$N$ & Gap & $\mathrm{MoE}_1$ & Power$_1$
& $\mathrm{MoE}_2$ & Power$_2$
& $\mathrm{MoE}_2/\mathrm{MoE}_1$\\
\midrule
200   & 0.09429 & 0.20808 & 0.000 & 0.13545 & 0.000 & 0.651\\
500   & 0.03444 & 0.07875 & 0.000 & 0.03695 & 0.001 & 0.469\\
1,000 & 0.01450 & 0.03601 & 0.000 & 0.01343 & 0.999 & 0.373\\
2,000 & 0.00614 & 0.01656 & 0.000 & 0.00495 & 1.000 & 0.299\\
\bottomrule
\end{tabular}
\begin{minipage}{0.96\textwidth}
\footnotesize
\emph{Notes:} ``Gap'' is the core--periphery density difference in
Panel A and the fixed-group population difference in Panel B. ``Overlap'' is
the fraction of core nodes contained in the spectrally selected core.
MoE denotes the average half-width of the interval for the density
difference, and power is the fraction of replications in which its lower
endpoint is positive. ``Raw'' uses the unshifted raw second-moment plug-in of
Section~\ref{sec:raw-plugin}. Panel B uses the raw plug-in for both intervals, and the reported
margin-of-error ratio is averaged replication by replication. The proposed
difference intervals covered their respective targets in every
replication in both panels.
\end{minipage}
\end{table}

At $N=200$ and $N=500$, the oracle interval detects the structure with high
probability, but the feasible raw plug-in is sufficiently conservative that
its margin still exceeds the gap. By $N=1{,}000$, the feasible margin and the
population gap are nearly equal, and the power of $CI_1$ rises to $0.719$.
At $N=2{,}000$, feasible power is one. Thus the simultaneous correction does
not mechanically eliminate power: a genuinely pronounced structure selected
from the data is detected with high probability once the feasible scale
estimator becomes sufficiently close to its oracle target.

In every USVT replication through $N=1{,}000$, the degree-adaptive
estimator retains rank zero in this particular core--periphery design. It
therefore coincides numerically with the unshifted raw plug-in on those
replications, as described in
Remark~\ref{rem:usvt-rankzero}. We use a separate low-rank design below to
evaluate the nonzero-rank branch. \looseness=-1

\subsection{Concentrated variation and the power of $CI_2$}
\label{sec:simulation-ci2-power}

Panel B of Table~\ref{tab:simulation-power} reports the fixed-group
concentrated-variation design. Although the population density difference
shrinks from $0.0943$ at $N=200$ to $0.00614$ at $N=2{,}000$, the feasible
margin of error of $CI_2$ contracts more quickly. The ratio
$\mathrm{MoE}_2/\mathrm{MoE}_1$ falls from $0.651$ to $0.299$. As a result,
the feasible power of $CI_2$ rises from essentially zero at $N=500$ to
$0.999$ at $N=1{,}000$ and one at $N=2{,}000$. In contrast, $CI_1$ has zero
power at every network size.

The result is not driven by undercoverage. Both feasible intervals cover the
true density difference in all $2{,}000$ replications at every $N$. Moreover,
the raw plug-in remains conservative. At $N=2{,}000$, the average ratios of
the raw estimates to the oracle quantities are
\[
\frac{\widehat V^{\,0}}{V}=1.137,
\qquad
\frac{\widehat{\bar\tau}^{\,0}}{\bar\tau}=1.215,
\qquad
\frac{\lVert Y\rVert_F}{\lVert\sigma\rVert_F}=1.228.
\]
Thus $CI_2$ has complete power even though its feasible scale estimates
continue to exceed the oracle scales materially.

This design also illustrates the different nondegeneracy requirements of the
two intervals. Since the periphery contains no stochastic dyads,
Assumption~\ref{ass:nondegeneracy}(s) fails, and $CI_1$ is included only as a
comparison. The total residual variation nevertheless diverges, so
Assumption~\ref{ass:nondegeneracy}(w) holds and $CI_2$ is the relevant
procedure. Its adaptation to the number of rows and columns carrying
variation is what produces the substantial width and power advantage.
\looseness=-1

\subsection{Feasible scale estimation}\label{sec:simulation-scale-estimation}

Table~\ref{tab:simulation-scale} compares the three principal implementation
strategies of Appendix~\ref{app:variance-estimation} in the USVT-detectable
two-block design at $N=1{,}000$. The table reports the average estimated scale
relative to its oracle counterpart. Let $\widehat S_F$ denote the empirical
Frobenius residual scale, equal to $\lVert Y\rVert_F$ for the raw plug-in and
$\lVert Y-\widehat\mu\rVert_F$ for USVT.

\begin{table}[!htbp]
\centering
\small
\setlength{\tabcolsep}{7pt}
\renewcommand{\arraystretch}{1.10}
\caption{Scale estimation in a USVT-detectable low-rank design}
\label{tab:simulation-scale}
\begin{tabular}{lcccc}
\toprule
Strategy
& $\widehat V/V$
& $\widehat{\bar\tau}/\bar\tau$
& $\widehat S_F/\lVert\sigma\rVert_F$
& Retained rank\\
\midrule
Deterministic bounds & 7.047 & 10.308 & --- & ---\\
Raw plug-in          & 1.050 & 1.068  & 1.069 & ---\\
Degree-adaptive USVT & 1.011 & 1.016  & 1.015 & $1$ or $2$\\
\bottomrule
\end{tabular}
\begin{minipage}{0.94\textwidth}
\footnotesize
\emph{Notes:} The deterministic and raw rows use $2{,}000$
replications. The USVT row uses $250$ replications. USVT retained one
component in $247$ replications and two components in three replications.
The deterministic construction directly bounds the cell-specific scale rather
than estimating $\lVert\sigma\rVert_F$, so no Frobenius ratio is reported.
\end{minipage}
\end{table}

The deterministic bounds guarantee coverage with minimal structure, but are
too conservative to be competitive in this design. The raw plug-in overstates
the oracle scales by roughly five to seven percent. Estimating and removing
the low-rank mean component reduces the discrepancy to roughly one to two
percent. This comparison gives the residual-USVT strategy a distinct role:
when the systematic component is nonnegligible and spectrally detectable, it
can materially improve calibration relative to setting the conditional mean
identically equal to zero. In the sparser logarithmic-degree designs, by
contrast, USVT generally retains rank zero and therefore reduces to the
unshifted raw plug-in in the realized sample. \looseness=-1

\subsection{Summary}\label{sec:simulation-summary}

The simulations yield four main findings. First, conventional fixed-group
inference can fail severely after data-dependent selection even when the
network model is correctly specified and the oracle variance is known.
Second, the simultaneous intervals are conservative in the designs studied,
but feasible $CI_1$ nevertheless has substantial power against a pronounced
core--periphery structure selected by a spectral rule. Third, when residual
variation is concentrated in a relatively small number of rows and columns,
feasible $CI_2$ can be substantially shorter and more powerful than $CI_1$.
Finally, the raw and USVT scale estimators behave as suggested by
Appendix~\ref{app:variance-estimation}: the raw plug-in becomes accurate as
the signal share falls, while USVT improves calibration when a detectable
low-rank mean component is present. \looseness=-1

\section{Empirical implementation details}
\label{app:empirical-details} 
This appendix gives the implementation details for
Tables~\ref{tab:facebook100-type-rest-summary},
\ref{tab:baci-contrast-family-summary}, and
\ref{tab:psid-market-structure-summary} in the main text. It first describes the normalization used in the Facebook100 example (see Remark~\ref{rem:alternative-normalization}). It then considers the three tables in order.   \looseness=-1

\subsection{Undirected unipartite networks}
\label{app:undirected-partition-cells}
\label{app:undirected-normalization}
The main results are stated for an $N_1\times N_2$ matrix whose entries are
indexed by ordered pairs and whose density is normalized by $m_1m_2$, the number
of ordered pairs between the two groups.  In the case of an undirected unipartite network with no loops, the density of connections is measured in a slightly different way. First, a connection is
associated with an unordered dyad rather than an ordered pair.  Second, pairs
of the form $ii$ are not eligible links.  Under our convention in
Section~\ref{sec:model-terminology}, the network is nevertheless stored in an
$N\times N$ matrix by placing the observation for dyad $\{i,j\}$ in the
$ij$th entry when $i<j$ and setting every entry on or below the diagonal to
zero. The following normalization is meant to correct for the fact that the ordering of the nodes is arbitrary and that there are no connections on the main diagonal. \looseness=-1

\subsubsection{Density and estimator}
\label{sec:undirected-normalization}

Let the network contain $N$ nodes.  Under the upper-triangular convention,
$Y_{ij}$ may be non-degenerate only when $i<j$, and $Y_{ij}=0$ when $i\geq j$.  Write
\[
\mu_{ij}=\mathbbm E[Y_{ij}\mid\mathcal H],
\qquad
\epsilon_{ij}=Y_{ij}-\mu_{ij},
\qquad
\sigma_{ij}^2=\mathbbm E[\epsilon_{ij}^2\mid\mathcal H].
\]
Define the symmetrized matrices
\[
Y^s:=Y+Y^T,
\qquad
\mu^s:=\mu+\mu^T.
\]
Under this symmetrization, $Y^s_{ij}=Y^s_{ji}$ records the same undirected connection in both
orientations, while $Y^s_{ii}=0$. For two groups $g,h\in\{0,1\}^N$, let
\[
m_g:=\sum_{i=1}^N g_i,
\qquad
m_h:=\sum_{i=1}^N h_i,
\qquad
r(g,h):=\sum_{i=1}^N g_i h_i.
\]
The number of eligible ordered incidences from group $g$ to group $h$ is
\begin{align}
D^u(g,h)
&:=\sum_{i\neq j}g_i h_j
 =m_gm_h-r(g,h).
\label{eq:undirected-normalization-denominator}
\end{align}
The subtraction of $r(g,h)$ removes the ineligible self-pairs contributed by
nodes that belong to both groups.  Define the normalized population density
and its empirical analogue by
\begin{align}
\theta^u(g,h)
&:=
\frac{1}{D^u(g,h)}
\sum_{i,j=1}^N\mu^s_{ij}g_i h_j,
\label{eq:undirected-normalized-parameter}\\
\widehat\theta^u(g,h)
&:=
\frac{1}{D^u(g,h)}
\sum_{i,j=1}^N Y^s_{ij}g_i h_j.
\label{eq:undirected-normalized-estimator}
\end{align}
Equivalently, for $i<j$, define
\[
\omega_{ij}(g,h):=g_i h_j+g_j h_i\in\{0,1,2\}.
\]
Then
\begin{align}
D^u(g,h)
&=\sum_{i<j}\omega_{ij}(g,h),
\label{eq:undirected-weight-sum}\\
\theta^u(g,h)
&=
\frac{1}{D^u(g,h)}
\sum_{i<j}\mu_{ij}\omega_{ij}(g,h),
\qquad
\widehat\theta^u(g,h)
=
\frac{1}{D^u(g,h)}
\sum_{i<j}Y_{ij}\omega_{ij}(g,h).
\label{eq:undirected-upper-triangle-density}
\end{align}

The normalized parameter is related to the ordered-matrix parameter in
\eqref{poi} by
\begin{align}
\theta^u(g,h)
=
\frac{m_gm_h}{D^u(g,h)}
\left\{
\theta(g,h)+\theta(h,g)
\right\},
\label{eq:undirected-relation-ordered}
\end{align}
with the same identity for the empirical estimators.  Thus an undirected
density generally requires both orientations of the ordered parameter, as
well as the normalization that removes the diagonal. 


\subsubsection{Confidence intervals}
\label{sec:undirected-normalized-intervals}

Fix the same constant $c\in(0,1/2]$ used in
Section~\ref{sec:index-set}, and let
\[
\mathcal G_c^u(N)
:=
\left\{
(g,h)\in\{0,1\}^N\times\{0,1\}^N:
 m_g\geq cN,\ m_h\geq cN
\right\}.
\]
This is the specialization of $\mathcal G_c$ to $N_1=N_2=N$.  Since
\[
D^u(g,h)=m_gm_h-r(g,h)\geq c^2N^2-N,
\]
the denominator is positive and of order $N^2$ uniformly over
$\mathcal G_c^u(N)$ for all sufficiently large $N$.

Define 
\begin{align}
\sigma^u(g,h)^2
&:=
\frac{1}{D^u(g,h)}
\sum_{i<j}\sigma_{ij}^2\omega_{ij}(g,h)^2.
\label{eq:undirected-normalized-scale}
\end{align}
Then the conditional standard deviation of
$\widehat\theta^u(g,h)$ is
$\sigma^u(g,h)/\sqrt{D^u(g,h)}$.  Let
$\widehat\sigma^u(g,h)$ denote a nonnegative estimator of $\sigma^u(g,h)$.

For the first interval, the natural analogues of Assumptions 2(s) and 3(i) are 
\begin{align}
\lim_{F\in\mathcal F:\,N\to\infty}
\sqrt N
\min_{(g,h)\in\mathcal G_c^u(N)}
\sigma^u(g,h)
&=\infty,
\label{eq:undirected-normalized-strong-nondegeneracy}\\
\lim_{F\in\mathcal F:\,N\to\infty}
\mathbbm P_F\left(
\max_{(g,h)\in\mathcal G_c^u(N)}
\left|
\frac{\widehat\sigma^u(g,h)-\sigma^u(g,h)}
{\sigma^u(g,h)}
\right|>r
\right)
&=0
\label{eq:undirected-normalized-scale-consistency}
\end{align}
for a deterministic sequence $r\to0$.  As elsewhere in the paper, the ratio
is used only when its denominator is positive, which
\eqref{eq:undirected-normalized-strong-nondegeneracy} guarantees eventually.
For coverage, the two-sided condition in
\eqref{eq:undirected-normalized-scale-consistency} may be replaced by the
corresponding one-sided condition that rules out underestimation, as in Appendix Section~\ref{sec:raw-one-sided-definition} above.

Let
\[
K_1^u(\alpha)
:=
\sqrt{1.39(2N)-2\ln(\alpha/2)}
\qquad
K_2(\alpha):=\sqrt{-2\ln(\alpha)}
\]
and define
\begin{align}
CI_1^u(g,h;\alpha)
&:=
\widehat\theta^u(g,h)
\pm
K_1^u(\alpha)
\frac{\widehat\sigma^u(g,h)}
{\sqrt{D^u(g,h)}}
\label{eq:undirected-normalized-CI1}\\
CI_2^u(g,h;\alpha)
&:=
\widehat\theta^u(g,h)
\pm
\frac{2\left\{\widehat{\bar\tau}
+K_2(\alpha)\widehat V\right\}}
{D^u(g,h)}.
\label{eq:undirected-normalized-CI2}
\end{align}
For comparison, the conventional fixed-group benchmark is
\[
CI_0^u(g,h;\alpha)
:=
\widehat\theta^u(g,h)
\pm
K_0(\alpha)
\frac{\widehat\sigma^u(g,h)}
{\sqrt{D^u(g,h)}},
\]
where $K_0(\alpha)$ is the $1-\alpha/2$ standard-normal quantile. 

\begin{proposition}
\label{prop:undirected-normalized-density}
Let $\mathcal F$ be a deterministic class of loopless undirected
unipartite random graph models represented by their upper triangles as in
Section~\ref{sec:model-terminology}. Suppose Assumption~\ref{ass:model}
holds on $\mathcal F$.

\begin{itemize}
\item[(i)] If
\eqref{eq:undirected-normalized-strong-nondegeneracy} and
\eqref{eq:undirected-normalized-scale-consistency} hold, then, for every fixed
$\alpha\in(0,1)$,
\[
\liminf_{F\in\mathcal F:\,N\to\infty}
\mathbbm P_F\left(
\bigcap_{(g,h)\in\mathcal G_c^u(N)}
\left\{
\theta^u(g,h)\in CI_1^u(g,h;\alpha)
\right\}
\right)
\geq1-\alpha.
\]

\item[(ii)] If Assumptions~\ref{ass:nondegeneracy}(w) and
\ref{ass:scale-consistency}(ii)--(iii) hold, then, for every fixed
$\alpha\in(0,1)$,
\[
\liminf_{F\in\mathcal F:\,N\to\infty}
\mathbbm P_F\left(
\bigcap_{(g,h)\in\mathcal G_c^u(N)}
\left\{
\theta^u(g,h)\in CI_2^u(g,h;\alpha)
\right\}
\right)
\geq1-\alpha.
\]
\end{itemize}
\end{proposition}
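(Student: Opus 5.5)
The plan is to reduce both parts to the arguments already given for Propositions~\ref{prop:propone} and~\ref{prop:proptwo}, using the weighted upper-triangular representation \eqref{eq:undirected-upper-triangle-density}. For every $(g,h)\in\mathcal G_c^u(N)$ the estimation error is
\[
\widehat\theta^u(g,h)-\theta^u(g,h)=\frac{1}{D^u(g,h)}\sum_{i<j}\epsilon_{ij}\omega_{ij}(g,h).
\]
This is a sum of independent, mean-zero summands under $\mathbbm P_F$. Each summand is bounded by $4B$, and the variance of the sum is $D^u(g,h)\sigma^u(g,h)^2$.

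\textbf{Part (i).} I would repeat Steps 1--4 of the proof of Proposition~\ref{prop:propone} with $m_1m_2$ replaced by $D^u$ and $\sigma$ replaced by $\sigma^u$.
\begin{itemize}
\item Bernstein's inequality (Lemma~\ref{lem:bernstein}) applies to each fixed pair with bound $4B$ in place of $2B$, which only changes the constant in $D$.
\item The union bound runs over $|\mathcal G_c^u(N)|\leq 2^{2N}$ pairs, so $N_1+N_2$ is replaced by $2N$. This matches the factor $1.39(2N)$ in $K_1^u$, and Step 2 gives $2^{2N+1}\exp(-\tilde K_1^2/2)=\alpha$ exactly as before.
\item The quantity $q$ of Step 3 becomes $\sigma^u\sqrt{D^u}/\{K_1^u(1-r)\}$. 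Since $D^u\geq c^2N^2-N\geq c^2N^2/2$ eventually, $q\gtrsim\sqrt N\sigma^u$, which diverges uniformly by \eqref{eq:undirected-normalized-strong-nondegeneracy}.
\item Step 4 uses \eqref{eq:undirected-normalized-scale-consistency}, or only its one-sided half.
\end{itemize}

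\textbf{Part (ii).} The key identity is
\[
\Bigl|\sum_{i<j}\epsilon_{ij}\omega_{ij}(g,h)\Bigr|\leq\Bigl|\sum_{ij}\epsilon_{ij}g_ih_j\Bigr|+\Bigl|\sum_{ij}\epsilon_{ij}h_ig_j\Bigr|\leq 2\lVert\epsilon\rVert_{\square;c}.
\]
It holds because $\omega_{ij}=g_ih_j+g_jh_i$ and $\epsilon$ is upper-triangular, and both $(g,h)$ and $(h,g)$ lie in $\mathcal G_c$. Hence
\[
\max_{\mathcal G_c^u}D^u(g,h)\,\bigl|\widehat\theta^u-\theta^u\bigr|\leq 2\lVert\epsilon\rVert_{\square;c}.
\]
The upper-triangular $\epsilon$ is an $N\times N$ matrix with independent, mean-zero entries bounded by $2B$; the degenerate zero entries are harmless. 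Lemma~\ref{lem:cut-concentration}\eqref{ub} therefore applies verbatim. I would then run Steps 1--4 of the proof of Proposition~\ref{prop:proptwo}:
\begin{itemize}
\item The coverage failure event is contained in $\{2\lVert\epsilon\rVert_{\square;c}\geq 2[(\bar\tau-r')+K_2(V-r'')]\}$.
\item The factor $2$ in $CI_2^u$ cancels.
\item Steps 2--3, using Lemma~\ref{lem:residual-norm-relations} and Assumption~\ref{ass:nondegeneracy}(w), show the exponent is eventually at most $\ln\alpha$.
\item Step 4 follows from Assumption~\ref{ass:scale-consistency}(ii)--(iii).
\end{itemize}

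The main obstacle is a bookkeeping check. One must verify that the factor-$2$ reduction to the cut norm is exactly absorbed by the doubled width in \eqref{eq:undirected-normalized-CI2}, and that $D^u$, not $m_gm_h$, appears consistently in both numerator and width. Because the width is defined with the same $D^u$, this works pathwise.

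A second point needs care in Part (i): the variance scale $\sigma^u$ carries weights $\omega^2\in\{0,1,4\}$. Bernstein's $\nu$ must therefore be exactly $D^u\sigma^{u2}$, and the Bernstein linear term must use the bound $2\cdot 2B$. Both only affect constants that vanish in the ratio argument of Step 3.
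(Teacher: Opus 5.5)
Your proposal is correct and takes essentially the same route as the paper. In part (i), the paper also applies Bernstein's inequality to the weighted upper-triangular sum $S^u(g,h)$, with summand bound $4B$ and variance $D^u(g,h)\sigma^u(g,h)^2$, takes a union bound over at most $2^{2N}$ pairs, and reuses the Step 2--3 calculation of Proposition~\ref{prop:propone}. In part (ii), it writes $S^u(g,h)=T(g,h)+T(h,g)$ and controls each ordered cut on the simultaneous event from Proposition~\ref{prop:proptwo}, so the factor two in $CI_2^u$ absorbs the triangle inequality; your routing through $2\lVert\epsilon\rVert_{\square;c}$ and Lemma~\ref{lem:cut-concentration} just makes that same step explicit.
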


\begin{proof}
Fix $F\in\mathcal F$ and work under $\mathbbm P_F$.

\medskip
\noindent\textbf{Part (i):}
For $(g,h)\in\mathcal G_c^u(N)$, write
\begin{align*}
S^u(g,h)
&:=
\sum_{i<j}\epsilon_{ij}\omega_{ij}(g,h),\\
\nu^u(g,h)
&:=
\sum_{i<j}\sigma_{ij}^2\omega_{ij}(g,h)^2
=
D^u(g,h)\sigma^u(g,h)^2.
\end{align*}
Then
\[
\widehat\theta^u(g,h)-\theta^u(g,h)
=
\frac{S^u(g,h)}{D^u(g,h)}.
\]
The summands in $S^u(g,h)$ are independent and centered.  Because
$|\epsilon_{ij}|\leq2B$ and $\omega_{ij}(g,h)\leq2$, each summand is bounded
in absolute value by $4B$.  Bernstein's inequality therefore gives, for every
$x>0$,
\[
\mathbbm P_F\left(
|S^u(g,h)|\geq x
\right)
\leq
2\exp\left\{
-
\frac{x^2}
{2\left(\nu^u(g,h)+4Bx/3\right)}
\right\}.
\]
There are at most $4^N=2^{2N}$ pairs $(g,h)$.  Moreover,
$D^u(g,h)\asymp N^2$ uniformly over $\mathcal G_c^u(N)$, and
\eqref{eq:undirected-normalized-strong-nondegeneracy} makes the linear term in
the Bernstein denominator negligible at
$x=K_1^u(\alpha)\sigma^u(g,h)\sqrt{D^u(g,h)}$.  The remaining union-bound
calculation is the same as in the proof of
Proposition~\ref{prop:propone}. Finally, \eqref{eq:undirected-normalized-scale-consistency} permits replacing
the oracle scale by $\widehat\sigma^u(g,h)$.  This proves part (i).

\medskip
\noindent\textbf{Part (ii):}
For indicator vectors $g,h\in\{0,1\}^N$, define the ordered cut sum
\[
T(g,h):=\sum_{i,j}\epsilon_{ij}g_i h_j.
\]
The simultaneous event used in Proposition~\ref{prop:proptwo} satisfies
\[
|T(g,h)|
\leq
\widehat{\bar\tau}+K_2(\alpha)\widehat V
\]
for every $(g,h)\in\mathcal G_c^u(N)$ with limiting probability at least
$1-\alpha$ under the assumptions in part (ii).  By the upper-triangular
convention,
\[
S^u(g,h)
=
T(g,h)+T(h,g).
\]
Hence, on the same event,
\[
|S^u(g,h)|
\leq
2\left\{
\widehat{\bar\tau}+K_2(\alpha)\widehat V
\right\}
\]
simultaneously over all $(g,h)\in\mathcal G_c^u(N)$.  Dividing by
$D^u(g,h)$ proves part (ii).
\end{proof}

\begin{remark}[Why normalization affects the two intervals differently]
\label{rem:undirected-normalization-factor-two}
Equation~\eqref{eq:undirected-relation-ordered} shows that the normalized
undirected numerator combines two ordered cuts.  The event used for $CI_2$
controls each cut separately, so the triangle inequality produces the factor
two in \eqref{eq:undirected-normalized-CI2}.  The numerator is also a single
sum over independent upper-triangular dyads with known weights
$\omega_{ij}(g,h)$.  Applying Bernstein's inequality directly to that sum
uses its combined variance and therefore does not require an additional
factor of two in \eqref{eq:undirected-normalized-CI1}.
\end{remark}

\begin{remark}[Selection and differences]
\label{rem:undirected-normalized-selection}
Because Proposition~\ref{prop:undirected-normalized-density} is simultaneous
over $\mathcal G_c^u(N)$, either interval remains valid after data-dependent
selection of the two groups, provided their relative sizes do not vanish with
probability approaching one.  Intervals for differences between two
normalized densities are obtained by the endpoint arithmetic in
Section~\ref{sec:universal-validity}; no additional multiplicity correction
is required.
\end{remark}

\subsubsection{Partition cells and empirical implementation}
\label{sec:undirected-partition-implementation}
The Facebook100 application in Section~\ref{sec:facebook100} of the main text uses a partition of the node set into
disjoint categories or communities. The homophily structure of a network is then characterized using the density of connections within the group and between the group and its complement. Confidence intervals for the between-group densities can be constructed as in the main text. Confidence intervals for the within-group densities can be constructed as in Appendix Section~\ref{sec:undirected-normalized-intervals} above. 

Specifically, let $A_a$ denote the set of nodes with label $a$ and let $n_a=|A_a|$.  For an unordered pair of labels $a,b$, define
\[
D_{ab}
:=
\begin{cases}
\binom{n_a}{2},&a=b,\\[4pt]
n_an_b,&a\neq b,
\end{cases}
\qquad
q_{ab}
:=
\begin{cases}
1,&a=b,\\
2,&a\neq b.
\end{cases}
\]
Let $\mathcal D_{ab}$ be the corresponding set of eligible unordered dyads, $\hat{\theta}_{ab} = \frac{1}{D_{ab}}\sum_{\{i,j\}\in\mathcal D_{ab}}Y_{ij}$, and $\widehat\sigma_{ab}^{\,2}$ an estimator for $\sigma_{ab}$. For a within-group cell, the general normalization has
$D^u(\mathbbm 1_{A_a},\mathbbm 1_{A_a})=2D_{aa}$ and gives each eligible dyad a
weight of two.  For a between-group cell,
$D^u(\mathbbm 1_{A_a},\mathbbm 1_{A_b})=D_{ab}$ and gives each eligible dyad a
weight of one.  Substituting these identities into
\eqref{eq:undirected-normalized-CI1}--\eqref{eq:undirected-normalized-CI2}
yields the implementation formulas
\begin{align}
CI_1^u(a,b;\alpha)
&=
\widehat\theta_{ab}
\pm
K_1^u(\alpha)
\frac{\widehat\sigma_{ab}}{\sqrt{D_{ab}}},
\label{eq:undirected-partition-CI1}\\
CI_2^u(a,b;\alpha)
&=
\widehat\theta_{ab}
\pm
q_{ab}
\frac{\widehat{\bar\tau}+K_2(\alpha)\widehat V}
{D_{ab}}.
\label{eq:undirected-partition-CI2}
\end{align}
$CI_1^u$ has the same formula for within- and between-group cells.  For
$CI_2^u$, the between-group margin is twice the within-group margin after each
is expressed using the number of eligible unordered dyads. The global quantities $\widehat{\bar\tau}$ and $\widehat V$ are unchanged as they are computed from the complete upper-triangular adjacency matrix. 

\subsection{Facebook100 homophily network}
\label{app:facebook100-details}

\subsubsection{Data}
We use the Facebook100 campus-network files introduced by
\citet{traud2012social} and distributed through \cite{nr}.\footnote{The replication script downloads an archived copy of the MATLAB files from \url{https://archive.org/download/oxford-2005-facebook-matrix/facebook100.zip}.}
We retain every campus network containing between 500 and
10,000 users. We find 50 campuses satisfy this rule. We analyze each campus individually, rather than combine the campuses into one all-campus network. The raw data are undirected Facebook friendship networks. The networks are undirected and unweighted with no loops. 

The raw attribute file contains numeric category codes for student/faculty
status, gender, major, second major or minor, dormitory or house, class year,
and high school. We call these characteristics \emph{attributes} and the individual codes \emph{categories}. So, for example, the attribute gender has two categories: gender 1 and gender 2. Missing and nonpositive codes are treated as missing. The primary analysis considers a category eligible when it contains at least 50
users and at least 5 percent of the full campus network. The latter
restriction is a finite-sample analogue of the nonvanishing-group condition
used in the theory. We retain at most the 20 most frequent eligible
categories for any attribute.

Table~\ref{tab:facebook100-type-rest-summary} in the main text reports results for five principal
single attributes: gender, class year, student/faculty status, major, and
residence or dormitory. 

\subsubsection{Density differences}
We characterize the homophily structure of the Facebook100 networks using density differences. Fix
a campus, an attribute, and an eligible category $A$. Let $\mathcal V$ denote
the set of users with a nonmissing value of that attribute and define its complement $A^{-}:=\mathcal V\setminus A$. We only consider differences where $A$ and $A^{-}$ both contain at least 50
users and at least 5 percent of the full campus network. Let
$n_A=|A|$ and $n_{A^-}=|A^-|$. The numbers of eligible unordered dyads are
\[
D_{A,A}=\binom{n_A}{2},
\qquad
D_{A,A^-}=n_A n_{A^-}.
\]
We additionally require that each denominator is at least 1,000.

Let $E_{A,A}$ be the number of observed friendship edges with both endpoints
in $A$, and let $E_{A,A^-}$ be the number with one endpoint in $A$ and the
other in $A^-$. The corresponding empirical densities are
\[
\widehat\theta^u(A,A)
=
\frac{E_{A,A}}{D_{A,A}},
\qquad
\widehat\theta^u(A,A^-)
=
\frac{E_{A,A^-}}{D_{A,A^-}},
\]
and the reported empirical density difference is
\[
\widehat\Delta_A
=
\widehat\theta^u(A,A)
-
\widehat\theta^u(A,A^-).
\]
When $\widehat\Delta_A>0$, the members of category $A$ are more densely
connected to one another than to other users in other non-missing categories of that attribute. This comparison is category-specific. For an attribute with multiple categories, each eligible category generates a separate difference.

\subsubsection{Estimating the variance parameters}
We estimate the variance parameters using the singular value thresholding procedure described in Appendix Section~\ref{sec:usvt}. Let $A$ denote the symmetric $n\times n$ adjacency matrix and define
\[
\widehat\rho
=
\binom{n}{2}^{-1}\sum_{i<j}A_{ij}.
\]
Let
\[
d_i:=\sum_{j\neq i}A_{ij},
\qquad
\widehat d_{\max}:=\max_i d_i,
\qquad
\widehat d_{\mathrm{avg}}:=(n-1)\widehat\rho.
\]
Following Appendix Section~\ref{sec:cbar-calibration}, we calculate the
maximum-to-average observed degree ratio
\[
\widehat D_n
=
\frac{\widehat d_{\max}}{\widehat d_{\mathrm{avg}}}.
\]
We set $\eta=0.01$ and $\gamma=1$, and fix the grid
\[
\mathcal C=\{1,2,4,8,16,32,64\}.
\]
When $\widehat\rho>0$, we define
\[
\widehat C
=
\min\left\{
C\in\mathcal C:
C\geq(1+\gamma)\widehat D_n
\right\}
\]
when this set is nonempty. The sparse candidate and dense thresholds are
\[
\widehat t_S
=
\left(2\sqrt{2\widehat C}+\eta\right)
\sqrt{n\widehat\rho},
\qquad
\widehat t_D
=
(2+\eta)\sqrt n.
\]
The final threshold is
\[
\widehat t_A
=
\begin{cases}
\min\{\widehat t_S,\widehat t_D\},
&\text{if $\widehat C$ exists},\\[2pt]
\widehat t_D,
&\text{otherwise}.
\end{cases}
\]

We compute singular components of $A$ adaptively, beginning with 100 singular
triplets and increasing this number as necessary, with no user-specified rank
cap. We retain components whose singular values are at least
$\widehat t_A$, reconstruct the resulting matrix, project its entries onto
$[0,1]$, symmetrize it, and set its diagonal to zero. We denote the resulting
estimate $\widehat\mu$. If no component is retained, then
$\widehat\mu=0$ and the residual scales below coincide numerically with the
unshifted raw plug-ins in Section~\ref{sec:raw-plugin}. 

We then define the upper-triangular residuals
\[
\widehat\epsilon_{ij}
=
A_{ij}-\widehat\mu_{ij},
\qquad i<j.
\]
The full-network residual scales are
\[
\widehat S
=
\left(
\sum_{i<j}\widehat\epsilon_{ij}^{\,2}
\right)^{1/2}
\]
and
\[
\widehat T
=
\sum_{i=1}^n
\left(
\sum_{j>i}\widehat\epsilon_{ij}^{\,2}
\right)^{1/2}
+
\sum_{j=1}^n
\left(
\sum_{i<j}\widehat\epsilon_{ij}^{\,2}
\right)^{1/2}.
\]
Because the network is binary, $B=1$, and the global quantities entering
$CI_2$ are
\[
\widehat{\bar\tau}
=
1.01\widehat T+0.25\widehat S
\]
and
\[
\widehat V
=
\left(
\widehat S^2+\widehat S+4\widehat T
\right)^{1/2}.
\]
They are calculated using the entire campus network and are common to every
attribute cell and difference for that campus.

For an unordered cell with eligible dyad set $\mathcal D$, let
$D=|\mathcal D|$. The cell-specific scale used by $CI_0$ and $CI_1$ is
\[
\widehat\sigma_{\mathcal D}^{\,2}
=
\frac{1}{D}
\sum_{\{i,j\}\in\mathcal D}
\widehat\epsilon_{ij}^{\,2}.
\]

\subsubsection{Confidence intervals for densities}

Throughout this subsection, we suppress the superscript $u$ used for the
undirected normalization in Appendix
Section~\ref{app:undirected-normalization}. Let $C=50$ denote the number of
campus networks in the reported analysis and define
\[
\alpha_{\mathrm{FB}}
=
\frac{0.05}{C}
=
0.001.
\]
The conventional fixed-cell benchmark uses
\[
K_0=\Phi^{-1}(0.975).
\]
For the simultaneous intervals, define
\[
K_1(n,\alpha_{\mathrm{FB}})
=
\sqrt{
1.39(2n)-2\log(\alpha_{\mathrm{FB}}/2)
},
\qquad
K_2(\alpha_{\mathrm{FB}})
=
\sqrt{-2\log(\alpha_{\mathrm{FB}})}.
\]
The value $n$ entering $K_1$ is the full campus size rather than the number of
users with a nonmissing value of the attribute.

For a cell with point estimate $\widehat\theta$, denominator $D$, and
cell-specific scale $\widehat\sigma$, define
\[
q=
\begin{cases}
1,&\text{for a within-category cell},\\
2,&\text{for a between-category cell}.
\end{cases}
\]
Following Appendix Section~\ref{app:undirected-normalization} above, the confidence intervals we consider are
\[
CI_0
=
\widehat\theta
\pm
K_0\frac{\widehat\sigma}{\sqrt D},
\]
\[
CI_1
=
\widehat\theta
\pm
K_1(n,\alpha_{\mathrm{FB}})
\frac{\widehat\sigma}{\sqrt D},
\]
\[
CI_2
=
\widehat\theta
\pm
q\,
\frac{
\widehat{\bar\tau}
+
K_2(\alpha_{\mathrm{FB}})\widehat V
}{D},
\]
and
\[
CI_{\cap}
=
CI_1(\alpha_{\mathrm{FB}}/2)
\cap
CI_2(\alpha_{\mathrm{FB}}/2).
\]

\subsubsection{Confidence intervals for the density differences}
For any one of  $t\in\{0,1,2,\cap\}$, we denote the associated within-density confidence interval 
\[
CI_t(A,A)=[L_t^W(A),U_t^W(A)]
\]
and between-density confidence interval
\[
CI_t(A,A^-)=[L_t^B(A),U_t^B(A)]
\]
as described in the previous subsection.  Using these intervals, we construct the difference interval for the within-between density difference $\theta(A,A) - \theta(A,A^-)$
\[
CI_{t,\Delta}(A)
=
\left[
L_t^W(A)-U_t^B(A),
\;
U_t^W(A)-L_t^B(A)
\right].
\]
We say that the difference \emph{survives simultaneous correction} when the lower endpoints of both $CI_{0,\Delta}$ and $CI_{\cap,\Delta}$ are positive or the upper endpoints of both intervals are negative. These confidence intervals for the within-between density differences are what we report in TTable~\ref{tab:facebook100-type-rest-summary}. 

\subsection{BACI supplier-market network}
\label{app:baci-details}

\subsubsection{Data}
We use the 2023 HS22 file from the CEPII BACI database \citep{gaulier2010baci} revision
V202601. BACI reports bilateral trade values in thousands of current U.S.
dollars. Let $X_{ijk}$ denote the value exported by supplier country $i$ of
HS6 product $k$ to destination country $j$, after summing duplicate source
records. We remove observations with $i=j$, so the network describes
international rather than domestic supply.

Let $\mathcal C$ contain every country and $\mathcal K$ contain every product appearing in the 2023 file. A row node is an exporter $i \in \mathcal{C}$.  A column node is an importer--product market
$m=(j,k)\in\mathcal M:=\mathcal C\times\mathcal K$. The node universe has
\[
N_1=|\mathcal C|=226,
\qquad
N_2=|\mathcal M|
=226\cdot5{,}606
=1{,}266{,}956.
\]
The binary network used for inference is
\[
Y_{i,(j,k)}
=
\mathbbm 1\{X_{ijk}>50{,}000\},
\qquad i\neq j.
\]
Because BACI values are measured in thousands of dollars, a link indicates
that supplier $i$ exports more than \$50 million of product $k$ to
destination $j$. The network contains 57,831 links among 285,065,100
structurally eligible supplier--market pairs, for an observed density of
0.0002029.

We also calculate the supplier share
\[
W_{i,(j,k)}
=
\begin{cases}
X_{ijk}/X_{\cdot jk},&X_{\cdot jk}>0,\\
0,&X_{\cdot jk}=0,
\end{cases}
\qquad
X_{\cdot jk}:=\sum_{h\neq j}X_{hjk}
\]
which we use to construct weighted centrality
rankings.
\looseness=-1

\subsubsection{Density differences}

A supplier--market pair is structurally unavailable when the supplier is
also the destination. The code stores a zero in these entries but does not
count them as observed nonlinks. For a supplier group
$G\subseteq\mathcal C$ and market group $R\subseteq\mathcal M$, define
\[
D_0(G,R)=|G||R|,
\qquad
Z(G,R)=\sum_{(j,k)\in R}\mathbbm 1\{j\in G\},
\qquad
D(G,R)=D_0(G,R)-Z(G,R).
\]
If $E(G,R)$ is the number of observed links in the cell, the associated
density is
\[
\widehat\theta(G,R)=\frac{E(G,R)}{D(G,R)}.
\]
This is a rescaling along the lines of Remark~\ref{rem:alternative-normalization}. An
interval formed under the $D_0(G,R)$ normalization is simply multiplied by
$D_0(G,R)/D(G,R)$. \looseness=-1

To form the density differences that make up Table~\ref{tab:baci-contrast-family-summary}, we rank the suppliers using three network centrality measures. Binary supplier degree is
\[
d_i^A=\sum_mY_{im},
\]
weighted supplier outdegree is
\[
d_i^W=\sum_mW_{im},
\]
and bipartite Katz centrality is calculated from
\[
\mathcal W=
\begin{pmatrix}
0&W\\
W^T&0
\end{pmatrix},
\qquad
d^{K} =
\left(
I-0.5\,\frac{\mathcal W}{\rho(\mathcal W)}
\right)^{-1}\iota.
\]
The supplier-side entries of $d^{K}$ give the country ranking. For each ranking
and $q\in\{0.05,0.10,0.20\}$, the top $\lceil qN_1\rceil$ countries form
the high-centrality tier $H_q$, containing 12, 23, or 46 countries, and the
remaining countries form $O_q$. Ties are resolved deterministically by
country code.

The market groups have three forms. First, we use all destination--product
markets. Second, we divide products into seven mutually exclusive baskets:
HS 01--24; 25--40; 41--49; 50--67; 68--83; 84--85; and 86--97. Each
basket contains every destination for the relevant products and at least
5 percent of the column universe. Third, we form top 5-, 10-, and
20-percent market groups by total imports and market-side bipartite Katz
centrality. Every displayed supplier and market group has relative size at least
0.05.
\looseness=-1

We use the above density, centrality, and market group definitions to form the density differences that show up in Table~\ref{tab:baci-contrast-family-summary}. For a high-centrality supplier group $H$, its complement $O$, and market
group $R$, the empirical density difference is
\[
\widehat\Delta(H,R)
=
\widehat\theta(H,R)-\widehat\theta(O,R).
\]
A positive difference means that the selected high-centrality suppliers have
a larger directed link density to $R$ than the remaining suppliers. The
complete candidate collection contains every combination of the three
supplier rankings, three supplier tiers, and eligible market groups.
\looseness=-1

\subsubsection{Estimating the variance parameters}
Following the strategy outlined in Appendix Section~\ref{sec:raw-plugin}, we set $\widehat\epsilon=Y$ and define 
\[
d_i=\sum_mY_{im},
\qquad
u_m=\sum_iY_{im},
\qquad
L=\sum_{i,m}Y_{im}.
\]
Because $Y$ is binary,
\[
\lVert Y\rVert_F=\sqrt L,
\qquad
\lVert Y\rVert_\dagger
=
\sum_i\sqrt{d_i}+\sum_m\sqrt{u_m}.
\]
With $B=1$, 
\[
\widehat{\bar\tau}^{\,0}
=
1.01\lVert Y\rVert_\dagger+0.25\lVert Y\rVert_F,
\qquad
\widehat V^{\,0}
=
\left(
\lVert Y\rVert_F^2+\lVert Y\rVert_F
+4\lVert Y\rVert_\dagger
\right)^{1/2}.
\]
For the \$50 million threshold this gives,
\[
\widehat{\bar\tau}^{\,0}=38{,}298.863,
\qquad
\widehat V^{\,0}=457.725.
\]

\subsubsection{Confidence intervals}

All intervals use $\alpha=0.05$. Under the full market universe,
\[
K_0=\Phi^{-1}(0.975),
\qquad
K_1
=
\sqrt{
1.39(N_1+N_2)-2\log(0.05/2)
}
=
1327.1738,
\]
\[
K_2=\sqrt{-2\log(0.05)}=2.4477.
\]
For a cell $(G,R)$, we use the intervals
\[
CI_0(G,R)
=
\widehat\theta(G,R)
\pm
K_0\frac{\sqrt{E(G,R)}}{D(G,R)},
\]
\[
CI_1(G,R)
=
\widehat\theta(G,R)
\pm
K_1\frac{\sqrt{E(G,R)}}{D(G,R)},
\]
\[
CI_2^{0}(G,R)
=
\widehat\theta(G,R)
\pm
\frac{
\widehat{\bar\tau}^{\,0}+K_2\widehat V^{\,0}
}{D(G,R)}
\]
and 
\[
CI_{\cap}(G,R;\alpha)
=
CI_1(G,R;\alpha/2)
\cap
CI_2^{0}(G,R;\alpha/2).
\]
Since the network is bipartite, no adjustment along the lines of Appendix Section~\ref{app:undirected-normalization} is needed. Writing $CI_t(H,R)=[L_t^H(R),U_t^H(R)]$ and
$CI_t(O,R)=[L_t^O(R),U_t^O(R)]$ for $t \in \{0,1,2,\cap\}$, the associated confidence intervals for the density difference $\theta(H,R)-\theta(O,R)$ are
\[
CI_{t,\Delta}(H,R)
=
\left[
L_t^H(R)-U_t^O(R),
\;
U_t^H(R)-L_t^O(R)
\right].
\]
These are the confidence intervals that we report in Table~\ref{tab:baci-contrast-family-summary} of the main text. 

\looseness=-1

\subsection{PSID pseudo-employer mobility network}
\label{app:psid-details}

\subsubsection{Data}
Our data construction follows \cite{schmutte2014free}. We use annual industry and occupation information for individuals in the Panel Study of Income Dynamics (PSID) years  1987--1993. An individual identifier combines the 1968 family interview number and person number. We retain person-year observations in which the individual is the household head or spouse and has positive, nonmissing three-digit
industry and occupation codes. Workers must have at least two valid
person-year observations and must be at least 23 years old in 1987. For
worker $w$ in year $t$, the pseudo-employer is denoted 
\[
J_{wt}
=
\left(
\operatorname{industry}_{wt},
\operatorname{occupation}_{wt}
\right).
\]
Our sample contains 45,905 person-year observations on 8,119
workers and 6,346 distinct pseudo-employers. \looseness=-1

For each worker, retained observations are ordered by year. A directed
transition from pseudo-employer $a$ to pseudo-employer $b$ is recorded when
the worker's next valid observation is at a different pseudo-employer. Two successive valid observations are linked even when an intervening year has missing industry or occupation information. Let
\[
M_{ab}
=
\#\{
\text{observed transitions from $a$ to $b$}
\}.
\]
The full sample contains 19,921 transitions. To select communities, we
use the undirected weighted network
\[
W_{ab}=M_{ab}+M_{ba},
\qquad a<b.
\]
Among the 6,346 observed pseudo-employers, 6,304 are part of at least one
non-self transition. The symmetrized graph contains 13,901 distinct
undirected edges. Its largest connected component has
\[
n=6{,}039
\]
pseudo-employers and 13,714 distinct undirected edges.

Table~\ref{tab:psid-market-structure-summary} treats this largest connected
component as the empirical analysis universe, which is designed to match Figure 1 of \cite{schmutte2014free}. Within it there are 19,666 weighted transition events and 16,123 distinct directed links. Our density parameters concern the directed network 
\[
Y_{ab}=\mathbbm 1\{M_{ab}>0\},
\qquad a\neq b,
\]
with $Y_{aa}=0$. \looseness=-1

\subsubsection{Market segments}
Following \cite{schmutte2014free}, we apply weighted Louvain community detection to the 6,039-node undirected network using $W_{ab}$ as the edge weight. The reported run contains 34
communities and has modularity 0.6889. We order the communities by size and
retain the four largest, denoted $g_1,\ldots,g_4$, whose sizes are
\[
n_1=445,\qquad n_2=445,\qquad n_3=365,\qquad n_4=349.
\]
We restrict attention to the four largest communities following Figure 1 of \cite{schmutte2014free}.

We also define market segments using major 1970 Census industry, major 1970
Census occupation, and their interaction. A market segment is counted
only when both it and its complement contain at least 5 percent of the
6,039-node universe. The minimum admissible size segment is therefore
\[
\left\lceil0.05(6{,}039)\right\rceil=302.
\]
The four Louvain communities, six major-industry groups, seven
major-occupation groups, and four industry-by-occupation groups satisfy this
restriction.
\looseness=-1

\subsubsection{Density differences}

For an admissible market segment $g$, let $m_g=|g|$ and let $g^c$ denote its
complement in the 6,039-node analysis universe. The within-segment density is
\[
\widehat\theta(g,g)
=
\frac{
\sum_{a\in g}\sum_{\substack{b\in g\\b\neq a}}Y_{ab}
}{
m_g(m_g-1)
}.
\]
The outward and inward densities are
\[
\widehat\theta(g,g^c)
=
\frac{
\sum_{a\in g}\sum_{b\in g^c}Y_{ab}
}{
m_g(n-m_g)
},
\qquad
\widehat\theta(g^c,g)
=
\frac{
\sum_{a\in g^c}\sum_{b\in g}Y_{ab}
}{
m_g(n-m_g)
}.
\]
The reported segmentation difference is
\[
\widehat\Delta(g)
=
\widehat\theta(g,g)
-
\frac{
\widehat\theta(g,g^c)+\widehat\theta(g^c,g)
}{2}.
\]
A positive difference means that directed worker-flow links are denser within
the segment than across its boundary, averaging the outward and inward
directions. This is a reduced-form mobility-segmentation measure rather than
a structural parameter of a labor-supply or wage-setting model.
\looseness=-1

\subsubsection{Estimating the variance parameters}
Following the strategy outlined in Appendix Section~\ref{sec:raw-plugin} we use $\widehat\epsilon=Y$ which leads to the estimates 
\[
\widehat{\bar\tau}^{\,0}=16{,}441.042,
\qquad
\widehat V^{\,0}=285.022.
\]

All intervals use $\alpha=0.05$. Since $N_1=N_2=6{,}039$,
\[
K_0=\Phi^{-1}(0.975),
\qquad
K_1(0.05)=129.5986,
\qquad
K_2(0.05)=2.4477.
\]
For a directed density cell with $E$ links and eligible denominator $D$,
\[
\left(\widehat\sigma^{\,0}\right)^2
=
\frac{E}{D}
=
\widehat\theta.
\]
These estimates lead to the following confidence intervals for the individual density parameters
\[
CI_0
=
\widehat\theta
\pm
K_0\frac{\sqrt E}{D},
\qquad
CI_1
=
\widehat\theta
\pm
K_1\frac{\sqrt E}{D},
\]
\[
CI_2^0
=
\widehat\theta
\pm
\frac{
\widehat{\bar\tau}^{\,0}+K_2\widehat V^{\,0}
}{D}
\]
and 
\[
CI_{\cap}(g_1,g_2;\alpha)
=
CI_1(g_1,g_2;\alpha/2)
\cap
CI_2^0(g_1,g_2;\alpha/2).
\]
where  $K_1(\alpha/2)=129.6040$ and $K_2(\alpha/2)=2.7162$.

The difference intervals reported in Table~\ref{tab:psid-market-structure-summary} are then constructed in the following way. For $t \in \{0,1,2,\cap\}$, where $t=2$ denotes $CI_2^{0}$, let the within, outward, and inward intervals be
$[L_t^{W},U_t^{W}]$, $[L_t^{O},U_t^{O}]$, and
$[L_t^{I},U_t^{I}]$. The induced interval for the segmentation difference is
\[
CI_{t,\Delta}(g)
=
\left[
L_t^{W}-\frac{U_t^{O}+U_t^{I}}{2},
\;
U_t^{W}-\frac{L_t^{O}+L_t^{I}}{2}
\right].
\]

\end{document}